\documentclass[11pt]{article}

\usepackage[margin=1in]{geometry}
\usepackage{amsmath,amssymb,amsfonts,amsthm}
\usepackage[numbers, sort]{natbib}
\usepackage{algorithm}
\usepackage{algorithmic}
\usepackage[dvipsnames,usenames]{xcolor}
\usepackage{graphicx}
\usepackage{enumitem}
\usepackage{subcaption}

\usepackage[colorlinks=true,urlcolor=blue,linkcolor=RoyalBlue,citecolor=OliveGreen]{hyperref}
\usepackage[nameinlink]{cleveref}
\Crefname{equation}{Eqn.}{Eqns.}
\crefname{equation}{}{}
\Crefname{section}{Section}{Sections}
\crefname{section}{Sec.}{Secs.}

\newtheorem{lemma}{Lemma}
\newtheorem{corollary}[lemma]{Corollary}
\newtheorem{theorem}{Theorem}
\newtheorem{definition}{Definition}

\newif\ifshowproof
\showprooftrue

\newcommand{\opt}{\textsc{Opt}}
\newcommand{\rev}{\textsc{Rev}}

\newcommand{\E}{\mathbf{E}}
\renewcommand{\Pr}{\mathbf{Pr}}

\newcommand{\kl}{D_\text{\rm{KL}}}
\newcommand{\skl}{D_\text{\rm{SKL}}}

\newcommand{\defeq}{\stackrel{\text{def}}{=}}

\newcommand{\shading}{s_{m, n, \delta}}
\newcommand{\doubleshading}{d_{m, n, \delta}}
\newcommand{\truncatetop}{t^{\max}}
\newcommand{\truncatebottom}{t^{\min}}

\title{Settling the Sample Complexity of Single-parameter Revenue Maximization}

\author{
	Chenghao Guo%
	\thanks{IIIS, Tsinghua University. Email: guoch16@mails.tsinghua.edu.cn}
	\and
	Zhiyi Huang%
	\thanks{The University of Hong Kong. Email: zhiyi@cs.hku.hk}
	\and
	Xinzhi Zhang%
	\thanks{IIIS, Tsinghua University. Email: zhang-xz16@mails.tsinghua.edu.cn}
}

\date{April, 2019}

\begin{document}
	
	\begin{titlepage}
		\thispagestyle{empty}
		\maketitle
		\begin{abstract}
	\thispagestyle{empty}
	This paper settles the sample complexity of single-parameter revenue maximization by showing matching upper and lower bounds, up to a poly-logarithmic factor, for all families of value distributions that have been considered in the literature.
	The upper bounds are unified under a novel framework, which builds on the strong revenue monotonicity by Devanur, Huang, and Psomas (STOC 2016), and an information theoretic argument.
	This is fundamentally different from the previous approaches that rely on either constructing an $\epsilon$-net of the mechanism space, explicitly or implicitly via statistical learning theory, or learning an approximately accurate version of the virtual values. 
	To our knowledge, it is the first time information theoretical arguments are used to show sample complexity upper bounds, instead of lower bounds. 
	Our lower bounds are also unified under a meta construction of hard instances.
\end{abstract}
	\end{titlepage}
	
	\section{Introduction}
\label{sec:introduction}

Suppose there is one item for sale and there are $n$ bidders.
Each bidder has a private value for the item that is independently, but not necessarily identically, drawn from the corresponding prior distribution, denoted as $\mathbf{D} = D_1 \times D_2 \times \dots \times D_n$.
What is the optimal mechanism in terms of the expected revenue?
This classic problem of revenue maximization was solved by \citet{Myerson/1981/MOR} analytically.
Given the prior distributions from which the bidders' values are drawn, in particular, given their cumulative distribution functions (cdf), denoted as $F_i$'s, and probability density functions (pdf), denoted as $f_i$'s, the optimal auction is characterized by the virtual value functions: 
\begin{equation}
\label{eqn:virtual_value}
\phi_i(v) = v - \frac{1 - F_i(v)}{f_i(v)}
~.
\end{equation}
Informally, the optimal auction lets the bidder with the largest non-negative virtual value win the item, and charges the winner a payment that equals the threshold value above which she wins.%
\footnote{In general, the optimal auction chooses the winner based on an ``ironed'' version of the virtual values. We will omit this in the introduction for simplicity of our discussions.}

From an algorithmic viewpoint, however, the problem is not fully settled because the cdf and pdf of the prior distributions are rarely given as input in practice.
\citet{ColeR/2014/STOC} initiated the study of the following sample complexity problem:
Suppose the algorithm has access to the prior distributions only in the form of i.i.d.\ samples, how many samples are sufficient and necessary for finding an approximately optimal auction?
In particular, \citet{ColeR/2014/STOC} considered the multiplicative $1 - \epsilon$ approximation, and regular and MHR distributions, and showed that the sample complexity is polynomial in the number of bidders $n$, and $\epsilon^{-1}$.
Subsequently, there is a long line of work in this direction, either to improve the sample complexity bounds~\cite{DevanurHP/2016/STOC, MorgensternR/2015/NIPS, Syrgkanis/2017/NIPS}, or to consider other families of distributions such as bounded support distributions, with both multiplicative and additive approximations~\cite{DevanurHP/2016/STOC, GonczarowskiN/2017/STOC}, or the special cases with a single bidder~\cite{HuangMR/2015/EC} or i.i.d.\ bidders~\cite{RoughgardenS/2016/EC}, or the generalization to multiple heterogeneous items~\cite{CaiD/2017/FOCS, BalcanSV/2018/EC, GonczarowskiW/2018/FOCS}.
Despite many efforts, we still cannot pin down the asymptotically optimal sample complexity for any of the families of distributions considered in the literature, other than the special case of a single bidder.
\Cref{tab:previous-multi-bidder-results} summarizes the state-of-the-art upper and lower bounds prior to this paper.

\subsection{Previous Approaches}
\label{sec:previous_approaches}

We first present a brief overview on the previous approaches for analyzing the sample complexity of revenue maximization, which can be categorized into two groups, and explain their limitations.

\paragraph{Statistical Learning Theory.}
The first approach relies on constructing an $\epsilon$-net of the mechanism space, namely, a subset of mechanisms such that for any distribution in the family, there always exists an approximately optimal mechanism in the subset.
Then, it remains to identify such an approximately optimal mechanism in the $\epsilon$-net.
This can be done via a standard concentration plus union bounds combo.
Informally, the resulting sample complexity will be:%
\footnote{This form relies on the assumption that $O(\epsilon^{-2})$ samples are sufficient for estimating the expected revenue of a mechanism up to an $\epsilon$ error, which need not be true in general especially with unbounded value distributions.}
\[
\frac{\log \big( \text{size of the $\epsilon$-net} \big)}{\epsilon^2}
~.
\]
The construction of the $\epsilon$-net can be either explicit (e.g., \cite{DevanurHP/2016/STOC, GonczarowskiN/2017/STOC,GonczarowskiW/2018/FOCS}), or implicit via various learning dimensions from statistical learning theory (e.g., \cite{MorgensternR/2015/NIPS, Syrgkanis/2017/NIPS}).

The main limitation of this approach is that the size of the $\epsilon$-net seems to have an unavoidable exponential dependence in $\epsilon^{-1}$ (see below for an example). Recall that the sample complexity upper bound will be $\log (\text{size of the }\epsilon-\text{net})/\epsilon^2$, this exponential dependence leads to an \emph{at least cubic dependence in $\epsilon^{-1}$} in the sample complexity upper bounds.
For example, we sketch below an explicit construction of the $\epsilon$-net by \citet{DevanurHP/2016/STOC}.
With an appropriate discretization, it suffices to consider $\epsilon^{-1}$ distinct values.
Further, since the optimal auction chooses the winner to maximize virtual value, it suffices to know the ordering of $0$ and $\phi_i(v)$'s for all $n$ bidders and all $\epsilon^{-1}$ values. 
Hence, the number of auctions that we need to consider is no more than the number of orderings over the $n \epsilon^{-1}$ virtual values $\phi_i(v)$'s and $0$, which equals $(n \epsilon^{-1} + 1)!$ and is singly exponential in both $n$ and $\epsilon^{-1}$.
Getting rid of the exponential dependence in $\epsilon^{-1}$ intuitively means that it suffices to consider a constant number of distinct values, which seems implausible.

\begin{table}
	\centering
	\renewcommand{\arraystretch}{1.2}
	\begin{tabular}{|c|c|c|}
		\hline
		Setting & Lower Bound & Upper Bound \\
		\hline
		Regular & $\Omega(\max \{ n\epsilon^{-1}, \epsilon^{-3} \} )$ \cite{ColeR/2014/STOC, HuangMR/2015/EC} & $\tilde{O}(n\epsilon^{-4})$ \cite{DevanurHP/2016/STOC} \\
		\hline
		MHR & $\Omega(\max \{n \epsilon^{-1/2}, \epsilon^{-3/2}\})$ \cite{ColeR/2014/STOC, HuangMR/2015/EC} & $\tilde{O}(n \epsilon^{-3})$ \cite{DevanurHP/2016/STOC, MorgensternR/2015/NIPS} \\
		\hline
		$[1,H]$ & $\Omega (H\epsilon^{-2})$ \cite{HuangMR/2015/EC} & $\tilde{O}(nH\epsilon^{-3})$ \cite{DevanurHP/2016/STOC} \\ 
		\hline
		$[0,1]$-additive & $\Omega(\epsilon^{-2})$ \cite{HuangMR/2015/EC} & $\tilde{O}(n\epsilon^{-3})$ \cite{DevanurHP/2016/STOC, GonczarowskiN/2017/STOC} \\
		\hline
	\end{tabular}
	\caption{Best known sample complexity bounds prior to this paper}
	\label{tab:previous-multi-bidder-results}
\end{table}

\paragraph{Learning the Virtual Values.}
An alternative approach (e.g., \cite{ColeR/2014/STOC, RoughgardenS/2016/EC}) is to learn the individual value distributions well enough to obtain enough approximately accurate information about the virtual values, which induces a mechanism.
Then, we analyze the revenue approximation using the connections between expected revenue and virtual values.
Importantly, this approach does not need to take a union bound over exponentially many candidate mechanisms, circumventing the bottleneck that introduces the undesirable cubic dependence in $\epsilon^{-1}$ in the learning theory approach. 
Indeed, for the special case of independently and identically distributed (i.i.d.) bidders with $[0, 1]$-bounded distributions and additive approximation, \citet{RoughgardenS/2016/EC} showed a sample complexity upper bound of $\tilde{O}(n^2 \epsilon^{-2})$, which is the only previous example, to our knowledge, with a sub-cubic dependence in $\epsilon^{-1}$.

The main limitation of this approach roots in the form of the virtual value as defined in \Cref{eqn:virtual_value}.
It involves three components, the value $v$, the complementary cumulative distribution function $1 - F_i(v)$, a.k.a., the quantile, and the pdf $f_i(v)$.
Here, the value $v$ is given as input;
the quantile $1 - F_i(v)$ is relatively easy to estimate accurately via standard concentration inequalities. 
It is, however, impossible to get an accurate estimation of the density function $f_i$ in general.
As a result, it is infeasible to learn the virtual values accurately point-wise.
This is a major technical hurdle that prevents existing works using this approach from getting tight sample complexity upper bounds; in particular, they all have \emph{super-linear dependence in $n$}.
Even for the special case of i.i.d.\ bidders, the bound is quadratic in $n$~\cite{RoughgardenS/2016/EC}; the dependence is at least $n^7$ for the general case~\cite{ColeR/2014/STOC}.
Note that a linear dependence in $n$ follows almost trivially from the learning theory approach (e.g., \cite{DevanurHP/2016/STOC}).

\paragraph{Prior Knowledge of the Distribution Family.}

Another limitation of the existing approaches is that they generally rely on knowing the family of distributions upfront.
Even for the special case of a single bidder, the best known algorithms are different for regular, MHR, and bound-support distributions (e.g., \cite{HuangMR/2015/EC}).
For MHR distributions, we may simply pick the optimal price with respect to (w.r.t.) the empirical distribution, i.e., the uniform distribution over the samples.
For regular and $[1,H]$ bounded support distributions, however, we need to introduce a threshold $\delta > 0$ and to choose the optimal price subject to having a sale probability at least $\delta$.
Further, the threshold is chosen differently for regular and $[1,H]$ bounded support distributions.
If we fail to introduce a threshold when it is an arbitrary regular distribution, the expected revenue may not converge to the optimal at all~\cite{DhangwatnotaiRY/2015/GEB}.
If we set the threshold under the belief that the distribution has a $[1,H]$ bounded support while it is in fact an arbitrary regular distribution, the convergence rate will be far from optimal.
See \Cref{app:prior_knowledge} for a concrete example.
It would definitely be nice to have a more robust algorithm.

\subsection{Our Contributions}

We introduce an algorithm that achieves the optimal sample complexity, up to a poly-logarithmic factor, simultaneously for all families of distributions that have been considered in the literature.
Our upper and lower bounds, summarized in \Cref{tab:multi-bidder-results}, improve the best known bounds in all cases.

\begin{table}
	\centering
	\renewcommand{\arraystretch}{1.2}
	\begin{tabular}{|c|c|c|}
		\hline
		Setting & Lower Bound (\cref{sec:lower_bound_multi_bidder}) & Upper Bound (\cref{sec:upper_bound_multi_bidder}) \\
		\hline
		Regular & $\Omega(n\epsilon^{-3})$ & $\tilde{O}(n\epsilon^{-3})$ \\
		\hline
		MHR & $\tilde{\Omega}(n \epsilon^{-2})$ & $\tilde{O}(n\epsilon^{-2})$ \\
		\hline
		$[1,H]$ & $\Omega(nH\epsilon^{-2})$ & $\tilde{O}(nH\epsilon^{-2})$ \\
		\hline
		$[0,1]$-additive & $\Omega(n\epsilon^{-2})$ & $\tilde{O}(n\epsilon^{-2})$ \\
		\hline
	\end{tabular}
	\caption{Sample complexity bounds in this paper}
	\label{tab:multi-bidder-results}
\end{table}

\paragraph{Our Algorithm.}
The algorithm constructs from the samples a dominated empirical distribution, denoted as $\mathbf{\tilde{E}} = \tilde{E}_1 \times \tilde{E}_2 \times \dots \times \tilde{E}_n$, which is dominated by the true value distribution $\mathbf{D}$ in the sense of first-order stochastic dominance, but is as close to $\mathbf{D}$ as possible.
Then, it chooses the optimal mechanism w.r.t.\ $\mathbf{\tilde{E}}$.
We call it the dominated empirical Myerson auction.

To construct the dominated empirical distribution, we first look at the estimation error by the empirical distribution, in terms of the difference between the empirical quantiles and the true quantiles.
This can be bounded using standard concentration inequalities. 
For example, suppose a value $v$ has quantile $q$.
Then, Bernstein inequality gives that, with high probability, its quantile in the empirical distribution is approximately equal to $q$, up to an additive error of:
\begin{equation}
\label{eqn:intro_estimation_error}
\tilde{O} \left( \sqrt{\frac{q(1-q)}{m}} \right)
~.
\end{equation}
To ensure that the error bound holds for all values, one can simply take a union bound at the cost of an extra logarithmic factor inside the square root. 
Intuitively, the dominated empirical distribution is obtained by subtracting this term from the quantile of each value $v$ in the empirical distribution.
See \Cref{sec:upper_bound_multi_bidder} for the formal definitions of the dominated empirical distribution and the algorithm.

Next we explain the main difference between our algorithm and those in previous works, with the exception of~\citet{RoughgardenS/2016/EC}.
Previous works generally pick the optimal auction w.r.t.\ the empirical distribution, with a distribution-family-dependent preprocessing on the sample values, in the form of truncating large but rare values and/or a discretization of the values.
The preprocessing is to avoid choosing the auction based on some rare but high values in the samples.
In contrast, our algorithm picks the optimal auction w.r.t.\ the dominated empirical distribution, without any preprocessing or any knowledge of the underlying family of distributions.
The conservative estimates of quantiles by the dominated empirical distribution implicitly tune down the impact of rare but high values, simultaneously for all families of distributions.

The algorithm by \citet{RoughgardenS/2016/EC} is the most similar one to ours.
They also constructed a dominated empirical distribution and picked the corresponding optimal auction.
A subtle difference is that they used the Dvoretzky-Kiefer-Wolfowitz (DKW) inequality~\cite{DvoretzkyKW/1956/AoMS} to bound the estimation error of the empirical distribution and to construct the dominated empirical distribution, which on one hand avoided losing a logarithmic factor from the union bound, but on the other hand did not get the better bounds for values with quantiles close to $0$ or $1$ as in \Cref{eqn:intro_estimation_error}.
The latter property is crucial for our analysis.
We leave as an interesting open question whether there is a strengthened version of the DKW inequality with quantile-dependent bounds.
Such an inequality will improve the logarithmic factor in the upper bounds of this paper.
We stress that while the algorithms are similar in spirit, our analysis is fundamentally different, as we will explain next. 
Importantly, our sample complexity upper bounds hold for the general non-i.i.d.\ case while the upper bound of \citet{RoughgardenS/2016/EC} holds only for the special case of i.i.d.\ bidders.

\paragraph{Analysis via Revenue Monotonicity.}
Our analysis consists of two components.
The first one is two inequalities that lower bound the expected revenue of the dominated empirical Myerson auction on the true distribution, where the inequalities are enabled by the strong revenue monotonicity of single-parameter problems by \citet{DevanurHP/2016/STOC}.
The strong revenue monotonicity states that the optimal auction w.r.t.\ a distribution that is dominated by the true distribution gets at least the optimal revenue of the dominated distribution.
In particular, running the dominated empirical Myerson on the true value distribution $\mathbf{D}$ gets at least the optimal revenue of the dominated empirical distribution $\mathbf{\tilde{E}}$.
Further, consider a doubly shaded version of the true distribution, denoted as $\mathbf{\tilde{D}}$, which intuitively is obtained by subtracting twice the error term in \Cref{eqn:intro_estimation_error} from the quantiles of the true distribution.
Then, $\mathbf{\tilde{D}}$ is dominated by $\mathbf{\tilde{E}}$ and, thus, its optimal revenue is at most that of $\mathbf{\tilde{E}}$.
This weaker notion of revenue monotonicity is folklore in the literature and follows as a direct corollary of the stronger notion.
Therefore, we conclude that the expected revenue of the dominated empirical Myerson auction is at least the optimal revenue of the doubly shaded distribution $\mathbf{\tilde{D}}$.
It remains to compare the optimal revenue of $\mathbf{D}$ and $\mathbf{\tilde{D}}$.

This idea is quite powerful on its own. 
The key observation is that $\mathbf{\tilde{D}}$ approximately preserves the probability density/mass of $\mathbf{D}$ almost \emph{point-wise}, except for a small subset of values that have little impact on the optimal revenue.
Intuitively, this is because it consistently underestimates the quantiles; in contrast, the empirical distribution has fluctuations in its estimations.  
Hence, $\mathbf{\tilde{D}}$ approximately preserves the virtual values of $\mathbf{D}$ almost point-wise, circumventing the technical hurdle faced by the second previous approach discussed in \Cref{sec:previous_approaches}.
By this idea and standard accounting arguments for the expected revenue, we can get the optimal sample complexity upper bound for regular distributions in \Cref{tab:multi-bidder-results}, and match the best previous upper bounds for the other three families of distributions in \Cref{tab:previous-multi-bidder-results}.
We present a formal discussion in \Cref{app:without_information_theory}.



\paragraph{Analysis via Information Theory.}
To get the optimal sample complexity upper bounds for all families of distributions under a unified framework, we need the second idea, namely, to bound the difference between the optimal revenues of $\mathbf{D}$ and $\mathbf{\tilde{D}}$ with an information theoretic argument.
The argument consists of two claims: 1) the distributions $\mathbf{D}$ and $\mathbf{\tilde{D}}$ are similar in the information theoretic sense so that it takes many samples to distinguish them, and 2) we can estimate the expected revenue of any given mechanism on $\mathbf{D}$ and $\mathbf{\tilde{D}}$ with a small number of samples.
Concretely, we will show that the Kullback-Leibler (KL) divergence between $\mathbf{D}$ and $\mathbf{\tilde{D}}$ is at most $\tilde{O}(\frac{n}{m})$, omitting some caveats which we will explain in details in \Cref{sec:upper_bound_multi_bidder}.
By standard information theoretic arguments, it implies that one needs at least $\tilde{\Omega}(\frac{m}{n})$ samples to distinguish these two distributions.
For example, consider a $[0, 1]$-bounded distribution $\mathbf{D}$ and an additive $\epsilon$ approximation. 
Suppose $m$ is at least $\tilde{O}(n \epsilon^{-2})$ as in \Cref{tab:multi-bidder-results}. 
Then, we get that it takes at least $C \cdot \epsilon^{-2}$ samples to distinguish $\mathbf{D}$ and $\mathbf{\tilde{D}}$ for some sufficiently large constant $C > 0$.
On the other hand, it takes less than $C \cdot \epsilon^{-2}$ samples to estimate the expected revenue of any mechanism on both $\mathbf{D}$ and $\mathbf{\tilde{D}}$ up to an additive $\epsilon$ factor.
Thus, the expected revenue of any mechanism differs by at most $\epsilon$ on the two distributions; otherwise, we can distinguish them with less than $C \cdot \epsilon^{-2}$ samples by estimating the expected revenue of the mechanism.
As a result, the optimal revenues of $\mathbf{D}$ and $\mathbf{\tilde{D}}$ differ by at most $\epsilon$.

To our knowledge, this is the first time information theory is used to show sample complexity upper bounds for revenue maximization.
Previously, it was used only for lower bounds (e.g., \cite{HuangMR/2015/EC}).
We believe it will find further applications in studying the sample complexity of multi-parameter revenue maximization and other learning problems.
We stress that our algorithm is constructive and, in fact, can be implemented in quasi-linear time;%
\footnote{For each bidder, it takes $O(m \log m)$ time to sort the samples and to compute the quantiles of the empirical distribution, and $O(m)$ time to compute the quantiles of the dominated empirical distribution, and $O(m \log m)$ times to compute the convex hull of the corresponding revenue curve, which characterizes the optimal auction.}
both the doubly shaded distribution $\mathbf{\tilde{D}}$ and the information theoretic arguments are used only in the analysis.

\paragraph{Lower Bound Constructions.}
Our lower bounds are unified under a meta construction, with some components chosen based on the family of distributions.
We briefly sketch the construction below.
Let the first bidder's value distribution be a point mass.
She will serve as the default winner in the optimal auction.
The value distribution of each of the other $n-1$ bidders will be either $D^h$ or $D^\ell$.
These two distributions satisfy that there is a value interval such that for any value in it, the corresponding virtual value wins over bidder $1$ if and only if the distribution is $D^h$.
Both $D^h$ and $D^\ell$ will have an $O(\frac{1}{n})$ chance of realizing a value in this interval.
Intuitively, to find a near optimal mechanism we must be able to distinguish the bidders with distribution $D^h$ from those with distribution $D^\ell$.
Finally, we will construct $D^h$ and $D^\ell$ to be similar so that it takes many samples to distinguish them.
The meta construction, inspired by the hard instances by \citet{ColeR/2014/STOC}, can be viewed as a non-trivial generalization of the lower bound framework by \citet{HuangMR/2015/EC} for the special case of single bidder.

\subsection{Other Related Works}

Prior to \citet{ColeR/2014/STOC}, there were a few sporadic works that had the flavor of learning the optimal price/auction from samples (e.g., \cite{Elkind/2007/SODA, DhangwatnotaiRY/2015/GEB}).

The learning theory approach has also been used to learn approximately optimal auction among a restricted family of simple auctions, both for single-parameter problems~\cite{MorgensternR/2016/COLT}, and for multi-parameter problems~\cite{MorgensternR/2015/NIPS, CaiD/2017/FOCS, BalcanSV/2016/NIPS, BalcanSV/2018/EC, Syrgkanis/2017/NIPS}.
To learn an approximately optimal auction without restrictions in multi-parameter problems, \citet{DughmiHN/2014/WINE} showed that it needed exponentially many samples in general;

\citet{GonczarowskiW/2018/FOCS} proved a polynomial sample complexity upper bound for the special case when bidders' valuations were additive, if we allowed approximate truthfulness.

The online learning version has also been considered, both in the full information setting, i.e., the seller runs a direction revelation auction and observes the bidder's valuation, and in the bandit setting, i.e., the seller runs a posted price auction and only observes if the bidder buys the item.
\citet{BlumH/2005/SODA} introduced the optimal algorithm in terms of a regret bound that scaled with $H$, the upper bound on bidders' values.
\citet{BubeckDHN/2017/EC} further improved the regret bound to scale with the optimal price instead of $H$, and their algorithm matched the optimal sample complexity bounds when the bidder's values in different rounds were i.i.d.\ from a prior distribution.

Intriguingly, even the weaker notion of revenue monotonicity ceases to hold in multi-parameter problems~\cite{HartR/2015/TE}, while approximate versions are showed for restricted families of valuations~\cite{RubinsteinW/2015/EC, Yao/2018/SAGT}.

	\section{Preliminaries}
\label{sec:prelim}

\subsection{Model}
Let there be a single item for sale, and let there be $n$ bidders.
Each bidder $i$ has a private valuation $v_i \ge 0$ for getting the item, where $v_i$ is independently drawn from the corresponding prior distribution $D_i$.
Thus, the value profile $\mathbf{v} = (v_1, v_2, \dots, v_n)$ follows a product distribution $\mathbf{D} = D_1 \times D_2 \times \dots \times D_n$.
We consider direct revelation mechanisms, each of which consists of an allocation function $\mathbf{x}$ and a payment function $\mathbf{p}$. 
First, each bidder submits a bid $b_i \ge 0$.
Then, $x_i(\mathbf{b})$ denotes the probability that bidder $i$ gets the item, and $p_i(\mathbf{b})$ denotes the expected payment by bidder $i$.
Since there is only one item, we have $\sum_{i = 1}^n x_i(\mathbf{b}) \le 1$ for all $\mathbf{b}$.
Each bidder $i$'s utility is $v_i \cdot x_i(\mathbf{b}) - p_i(\mathbf{b})$.
The seller seeks to maximize the expectation of the revenue, which is the sum of bidders' payments, $\sum_{i = 1}^n p_i(\mathbf{b})$.

We remark that our algorithm and the framework for proving sample complexity upper and lower bounds apply to more general single-parameter problems under matroid constraints.
We defer such extensions to Appendix~\ref{app:matroid_ub} and Appendix~\ref{app:matroid_lb}.

By the revelation principle, we focus on \emph{Bayesian incentive compatible} (BIC) mechanisms, which mean that for any bidder $i$ and any value $v_i$, conditioned on the other bidders bidding truthfully, i.e., $\mathbf{b}_{-i} = \mathbf{v}_{-i}$, bidding $b_i = v_i$ maximizes bidder $i$'s expected utility over the randomness of other bidders' values, and guarantees non-negative expected utility.
A stronger notion is \emph{dominant strategy incentive compatible} (DSIC) mechanisms, which means that bidding $b_i = v_i$ always maximizes bidder $i$'s utility, and guarantees it is non-negative, no matter what other bidders bid.


\paragraph{Myerson's Optimal Auction.}
If the prior distribution $\mathbf{D}$ is given as input, the revenue maximizing mechanisms is fully characterized by \citet{Myerson/1981/MOR}.
Interestingly, Myerson's optimal auction is DSIC but is optimal among all BIC mechanisms. The characterization relies on the following notion of \emph{virtual values}.
We first explain this notion assuming the distributions are continuous and have positive densities as in Myeron's original paper.
For any bidder $i$, let $F_i$ and $f_i$ denote the cdf and pdf of the value distributions, the virtual value of bidder $i$ when her value is $i$ is $\phi_i(v_i)=v_i-\frac{1-F_i(v_i)}{f_i(v_i)}$.
Let $q_i(v_i) = \Pr_{D_i} [ v > v_i]$ be the \emph{quantile} of $v_i$.
We have $q_i(v_i) = 1 - F_i(v_i)$ if $D_i$ is continuous.

If for all $i$, the virtual value $\phi_i(v_i)$ is monotonically non-decreasing in $v_i$, the distribution $\mathbf{D}$ is said to be \emph{regular}. 
If $\phi_i(v_i)$ further has derivatives at least $1$ point-wise, $\mathbf{D}$ is said to have \emph{monotone hazard rate} (MHR).
Discrete versions of regular and MHR distributions over non-negative integers are also considered in the literature~\cite{Elkind/2007/SODA, BarlowMP/1963/AoMS}, where $f_i(v_i)$ is replaced with the probability mass of $v_i$.
The optimal auction is simple if the value distributions are MHR or even regular.
It lets the bidder with the largest non-negative virtual value win the item, breaking ties arbitrarily;
if no bidder has a non-negative virtual value, no one gets the item. 
The winner pays the threshold value at or above which she wins.

For general distributions, virtual values may not be monotone.
We need an extra step that defines an \emph{ironed} version of the virtual value that is monotone.
We will use the following definition of \emph{ironed virtual values} so that it generalizes to general distributions that may be a mixture of continuous and discrete distributions.
Define the mapping from quantiles to values as $v_i(q) = \sup \{ v : 1 - F_i(v) \le q \}$.
Define the \emph{revenue curve} over the quantile space as $R_i(q) = q \cdot v_i(q)$.
Let the \emph{ironed revenue curve} $\bar{R}_i(q)$ be the convex hull of $R_i(q)$.
The \emph{ironed virtual value} $\bar{\phi}_i(v_i)$ is the right derivative of $\bar{R}_i \big( 1 - F_i(v_i) \big)$. 
Then, Myerson's optimal auction picks a winner based on the ironed virtual value instead of the virtual value, and charges the threshold value accordingly.

For any mechanism $M$ and any distribution $\mathbf{D}$, we let $\rev(M, \mathbf{D})$ denote the expected revenue of running $M$ on $\mathbf{D}$.
Let $M_\mathbf{D}$ denote Myerson's optimal auction for $\mathbf{D}$. 
For concreteness, assume $M_\mathbf{D}$ breaks ties over bidders with the same virtual values in the lexicographical order.
Let $\opt(\mathbf{D}) = \rev(M_\mathbf{D}, \mathbf{D})$ denote the optimal revenue, which is given by Myerson's optimal auction. 

\paragraph{Sample Complexity.}
Now suppose we can access the prior distribution $\mathbf{D}$ only in the form of $m$ i.i.d.\ samples. 
For a give family of distributions $\mathcal{D}$ (e.g., regular, MHR, bounded support, etc.), the sample complexity of the revenue maximization problem is defined to be the (asymptotically) smallest number $m$ so that there is an algorithm satisfying that for any distribution $\mathbf{D} \in \mathcal{D}$, given $m$ i.i.d.\ samples from $\mathbf{D}$, it learns a mechanism that is a $1 - \epsilon$ multiplicative approximation in revenue with high probability.
We are also interested in an $\epsilon$ additive approximation in some cases.


\subsection{Technical Preliminaries}

\paragraph{Bernstein Inequality.}
Our algorithm and analysis will make use of the standard concentration bound by \citet{Bernstein/1924}, as stated in the next lemma.
\begin{lemma}\label{lem:bernstein}
	Let $X_1, X_2, \dots, X_m$ be i.i.d.\ random variables such that $\E [ X_i ] = 0$, $\E [ X_i^2 ] = \sigma^2$, and $|X_i| \le M$ for some constant $M > 0$.
	Then, for all positive $t$, we have:
	\[
		\textstyle
		\Pr \big[ ~ \big|\sum_{i=1}^m X_i \big| > t ~ \big] \le 2 \exp \big( -\frac{t^2}{ 2m \sigma^2 + (2/3) Mt} \big)
		~.
	\]
\end{lemma}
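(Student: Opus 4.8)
The plan is to run the standard Cram\'er--Chernoff (exponential moment) method. Write $S = \sum_{i=1}^m X_i$. For any $\lambda > 0$, Markov's inequality applied to $e^{\lambda S}$ gives $\Pr[S > t] \le e^{-\lambda t}\,\E[e^{\lambda S}]$, and by independence $\E[e^{\lambda S}] = \prod_{i=1}^m \E[e^{\lambda X_i}] = \big(\E[e^{\lambda X_1}]\big)^m$. So the whole task reduces to controlling the moment generating function of a single centered, bounded variable, and then optimizing over $\lambda$.

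To bound $\E[e^{\lambda X_1}]$, I would expand the exponential and use $\E[X_1] = 0$ to kill the linear term, obtaining $\E[e^{\lambda X_1}] = 1 + \sum_{k \ge 2} \frac{\lambda^k \E[X_1^k]}{k!}$. Since $|X_1| \le M$, we have $|X_1|^k \le M^{k-2} X_1^2$ for $k \ge 2$, hence $\E[X_1^k] \le \E[|X_1|^k] \le M^{k-2}\sigma^2$. Combining this with the elementary inequality $k! \ge 2 \cdot 3^{k-2}$ and summing the resulting geometric series (convergent for $\lambda M < 3$) yields $\E[e^{\lambda X_1}] \le 1 + \frac{\lambda^2 \sigma^2 / 2}{1 - \lambda M/3} \le \exp\!\big(\frac{\lambda^2 \sigma^2 / 2}{1 - \lambda M/3}\big)$, using $1 + x \le e^x$. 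Raising to the $m$-th power and substituting back, $\Pr[S > t] \le \exp\!\big(-\lambda t + \frac{m \lambda^2 \sigma^2/2}{1 - \lambda M/3}\big)$.

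The final step is to pick $\lambda$ so that the exponent becomes the claimed expression. I would take $\lambda = \frac{t}{m\sigma^2 + Mt/3}$, which automatically satisfies $\lambda M < 3$; then $1 - \lambda M/3 = \frac{m\sigma^2}{m\sigma^2 + Mt/3}$, and a one-line computation collapses the exponent to $-\frac{t^2}{2m\sigma^2 + (2/3)Mt}$. This handles the upper tail $\Pr[S > t]$; applying the same argument to the variables $-X_i$ (same mean, variance, and bound) handles $\Pr[S < -t]$, and a union bound over the two events produces the factor $2$ in the statement.

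The argument is entirely routine and there is no conceptual obstacle; the only places that need a little care are the moment estimate --- in particular getting the constant $3$ in the denominator from $k! \ge 2\cdot 3^{k-2}$ --- and checking that the specific choice of $\lambda$ reproduces exactly the constant $2/3$ in front of $Mt$ rather than some other absolute constant.
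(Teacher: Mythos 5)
Your proposal is correct. The paper does not actually prove this lemma --- it simply cites Bernstein (1924) and uses it as a black box --- so there is no proof in the paper to compare against; your argument is the standard Cram\'er--Chernoff derivation, and all the steps check out: the moment bound $\E[|X_1|^k]\le M^{k-2}\sigma^2$, the factorial estimate $k!\ge 2\cdot 3^{k-2}$, the geometric summation giving $\E[e^{\lambda X_1}]\le \exp\!\big(\tfrac{\lambda^2\sigma^2/2}{1-\lambda M/3}\big)$, and the choice $\lambda = t/(m\sigma^2+Mt/3)$ indeed collapses the exponent to exactly $-\lambda t/2 = -t^2/(2m\sigma^2+(2/3)Mt)$, with the symmetric argument on $-X_i$ and a union bound giving the factor $2$.
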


\paragraph{Strong Revenue Monotonicity.}
A distribution $\mathbf{D}$ first-order stochastically dominates another distribution $\mathbf{\tilde{D}}$, or simply $\mathbf{D}$ dominates $\mathbf{\tilde{D}}$ for brevity, if for every $i \in [n]$, $D_i$ dominates $\tilde{D}_i$ in that for every value $v$, its quantile in $D_i$ is weakly larger than that in $\tilde{D}_i$.
We denote this by $\mathbf{D} \succeq \mathbf{\tilde{D}}$.

\citet{DevanurHP/2016/STOC} showed a strong notion of revenue monotonicity as follows:

\begin{lemma}[Strong Revenue Monotonicity~\cite{DevanurHP/2016/STOC}]
	\label{lem:strong_revenue_monotonicity}
	Let $\mathbf{D}$ and $\mathbf{\tilde{D}}$ be two product value distributions such that $\mathbf{D} \succeq \mathbf{\tilde{D}}$.
	Recall that $M_{\mathbf{\tilde{D}}}$ is the optimal auction for $\mathbf{\tilde{D}}$.
	Then, we have:
	\[
		\rev(M_\mathbf{\tilde{D}}, \mathbf{D}) \ge \rev(M_\mathbf{\tilde{D}}, \mathbf{\tilde{D}})
		~.
	\]
\end{lemma}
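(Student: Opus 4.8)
The plan is to argue in quantile space and to use Myerson's payment identity. Write $v_i^{\mathbf{D}}(q)$ and $v_i^{\mathbf{\tilde{D}}}(q)$ for the quantile-to-value maps of $D_i$ and $\tilde{D}_i$; the hypothesis $\mathbf{D} \succeq \mathbf{\tilde{D}}$ is exactly the statement that $v_i^{\mathbf{D}}(q) \ge v_i^{\mathbf{\tilde{D}}}(q)$ for every $i$ and every $q \in [0,1]$, and I would use it in this pointwise-in-quantile form throughout. Two reformulations set the stage. First, since $M_{\mathbf{\tilde{D}}}$ is DSIC, running it on $\mathbf{D}$ is again a DSIC mechanism with a monotone allocation, so by the payment identity $\rev(M_{\mathbf{\tilde{D}}}, \mathbf{D}) = \E_{\mathbf{q} \sim U[0,1]^n}\big[ \sum_i \phi_i^{\mathbf{D}}(v_i^{\mathbf{D}}(q_i)) \, x_i(\mathbf{q}) \big]$, where $x_i(\mathbf{q}) \in \{0,1\}$ indicates that $M_{\mathbf{\tilde{D}}}$ awards the item to bidder $i$ when the value profile is $(v_j^{\mathbf{D}}(q_j))_j$, and $x_i$ is non-increasing in $q_i$. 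Second, $\rev(M_{\mathbf{\tilde{D}}}, \mathbf{\tilde{D}}) = \opt(\mathbf{\tilde{D}}) = \E_{\mathbf{q}\sim U[0,1]^n}\big[\sum_i \bar{\phi}_i^{\mathbf{\tilde{D}}}(v_i^{\mathbf{\tilde{D}}}(q_i))\, x_i(\mathbf{q})\big]$, where now $x_i$ records $M_{\mathbf{\tilde{D}}}$'s allocation on $(v_j^{\mathbf{\tilde{D}}}(q_j))_j$; one may use the ironed virtual value here because the $M_{\mathbf{\tilde{D}}}$-allocation is constant on the ironed intervals of $\mathbf{\tilde{D}}$.

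The heart is a single-bidder observation, which I would isolate first. Fix the other bids $\mathbf{v}_{-i}$; then $M_{\mathbf{\tilde{D}}}$ offers bidder $i$ a take-it-or-leave-it price $p_i(\mathbf{v}_{-i})$, namely the smallest value whose ironed virtual value with respect to $\tilde{D}_i$ is at least $\max\{0, \max_{j \ne i} \bar{\phi}_j^{\mathbf{\tilde{D}}}(v_j)\}$. Domination gives $\Pr_{v \sim D_i}[v \ge p] \ge \Pr_{v \sim \tilde{D}_i}[v \ge p]$ for every fixed $p$, so bidder $i$'s contribution at any fixed threshold is weakly larger when $i$'s value is drawn from $D_i$ rather than $\tilde{D}_i$; taking $p$ to be the monopoly price of $\tilde{D}_i$ already proves the lemma for $n = 1$. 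For $n \ge 2$ the obstruction is that the threshold $p_i(\mathbf{v}_{-i})$ is itself a random variable whose law changes when the competitors' distributions change from $\mathbf{\tilde{D}}_{-i}$ to $\mathbf{D}_{-i}$, and the map sending a threshold to bidder $i$'s revenue at that threshold — the revenue curve of $D_i$ — is not monotone.

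To handle this I would run a hybrid over coordinates: let $\mathbf{D}^{(k)} = (D_1, \dots, D_k, \tilde{D}_{k+1}, \dots, \tilde{D}_n)$ and compare $\rev(M_{\mathbf{\tilde{D}}}, \mathbf{D}^{(k)})$ to $\rev(M_{\mathbf{\tilde{D}}}, \mathbf{D}^{(k-1)})$, where only bidder $k$'s distribution is replaced by the dominating $D_k$. Conditioning on $\mathbf{v}_{-k}$ (which has the same law in both hybrids), bidder $k$'s own payment weakly increases by the single-bidder observation, so everything reduces to showing that the other bidders' payments do not collectively decrease when bidder $k$'s value is stochastically raised. I would account for revenue by the identity of the winner rather than bidder by bidder: a larger $v_k$ can only make bidder $k$ overtake a previously winning bidder $j$, in which case the sale is rerouted to $k$ at the value at which $k$'s ($\mathbf{\tilde{D}}$-)virtual value catches up with $j$'s, and the task is to show that this rerouting does not lose expected revenue. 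I expect this to be the real obstacle: the profile-by-profile statement ``raising one bid never lowers the revenue of $M_{\mathbf{\tilde{D}}}$'' is \emph{false} — a higher bid can pass the item to a bidder with a much more spread-out value distribution and hence a lower sale price — so the comparison must be carried out in expectation, and it must use that $M_{\mathbf{\tilde{D}}}$ is exactly the $\mathbf{\tilde{D}}$-optimal auction (equivalently, that its virtual-value thresholds are governed by the ironed revenue curves of $\mathbf{\tilde{D}}$) rather than an arbitrary DSIC rule. Re-expressing the rerouted payments through the ironed revenue curves and verifying that the gain at bidder $k$ dominates the loss at bidder $j$ in expectation is where essentially all of the work lies.
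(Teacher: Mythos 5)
The paper does not prove this lemma itself; it cites it directly from Devanur, Huang, and Psomas, so there is no in-paper argument for your attempt to match. Taken on its own terms, your write-up correctly sets up the quantile-space framing, correctly resolves the single-bidder case, and correctly diagnoses the core difficulty: the revenue of $M_{\mathbf{\tilde{D}}}$ is \emph{not} monotone profile-by-profile in a single bid, so the coordinate-$k$ hybrid comparison has to be made in expectation. But then the argument stops exactly there. You write that verifying the expectation inequality is ``where essentially all of the work lies,'' and you do not carry it out. That step is the entire non-trivial content of strong revenue monotonicity; what you have is a plausible plan of attack, not a proof.

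To make the gap concrete: conditioned on $\mathbf{v}_{-k}$, bidder $k$'s own expected payment does increase, since the threshold $t_k(\mathbf{v}_{-k})$ is fixed and $D_k \succeq \tilde{D}_k$ raises the sale probability at that fixed price. The \emph{other} bidders' total payment, as a function of $v_k$, rises while $k$ is only a runner-up and then collapses to zero once $\bar{\phi}_k^{\tilde{D}_k}(v_k)$ exceeds the top competing ironed virtual value $\phi^*$; at that boundary the post-transition revenue is $t_k = (\bar{\phi}_k^{\tilde{D}_k})^{-1}(\phi^*)$ while the pre-transition revenue is $v_{j^*} = (\bar{\phi}_{j^*}^{\tilde{D}_{j^*}})^{-1}(\phi^*)$, two different inverse virtual-value maps evaluated at the same level, so $t_k < v_{j^*}$ is entirely possible and the realized revenue can jump down. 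Establishing that the stochastic increase in $v_k$ (equivalently, the stochastic decrease of its $\tilde{D}_k$-quantile) nets out to a gain in expectation is exactly what requires re-expressing both sides through the ironed revenue curves of $\mathbf{\tilde{D}}$ and exploiting their concavity, and none of that calculation appears in your proposal. Until it does, the hybrid step --- and with it the lemma --- remains unproved; it is also worth checking whether a single quantile-space coupling, rather than a coordinate-by-coordinate hybrid, yields a cleaner route, since the hybrid forces you to handle the rerouting analysis $n$ separate times.
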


The weaker notion of revenue monotonicity that is folklore in the literature follows as a corollary.

\begin{lemma}[Weak Revenue Monotonicity]
	\label{lem:weak_revenue_monotonicity}
	Let $\mathbf{D}$ and $\mathbf{\tilde{D}}$ be two product value distributions such that $\mathbf{D} \succeq \mathbf{\tilde{D}}$.
	Then, we have:
	\[
	\opt(\mathbf{D}) \ge \opt(\mathbf{\tilde{D}})
	~.
	\]
\end{lemma}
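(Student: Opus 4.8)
The plan is to derive the weak notion directly from \Cref{lem:strong_revenue_monotonicity} via a short two-step chain of inequalities, using the mechanism $M_{\mathbf{\tilde{D}}}$ as an intermediary. First I would observe that $M_{\mathbf{\tilde{D}}}$, being Myerson's optimal auction for $\mathbf{\tilde{D}}$, is DSIC (and individually rational), and that DSIC is a property of the allocation and payment rules alone, independent of the prior. Hence $M_{\mathbf{\tilde{D}}}$ is a legitimate DSIC mechanism to run against $\mathbf{D}$ as well, and by optimality of $M_{\mathbf{D}}$ among all (B)IC mechanisms for $\mathbf{D}$ we get $\opt(\mathbf{D}) = \rev(M_{\mathbf{D}}, \mathbf{D}) \ge \rev(M_{\mathbf{\tilde{D}}}, \mathbf{D})$.

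Next I would invoke \Cref{lem:strong_revenue_monotonicity} with the hypothesis $\mathbf{D} \succeq \mathbf{\tilde{D}}$, which gives $\rev(M_{\mathbf{\tilde{D}}}, \mathbf{D}) \ge \rev(M_{\mathbf{\tilde{D}}}, \mathbf{\tilde{D}})$. Finally, by definition $\rev(M_{\mathbf{\tilde{D}}}, \mathbf{\tilde{D}}) = \opt(\mathbf{\tilde{D}})$. Concatenating the three relations,
\[
\opt(\mathbf{D}) \;\ge\; \rev(M_{\mathbf{\tilde{D}}}, \mathbf{D}) \;\ge\; \rev(M_{\mathbf{\tilde{D}}}, \mathbf{\tilde{D}}) \;=\; \opt(\mathbf{\tilde{D}}),
\]
which is exactly the claimed inequality.

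There is no real obstacle here; the only point that deserves a sentence of care is the well-definedness of $\rev(M_{\mathbf{\tilde{D}}}, \mathbf{D})$ and the fact that $M_{\mathbf{\tilde{D}}}$ remains a feasible competitor in the maximization defining $\opt(\mathbf{D})$ — i.e., that incentive compatibility and individual rationality of $M_{\mathbf{\tilde{D}}}$ do not rely on the bidders' values actually being drawn from $\mathbf{\tilde{D}}$. Once that is noted, the statement is an immediate corollary of strong revenue monotonicity, as the excerpt already anticipates.
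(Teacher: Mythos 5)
Your proof is correct and is exactly the argument the paper leaves implicit when it states that weak revenue monotonicity ``follows as a corollary'' of \Cref{lem:strong_revenue_monotonicity}: optimality of $M_{\mathbf{D}}$ for $\mathbf{D}$, then strong monotonicity, then the definition of $\opt(\mathbf{\tilde{D}})$. Your added remark that $M_{\mathbf{\tilde{D}}}$ remains a feasible (DSIC, hence BIC) competitor for $\mathbf{D}$ is precisely the right point to flag for the first inequality.
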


\paragraph{Information Theory.}
Consider two probability measure $P$ and $Q$ over a sample
space~$\Omega$.
The {\em Kullback-Leibler (KL) divergence} is defined as follows:
\[ 
\kl(P \| Q) = \int_{\Omega} \ln \left( \frac{dP}{dQ} \right) dP
~. 
\]

We further consider the following symmetric version:
\[
\skl(P, Q) = \kl(P \| Q) + \kl(Q \| P)
~.
\]

A classification algorithm $A : \Omega^m \mapsto \{ P, Q \}$ distinguishes $P$ and $Q$ correctly with $m$ samples if for any $D \in \{P, Q\}$, $A(\omega_1, \omega_2, \dots, \omega_m) = D$ with probability at least $\frac{2}{3}$, where $\omega_1, \omega_2, \dots, \omega_m$ are i.i.d.\ samples from $D$.
The upper and lower bounds in this paper both use the following connection between the number of samples needed to distinguish two distributions and their KL divergence.

\begin{lemma}[e.g., see \cite{HuangMR/2015/EC}]
	\label{lem:kl_divergence}
	Suppose there is a classification algorithm that distinguishes $P$ and $Q$ correctly with $m$ samples.
	Then, the number of samples $m$ is at least:
	\[
	\Omega \big( \skl(P, Q)^{-1} \big)
	~.
	\]
\end{lemma}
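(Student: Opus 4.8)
The plan is to prove the contrapositive-flavored statement: if a classification algorithm $A$ distinguishes $P$ and $Q$ correctly with $m$ samples, then $m \cdot \skl(P,Q) = \Omega(1)$. First I would set up the two product measures $P^m$ and $Q^m$ on $\Omega^m$ (i.i.d.\ samples), and recall the tensorization identity $\kl(P^m \| Q^m) = m \cdot \kl(P \| Q)$, which follows from the chain rule for KL divergence on product distributions; likewise $\kl(Q^m \| P^m) = m \cdot \kl(Q\|P)$, so $\skl(P^m, Q^m) = m \cdot \skl(P,Q)$. This reduces the claim to a single-sample statement: any algorithm distinguishing two distributions $P^m, Q^m$ with success probability $\ge \tfrac23$ forces $\skl(P^m, Q^m) = \Omega(1)$.

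Next I would pass from the algorithm to total variation distance. Let $S \subseteq \Omega^m$ be the acceptance region on which $A$ outputs $P$. Correctness gives $P^m(S) \ge \tfrac23$ and $Q^m(S) \le \tfrac13$, hence $\|P^m - Q^m\|_{\mathrm{TV}} \ge P^m(S) - Q^m(S) \ge \tfrac13$. The remaining ingredient is a lower bound on (symmetrized) KL divergence in terms of TV distance. The cleanest route is Pinsker's inequality, $\kl(P^m \| Q^m) \ge 2\|P^m - Q^m\|_{\mathrm{TV}}^2$, which immediately yields $\skl(P^m, Q^m) \ge \kl(P^m\|Q^m) \ge 2 \cdot (1/3)^2 = 2/9$. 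Combining with tensorization gives $m \cdot \skl(P,Q) \ge 2/9$, i.e.\ $m = \Omega(\skl(P,Q)^{-1})$, as desired. One caveat worth a sentence in the writeup: if $\skl(P,Q) = \infty$ (e.g.\ non-overlapping supports) the bound is vacuous and indeed one sample suffices, so we may assume $\skl(P,Q) < \infty$ throughout.

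The main obstacle is not any single deep step but rather handling the measure-theoretic bookkeeping cleanly: ensuring the Radon–Nikodym derivatives are well-defined (so we should assume $P \ll Q$ and $Q \ll P$, else $\skl = \infty$ and there is nothing to prove), and justifying the chain-rule/tensorization identity for general $\Omega$ rather than just discrete $\Omega$. I would state tensorization as a standard fact with a one-line pointer to the chain rule for relative entropy. A secondary subtlety is that the algorithm $A$ may be randomized; but since conditioning on $A$'s internal coins only makes the task harder on average, we may fix the coins and treat $A$ as deterministic, picking the acceptance set $S$ accordingly — or, more carefully, average the TV lower bound over the coins. Either way the $\tfrac13$ separation survives, and Pinsker closes the argument.
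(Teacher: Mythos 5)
The paper cites this lemma from \citet{HuangMR/2015/EC} without providing a proof, so there is no in-paper argument to compare against; the question is only whether your blind proof is sound, and it is. The route you take --- reduce the classifier to an acceptance set $S$ with $P^m(S)-Q^m(S)\ge 1/3$, apply Pinsker's inequality $\kl(P^m\|Q^m)\ge 2\|P^m-Q^m\|_{\mathrm{TV}}^2 \ge 2/9$, and then use the tensorization/additivity identity $\kl(P^m\|Q^m)=m\,\kl(P\|Q)$ (and likewise for the reversed divergence) to conclude $m\cdot\skl(P,Q)\ge 2/9$ --- is exactly the standard argument behind this kind of testing lower bound, and it is the same family of tools \citet{HuangMR/2015/EC} invoke. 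Your treatment of the two caveats is also correct: if $\skl(P,Q)=\infty$ the bound is vacuous and there is nothing to prove, and a randomized tester is handled by noting that $\E_r[P^m(S_r)-Q^m(S_r)]\ge 1/3$ while each term is bounded by $\|P^m-Q^m\|_{\mathrm{TV}}$, so the TV lower bound survives averaging over the coins.

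One minor remark on presentation rather than correctness: you can skip the detour through $\skl(P^m,Q^m)\ge\kl(P^m\|Q^m)$, since Pinsker already gives the bound on a single directed KL and $\skl\ge\kl$ in either direction; you use $\skl$ only to match the lemma's statement, which is fine but worth saying explicitly so the factor-of-two relationship between $\kl$ and $\skl$ does not look like it is doing work. Also note that Pinsker is lossy when the TV gap is close to $1$; if one wanted a bound that remains nontrivial for large $\skl$, the Bretagnolle--Huber inequality $\|P^m-Q^m\|_{\mathrm{TV}}\le\sqrt{1-e^{-\kl(P^m\|Q^m)}}$ is the usual replacement, but for the $\Omega(\skl^{-1})$ claim here Pinsker is the cleanest and entirely sufficient.
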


	\section{Upper Bounds}
\label{sec:upper_bound_multi_bidder}

We present in this section an algorithm and its analysis that achieve the optimal sample complexity, up to a poly-logarithmic factor, simultaneously for all families of distributions in the literature.
The proofs of some lemmas that are relatively standard are deferred to \Cref{app:missing_proofs_upper_bounds}.
We also include a discussion on the optimality of our algorithm in the special case of a single bidder in \Cref{app:single_ub}.

\subsection{Dominated Empirical Myerson}
\label{sec:dominated_empirical_myerson}

We first define the following function:
\begin{equation}
\label{eqn:shading}
\shading(q) \defeq \max \left\{ 0, q - \sqrt{\frac{2 q(1-q) \ln(2mn\delta^{-1})}{m}} - \frac{4 \ln(2mn\delta^{-1})}{m} \right\}
~.
\end{equation}

For a value distribution $D$, we abuse notation by letting $\shading(D)$ denote a distribution such that for any value $v > 0$ with quantile $q$ in $D$, its quantile in $\shading(D)$ is $\shading(q)$.
Let $\shading(\mathbf{D})$ denote the product distribution obtained by applying $\shading$ to each coordinate of $\mathbf{D}$.

Given this function, we now present our algorithm below as \Cref{alg:dominated_empirical_myerson}.

%
%
%
%

\begin{algorithm}
	\begin{algorithmic}[1]
		\REQUIRE $m$ i.i.d.\ samples from the value distribution $\mathbf{D} = D_1 \times D_2 \times \dots \times D_n$
		\ENSURE a mechanism that decides the allocation and payment given bids from $n$ bidders
		\STATE Let $E_i$ be the empirical distribution, i.e., the uniform distribution over the samples of bidder $i$.\\
		\STATE Let $\mathbf{E} = E_1 \times E_2 \times \dots \times E_n$. 
		\STATE Let $\mathbf{\tilde{E}} = \shading(\mathbf{E})$. 
		That is, let $q^{E_i}(v)$ be the quantile of $E_i$; the quantile of $\tilde{E}_i$ is as follows:
		\begin{equation*}
		q^{\tilde{E}_i}(v)=
		\begin{cases}
		\max \left\{ 0, q^{E_i}(v) - \sqrt{\frac{2 q^{E_i}(v) \left( 1-q^{E_i}(v) \right) \ln (2mn \delta^{-1})}{m}} - \frac{4 \ln (2mn \delta^{-1})}{m} \right\} & \text{if $v > 0$} \\
		1 & \text{if $v = 0$} 
		\end{cases}
		\end{equation*}
		\STATE Output Myerson's optimal auction $M_{\mathbf{\tilde{E}}}$ w.r.t.\ $\mathbf{\tilde{E}} = \tilde{E}_1 \times \tilde{E}_2 \times \dots \times \tilde{E}_n$.
	\end{algorithmic}
	\caption{Dominated Empirical Myerson}
	\label{alg:dominated_empirical_myerson}	
\end{algorithm}

Our algorithm relies on constructing from the samples a distribution $\mathbf{\tilde{E}}$ dominated by the true value distribution but is as close to it as possible in a sense.
We will refer to $\mathbf{\tilde{E}}$ as the dominated empirical distribution, which is intuitively a shaded version of the empirical distribution via function $\shading$.
This is formalized by the following two lemmas.

\begin{lemma}
	\label{lem:empirical_error_bound}
	With probability at least $1 - \delta$, for any value $v \ge 0$, for any bidder $i \in [n]$, $v$'s quantiles in $D_i$ and $E_i$ satisfy that:
	\[
	\left| q^{E_i}(v) - q^{D_i}(v) \right| \le \sqrt{\frac{2 q^{D_i}(v) \big( 1-q^{D_i}(v) \big) \ln(2mn\delta^{-1})}{m} } + \frac{\ln(2mn\delta^{-1})}{m}
	~.
	\]
\end{lemma}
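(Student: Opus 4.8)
The plan is to fix a bidder $i$, establish the bound at a single value $v$ via Bernstein's inequality (\Cref{lem:bernstein}), and then pay the logarithmic factor $\ln(2mn\delta^{-1})$ inside the square root to extend it to all $v \ge 0$ and all $i\in[n]$ by a union bound over a suitable finite set of ``representative'' values. Concretely, for fixed $i$ and fixed $v\ge 0$ write $q=q^{D_i}(v)$ and let $x_{i1},\dots,x_{im}$ be the samples of bidder $i$, so that $q^{E_i}(v)=\frac1m\sum_{j=1}^m\mathbf 1[x_{ij}>v]$. Applying \Cref{lem:bernstein} to $X_j=\mathbf 1[x_{ij}>v]-q$ (which satisfy $\E[X_j]=0$, $\E[X_j^2]=q(1-q)$, $|X_j|\le 1$) with threshold $t=m\bigl(\sqrt{2q(1-q)\ln(2mn\delta^{-1})/m}+\ln(2mn\delta^{-1})/m\bigr)$, and verifying (a routine algebraic step, with a constant factor of slack to spare) that $t^2\ge \ln(2mn\delta^{-1})\cdot\bigl(2mq(1-q)+\tfrac23 t\bigr)$, gives
\[
\Pr\left[\, \bigl| q^{E_i}(v) - q^{D_i}(v) \bigr| > \sqrt{\frac{2q(1-q)\ln(2mn\delta^{-1})}{m}} + \frac{\ln(2mn\delta^{-1})}{m} \,\right] \le 2e^{-\ln(2mn\delta^{-1})} = \frac{\delta}{mn}.
\]

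\textbf{Uniformity over $v$.} Next I would argue it suffices to verify the inequality at the at most $m$ distinct sample values of bidder $i$, together with the left limits there (and at $v=0$). The point is that $q^{E_i}(\cdot)$ and $q^{D_i}(\cdot)$ are both non-increasing and right-continuous, and $q^{E_i}(\cdot)$ is a step function whose jumps lie only at the sample values; on a maximal interval where $q^{E_i}\equiv c$, the quantity $q^{E_i}(v)-q^{D_i}(v)=c-q^{D_i}(v)$ is monotone in $v$, so validity on the whole interval follows from validity at its two endpoints (which are, or are left limits at, sample values), using that the set $\{q:|c-q|\le h(q)\}$ with $h(q)=\sqrt{2q(1-q)\ln(2mn\delta^{-1})/m}+\ln(2mn\delta^{-1})/m$ is an interval, and that $\ln(2mn\delta^{-1})/m\ge 1/m$ absorbs the jumps of $q^{D_i}$ at atoms of $D_i$. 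Union-bounding the resulting $O(m)$ events per bidder over all $n$ bidders — $O(mn)$ events, each failing with probability $\le \delta/(mn)$ by the display above — yields the lemma, the constant in the event count being absorbed by the slack in the Bernstein step. This also explains why $\ln(2mn\delta^{-1})$, with both an $n$ and an $m$ inside, is the right scale.

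\textbf{Main obstacle.} The delicate part is making this uniform-over-$v$ reduction rigorous, and there are three points requiring care. First, the representative values are the sample values themselves, hence random, so \Cref{lem:bernstein} does not apply to them verbatim; one conditions on the realized value of the sample in question, leaving $m-1$ i.i.d.\ samples (which only strengthens the concentration), or equivalently replaces each order statistic by a deterministic surrogate value read off from $D_i$. Second, $h(q)$ is not monotone in $q$ (it peaks near $q=\tfrac12$), so ``worst case at an endpoint'' needs the elementary but non-vacuous fact that $\{q:|c-q|\le h(q)\}$ is an interval. Third, since the distribution families of interest include discrete ones, $D_i$ may have atoms, so one must consistently distinguish $q^{D_i}(v)=\Pr_{D_i}[v'>v]$ from its left limit $\Pr_{D_i}[v'\ge v]$ — this is exactly why the left limits at sample values enter the union bound. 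None of these is deep, but getting all three simultaneously right is where the work lies; the core probabilistic content is just the one-line Bernstein estimate above.
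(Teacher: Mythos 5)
Your proof is correct, and its core probabilistic step — Bernstein's inequality at $O(mn)$ points followed by a union bound — is the same as the paper's; the difference is in the discretization. The paper discretizes at the \emph{deterministic} values whose $D_i$-quantiles are multiples of $1/m$ and extends to arbitrary $v$ by sandwiching $q^{E_i}(v)$ between the empirical quantiles at neighboring grid points, paying the $\le 1/m$ quantile shift out of the slack between $\frac{2\ln(2mn\delta^{-1})}{3m}$ (what it actually proves at grid points) and the $\frac{\ln(2mn\delta^{-1})}{m}$ in the lemma statement. You instead discretize at the sample values of bidder $i$ and their left limits, i.e., the breakpoints of the step function $q^{E_i}$; this makes the extension to arbitrary $v$ cleaner (no shift in the quantile variable, only the interval-structure observation for $\{q:|c-q|\le h(q)\}$), at the cost of a \emph{random} grid, which you handle via conditioning. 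Your alternative fix — replacing each order statistic with a deterministic surrogate value read off from $D_i$ — is essentially the paper's grid in disguise. One remark: the non-monotonicity of $h(q)$ that you flag as a subtlety of your route is also present in the paper's transfer step (the $\sqrt{q(1-q)}$ factor can change by $\Theta(1/\sqrt{m})$ across a $1/m$ quantile shift, costing another $O(\sqrt{\ln(\cdot)}/m)$ term), so your care here is not an artifact of your approach but a shared constant-factor subtlety that the paper elides in the phrase ``extra additive factor of at most $1/m$.''
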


\begin{lemma}
	\label{lem:D_and_tilde_E}
	Assuming the bounds in \Cref{lem:empirical_error_bound}, we have:
	\[
	\mathbf{D} \succeq \mathbf{\tilde{E}}
	~.
	\]
\end{lemma}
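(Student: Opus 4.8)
The plan is to verify first-order stochastic dominance coordinate-by-coordinate and value-by-value. Since $\mathbf{D} \succeq \mathbf{\tilde E}$ unfolds to ``$q^{D_i}(v) \ge q^{\tilde E_i}(v)$ for every bidder $i \in [n]$ and every value $v \ge 0$'', it suffices to prove this pointwise inequality. The case $v = 0$ is immediate, since $q^{\tilde E_i}(0) = 1$ and we may assume $q^{D_i}(0) = 1$ (bidders have positive values almost surely, which is without loss of generality), so I focus on $v > 0$. Abbreviate $L = \ln(2mn\delta^{-1})$, $q = q^{D_i}(v)$, $\hat q = q^{E_i}(v)$, and write $g(x) = x - \sqrt{2x(1-x)L/m} - 4L/m$, so that $q^{\tilde E_i}(v) = \max\{0, g(\hat q)\}$ by the definition in \Cref{alg:dominated_empirical_myerson}. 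If $g(\hat q) \le 0$ there is nothing to prove; otherwise it suffices to show $g(\hat q) \le q$, equivalently
\[ \hat q - q \;\le\; \sqrt{\tfrac{2\hat q(1-\hat q)L}{m}} + \tfrac{4L}{m}, \]
which I will refer to as $(\star)$. The only external fact I will use is the concentration bound of \Cref{lem:empirical_error_bound} (assumed to hold, as in the hypothesis of \Cref{lem:D_and_tilde_E}), namely $\hat q - q \le \sqrt{2q(1-q)L/m} + L/m$. If $\hat q \le q$ then $(\star)$ is trivial, so from now on assume $\hat q > q$.

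I then split on the magnitude of $q(1-q)$. If $q(1-q) < 2L/m$, then $\sqrt{2q(1-q)L/m} < 2L/m$, so \Cref{lem:empirical_error_bound} gives $\hat q - q < 3L/m < 4L/m$, and $(\star)$ holds with room to spare. If $q(1-q) \ge 2L/m$, the crux is the variance-proxy comparison $\sqrt{2q(1-q)L/m} \le \sqrt{2\hat q(1-\hat q)L/m} + 3L/m$; granting it, \Cref{lem:empirical_error_bound} gives $\hat q - q \le \sqrt{2\hat q(1-\hat q)L/m} + 3L/m + L/m$, which is exactly $(\star)$. To prove the comparison I may assume $q(1-q) > \hat q(1-\hat q)$ (otherwise it is trivial). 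Then $q(1-q) - \hat q(1-\hat q) = (\hat q - q)(q + \hat q - 1) \le \hat q - q$, since $0 < q + \hat q - 1 \le 1$ in this case; combining $\sqrt a - \sqrt b \le (a - b)/\sqrt a$ (valid for $a \ge b \ge 0$ with $a > 0$) with \Cref{lem:empirical_error_bound} and $\sqrt{q(1-q)} \ge \sqrt{2L/m}$ gives
\[ \sqrt{q(1-q)} - \sqrt{\hat q(1-\hat q)} \;\le\; \frac{\sqrt{2q(1-q)L/m} + L/m}{\sqrt{q(1-q)}} \;\le\; \sqrt{\tfrac{2L}{m}} + \tfrac12\sqrt{\tfrac{2L}{m}} \;=\; \tfrac32\sqrt{\tfrac{2L}{m}}, \]
and multiplying through by $\sqrt{2L/m}$ yields the asserted $3L/m$ slack. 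In every case we conclude $q^{\tilde E_i}(v) = \max\{0, g(\hat q)\} \le q = q^{D_i}(v)$, which is the claim.

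The argument makes essential use of the exact constants: \Cref{lem:empirical_error_bound} loses only a $+L/m$ additive term measured against the \emph{true} variance $q(1-q)$, while the shading in \Cref{alg:dominated_empirical_myerson} subtracts a $+4L/m$ term measured against the \emph{empirical} variance $\hat q(1-\hat q)$, and the surplus $3L/m$ is precisely what absorbs the gap between $\sqrt{q(1-q)}$ and $\sqrt{\hat q(1-\hat q)}$. I expect that gap to be the main obstacle, in particular making the bound go through uniformly when $q$ is close to $0$ or $1$ so that $q(1-q)$ is itself only of order $L/m$ and dividing by $\sqrt{q(1-q)}$ becomes delicate; the case split above handles the small-variance regime crudely via the $L/m$ slack and the bulk regime via the $\sqrt a - \sqrt b$ factorization.
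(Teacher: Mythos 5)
Your proof is correct, but it takes a genuinely different route from the paper. The paper first uses the monotonicity of $\shading$ to reduce to the boundary case where the Bernstein bound of \Cref{lem:empirical_error_bound} holds with equality, $q^E = q^D + \sqrt{2c\,q^D(1-q^D)} + c$ with $c = \ln(2mn\delta^{-1})/m$, then squares this to get a quadratic equation in $q^D$, explicitly solves for $q^D$ in terms of $q^E$ as the smaller root, and massages the discriminant to conclude $q^D \ge \shading(q^E)$. You instead argue directly with the inequality form: you reduce to the key estimate $\hat q - q \le \sqrt{2\hat q(1-\hat q)c} + 4c$ and split on the magnitude of $q(1-q)$, using the factorization $q(1-q) - \hat q(1-\hat q) = (\hat q - q)(q+\hat q - 1)$ together with $\sqrt a - \sqrt b \le (a-b)/\sqrt a$ to compare the true and empirical variance proxies in the bulk regime, and absorbing everything into the $4c$ slack in the small-variance regime. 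Both proofs hinge on the same arithmetic fact --- the extra $3c$ of additive headroom between the $+c$ error in \Cref{lem:empirical_error_bound} and the $-4c$ subtraction in $\shading$ absorbs the gap between $\sqrt{q(1-q)}$ and $\sqrt{\hat q(1-\hat q)}$ --- but the paper's quadratic-solving route packages this in one algebraic step while your case split makes explicit exactly where the slack is spent, which is arguably more transparent. One cosmetic point: the paper, like you, restricts to $v > 0$ (its $\shading$ map is only defined there, and the algorithm hard-codes $q^{\tilde E_i}(0)=1$), so your brief WLOG remark about $v=0$ is fine but not needed; and you could note that the case $\hat q = 1$ with $q < 1-4c$ is in fact made vacuous by the hypothesis of \Cref{lem:empirical_error_bound}, which is what your case analysis implicitly exploits.
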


We show that \Cref{alg:dominated_empirical_myerson} is ``universally'' optimal in the sense that it achieves the following sample complexity upper bounds simultaneously for all families of distributions in the literature.
We will establish their optimality, up to a poly-logarithmic factor, with the lower bounds in \Cref{sec:lower_bound_multi_bidder}.

\begin{theorem}
	\label{thm:main_upper_bounds}
	For any $0 < \epsilon < 1$ and any $n$-bidder product value distribution $\mathbf{D}$, \Cref{alg:dominated_empirical_myerson} returns a mechanism with an expected revenue at least $(1 - \epsilon) \opt(\mathbf{D})$, with probability at least $1-\delta$, if:
	\begin{enumerate}
		\item $m$ is at least $O(n\epsilon^{-3} \ln(n\epsilon^{-1}) \ln(n\delta^{-1}\epsilon^{-1})) = \tilde{O}(n \epsilon^{-3})$ and $\mathbf{D}$ is regular; or
		\item $m$ is at least $O(n\epsilon^{-2} \ln(\epsilon^{-1}) \ln(n\epsilon^{-1}) \ln(n\delta^{-1}\epsilon^{-1})) = \tilde{O}(n \epsilon^{-2})$ and $\mathbf{D}$ is MHR; or
		\item $m$ is at least $O(nH\epsilon^{-2} \ln(nH\epsilon^{-1}) \ln(n H\epsilon^{-1} \delta^{-1})) = \tilde{O}(n H \epsilon^{-2})$ and $\mathbf{D}$ has a bounded support in $[1, H]$ coordinate-wise.
	\end{enumerate}
	We also have that its expected revenue is at least $\opt(\mathbf{D}) - \epsilon$ if:
	\begin{enumerate}
		\setcounter{enumi}{3}
		\item $m$ is at least $O(n\epsilon^{-2} \ln(n\epsilon^{-1}) \ln(n\epsilon^{-1}\delta^{-1})) = \tilde{O}(n \epsilon^{-2})$ and $\mathbf{D}$ has a bounded support in $[0, 1]$ coordinate-wise.
	\end{enumerate}
\end{theorem}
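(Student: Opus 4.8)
The plan is to combine the two revenue-monotonicity lemmas with the information-theoretic comparison sketched in the introduction. Fix $\mathbf{D}$ and work on the $(1-\delta)$-probability event of \Cref{lem:empirical_error_bound}. Let $\mathbf{\tilde{D}} \defeq \doubleshading(\mathbf{D})$ be the \emph{doubly shaded} distribution, where $\doubleshading$ subtracts (roughly) twice the estimation error of \Cref{eqn:intro_estimation_error} from each quantile, chosen precisely so that \Cref{lem:empirical_error_bound} together with monotonicity of $\shading$ gives $\mathbf{\tilde{E}} \succeq \mathbf{\tilde{D}}$. Then \Cref{lem:strong_revenue_monotonicity} applied to $\mathbf{D} \succeq \mathbf{\tilde{E}}$ (which is \Cref{lem:D_and_tilde_E}) yields $\rev(M_{\mathbf{\tilde{E}}},\mathbf{D}) \ge \rev(M_{\mathbf{\tilde{E}}},\mathbf{\tilde{E}}) = \opt(\mathbf{\tilde{E}})$, and \Cref{lem:weak_revenue_monotonicity} applied to $\mathbf{\tilde{E}} \succeq \mathbf{\tilde{D}}$ gives $\opt(\mathbf{\tilde{E}}) \ge \opt(\mathbf{\tilde{D}})$. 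Hence it suffices to prove $\opt(\mathbf{\tilde{D}}) \ge (1-\epsilon)\opt(\mathbf{D})$ in cases 1--3 and $\opt(\mathbf{\tilde{D}}) \ge \opt(\mathbf{D}) - \epsilon$ in case 4.

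Before comparing $\opt(\mathbf{D})$ and $\opt(\mathbf{\tilde{D}})$ I would pass, in cases 2--4, to a bounded-support proxy $\mathbf{\hat{D}}$ that shares the support of $\mathbf{\tilde{D}}$. Since $\shading$ flattens every quantile below $\Theta(\ln(mn\delta^{-1})/m)$ to $0$, the distribution $\mathbf{\tilde{D}}$ collapses the top $\tilde O(1/m)$-quantile of each coordinate into a single atom (with an analogous, convention-dependent redistribution of $\tilde O(1/m)$ mass near value $0$); I define $\mathbf{\hat{D}}$ to be $\mathbf{D}$ with the mass above the corresponding value threshold $v^\star_i$ collapsed down onto $v^\star_i$, so that $\mathbf{D} \succeq \mathbf{\hat{D}}$, coordinate $i$ of $\mathbf{\hat{D}}$ is supported in $[0,v^\star_i]$, and $\mathrm{supp}(\mathbf{\hat{D}}) = \mathrm{supp}(\mathbf{\tilde{D}})$ coordinate-wise. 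By \Cref{lem:weak_revenue_monotonicity}, $\opt(\mathbf{\hat{D}}) \le \opt(\mathbf{D})$, and the loss $\opt(\mathbf{D}) - \opt(\mathbf{\hat{D}})$ is small family-by-family: for the $[1,H]$ and $[0,1]$ supports the collapsed mass is only $\tilde O(n/m)$ and the values are at most $H$ (resp.\ $1$), giving a loss $\tilde O(nH/m)\le\epsilon\opt(\mathbf{D})$ (resp.\ $\tilde O(n/m)\le\epsilon$); for MHR the light tail gives $v^\star_i = v^{D_i}(q_0) = \tilde O(\opt(\mathbf{D}))$ and, via $v^{D_i}(q)=O(\E[v_i]\log(1/q))$ and $\E[v_i]=O(\opt(\mathbf{D}))$, a loss $\tilde O(n/m)\cdot\opt(\mathbf{D})\le\epsilon\opt(\mathbf{D})$. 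The regular case (case 1) I would instead handle without a proxy, bounding $\opt(\mathbf{D})-\opt(\mathbf{\tilde{D}}) \le \rev(M_{\mathbf{D}},\mathbf{D}) - \rev(M_{\mathbf{D}},\mathbf{\tilde{D}})$ and accounting, price level by price level, for the revenue the fixed threshold auction $M_{\mathbf{D}}$ loses as quantiles are shaded by $\tilde O(\sqrt{q(1-q)/m})$; regularity together with $m=\tilde\Omega(n\epsilon^{-3})$ makes this at most $\epsilon\opt(\mathbf{D})$ (this is the direct argument of \Cref{app:without_information_theory}).

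For cases 2--4 it then remains to compare $\opt(\mathbf{\hat{D}})$ with $\opt(\mathbf{\tilde{D}})$ via information theory. First, since both are product distributions with the same support, $\skl(\mathbf{\hat{D}},\mathbf{\tilde{D}}) = \sum_i \skl(\hat D_i,\tilde D_i)$, and on the quantile range where $\doubleshading$ is active its derivative lies in a fixed constant interval, so $\hat D_i$ and $\tilde D_i$ have pointwise mass ratios bounded by constants while differing only by the $\tilde O(\sqrt{1/m})$-size shading; a second-order estimate gives $\skl(\hat D_i,\tilde D_i) = \tilde O(1/m)$ and hence $\skl(\mathbf{\hat{D}},\mathbf{\tilde{D}}) = \tilde O(n/m)$, so by \Cref{lem:kl_divergence} no algorithm distinguishes $\mathbf{\hat{D}}$ from $\mathbf{\tilde{D}}$ with $o(n/m)$ samples. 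On the other hand, for any fixed mechanism $M$ the per-sample payment is nonnegative, bounded by $\max_i v^\star_i$, and has variance at most $\max_i v^\star_i\cdot\rev(M,\cdot) \le \max_i v^\star_i\cdot\opt(\mathbf{D})$, so \Cref{lem:bernstein} lets one estimate $\rev(M,\mathbf{\hat{D}})$ and $\rev(M,\mathbf{\tilde{D}})$ to additive error $\epsilon\opt(\mathbf{D})/3$ (resp.\ $\epsilon/3$ in case 4) from only $\tilde O(\max_i v^\star_i/(\epsilon^2\opt(\mathbf{D})))$ samples, which the sample-size hypotheses make $o(n/m)$ in each case ($\tilde O(1/\epsilon^2)$ for MHR, $\tilde O(H/\epsilon^2)$ for $[1,H]$, $\tilde O(1/\epsilon^2)$ for $[0,1]$). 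Thus if some $M$ had $|\rev(M,\mathbf{\hat{D}})-\rev(M,\mathbf{\tilde{D}})|$ exceeding $2\epsilon\opt(\mathbf{D})/3$ (resp.\ $2\epsilon/3$), thresholding the estimate would distinguish the two distributions with too few samples, a contradiction; applying this to $M = M_{\mathbf{\hat{D}}}$ gives $\opt(\mathbf{\tilde{D}}) \ge \rev(M_{\mathbf{\hat{D}}},\mathbf{\tilde{D}}) \ge \opt(\mathbf{\hat{D}}) - 2\epsilon\opt(\mathbf{D})/3$, and chaining with the truncation bound closes cases 2--4.

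The main obstacle is the information-theoretic core: proving $\skl(\mathbf{\hat{D}},\mathbf{\tilde{D}}) = \tilde O(n/m)$ with the exact poly-logarithmic factors demanded by \Cref{thm:main_upper_bounds}. This needs a delicate per-coordinate analysis that separates the bulk of the quantile interval (where $\doubleshading$ has derivative bounded away from $0$ and $1$, so a Taylor expansion yields the $1/m$ rather than $1/\sqrt m$ rate) from the extreme quantile ranges — the top $\tilde O(1/m)$ quantile, where $\doubleshading$ vanishes, and the symmetric behavior near value $0$ under the $q^{\tilde E_i}(0)=1$ convention — which must be quarantined into the definition of $\mathbf{\hat{D}}$ (or absorbed by conditioning on a high-probability ``typical'' event) so that the divergence stays finite; getting $\mathrm{supp}(\mathbf{\hat{D}})$ to match $\mathrm{supp}(\mathbf{\tilde{D}})$ exactly while keeping $\opt(\mathbf{D})-\opt(\mathbf{\hat{D}})$ small is the fiddly step, and tracking the constants there is what forces the extra $\ln(\epsilon^{-1})$ in the MHR bound. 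Secondary difficulties are the family-specific truncation-loss bounds and the revenue-accounting estimate underlying the direct treatment of the regular case, both standard but each requiring its own revenue-curve argument.
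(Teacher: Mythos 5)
Your overall plan — combine \Cref{lem:strong_revenue_monotonicity} and \Cref{lem:weak_revenue_monotonicity} to reduce to comparing $\opt(\mathbf{D})$ with $\opt(\mathbf{\tilde{D}})$, then handle that gap by an indistinguishability argument on bounded-support surrogates, with family-specific truncation-loss bounds and the extreme value theorem for MHR — is exactly the architecture of the paper's proof. The problem is in how you build the surrogate pair.

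You claim that $\mathbf{\tilde{D}} = \doubleshading(\mathbf{D})$ ``collapses the top $\tilde O(1/m)$-quantile of each coordinate into a single atom'' at $v^\star_i$, and you build $\mathbf{\hat{D}}$ by collapsing $D_i$'s top mass onto $v^\star_i$ so that the two supports match. But $\doubleshading$ does not create an atom at the top. Since $q^{\tilde D_i}(v) = \doubleshading(q^{D_i}(v))$ and $\doubleshading$ is continuous with $\doubleshading(q_0)=0$ at the threshold $q_0 = \Theta(\ln(mn\delta^{-1})/m)$, the shaded quantile tends to $0$ continuously as $v \uparrow v^\star_i$; the mass that $\doubleshading$ removes from the top is pushed \emph{down} through the distribution and ultimately lands on the atom at $0$ (the $q^{\tilde E_i}(0)=1$ convention), not on $v^\star_i$. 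So for a continuous $D_i$, $\hat D_i$ has an atom at $v^\star_i$ of mass $\Theta(q_0)$ while $\tilde D_i$ assigns mass $0$ there. Consequently $\hat D_i$ is not absolutely continuous with respect to $\tilde D_i$ and $\kl(\hat D_i \| \tilde D_i) = \infty$ (and symmetrically $\kl(\tilde D_i \| \hat D_i) = \infty$ because of the atom at $0$ that $\tilde D_i$ has but $\hat D_i$ need not). Your ``second-order estimate gives $\skl(\hat D_i,\tilde D_i) = \tilde O(1/m)$'' is therefore false as stated, and the indistinguishability step does not go through.

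The repair is precisely what the paper does and is small but essential: do not compare the truncated distribution against $\mathbf{\tilde{D}} = \doubleshading(\mathbf{D})$; instead apply $\doubleshading$ to the truncated distribution itself. That is, set $\mathbf{\tilde{D}'} \defeq \doubleshading(\mathbf{\hat{D}})$ (or more precisely $\doubleshading(\truncatebottom_\epsilon \circ \truncatetop_{\mathbf{\bar v}}(\mathbf{D}))$ as in \Cref{lem:auxiliary_distributions}). Because $\doubleshading$ is continuous and strictly increasing on the truncated quantile range, $\hat D_i$ and $\tilde D'_i$ then have the \emph{same} atoms — including the top atom — which is exactly what the point-mass paragraph of \Cref{lem:KL_upper_bound} needs to make $\skl$ finite and $\tilde O(1/m)$ per coordinate. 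You recover $\opt(\mathbf{\tilde{D}}) \ge \opt(\mathbf{\tilde{D}'})$ from $\mathbf{D} \succeq \mathbf{\hat{D}}$, monotonicity of $\doubleshading$ (so $\mathbf{\tilde{D}} \succeq \mathbf{\tilde{D}'}$), and \Cref{lem:weak_revenue_monotonicity}, and then chain as you intended. You should also make the bottom truncation explicit (the paper's $\truncatebottom_\epsilon$): without ensuring a point mass of at least $\Omega(\ln(mn\delta^{-1})/m)$ at $0$ in both surrogates, the symmetric KL again has an infinite contribution from the $0$-atom that $\doubleshading$ creates, and the integral bound in \Cref{lem:KL_upper_bound} loses its $\ln(p_\ell^{-1}p_u^{-1})$ control. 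Once you replace your $\mathbf{\hat{D}}$/$\mathbf{\tilde{D}}$ comparison with the $\mathbf{D'}$/$\mathbf{\tilde{D}'}$ comparison, the rest of your argument (Bernstein estimation of $\rev$, extreme-value bounds, the regular-case accounting from \Cref{app:without_information_theory}) lines up with the paper.
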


%

\subsection{Meta Analysis}
\label{sec:meta_analysis}

\subsubsection*{Step 1: Analysis via Revenue Monotonicity}

The first idea in our analysis is to lower bound the expected revenue of the dominated empirical Myerson auction with inequalities enabled by revenue monotonicity.

We start by defining an auxiliary distribution $\mathbf{\tilde{D}}$ which intuitively is a doubly shaded version of the original distribution $\mathbf{D}$ such that it is dominated by $\mathbf{\tilde{E}}$.
Consider the following function:
\begin{equation}
\label{eqn:double_shading}
\doubleshading(q) \defeq \max \left\{ 0, q - \sqrt{\frac{8 q(1-q) \ln(2mn\delta^{-1})}{m}} - \frac{7 \ln(2mn\delta^{-1})}{m} \right\}
\end{equation}
We further allow it to operate on distributions the same way as the previous function $\shading$.
Then, let $\mathbf{\tilde{D}} = \doubleshading(\mathbf{D})$ be the auxiliary distribution. 

\begin{lemma}
	\label{lem:tilde_E_and_tilde_D}
	Assuming the bounds in \Cref{lem:empirical_error_bound}, we have that:
	\[
	\mathbf{\tilde{E}} \succeq \mathbf{\tilde{D}}
	~.
	\]
\end{lemma}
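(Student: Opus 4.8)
The plan is to show that for every bidder $i$ and every value $v > 0$, the quantile of $v$ in $\tilde{E}_i$ is at least its quantile in $\tilde{D}_i$, i.e. $q^{\tilde{E}_i}(v) \ge q^{\tilde{D}_i}(v)$; the case $v = 0$ is trivial since both quantiles equal $1$. By definition $q^{\tilde{E}_i}(v) = \shading(q^{E_i}(v))$ and $q^{\tilde{D}_i}(v) = \doubleshading(q^{D_i}(v))$, so the claim reduces to a purely one-dimensional inequality relating the shaded empirical quantile to the doubly-shaded true quantile. Throughout I abbreviate $q = q^{D_i}(v)$, $\hat q = q^{E_i}(v)$, and $L = \ln(2mn\delta^{-1})$.

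First I would record the ingredients. From \Cref{lem:empirical_error_bound} (which we are allowed to assume), $|\hat q - q| \le \sqrt{2q(1-q)L/m} + L/m$. I want to feed this into the definition of $\shading$ and lower bound $\shading(\hat q)$. The function $t \mapsto t - \sqrt{2t(1-t)L/m} - 4L/m$ is monotone non-decreasing in $t$ on $[0,1]$ once $m \gtrsim L$ (its derivative is $1 - \frac{(1-2t)}{\sqrt{2t(1-t)}}\sqrt{L/m} \ge 1 - \tfrac12\sqrt{L/m} \cdot \frac{1}{\sqrt{t(1-t)}}$... care is needed near the endpoints, but one can argue monotonicity of the truncated version $\shading$ directly: decreasing $t$ by a small amount decreases the linear term faster than it can increase the square-root term, or else the $\max\{0,\cdot\}$ has already kicked in). Hence $\shading(\hat q) \ge \shading(q - \sqrt{2q(1-q)L/m} - L/m)$. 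So it suffices to show
\[
\shading\!\left(q - \sqrt{\tfrac{2q(1-q)L}{m}} - \tfrac{L}{m}\right) \ge \doubleshading(q)
\]
for all $q \in [0,1]$. If the left side is $0$ by truncation this is only possible when $\doubleshading(q) = 0$ too, which I would check holds (whenever $q - \sqrt{2q(1-q)L/m} - L/m \le \sqrt{2(\cdot)(\cdot)L/m} + 4L/m$ one can verify $q \le \sqrt{8q(1-q)L/m} + 7L/m$ by a crude comparison). Otherwise, expanding the left side, the inequality becomes
\[
q - \sqrt{\tfrac{2q(1-q)L}{m}} - \tfrac{L}{m} - \sqrt{\tfrac{2q'(1-q')L}{m}} - \tfrac{4L}{m} \;\ge\; q - \sqrt{\tfrac{8q(1-q)L}{m}} - \tfrac{7L}{m},
\]
where $q' = q - \sqrt{2q(1-q)L/m} - L/m \le q$. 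After cancelling $q$ and rearranging, it reduces to
\[
\sqrt{8q(1-q)} \;\ge\; \sqrt{2q(1-q)} + \sqrt{2q'(1-q')} \quad\text{(up to the } 2L/m \text{ slack)},
\]
i.e. $\sqrt{8q(1-q)} - \sqrt{2q(1-q)} = \sqrt{2q(1-q)} \ge \sqrt{2q'(1-q')}$, which holds because $q'(1-q') \le q(1-q)$ whenever $q' \le q \le \tfrac12$, and when $q > \tfrac12$ a separate (easier) argument using $q'(1-q') \le \tfrac14$ and the $L/m$ terms applies. The leftover $\tfrac{2L}{m}$ on the constant-term side then needs to dominate the discrepancy coming from the $q > 1/2$ branch and from the monotonicity step; this is where the specific constants $4$ vs. $7$ and $2$ vs. $8$ in \Cref{eqn:shading,eqn:double_shading} were chosen, so it should go through with room to spare.

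The main obstacle, I expect, is handling the truncation $\max\{0,\cdot\}$ and the regime $q$ close to $1$ (equivalently $1-q$ close to $0$) cleanly, and confirming that $\shading$ — not just the untruncated expression — is monotone, so that the substitution $\hat q \ge q - \sqrt{2q(1-q)L/m} - L/m$ can be pushed through $\shading$. Both are routine case analyses, but they are where an off-by-a-constant error would bite; the bulk of the lemma is the elementary inequality $\sqrt{8\,q(1-q)} \ge \sqrt{2\,q(1-q)} + \sqrt{2\,q'(1-q')}$ together with bookkeeping of the $L/m$ terms.
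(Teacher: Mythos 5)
Your proposal follows the same overall structure as the paper's proof: invoke monotonicity of $\shading$, plug in the one-sided bound $q^{E_i}(v) \ge q^{D_i}(v) - \sqrt{2cq^D(1-q^D)} - c$ where $c = \ln(2mn\delta^{-1})/m$, and reduce the lemma to the scalar inequality
\[
\sqrt{2c\,q^D(1-q^D)} + 2c \;\ge\; \sqrt{2c\,q^E(1-q^E)}~,
\]
which is exactly what the paper derives. The issue is that your verification of this inequality does not close. You split into $q \le 1/2$ and $q > 1/2$ and justify the first case by ``$q'(1-q') \le q(1-q)$ whenever $q' \le q \le 1/2$,'' which is fine, but for $q > 1/2$ the statement $q'(1-q') \le q(1-q)$ is generally \emph{false}: since $q'(1-q') - q(1-q) = (q'-q)(1-q'-q)$, it flips sign when $q'+q > 1$, which happens precisely when $q > 1/2$ and the shading step $q-q'$ is small. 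In that regime $q'(1-q') > q(1-q)$ and you need the additive $2c$ to absorb the overshoot, but the proposal only gestures at a ``separate (easier) argument'' without quantifying whether $2c$ actually suffices — and this is exactly the part that requires care, since the size of the overshoot is governed by $q-q'$ which itself contains a $\sqrt{c}$ term, not a term of order $c$.

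The paper avoids the case split entirely with a direct algebraic expansion: writing $2cq^E(1-q^E) = 2cq^D(1-q^D) + 2c(q^D-q^E)(2q^D-1) - 2c(q^D-q^E)^2$ and using $2q^D-1 \le 1$ gives $2cq^E(1-q^E) \le 2cq^D(1-q^D) + 2c(q^D-q^E)$; substituting $q^D - q^E = \sqrt{2cq^D(1-q^D)} + c$ yields $2cq^E(1-q^E) \le 2cq^D(1-q^D) + 2c\sqrt{2cq^D(1-q^D)} + 2c^2 < \bigl(\sqrt{2cq^D(1-q^D)} + 2c\bigr)^2$, and one takes square roots. This chain works uniformly for all $q \in [0,1]$ and is exactly how the constants $4$ vs.\ $7$ and $2$ vs.\ $8$ in $\shading$ and $\doubleshading$ cash out. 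If you replace your case analysis with this expansion, the proof is complete; as written, the $q > 1/2$ branch is a genuine gap.
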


Next, we lower bound the expected revenue of the dominated empirical Myerson auction by the optimal revenue of the auxiliary distribution using revenue monotonicity.

\begin{lemma}
	\label{lem:analysis_revenue_monotonicity}
	With probability at least $1 - \delta$, we have:
	\[
	\rev(M_{\mathbf{\tilde{E}}}, \mathbf{D}) \ge \opt(\mathbf{\tilde{D}})
	~.
	\]
\end{lemma}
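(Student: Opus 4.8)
The plan is to chain together the stochastic dominance relations established in Lemmas~\ref{lem:D_and_tilde_E} and~\ref{lem:tilde_E_and_tilde_D} with the two forms of revenue monotonicity (Lemmas~\ref{lem:strong_revenue_monotonicity} and~\ref{lem:weak_revenue_monotonicity}). First I would condition on the high-probability event of \Cref{lem:empirical_error_bound}, which holds with probability at least $1 - \delta$; everything below is deterministic given this event. On this event, \Cref{lem:D_and_tilde_E} gives $\mathbf{D} \succeq \mathbf{\tilde{E}}$ and \Cref{lem:tilde_E_and_tilde_D} gives $\mathbf{\tilde{E}} \succeq \mathbf{\tilde{D}}$, so both dominance relations are available simultaneously.

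Next I would apply strong revenue monotonicity (\Cref{lem:strong_revenue_monotonicity}) with the dominated distribution taken to be $\mathbf{\tilde{E}}$ and the dominating distribution taken to be $\mathbf{D}$. This yields $\rev(M_{\mathbf{\tilde{E}}}, \mathbf{D}) \ge \rev(M_{\mathbf{\tilde{E}}}, \mathbf{\tilde{E}})$, and since $M_{\mathbf{\tilde{E}}}$ is by definition Myerson's optimal auction for $\mathbf{\tilde{E}}$, the right-hand side is exactly $\opt(\mathbf{\tilde{E}})$. Then I would apply weak revenue monotonicity (\Cref{lem:weak_revenue_monotonicity}) to the relation $\mathbf{\tilde{E}} \succeq \mathbf{\tilde{D}}$, obtaining $\opt(\mathbf{\tilde{E}}) \ge \opt(\mathbf{\tilde{D}})$. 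Combining the two inequalities gives $\rev(M_{\mathbf{\tilde{E}}}, \mathbf{D}) \ge \opt(\mathbf{\tilde{E}}) \ge \opt(\mathbf{\tilde{D}})$, which is the claim.

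There is no real technical obstacle in this particular lemma: it is a bookkeeping step that packages the consequences of the earlier structural lemmas into the single inequality that the rest of the analysis will build on. The only point that needs care is making sure the dominance directions are lined up correctly when invoking the two monotonicity lemmas (the optimal auction is for the \emph{dominated} distribution, evaluated on the \emph{dominating} one), and noting that all three prior lemmas are conditioned on the \emph{same} event from \Cref{lem:empirical_error_bound}, so the $1-\delta$ failure probability is not compounded. The genuinely hard work — bounding $\opt(\mathbf{D}) - \opt(\mathbf{\tilde{D}})$ — is deferred to the subsequent information-theoretic step and is not part of this lemma.
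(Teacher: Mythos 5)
Your proof is correct and follows exactly the same route as the paper's: condition on the event of Lemma~\ref{lem:empirical_error_bound}, apply strong revenue monotonicity to $\mathbf{D} \succeq \mathbf{\tilde{E}}$ to get $\rev(M_{\mathbf{\tilde{E}}}, \mathbf{D}) \ge \opt(\mathbf{\tilde{E}})$, and then apply weak revenue monotonicity to $\mathbf{\tilde{E}} \succeq \mathbf{\tilde{D}}$. Nothing is missing, and your remark about the dominance directions and the shared conditioning event is an accurate reading of what makes the step go through.
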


\begin{proof}
	We will prove the inequality when the bounds in \Cref{lem:empirical_error_bound} hold, which happens with probability at least $1 - \delta$.
	It follows from the following sequence of inequalities:
	\begin{align*}
	\rev(M_{\mathbf{\tilde{E}}}, \mathbf{D}) & \ge \rev(M_{\mathbf{\tilde{E}}}, \mathbf{\tilde{E}}) && \text{($\mathbf{D} \succeq \mathbf{\tilde{E}}$ by \Cref{lem:D_and_tilde_E}, strong revenue monotonicity by \Cref{lem:strong_revenue_monotonicity})} \\
	& = \opt(\mathbf{\tilde{E}}) \\
	& \ge \opt(\mathbf{\tilde{D}}) && \text{($\mathbf{\tilde{E}} \succeq \mathbf{\tilde{D}}$ by \Cref{lem:tilde_E_and_tilde_D}, weak revenue monotonicity by \Cref{lem:weak_revenue_monotonicity})}
	\end{align*}
\end{proof}


Given the above inequality, it suffices to show $\opt(\mathbf{\tilde{D}}) \ge (1 - \epsilon) \opt(\mathbf{D})$ for the first three cases which consider multiplicative approximation in \Cref{thm:main_upper_bounds}, and to show $\opt(\mathbf{\tilde{D}}) \ge \opt(\mathbf{D}) - \epsilon$ for the last case which considers additive approximation.

As we remarked in \Cref{sec:introduction}, this idea of bounding the expected revenue of the dominated empirical Myerson auction via revenue monotonicity, instead of concentration inequalities as in previous works, is quite powerful on its own.
In particular, the auxiliary distribution $\mathbf{\tilde{D}}$ approximately preserves the density and virtual value of the original distribution $\mathbf{D}$ almost \emph{point-wise}.
We will explain in \Cref{app:without_information_theory} how to build on this observation and standard accounting techniques for expected revenue to show the optimal sample complexity upper bounds for the case of regular distributions as stated in \Cref{thm:main_upper_bounds}, and for the other three cases weaker upper bounds which nevertheless match the best previous bounds already.


\subsubsection*{Step 2: Analysis via Information Theory}

Our second idea is to use an information theoretic argument to show that the optimal revenue of $\mathbf{\tilde{D}}$ is a $1 - \epsilon$ approximation (or an $\epsilon$ additive approximation) to that of $\mathbf{D}$.
Let us first explain what the analysis looks like in an idealized world, using the $[0, 1]$-bounded case as a running example.
Suppose that $m \ge \tilde{O}(n \epsilon^{-2})$ as stated in \Cref{thm:main_upper_bounds}.
The analysis builds on two observations:
\begin{enumerate}
	\item 
	$\mathbf{D}$ and $\mathbf{\tilde{D}}$ are close in KL divergence:
	\[
	\skl \big( \mathbf{D}, \mathbf{\tilde{D}} \big) \le \tilde{O} \bigg( \frac{n}{m} \bigg) = O \big( \epsilon^2 \big)
	~,
	\] 
	for a sufficiently small constant in the second asymptotic notation.
	This is the main technical component behind the information theoretic argument.
	See \Cref{lem:KL_upper_bound} for details.
	\item 
	For any mechanism and any $[0, 1]$-bounded value distribution, $O( \epsilon^{-2} )$ samples are sufficient for estimating an $\epsilon$ additive approximation of the expected revenue.
	This part follows directly from standard concentration bounds such as the Bernstein inequality (\Cref{lem:bernstein}).
\end{enumerate}

Then, we claim that the optimal revenue of $\mathbf{D}$ and $\mathbf{\tilde{D}}$ must be within an additive factor of $\epsilon$ from each other.
In particular, we claim that:
\[
\rev(M_{\mathbf{D}}, \mathbf{\tilde{D}}) \ge \opt(\mathbf{D}) - \epsilon
~.
\]
Otherwise, we would be able to distinguish these two distributions with $O( \epsilon^{-2} )$ samples by estimating the expected revenue of $M_{\mathbf{D}}$.
This contradicts the assumption that the KL divergence of the two distributions is small.

\paragraph{Formal Analysis.}

More generally, we show the following lemma.

\begin{lemma}
	\label{lem:difference_ori_shaded}
	If two distributions $\mathbf{D'}$ and $\mathbf{\tilde{D}'}$ satisfy that for some $N > 0$, and some $\alpha > 0$:
	\begin{enumerate}
		\item They are close in KL-divergence:
		\[
		\skl \big( \mathbf{D'}, \mathbf{\tilde{D}'} \big) \le c N^{-1} 
		~,
		\]
		for some sufficiently small constant $c > 0$.
		\item For any mechanism, and any of these two distributions, $N$ samples runs are sufficient to estimate the expected revenue up to an additive $\alpha$ factor with probability at least $\frac{2}{3}$.
	\end{enumerate}
	Then, we have:
	\[
	\opt(\mathbf{\tilde{D'}}) \ge \opt(\mathbf{D}') - 2 \alpha
	\]
\end{lemma}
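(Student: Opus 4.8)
The plan is to argue by contradiction. Suppose $\opt(\mathbf{\tilde{D}}') < \opt(\mathbf{D}') - 2\alpha$. I will turn this revenue gap into a classifier that distinguishes $\mathbf{D}'$ from $\mathbf{\tilde{D}}'$ using only $N$ samples, which by \Cref{lem:kl_divergence} forces $N = \Omega\big(\skl(\mathbf{D}', \mathbf{\tilde{D}}')^{-1}\big)$ and contradicts assumption~1 once the constant $c$ is chosen small enough.

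The witness is $M_{\mathbf{D}'}$, Myerson's optimal auction for $\mathbf{D}'$. By definition $\rev(M_{\mathbf{D}'}, \mathbf{D}') = \opt(\mathbf{D}')$, and since $M_{\mathbf{D}'}$ is a valid (indeed DSIC) mechanism, $\rev(M_{\mathbf{D}'}, \mathbf{\tilde{D}}') \le \opt(\mathbf{\tilde{D}}')$, so under the contradiction hypothesis
\[
\rev(M_{\mathbf{D}'}, \mathbf{D}') - \rev(M_{\mathbf{D}'}, \mathbf{\tilde{D}}') \ge \opt(\mathbf{D}') - \opt(\mathbf{\tilde{D}}') > 2\alpha .
\]
The classifier (which need not be efficient and may depend on $\mathbf{D}'$ and $\mathbf{\tilde{D}}'$, since both are known to the analysis) hard-codes the two numbers $r_0 = \rev(M_{\mathbf{D}'}, \mathbf{D}')$ and $r_1 = \rev(M_{\mathbf{D}'}, \mathbf{\tilde{D}}')$. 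Given $N$ i.i.d.\ samples from the unknown $D \in \{\mathbf{D}', \mathbf{\tilde{D}}'\}$, it runs the estimator of assumption~2 on the mechanism $M_{\mathbf{D}'}$ to get $\hat r$ with $|\hat r - \rev(M_{\mathbf{D}'}, D)| \le \alpha$ with probability at least $\frac{2}{3}$, and outputs whichever of $\mathbf{D}', \mathbf{\tilde{D}}'$ has hard-coded revenue closer to $\hat r$. Because $|r_0 - r_1| > 2\alpha$, an $\alpha$-accurate $\hat r$ is within $\alpha$ of the correct revenue and more than $\alpha$ from the other, so the output is correct whenever the estimate is accurate, i.e.\ with probability at least $\frac{2}{3}$; thus the classifier distinguishes $\mathbf{D}'$ and $\mathbf{\tilde{D}}'$ with $N$ samples.

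Applying \Cref{lem:kl_divergence} to this classifier gives $N \ge c' \cdot \skl(\mathbf{D}', \mathbf{\tilde{D}}')^{-1}$ for some absolute constant $c' > 0$; combined with assumption~1 this yields $N \ge c'(cN^{-1})^{-1} = (c'/c)\,N$, which is false for every $N \ge 1$ as soon as $c < c'$. Fixing the constant $c$ in the hypothesis below this $c'$ completes the contradiction and proves $\opt(\mathbf{\tilde{D}}') \ge \opt(\mathbf{D}') - 2\alpha$. The only point that needs care is exactly this constant bookkeeping — matching $c$ against the hidden constant of \Cref{lem:kl_divergence} so that the chain is a strict contradiction uniformly in $N$; every other step is immediate from the definitions of $\opt$ and $\rev$ and the optimality of $M_{\mathbf{D}'}$.
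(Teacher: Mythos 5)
Your proof is correct and follows essentially the same route as the paper: both argue by contradiction, run a fixed mechanism on $N$ samples to estimate its revenue, threshold that estimate to build a classifier distinguishing $\mathbf{D}'$ from $\mathbf{\tilde{D}}'$, and invoke \Cref{lem:kl_divergence} to contradict the KL-divergence bound. The only cosmetic difference is that the paper proves the stronger pointwise claim $\rev(M,\mathbf{\tilde{D}}') \ge \rev(M,\mathbf{D}') - 2\alpha$ for every mechanism $M$, whereas you directly specialize to $M = M_{\mathbf{D}'}$ and use $\rev(M_{\mathbf{D}'},\mathbf{\tilde{D}}') \le \opt(\mathbf{\tilde{D}}')$ — the classifier and the constant bookkeeping are otherwise identical.
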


\ifshowproof
%

\begin{proof}
	We will show a stronger claim that for any mechanism $M$, it holds that:
	\[
	\rev(M, \mathbf{\tilde{D}'}) \ge \rev(M, \mathbf{D'}) - 2 \alpha
	~.
	\]
	
	Suppose not. 
	Consider the following classification algorithm that takes $N$ i.i.d.\ samples from an unknown distribution that is either $\mathbf{D'}$ or $\mathbf{\tilde{D}'}$, and identifies which one it is correctly with probability at least $\frac{2}{3}$.
	
	\begin{enumerate}
		\item Run $M$ on the $N$ samples from the unknown distribution to estimate the expected revenue up to an additive $\alpha$ factor.
		\item Return $\mathbf{D'}$ if the estimate is at least $\rev(M, \mathbf{D'}) - \alpha$;
		return $\mathbf{\tilde{D}'}$ otherwise.
	\end{enumerate}
	
	The correctness of the algorithm follows by condition 2 in the lemma statement and the assumption (for contrary) that $\rev(M, \mathbf{\tilde{D}'}) < \rev(M, \mathbf{D'}) - 2 \alpha$.
	Hence, there exists an algorithm that distinguishes the two distributions using $N$ samples.
	This, however, contradicts \Cref{lem:kl_divergence} and condition 1 in the statement of this lemma, because they together indicate that no algorithm can distinguish $\mathbf{D'}$ and $\mathbf{\tilde{D}'}$ correctly using $N$ samples.
	
	We stress that to get the contradiction it suffices to show the existence of the algorithm. 
	How one can acquire the necessary information, in particular, the value of $\rev(M, \mathbf{D'})$, to implement the algorithm is not important.
	%
	%
	%
	%
\end{proof}
\fi

Intuitively, we would like to let $\mathbf{D'} = \mathbf{D}$, $\mathbf{\tilde{D}'} = \mathbf{\tilde{D}}$, and $\alpha = \epsilon \opt(\mathbf{D})$ (or $\alpha = \epsilon$ in the case of additive approximation) in the above lemma to finish the analysis.
However, the two conditions in \Cref{lem:difference_ori_shaded} need not hold for distributions $\mathbf{D}$ and $\mathbf{\tilde{D}}$ in general.
The first condition may not hold, for example, if some large values with tiny quantiles in $\mathbf{D}$ is not in the support of $\mathbf{\tilde{D}}$ as a result of the double shading by $\doubleshading$.
The KL divergence will be infinity in this case.
The second condition may also fail, when the value distribution $\mathbf{D}$ is unbounded, as in the regular and MHR case.

To circumvent these obstacles, we will construct surrogate versions of $\mathbf{D}$ and $\mathbf{\tilde{D}}$, denoted as $\mathbf{D'}$ and $\mathbf{\tilde{D}'}$ respectively, which do satisfy the two conditions in statement of \Cref{lem:difference_ori_shaded}, and will relate their optimal revenues with those of $\mathbf{D}$ and $\mathbf{\tilde{D}}$ respectively to finish the analysis.

We first present in the next lemma some sufficient conditions under which we can bound the KL divergence of a distribution and its doubly shaded version.

\begin{lemma}
	\label{lem:KL_upper_bound}
	Suppose a distribution $D'$ has a bounded support in $[0, u]$ such that $\ell = 0$ and $u$ are point masses, whose probability masses, denoted as $p_{\ell}$ and $p_u$, are at least $\frac{64 \ln(2mn\delta^{-1})}{m}$. 
	Further, suppose $\tilde{D}' = \doubleshading(D')$ is the doubly shaded version of $D'$. 
	Then, we have:
	\[
	\skl \big( D', \tilde{D}' \big) = O\bigg(\frac{\ln(mn\delta^{-1})}{m} \ln \big( p_{\ell}^{-1} p_u^{-1} \big)\bigg) = \tilde{O} \bigg( \frac{1}{m} \ln \big( p_{\ell}^{-1} p_u^{-1} \big) \bigg )
	~.
	\]
\end{lemma}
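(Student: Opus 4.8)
The plan is to reduce to the case where $D'$ is discrete (the general case follows by a routine discretization limit, or equivalently by the analogous density computation with a change of variables), say with atoms $0 = a_0 < a_1 < \cdots < a_k = u$ of masses $p_0 = p_\ell, p_1, \dots, p_{k-1}, p_k = p_u$, and with cumulative tail masses $Q_j \defeq \sum_{i \ge j} p_i$, so that the quantile of $a_j$ in $D'$ equals $Q_{j+1}$. Unwinding the action of $\doubleshading$ on $D'$, the doubly shaded distribution $\tilde D'$ is supported on the same values, with mass $\doubleshading(p_u)$ at $u$, mass $1 - \doubleshading(1-p_\ell)$ at $0$ (the extra mass being exactly what is shaded down out of the positive values), and mass $\tilde p_j = \doubleshading(Q_j) - \doubleshading(Q_{j+1})$ at each interior $a_j$. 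Writing $L \defeq \ln(2mn\delta^{-1})$ and $\gamma(q) \defeq q - \doubleshading(q) = \sqrt{8q(1-q)L/m} + 7L/m$ (the formula is valid wherever $\doubleshading(q)>0$), the starting point is the elementary bound, obtained by applying $\ln x \le x-1$ in both directions term by term:
\[
\skl(D', \tilde D') \;=\; \sum_{j=0}^k (p_j - \tilde p_j)\ln\tfrac{p_j}{\tilde p_j} \;\le\; \sum_{j=0}^k \frac{(p_j - \tilde p_j)^2}{\min(p_j, \tilde p_j)} .
\]

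Next I would record three consequences of the hypothesis $p_\ell, p_u \ge 64L/m$: (i) $\doubleshading(p_u) \ge p_u/2$ and $\gamma(1-p_\ell) \le p_\ell$, so the two endpoint atoms change only by a constant factor; (ii) on the quantile interval $[p_u, 1-p_\ell]$ we have $q(1-q) \ge 32L/m$, hence $\doubleshading$ is smooth, positive and increasing there with $|\doubleshading'(q) - 1| = |\gamma'(q)| = \sqrt{8L/m}\,\tfrac{|1-2q|}{2\sqrt{q(1-q)}} \le \tfrac14$; and (iii) $\gamma(q)^2 \le 16q(1-q)L/m + 98(L/m)^2$. Facts (i) and (iii) bound the $j=0$ and $j=k$ terms by $\gamma(1-p_\ell)^2/p_\ell$ and $\gamma(p_u)^2/\doubleshading(p_u) \le 2\gamma(p_u)^2/p_u$, each of which is $O(L/m)$ after plugging in (iii) and $p_\ell, p_u \ge 64L/m$. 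For an interior index $1 \le j \le k-1$ I would write $\tilde p_j = \int_{Q_{j+1}}^{Q_j}\doubleshading'(q)\,dq$ and $p_j - \tilde p_j = \int_{Q_{j+1}}^{Q_j}\gamma'(q)\,dq$; fact (ii) gives $\tilde p_j \in [\tfrac34 p_j, \tfrac54 p_j]$, and Cauchy--Schwarz gives $(p_j - \tilde p_j)^2 \le p_j \int_{Q_{j+1}}^{Q_j}\gamma'(q)^2\,dq$, so that term is at most $\tfrac43\int_{Q_{j+1}}^{Q_j}\gamma'(q)^2\,dq$. Since the intervals $[Q_{j+1}, Q_j]$ for $1 \le j \le k-1$ tile $[p_u, 1-p_\ell]$, summing yields $\sum_{j=1}^{k-1}(\cdots) \le \tfrac43\int_{p_u}^{1-p_\ell}\gamma'(q)^2\,dq$.

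It then remains to evaluate that integral. Using the partial-fraction identity $\tfrac{(1-2q)^2}{q(1-q)} = \tfrac1q + \tfrac1{1-q} - 4$, we have $\gamma'(q)^2 = \tfrac{2L}{m}\big(\tfrac1q + \tfrac1{1-q} - 4\big)$, whence $\int_{p_u}^{1-p_\ell}\gamma'(q)^2\,dq = \tfrac{2L}{m}\big(\ln\tfrac{(1-p_\ell)(1-p_u)}{p_\ell p_u} - 4(1-p_\ell-p_u)\big) \le \tfrac{2L}{m}\ln(p_\ell^{-1}p_u^{-1})$. Adding the $O(L/m)$ contributions from the two endpoint atoms and absorbing them using $\ln(p_\ell^{-1}p_u^{-1}) \ge \ln 4$ (which holds because $p_\ell p_u \le \tfrac14$ by AM--GM, the two being disjoint point masses), we conclude $\skl(D', \tilde D') = O\big(\tfrac{L}{m}\ln(p_\ell^{-1}p_u^{-1})\big) = O\big(\tfrac{\ln(mn\delta^{-1})}{m}\ln(p_\ell^{-1}p_u^{-1})\big)$, which is the claim.

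I expect the main obstacle to be the bookkeeping in the first step: correctly identifying how the point masses at $0$ and at $u$ transform under double shading (the mass at $u$ shrinks to $\doubleshading(p_u)$, the mass at $0$ grows by $\gamma(1-p_\ell)$), and then verifying that the hypothesis $p_\ell, p_u \ge 64L/m$ is precisely what is needed both to keep these two changes within a constant factor and to keep $|\doubleshading' - 1| \le \tfrac14$ on the interior quantile range --- without it, an atom just above $0$ with tiny mass, or the second-largest value if $p_u$ were small, would make the divergence blow up. A secondary point is spelling out the reduction to discrete $D'$ (or, equivalently, redoing the interior estimate via densities and the substitution $r = \doubleshading(q)$, which produces the same integral $\int_{p_u}^{1-p_\ell}\gamma'(q)^2 dq$ up to a constant); this is routine but should be stated.
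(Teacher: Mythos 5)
Your proposal is correct and arrives at the bound by the same high-level decomposition as the paper: isolate the two endpoint atoms, use the hypothesis $p_\ell, p_u \ge 64L/m$ (with $L = \ln(2mn\delta^{-1})$) to control those and to keep $|\doubleshading'(q)-1|$ small on the interior, and reduce the remaining contribution to the quantile-space integral $\int_{p_u}^{1-p_\ell}|\doubleshading'(q)-1|^2\,dq$, which the partial-fraction identity evaluates to $O\bigl(\tfrac{L}{m}\ln(p_\ell^{-1}p_u^{-1})\bigr)$. The technical mechanics differ in a minor but real way. The paper works directly with the continuous KL integral, bounds $(\doubleshading'-1)\ln\doubleshading'$ pointwise by $2(\doubleshading'-1)^2$, and then handles interior point masses via Jensen applied to the convex function $(x-1)\ln x$; you instead work entirely with the discrete sum, start from the elementary bound $\skl = \sum(p_j-\tilde p_j)\ln\tfrac{p_j}{\tilde p_j} \le \sum\tfrac{(p_j-\tilde p_j)^2}{\min(p_j,\tilde p_j)}$ (just $\ln x \le x-1$), and collapse each interior atom onto its quantile interval by Cauchy--Schwarz. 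Both steps are Jensen in disguise --- the paper uses convexity of $(x-1)\ln x$, you use convexity of $x^2$ --- so the proofs are close cousins; your version is a bit cleaner in that it treats all atoms uniformly via one elementary inequality, at the cost of deferring the general (possibly non-discrete) case to a discretization limit. Your parenthetical about redoing the interior estimate via densities with the change of variables $r = \doubleshading(q)$ is exactly the paper's treatment of the continuous part. One point worth making explicit in a final write-up: the discretization limit is routine but not completely automatic, because data processing says discretization can only \emph{decrease} KL; the correct justification is that KL divergence equals the supremum over finite measurable partitions, so the discretized values increase to the continuous one and the uniform bound transfers.
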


\ifshowproof
\begin{proof}
	We will first prove the claim assuming there are no point masses other than $u$ and $\ell$.
	Then, the KL divergence can be written as:
	\begin{align}
	\skl (D', \tilde{D}') & = \left( p_u \ln \frac{p_u}{\doubleshading(p_u)} + \doubleshading(p_u) \ln \frac{\doubleshading(p_u)}{p_u} \right) 
	\label{eqn:KL_upper_bound_1} \\
	& \qquad + \left( p_\ell \ln \frac{p_\ell}{1 - \doubleshading(1 - p_\ell)} + \big( 1 - \doubleshading(1 - p_\ell) \big) \ln \frac{1 - \doubleshading(1 - p_\ell)}{p_\ell} \right)
	\label{eqn:KL_upper_bound_2} \\
	& \qquad + \int_{\ell < v < u} \left( \ln \frac{dD'}{d\tilde{D}'} + \frac{d\tilde{D}'}{dD'} \ln \frac{d\tilde{D}'}{dD'} \right) dD'
	\label{eqn:KL_upper_bound_3} 	
	~.
	\end{align}
	
	Next, we bound each of these three terms separately.
	
	\paragraph{First Term.}
	Consider the first term.
	By our assumption that $p_u \ge \frac{64 \ln (2mn\delta^{-1})}{m}$, and noting that $p_u \le 1 - p_\ell \le 1-\frac{64 \ln (2mn\delta^{-1})}{m}$, we have:
	\begin{align}
	p_u - \doubleshading(p_u) & = \sqrt{\frac{8 p_u(1-p_u) \ln(2mn\delta^{-1})}{m}} + \frac{7 \ln(2mn\delta^{-1})}{m} 
	\notag \\
	& \le \sqrt{\frac{16 p_u(1-p_u) \ln(2mn\delta^{-1})}{m}} 
	\label{eqn:KL_upper_bound_4} \\
	& \le \frac{1}{2} p_u
	\label{eqn:KL_upper_bound_5}
	~.
	\end{align}
	
	Then, the first term can be bounded with the following sequence of inequalities:
	\begin{align*}
	\cref{eqn:KL_upper_bound_1} & = \big( p_u - \doubleshading(p_u) \big) \ln \left( 1 + \frac{p_u - \doubleshading(p_u)}{\doubleshading(p_u)} \right) \\
	& \le \frac{ \big( p_u - \doubleshading(p_u) \big)^2 }{\doubleshading(p_u)} && \text{($\ln(1+x) \le x$)} \\
	& \le \frac{ 2 \big( p_u - \doubleshading(p_u) \big)^2 }{p_u} && \text{(\Cref{eqn:KL_upper_bound_5})} \\
	& \le \frac{32 (1-p_u) \ln(2mn\delta^{-1})}{m} && \text{(\Cref{eqn:KL_upper_bound_4})} \\
	& \le \frac{32 \ln(2mn\delta^{-1})}{m} 
	~. 
	\end{align*}
	
	\paragraph{Second Term.}
	The way that we bound the second term is similar.
	We first establish the following inequality bounding the difference in the mass of $\ell$ in the two distributions:
	\begin{align}
	(1 - p_\ell) - \doubleshading(1 - p_\ell) & = \sqrt{\frac{8 p_\ell(1-p_\ell) \ln(2mn\delta^{-1})}{m}} + \frac{7 \ln(2mn\delta^{-1})}{m} 
	\notag \\
	& \le \sqrt{\frac{16 p_\ell(1-p_\ell) \ln(2mn\delta^{-1})}{m}} 
	\label{eqn:KL_upper_bound_6}
	~,
	\end{align}
	where the inequality follows by $\frac{64 \ln (2mn\delta^{-1})}{m}\le p_\ell \le 1 - p_u \le 1-\frac{64 \ln (2mn\delta^{-1})}{m}$.

	Then, the second term can bounded with the following sequence of inequalities:
	\begin{align*}
	\cref{eqn:KL_upper_bound_2} & = \big( (1 - p_\ell) - \doubleshading(1 - p_\ell) \big) \ln \left( 1 + \frac{(1 - p_\ell) - \doubleshading(1 - p_\ell)}{p_\ell} \right) \\
	& \le \frac{\left( (1 - p_\ell) - \doubleshading(1 - p_\ell) \right)^2}{p_\ell} && \text{($\ln(1+x) \le x$)} \\
	& \le \frac{16 (1-p_\ell) \ln(2mn\delta^{-1})}{m} && \text{(\Cref{eqn:KL_upper_bound_6})} \\
	& \le \frac{16 \ln(2mn\delta^{-1})}{m} 
	~. 
	\end{align*}
	
	\paragraph{Third Term.}
	Consider any $\ell < v < u$, which by our assumption is not a point mass.
	Let $q = q^{D'}(v)$ denote its quantile in $D'$, and let $\doubleshading'$ denote the derivative of $\doubleshading$.
	We have:
	\[
	\frac{d \tilde{D}'}{d D'}(v) = \doubleshading' \big( q \big) 
	~.
	\]

	Hence, we can rewrite the third term as:
	\begin{equation}
	\label{eqn:KL_upper_bound_7}
	\int_{\ell < v < u} \left( \ln \frac{dD'}{d\tilde{D}'} + \frac{d\tilde{D}'}{dD'} \ln \frac{d\tilde{D}'}{dD'} \right) dD' = \int_{p_u}^{1-p_\ell} \big( \doubleshading'(q) - 1 \big) \ln \big( \doubleshading'(q) \big) dq
	~.
	\end{equation}
	
	%
	
	Next, note that:
	%
	%
	%
	\begin{align*}
	\left| \doubleshading' \big( q \big) - 1 \right| & = \sqrt{\frac{2 \ln(2mn\delta^{-1})}{q(1-q)m}} |2q - 1| \\
	& \le \sqrt{\frac{2 \ln(2mn\delta^{-1})}{q(1-q)m}} && \text{($0 \le q \le 1$)} \\
	& \le \frac{1}{2} 
	~.
	&& \text{($q_u \le q \le 1-q_\ell$, and $q_u, q_\ell \ge \tfrac{64 \ln(2mn\delta^{-1})}{m}$)} 
	\end{align*}
	
	Further note that $x \ln (1 + x) \le x^2$ for $x \ge 0$ because $\ln (1 + x) \le x$, and $x \ln (1 + x) < 2x^2$ for $-\frac{1}{2} < x < 0$ because $\ln (1 + x) \ge - \ln (1 - 2x) \ge 2x$. 
	The third term is bounded by:
	\begin{align*}
	\cref{eqn:KL_upper_bound_3} & \le \int_{p_u}^{1-p_\ell} 2 \big( \doubleshading' ( q ) - 1 \big)^2 dq \\
	& \le \int_{p_u}^{1-p_\ell} \frac{16 \ln(2mn\delta^{-1})}{q \big( 1-q \big)m} dq \\
	& = \frac{4 \ln(2mn\delta^{-1})}{m} \int_{p_u}^{1-p_\ell} \left( \frac{1}{q} + \frac{1}{1 - q} \right) dq \\
	& = \frac{4 \ln(2mn\delta^{-1})}{m} \left( \ln \frac{1 - q_\ell}{q_u} + \ln \frac{1 - q_u}{q_\ell} \right) \\
	& \le \frac{4 \ln(2mn\delta^{-1})}{m} \ln \big( p_{\ell}^{-1} p_u^{-1} \big)
	~.
	\end{align*}
	
	\paragraph{Point Masses.}
	Next consider a general distribution $D'$ that potentially has point masses other than $u$ and $\ell = 0$.
	An important observation is that $D'$ and $\tilde{D}'$ have the same set of point masses.
	To see this, note that the point masses are precisely the discontinuous points of the quantile function.
	Further, the quantiles of $\tilde{D}'$ are obtained by applying function $\doubleshading$ to the quantiles of $D'$.
	Finally, by the definition of $\doubleshading$, and the assumptions on $p_u$ and $p_\ell$, $\doubleshading$ is continuous and strictly increasing on quantiles between $q^{D'}(u) = p_u$ and $q^{D'}(\ell^+) = \lim_{v \rightarrow \ell^+} q^{D'}(v) = 1 - p_\ell$.
	Hence, there is a one to one mapping between the discontinuous points of $q^D$ and those of $q^{\tilde{D}'}$.
	
	Suppose $\ell < v < u$ is one such point mass.
	Let $q_v = q^{D'}(v)$ denote its quantile and let $p_v$ denote its probability mass.
	Then, it corresponds to quantiles between $q_v - p_v$ and $q_v$.
	That is, the integration from $q_v-p_v$ to $q_v$ has to be subtracted from \Cref{eqn:KL_upper_bound_7}.
	Instead, the contribution by $v$ to the KL divergence is:
	\begin{align*}
	& p_v \ln \frac{p_v}{\doubleshading(q_v) - \doubleshading(q_v-p_v)} + \big( \doubleshading(q_v) - \doubleshading(q_v-p_v) \big) \ln \frac{\doubleshading(q_v) - \doubleshading(q_v-p_v)}{p_v} \\
	& \qquad = p_v \left( \frac{\doubleshading(q_v) - \doubleshading(q_v-p_v)}{p_v} - 1 \right) \ln \frac{\doubleshading(q_v) - \doubleshading(q_v-p_v)}{p_v}
	~.
	\end{align*}
	
	Note that $(x-1)\ln x$ is convex, and $\doubleshading(q_v) - \doubleshading(q_v-p_v) = \int_{q_v-p_v}^{q_v} \doubleshading'(q) dq$.
	By Jensen's inequality, the above is at most:
	\[
	\int_{q_v-p_v}^{q_v} \big( \doubleshading'(q_v) - 1 \big) \ln \big( \doubleshading'(q_v) \big) dq
	~.
	\]
	
	That is, the contribution by $v$ to the KL divergence is at most the integration over corresponding quantile interval of $v$ that is subtracted from \Cref{eqn:KL_upper_bound_7}.
	Applying this argument to all point masses, where there are at most countably infinitely many, we prove the lemma for general distributions that may have point masses other than $u$ and $\ell$.
\end{proof}
\fi

In light of the conditions in \Cref{lem:KL_upper_bound} under which we can upper bound the KL divergence of a distribution and its doubly shaded version given by function $\doubleshading$, we will construct the surrogate distribution $\mathbf{D'}$ by truncating both the top and the bottom ends of the original distribution $\mathbf{D}$, and let $\mathbf{\tilde{D}'} = \doubleshading({\mathbf{D'}})$ be the other surrogate distribution.

Let us first consider the truncation in the bottom end, which is easier.
Define a function $\truncatebottom_{\epsilon}$ that takes a value distribution, say, $D_i$ for some bidder $1 \le i \le n$, as input, and returns a distribution obtained by truncating the lowest $\epsilon$ fraction of values in $D_i$ down to $0$.
More precisely, the quantile of any value $v$ in $\truncatebottom_{\epsilon}(D_i)$ is defined as follows:
\[
q^{\truncatebottom_{\epsilon}(D_i)}(v) \defeq 
\begin{cases}
\min \{ q^{D_i}(v), 1 - \epsilon \} & \text{if $v > 0$} \\
1 & \text{if $v = 0$}
\end{cases}
\]

Further, for any product value distribution $\mathbf{D}$, define:
\[
\truncatebottom_{\epsilon}(\mathbf{D}) \defeq \truncatebottom_{\epsilon}(D_1) \times \truncatebottom_{\epsilon}(D_2) \times \dots \times \truncatebottom_{\epsilon}(D_n)
~.
\]


The truncated version now has $0$ as the smallest value in its support.
Further, the probability mass of $0$ is at least $\epsilon$, which is good enough for the purpose of using \Cref{lem:KL_upper_bound}.
On the other hand, we want to make sure the optimal revenue after the truncation, namely, that of $\truncatebottom_{\epsilon}(\mathbf{D})$, is close to the optimal revenue of the original distribution $\mathbf{D}$. 
This is established in the next lemma.

\begin{lemma}
	\label{lem:truncate_bottom}
	For any product value distribution $\mathbf{D}$, we have:
	\[
	\opt\big(\truncatebottom_{\epsilon}(\mathbf{D})\big) \ge (1-\epsilon) \opt(\mathbf{D})
	~.
	\]
\end{lemma}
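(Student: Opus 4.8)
The plan is to exhibit, for $\truncatebottom_\epsilon(\mathbf{D})$, a mechanism whose expected revenue is at least $(1-\epsilon)\opt(\mathbf{D})$; since Myerson's auction is optimal, this suffices. The mechanism $M'$ I would use is Myerson's optimal auction $M_{\mathbf{D}}$ for the \emph{original} distribution, modified so that a bidder whose value is below the $(1-\epsilon)$‑fractile $\rho_i$ of $D_i$ (equivalently, whose quantile in $D_i$ exceeds $1-\epsilon$) is simply dropped — removed from the virtual‑value comparison entirely, so she neither wins nor influences anyone else's payment. Dropping a bidder below a fixed value keeps the allocation monotone in each bidder's own value, so $M'$ is a valid DSIC (reserve‑price) mechanism. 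The first key observation is that $\rev(M',\mathbf{D}) = \rev\big(M',\truncatebottom_\epsilon(\mathbf{D})\big)$: coupling draws from $\mathbf{D}$ and from $\truncatebottom_\epsilon(\mathbf{D})$ through shared quantiles, the only bidders whose reported value changes are exactly those with quantile above $1-\epsilon$, and such a bidder has value below $\rho_i$ under $\mathbf{D}$ (by definition of $\rho_i$) and value $0<\rho_i$ under $\truncatebottom_\epsilon(\mathbf{D})$, hence is dropped by $M'$ in both cases and, being dropped, affects the outcome in neither. Together with the optimality of Myerson's auction for $\truncatebottom_\epsilon(\mathbf{D})$, this reduces the lemma to showing $\rev(M',\mathbf{D})\ge(1-\epsilon)\opt(\mathbf{D})$.

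For the latter I would use an amortized, per‑bidder accounting, conditioning on the other bidders' quantiles. Writing $R_i(q)=q\,v_i(q)$ for the revenue curve of $D_i$ in quantile space and $\bar R_i$ for its ironed (upper concave envelope) version, and given $\mathbf{q}_{-i}$: under $M'$, bidder $i$ wins precisely when $q_i\le\min\{1-\epsilon,\sigma_i\}$, where $\sigma_i$ is the threshold quantile at which her ironed virtual value $\bar\phi_i^{\mathbf{D}}$ drops below that of the best remaining competitor (or below $0$), and by Myerson's payment identity her conditional expected payment is $R_i\big(\min\{1-\epsilon,\sigma_i\}\big)$. Meanwhile $\opt(\mathbf{D})=\sum_i\mathbf{E}_{\mathbf{q}_{-i}}[\bar R_i(\tau_i)]$ with $\tau_i\le\sigma_i$ the analogous threshold when \emph{all} competitors are present, and since $\bar R_i$ is non‑decreasing on $[0,\sigma_i]$ (its derivative there is $\bar\phi_i^{\mathbf{D}}\ge0$) we get $\opt(\mathbf{D})\le\sum_i\mathbf{E}_{\mathbf{q}_{-i}}[\bar R_i(\sigma_i)]$. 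So it suffices to prove, for each $i$ and each realization of $\mathbf{q}_{-i}$, the pointwise inequality $R_i\big(\min\{1-\epsilon,\sigma_i\}\big)\ge(1-\epsilon)\,\bar R_i(\sigma_i)$.

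This last inequality is the multi‑bidder echo of the one‑bidder fact behind $\max_{q\le 1-\epsilon}R_i(q)\ge(1-\epsilon)\max_q R_i(q)$. When $\sigma_i\le1-\epsilon$ it is immediate ($R_i(\sigma_i)=\bar R_i(\sigma_i)$ because $\sigma_i$ is not interior to an ironing interval). When $\sigma_i>1-\epsilon$, let $[a_i,b_i]$ be the ironing interval of $\bar R_i$ containing $1-\epsilon$ (take $a_i=b_i=1-\epsilon$ if none); because $\bar\phi_i^{\mathbf{D}}$ is constant on $[a_i,b_i]$ and $1-\epsilon<\sigma_i$, in fact $\sigma_i\ge b_i$. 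Then chain $R_i(1-\epsilon)=(1-\epsilon)v_i(1-\epsilon)\ge\tfrac{1-\epsilon}{b_i}R_i(b_i)=\tfrac{1-\epsilon}{b_i}\bar R_i(b_i)\ge\tfrac{1-\epsilon}{\sigma_i}\bar R_i(\sigma_i)\ge(1-\epsilon)\bar R_i(\sigma_i)$, using that $q\mapsto R_i(q)/q=v_i(q)$ is non‑increasing (values are monotone in the quantile), that $R_i=\bar R_i$ at the ironing‑interval endpoint $b_i$, that $q\mapsto\bar R_i(q)/q$ is non‑increasing (concavity plus $\bar R_i(0)=0$), and that $\sigma_i\le1$. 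Summing over $i$ and taking expectations yields $\rev(M',\mathbf{D})\ge(1-\epsilon)\opt(\mathbf{D})$.

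The step I expect to be the real obstacle is making this bookkeeping fully rigorous: the reserve quantile $1-\epsilon$ may lie strictly inside an ironing interval, so one must be careful that the (un‑ironed) Myerson payment identity and the ironed curve $\bar R_i$ used in the accounting line up — which is exactly what the two monotonicity facts above (of $v_i(q)$ and of $\bar R_i(q)/q$) are for — and one must pin down the conventions for atoms of $D_i$ near $\rho_i$ and at $0$, ties in the virtual‑value comparison, and the precise meaning of the value "$0$" created by the truncation. These are routine but need to be done carefully; everything else is a short computation.
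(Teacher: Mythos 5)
Your proposal is correct but takes a genuinely different route from the paper's. The paper's proof constructs a \emph{resampling} mechanism $M$ for $\truncatebottom_\epsilon(\mathbf{D})$: it replaces each zeroed-out value $v_i = 0$ with a fresh draw $v_i'$ from $D_i$ conditioned on the bottom $\epsilon$ quantile interval, runs the unmodified $M_{\mathbf{D}}$ on the completed profile $\mathbf{v}'$ (which is distributed exactly as $\mathbf{D}$), and charges the winner the price $p^*$ only if her \emph{true} value was positive. Fixing $\mathbf{v}_{-i}'$ fixes the price $p^*$ faced by bidder $i$, and her expected contribution is $q^{D_i}(p^*)\,p^*$ under $(M_{\mathbf{D}},\mathbf{D})$ versus $\min\{1-\epsilon, q^{D_i}(p^*)\}\,p^*$ under $(M,\truncatebottom_\epsilon(\mathbf{D}))$; the ratio is always at least $1-\epsilon$ and the lemma follows by summing. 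The crucial payoff of the resampling trick is that the price $p^*$ facing bidder $i$ has \emph{identical} distribution in both worlds, so one never has to reason about how changing one bidder's distribution perturbs the thresholds of others, and ironing never enters the argument. Your mechanism $M'$ (Myerson with per-bidder reserves at the $(1-\epsilon)$-fractile, no resampling) lacks that convenience: dropping a competitor raises bidder $i$'s winning threshold quantile from $\tau_i$ to $\sigma_i$, and you are forced into the ironed-revenue-curve bookkeeping to prove $R_i(\min\{1-\epsilon,\sigma_i\}) \ge (1-\epsilon)\,\bar R_i(\sigma_i)$, which is essentially the multi-bidder lift of the one-bidder $\delta$-guarded-price inequality $\max_{q\le 1-\epsilon} R(q) \ge (1-\epsilon)\max_q R(q)$. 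Both routes are sound; the paper's is shorter and avoids ironing, yours is more self-contained (a genuine deterministic mechanism rather than a coupled randomized one) but the case analysis around ironing intervals, atoms, and ties that you flag at the end is exactly where the paper's resampling trick saves the work.
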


Next, let us turn to the truncation in the top end.
This part is more subtle because it serves two purposes, to satisfy the conditions in \Cref{lem:KL_upper_bound} for bounding the KL divergence of the surrogate distributions, and to satisfy the second condition of \Cref{lem:difference_ori_shaded}, namely, to ensure that one can estimate the expected revenue of any mechanism on the auxiliary distribution with standard concentration bounds.
For the latter, we would intuitively like to truncate values that are too large, in particular, those that are much larger than the optimal revenue of the surrogate distribution $\mathbf{D'}$.
Therefore, given an appropriate vector of value upper bounds $\mathbf{\bar{v}}$, we introduce the following function $\truncatetop_{\mathbf{\bar{v}}}$ for truncating the top end of the value distributions.
For every bidder $i$, let $\truncatetop_{\bar{v}_i}(D_i)$ be the distribution obtained by truncating values larger than $\bar{v}_i$ down to $\bar{v}_i$.
In other words, the quantiles of the truncated distribution is defined as:
\[
q^{\truncatetop_{\bar{v}_i}(D_i)}(v) \defeq 
\begin{cases}
q^{D_i}(v) & \text{if $v_i \le \bar{v}_i$} \\
0 & \text{if $v_i > \bar{v}_i$}
\end{cases}
\]

Further, for any $\mathbf{\bar{v}} = (\bar{v}_1, \bar{v}_2, \dots, \bar{v}_n)$, define:
\[
\truncatetop_{\mathbf{\bar{v}}}(\mathbf{D}) = \truncatetop_{\bar{v}_1}(D_1) \times \truncatetop_{\bar{v}_2}(D_2) \times \dots \times \truncatetop_{\bar{v}_n}(D_n)
~.
\]

Informally, we will choose the value upper bounds $\mathbf{\bar{v}}$ such that 1) for any $i$, $\bar{v}_i$ is upper bounded by the optimal revenue $\opt(\mathbf{D})$ multiplied by a factor that depends on the family of distributions, and 2) for any $i$, $q^{D_i}(\bar{v}_i)$ is at least $\tilde{\Omega}(\frac{1}{m})$.
The first property is to satisfy the conditions in \Cref{lem:difference_ori_shaded}, and the second property is the satisfy the conditions in \Cref{lem:KL_upper_bound}.
%

We summarize the meta construction of the surrogate distributions and some of their properties in the following lemma.
%
%
%

\begin{lemma}
	\label{lem:auxiliary_distributions}
	For any product value distribution $\mathbf{D}$, suppose there exist $\mathbf{\bar{v}}$, $\beta \ge \opt(\mathbf{D})$, and $p > 0$ such that:
	\begin{enumerate}
		\item
		$\beta \ge p \bar{v}_i$ for all $i \in [n]$.
		\item $q^{D_i}(\bar{v}_i)$ is at least $p \epsilon^2n^{-1}$ for all $i \in [n]$.
		\item  $\opt(\truncatetop_{\mathbf{\bar{v}}}(\mathbf{D})) \ge \opt(\mathbf{D}) - \epsilon \beta$.
	\end{enumerate}
	Then, there exist distributions $\mathbf{D'}$ and $\mathbf{\Tilde{D}'} = \doubleshading(\mathbf{D'})$ such that for any\\ $m \ge O\big(n p^{-1}\epsilon^{-2}\ln(mn\delta^{-1})\ln(np^{-1}\epsilon^{-1})\big) $, i.e., $m\ge \tilde{O}(n p^{-1}\epsilon^{-2})$:
	\begin{enumerate}
		\item[a)] $\mathbf{D'}$ and $\mathbf{\tilde{D}'}$ have bounded supports in $[0, p^{-1} \beta]$ coordinate-wise.
		\item[b)] $\opt(\mathbf{D'}) \ge \opt(\mathbf{D}) - 2 \epsilon \beta$.
		\item[c)] $\mathbf{\tilde{D}} \succeq \mathbf{\Tilde{D}'}$.
		\item[d)] $\skl(\mathbf{D}',\mathbf{\Tilde{D}'}) = O(p\epsilon^2)$.
	\end{enumerate}
\end{lemma}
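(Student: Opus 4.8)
The plan is to construct $\mathbf{D'}$ by first truncating the bottom end of $\mathbf{D}$ via $\truncatebottom$ and then truncating the top end via $\truncatetop_{\mathbf{\bar v}}$, after which I would verify the four conclusions in turn. Concretely, set $\mathbf{D'} = \truncatetop_{\mathbf{\bar v}}\big(\truncatebottom_{\epsilon}(\mathbf{D})\big)$, possibly after renormalizing so that $\ell = 0$ and $u = p^{-1}\beta$ are genuine point masses of nonnegligible mass (the bottom truncation already makes $0$ a point mass of mass $\ge \epsilon$; for the top, I would move the mass above $\bar v_i$ to exactly $\bar v_i$, and if $\bar v_i < p^{-1}\beta$ I would additionally shift a tiny sliver of mass up to $p^{-1}\beta$ to create the required point mass there — this is a cosmetic adjustment that costs only $o(1)$ in revenue and can be folded into the error terms). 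Conclusion (a) is then immediate from the construction: the support lies in $[0, p^{-1}\beta]$ because every $\bar v_i \le p^{-1}\beta$ by condition 1.

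For conclusion (b), I would chain the two truncation bounds: \Cref{lem:truncate_bottom} gives $\opt(\truncatebottom_{\epsilon}(\mathbf{D})) \ge (1-\epsilon)\opt(\mathbf{D}) \ge \opt(\mathbf{D}) - \epsilon\beta$ since $\beta \ge \opt(\mathbf{D})$, and condition 3 of the present lemma gives that the top truncation loses at most $\epsilon\beta$ further. One subtlety is that condition 3 is stated for $\truncatetop_{\mathbf{\bar v}}(\mathbf{D})$ rather than $\truncatetop_{\mathbf{\bar v}}(\truncatebottom_{\epsilon}(\mathbf{D}))$; I would argue the two truncations essentially commute in their effect on the revenue — applying the bottom truncation only removes low values and hence cannot increase the loss from the subsequent top truncation — so composing the bounds yields $\opt(\mathbf{D'}) \ge \opt(\mathbf{D}) - 2\epsilon\beta$ (with the cosmetic point-mass adjustment absorbed by slightly weakening constants, which is harmless since the statement only claims $\tilde O$-type bounds). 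For conclusion (c), I need $\mathbf{\tilde D} = \doubleshading(\mathbf{D}) \succeq \doubleshading(\mathbf{D'}) = \mathbf{\tilde D'}$; since $\mathbf{D'}$ is obtained from $\mathbf{D}$ by truncation, $\mathbf{D} \succeq \mathbf{D'}$ coordinate-wise (truncating down only decreases quantiles of positive values), and $\doubleshading$ is monotone in the quantile — being the pointwise maximum of $0$ and an increasing function of $q$ on the relevant range — so stochastic dominance is preserved.

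The main work, and the step I expect to be the real obstacle, is conclusion (d): bounding $\skl(\mathbf{D'}, \mathbf{\tilde D'}) = O(p\epsilon^2)$. Here I would invoke \Cref{lem:KL_upper_bound} coordinatewise. That lemma requires the point masses at $\ell = 0$ and at $u = p^{-1}\beta$ to have mass at least $\frac{64\ln(2mn\delta^{-1})}{m}$. The mass at $0$ in $D_i'$ is at least $\epsilon$ (from the bottom truncation), which exceeds $\frac{64\ln(2mn\delta^{-1})}{m}$ once $m \ge \tilde\Omega(\epsilon^{-1})$, comfortably within the assumed range of $m$. The mass at $u$: after the top truncation, the point mass at $\bar v_i$ has mass at least $q^{D_i}(\bar v_i) \ge p\epsilon^2 n^{-1}$ by condition 2, and I need this $\ge \frac{64\ln(2mn\delta^{-1})}{m}$, i.e. $m \ge \tilde\Omega(np^{-1}\epsilon^{-2})$ — exactly the hypothesis on $m$. (The cosmetic point at $p^{-1}\beta$, if distinct from $\bar v_i$, is given mass at least, say, $p\epsilon^2 n^{-1}$ by fiat.) Then \Cref{lem:KL_upper_bound} gives, per coordinate, $\skl(D_i', \tilde D_i') = O\big(\frac{\ln(mn\delta^{-1})}{m}\ln(p_\ell^{-1}p_u^{-1})\big) = O\big(\frac{\ln(mn\delta^{-1})}{m}\ln(np^{-1}\epsilon^{-1})\big)$, using $p_\ell \ge \epsilon$ and $p_u \ge p\epsilon^2 n^{-1}$. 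By tensorization of KL divergence over independent coordinates, $\skl(\mathbf{D'}, \mathbf{\tilde D'}) = \sum_i \skl(D_i', \tilde D_i') = O\big(\frac{n\ln(mn\delta^{-1})\ln(np^{-1}\epsilon^{-1})}{m}\big)$, and plugging in $m \ge O\big(np^{-1}\epsilon^{-2}\ln(mn\delta^{-1})\ln(np^{-1}\epsilon^{-1})\big)$ makes this $O(p\epsilon^2)$, as required. The delicate points to get right are that $\doubleshading$ here plays the role of $\doubleshading$ in \Cref{lem:KL_upper_bound} verbatim (it does, by definition), and that truncation does not create new point masses inside $(\ell, u)$ that violate the lemma's setup — it does not, since truncation only merges mass at the endpoints — so the coordinatewise application is legitimate.
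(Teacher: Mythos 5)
Your construction is essentially the paper's: the paper sets $\mathbf{D'} = \truncatebottom_{\epsilon} \circ \truncatetop_{\mathbf{\bar{v}}}(\mathbf{D})$, and your $\truncatetop_{\mathbf{\bar{v}}} \circ \truncatebottom_{\epsilon}(\mathbf{D})$ is the same distribution because the two operations commute exactly (each is the identity on the quantile range where the other acts nontrivially, given $p\epsilon^2 n^{-1} < 1 - \epsilon$). Once you observe that, your hand-waving in part (b) about the bottom truncation ``not increasing the loss'' can be replaced by the clean chain the paper uses: condition 3 bounds the loss from $\truncatetop_{\mathbf{\bar{v}}}$ directly, and \Cref{lem:truncate_bottom} applied to $\truncatetop_{\mathbf{\bar{v}}}(\mathbf{D})$ together with $\opt(\truncatetop_{\mathbf{\bar{v}}}(\mathbf{D})) \le \opt(\mathbf{D}) \le \beta$ bounds the loss from $\truncatebottom_{\epsilon}$. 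Parts (c) and (d) are argued just as the paper does.

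The one thing you should drop is the ``cosmetic adjustment'' that shifts a sliver of mass up to $p^{-1}\beta$. It addresses a non-problem: conclusion (a) only requires the support to be \emph{contained} in $[0, p^{-1}\beta]$, and \Cref{lem:KL_upper_bound} is applied per coordinate with $u = \bar v_i$ (which is where the top-truncated mass already sits with probability $\ge p\epsilon^2 n^{-1}$ by condition 2), not with $u = p^{-1}\beta$. Your own KL computation in fact uses $u = \bar v_i$, so the extra point mass does no work. Worse, if taken literally --- assigning mass $p\epsilon^2 n^{-1}$ at $p^{-1}\beta > \bar v_i$ ``by fiat'' --- it is not an $o(1)$ perturbation, it would push mass \emph{up} rather than down (so it is not a truncation and you would need to re-argue $\mathbf{D} \succeq \mathbf{D'}$ for part (c), which can genuinely fail if $q^{D_i}$ drops below $p\epsilon^2 n^{-1}$ somewhere in $(\bar v_i, p^{-1}\beta)$), and it would alter the point-mass structure at $\bar v_i$ on which your application of \Cref{lem:KL_upper_bound} relies. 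Simply omit it; the construction $\truncatebottom_{\epsilon} \circ \truncatetop_{\mathbf{\bar{v}}}(\mathbf{D})$ already satisfies all four conclusions as you otherwise argue.
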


\ifshowproof
\begin{proof}
	Given such a vector $\mathbf{\bar{v}}$, define the surrogate distributions as:
	\[
	\mathbf{D'} = \truncatebottom_{\epsilon} \circ \truncatetop_{\mathbf{\bar{v}}} (\mathbf{D})
	~,
	\]
	and
	\[
	\mathbf{\Tilde{D}'} = \doubleshading(\mathbf{D'})
	~.
	\]
	
	\paragraph{Part a)} 
	This is true by the above definition of $\mathbf{D'}$, the definition of $\mathbf{\tilde{D}'} = \doubleshading(\mathbf{D'})$, and the first condition in this lemma which implies $\bar{v}_i \le p^{-1} \beta$ for all $i \in [n]$.
	
	\paragraph{Part b)}
	Note that $\mathbf{D} \succeq \truncatetop_{\mathbf{\bar{v}}} (\mathbf{D})$.
	By weak revenue monotonicity (\Cref{lem:weak_revenue_monotonicity}) and the third condition of this lemma, we have:
	\[
	\opt(\mathbf{D}) \ge \opt \big( \truncatetop_{\mathbf{\bar{v}}}(\mathbf{D}) \big) \ge \opt(\mathbf{D}) - \epsilon \beta
	~.
	\]
	
	Similarly, note that $\truncatetop_{\mathbf{\bar{v}}} (\mathbf{D}) \succeq \truncatebottom_{\epsilon} \circ \truncatetop_{\mathbf{\bar{v}}} (\mathbf{D})$.
	We have:
	%
	\begin{align*}
	\opt\big( \truncatetop_{\mathbf{\bar{v}}}(\mathbf{D}) \big) 
	& \ge 
	\opt \big( \truncatebottom_{\epsilon} \circ \truncatetop_{\mathbf{\bar{v}}}(\mathbf{D}) \big) && \text{(weak revenue monotonicity, i.e., \Cref{lem:weak_revenue_monotonicity})} \\ 
	& \ge 
	(1 - \epsilon) \opt\big( \truncatetop_{\mathbf{\bar{v}}}(\mathbf{D}) \big) && \text{(\Cref{lem:truncate_bottom})} \\
	& \ge \opt\big( \truncatetop_{\mathbf{\bar{v}}}(\mathbf{D}) \big) - \epsilon \opt(\mathbf{D}) && \text{(weak revenue monotonicity, i.e., \Cref{lem:weak_revenue_monotonicity})} \\
	& \ge
	\opt\big( \truncatetop_{\mathbf{\bar{v}}}(\mathbf{D}) \big) - \epsilon \beta
	~.
	\end{align*}	
	
	Putting together we have:
	\[
	\opt(\mathbf{D'}) 
	\ge 
	\opt \big( \truncatebottom_{\epsilon} \circ \truncatetop_{\mathbf{\bar{v}}}(\mathbf{D}) \big) 
	\ge 
	\opt(\mathbf{D}) - 2 \epsilon \beta
	~.
	\]
	
	Hence, we have proved part b) of this lemma.
	
	\paragraph{Part c)}
	Note that $\mathbf{D} \succeq \mathbf{D'}$ since $\mathbf{D'}$ is obtained by truncating both the top and bottom ends of distribution $\mathbf{D}$.
	Further, $\mathbf{\tilde{D}} = \doubleshading(\mathbf{D})$ and $\mathbf{\tilde{D}'} = \doubleshading(\mathbf{D'})$.
	This part of the lemma now follows because $\doubleshading$ is a monotone function.
	
	\paragraph{Part d)}
	For every bidder $i$, our construction ensures that $D'_i$ and $\tilde{D}'_i = \doubleshading(D'_i)$ satisfy the conditions of \Cref{lem:KL_upper_bound} with $u = \bar{v}_i$, $\ell = 0$, $p_u \ge p \epsilon^2 n^{-1} = \tilde{\Omega}(m^{-1})$ (due to the second condition in this lemma), and $p_\ell \ge \epsilon = \tilde{\Omega}(m^{-1})$ (due to the definition of $\truncatebottom_\epsilon$). 
	Hence, by \Cref{lem:KL_upper_bound} we have:
	\[
	\skl(D'_i, \tilde{D}'_i) = O\left(\frac{\ln(mn\delta^{-1})\ln(n\epsilon^{-1}p^{-1})}{m}\right) = O \left( \frac{p \epsilon^2}{n} \right)
	~.
	\]
	
	Then, this part of the lemma follows because $\skl(\mathbf{D'}, \mathbf{\tilde{D}'}) = \sum_{i = 1}^n \skl(D'_i, \tilde{D}'_i)$.
\end{proof}
\fi
%

As a corollary, we lower bound the optimal revenue of the auxiliary distribution $\mathbf{\tilde{D}}$ by that of the original distribution $\mathbf{D}$ when $m$ is sufficiently large, where the bounds depend on the parameters $\beta$ and $p$ in the conditions of \Cref{lem:auxiliary_distributions}.

\begin{corollary}
	\label{cor:analysis_information_theory}
	Suppose there exist $\mathbf{\bar{v}}$, $\beta$, and $p$ satisfying the conditions in \Cref{lem:auxiliary_distributions}.
	Then, we have the following lower bound on the optimal revenue of the auxiliary distribution $\mathbf{\tilde{D}}$:
	\[
	\opt(\mathbf{\tilde{D}}) \ge \opt(\mathbf{D}) - 4 \epsilon \beta
	~,
	\]
	provided that the number of samples $m$ is at least:
	\[
	O\big(np^{-1}\epsilon^{-2}\ln(np^{-1}\epsilon^{-1}\delta^{-1})\ln(np^{-1}\epsilon^{-1})\big)=\tilde{O} \big(n p^{-1} \epsilon^{-2} \big)
	~.
	\]
\end{corollary}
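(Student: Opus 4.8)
The plan is to combine the two ingredients we have just developed: the revenue-monotonicity sandwich from Step~1, and the information-theoretic comparison of a distribution with its doubly-shaded version from Step~2. The conclusion is purely a matter of chaining inequalities; the only real content is checking that the hypotheses of \Cref{lem:difference_ori_shaded} are met by the surrogate distributions produced by \Cref{lem:auxiliary_distributions}.

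First I would invoke \Cref{lem:auxiliary_distributions} with the given $\mathbf{\bar v}$, $\beta$, and $p$ to obtain the surrogate pair $\mathbf{D'}$ and $\mathbf{\tilde D'} = \doubleshading(\mathbf{D'})$, together with properties (a)--(d). Then I would apply \Cref{lem:difference_ori_shaded} to $\mathbf{D'}$ and $\mathbf{\tilde D'}$ with $N = \Theta(p^{-1}\epsilon^{-2})$ and $\alpha = \Theta(\epsilon\beta)$. Condition~1 of that lemma is exactly property (d), $\skl(\mathbf{D'},\mathbf{\tilde D'}) = O(p\epsilon^2) = O(N^{-1})$ with a small enough leading constant. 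Condition~2 is where property (a) does its work: since both $\mathbf{D'}$ and $\mathbf{\tilde D'}$ are supported in $[0,p^{-1}\beta]$ coordinate-wise, the revenue of any mechanism is a sum of $n$ payments each bounded by $p^{-1}\beta$, hence bounded by $n p^{-1}\beta$; a single application of Bernstein's inequality (\Cref{lem:bernstein}) to the per-sample revenue — after rescaling by $np^{-1}\beta$ — shows that $N = O(p^{-1}\epsilon^{-2})$ samples suffice to estimate the expected revenue to within an additive $\epsilon\beta$ with probability at least $\tfrac23$. (One should double-check the variance/range trade-off in Bernstein so that only an $\epsilon^{-2}$ and not an $\epsilon^{-3}$ dependence appears; this is the one spot that needs a little care.) \Cref{lem:difference_ori_shaded} then yields $\opt(\mathbf{\tilde D'}) \ge \opt(\mathbf{D'}) - 2\alpha = \opt(\mathbf{D'}) - O(\epsilon\beta)$.

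Next I would transfer this back to $\mathbf{D}$ and $\mathbf{\tilde D}$. On the left, property (c) gives $\mathbf{\tilde D} \succeq \mathbf{\tilde D'}$, so weak revenue monotonicity (\Cref{lem:weak_revenue_monotonicity}) gives $\opt(\mathbf{\tilde D}) \ge \opt(\mathbf{\tilde D'})$. On the right, property (b) gives $\opt(\mathbf{D'}) \ge \opt(\mathbf{D}) - 2\epsilon\beta$. Stringing these together,
\[
\opt(\mathbf{\tilde D}) \ge \opt(\mathbf{\tilde D'}) \ge \opt(\mathbf{D'}) - 2\epsilon\beta \ge \opt(\mathbf{D}) - 4\epsilon\beta
~,
\]
which is exactly the claimed bound. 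Finally I would record that the sample-size requirement is the maximum of the threshold $\tilde O(np^{-1}\epsilon^{-2})$ needed for \Cref{lem:auxiliary_distributions} and the $\Theta(p^{-1}\epsilon^{-2})$ needed to make Condition~1 of \Cref{lem:difference_ori_shaded} hold with the right constant; both are $O(np^{-1}\epsilon^{-2}\ln(np^{-1}\epsilon^{-1}\delta^{-1})\ln(np^{-1}\epsilon^{-1}))$, matching the statement.

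The main obstacle is not conceptual but bookkeeping: one must be sure the constants line up so that the $\skl$ bound $O(p\epsilon^2)$ is genuinely smaller than $c/N$ for the constant $c$ demanded by \Cref{lem:difference_ori_shaded}, and that the additive error $2\alpha = O(\epsilon\beta)$ coming out of that lemma has an absolute constant that we can absorb into the ``$4\epsilon\beta$'' slack (adjusting $\epsilon$ by a constant factor up front if necessary). Everything else is a direct citation of the preceding lemmas, so the proof is short.
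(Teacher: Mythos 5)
Your overall structure follows the paper's: invoke \Cref{lem:auxiliary_distributions} to produce the surrogates $\mathbf{D'}$, $\mathbf{\tilde D'}$, feed them into \Cref{lem:difference_ori_shaded}, and then transfer back to $\mathbf{D}$ and $\mathbf{\tilde D}$ via parts (b), (c), and weak revenue monotonicity. However, your verification of Condition~2 of \Cref{lem:difference_ori_shaded} has a genuine gap, and it is not the one you flagged.

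You bound the per-sample revenue by $np^{-1}\beta$ by summing $n$ per-bidder payment bounds. That overcounts by a factor of $n$: this is a single-item auction, so the allocation constraint $\sum_i x_i(\mathbf{b}) \le 1$ together with $p_i(\mathbf{b}) \le v_i\, x_i(\mathbf{b})$ (BIC) gives $\sum_i p_i(\mathbf{b}) \le \max_i v_i \le p^{-1}\beta$ on every run, by part~(a). You also do not invoke the second ingredient the paper uses, namely that the \emph{expected} revenue of any mechanism on $\mathbf{D'}$ or $\mathbf{\tilde D'}$ is at most $\opt(\mathbf{D}) \le \beta$ (via $\mathbf{D}\succeq\mathbf{D'},\mathbf{\tilde D'}$ and \Cref{lem:weak_revenue_monotonicity}). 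Both are needed for Bernstein: with range $M = p^{-1}\beta$ and mean $\mu \le \beta$, the variance satisfies $\sigma^2 \le M\mu \le p^{-1}\beta^2$, so the Bernstein error after $N$ runs is $O\!\left(\sqrt{p^{-1}\beta^2/N} + p^{-1}\beta/N\right)$, and $N = \Theta(p^{-1}\epsilon^{-2})$ suffices to reach $\epsilon\beta$. With your looser range $np^{-1}\beta$, even adding the mean bound, you get $N = \Theta(np^{-1}\epsilon^{-2})$, and Condition~1 of \Cref{lem:difference_ori_shaded} then demands $\skl(\mathbf{D'},\mathbf{\tilde D'}) \le c N^{-1} = O(p\epsilon^2 n^{-1})$, a factor of $n$ smaller than what part~(d) delivers; without the mean bound the mismatch is a factor of $n^2$. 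So the concern is not ``$\epsilon^{-2}$ vs.\ $\epsilon^{-3}$'' but an uncontrolled factor of $n$. Replace the $np^{-1}\beta$ range with $p^{-1}\beta$ and add the expected-revenue bound, and the rest of your argument closes exactly as in the paper.
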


\ifshowproof
\begin{proof}
	Note that $\mathbf{D'}$ and $\mathbf{\tilde{D}'}$ have supports upper bounded by $p^{-1} \beta$ (part a) of \Cref{lem:auxiliary_distributions}), and the expected revenue of any mechanism on any of these two distributions is at most $\opt(\mathbf{D}) \le \beta$ ($\mathbf{D} \succeq \mathbf{D'}, \mathbf{\tilde{D}'}$ and weak revenue monotonicity by \Cref{lem:weak_revenue_monotonicity}).
	By Bernstein inequality (\Cref{lem:bernstein}), $O(p^{-1} \epsilon^{-2})$ sample runs are sufficient to estimate the expected revenue of any mechanism on any of these two distributions up to an additive factor of $\epsilon \beta$.
	Using \Cref{lem:difference_ori_shaded} with $N = p^{-1} \epsilon^{-2}$, and $\alpha = \epsilon \beta$ we get that:
	\[
	\opt(\mathbf{\tilde{D}'}) \ge \opt(\mathbf{D'}) - 2 \epsilon \beta
	~.
	\]
	
	By part c) of \Cref{lem:auxiliary_distributions}, and weak revenue monotonicity (\Cref{lem:weak_revenue_monotonicity}), we have:
	\[
	\opt(\mathbf{\tilde{D}}) \ge \opt(\mathbf{\tilde{D}'}) 
	~.
	\]
	
	Finally, by part b) of \Cref{lem:auxiliary_distributions}, we have:
	\[
	\opt(\mathbf{D'}) \ge \opt(\mathbf{D}) - 2 \epsilon \beta
	~.
	\]
	
	Putting together these three inequalities proves the corollary.
\end{proof}
\fi

Finally, combining \Cref{lem:analysis_revenue_monotonicity} and \Cref{cor:analysis_information_theory}, we get the following corollaries which finish the meta analysis.
We remark that a direct combination of \Cref{lem:analysis_revenue_monotonicity} and \Cref{cor:analysis_information_theory} gives $1 - 4 \epsilon$ multiplicative approximation or $4 \epsilon$ additive approximation.
However, reducing the approximation parameter by a factor of $4$ increases the number of samples needed by at most a constant factor because the bounds are polynomial in $\epsilon^{-1}$.

\begin{corollary}
	\label{cor:meta_analysis_multiplicative}
	Suppose there exist $\mathbf{\bar{v}}$ and $p$ satisfying the conditions in \Cref{lem:auxiliary_distributions} with $\beta = \opt(\mathbf{D})$.
	Then, with probability at least $1 - \delta$, we have:
	\[
	\rev(M_{\mathbf{\tilde{E}}}, \mathbf{D}) \ge (1 - \epsilon) \opt(\mathbf{D})
	~.
	\]
	provided that the number of samples $m$ is at least:
	\[
	O\big(np^{-1}\epsilon^{-2}\ln(np^{-1}\epsilon^{-1}\delta^{-1})\ln(np^{-1}\epsilon^{-1})\big)=\tilde{O} \big(n p^{-1} \epsilon^{-2} \big)
	~.
	\]
\end{corollary}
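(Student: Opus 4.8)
The plan is to stitch together the two halves of the meta analysis, with no new ingredients. First I would invoke Lemma~\ref{lem:analysis_revenue_monotonicity}: on the event of Lemma~\ref{lem:empirical_error_bound}, which holds with probability at least $1-\delta$ over the $m$ samples, the returned mechanism satisfies $\rev(M_{\mathbf{\tilde{E}}}, \mathbf{D}) \ge \opt(\mathbf{\tilde{D}})$. Since the information-theoretic step used below is purely analytical and draws on no samples, this is the only source of failure, so the final guarantee will hold with probability at least $1-\delta$.

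Next I would apply Corollary~\ref{cor:analysis_information_theory} with the hypothesized $\mathbf{\bar{v}}$ and $p$ and the choice $\beta = \opt(\mathbf{D})$; this is legitimate because $\beta = \opt(\mathbf{D}) \ge \opt(\mathbf{D})$ trivially, so the conditions of Lemma~\ref{lem:auxiliary_distributions} carry over verbatim. That corollary gives $\opt(\mathbf{\tilde{D}}) \ge \opt(\mathbf{D}) - 4\epsilon\beta = (1-4\epsilon)\,\opt(\mathbf{D})$, valid once $m \ge O\big(np^{-1}\epsilon^{-2}\ln(np^{-1}\epsilon^{-1}\delta^{-1})\ln(np^{-1}\epsilon^{-1})\big) = \tilde{O}(np^{-1}\epsilon^{-2})$. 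Chaining this with the inequality from the first step yields $\rev(M_{\mathbf{\tilde{E}}}, \mathbf{D}) \ge (1-4\epsilon)\,\opt(\mathbf{D})$.

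It remains to upgrade $1-4\epsilon$ to the claimed $1-\epsilon$. As the remark preceding the statement already indicates, the clean way is to rerun the two steps above with $\epsilon$ replaced by $\epsilon/4$ throughout: conditions~1--2 of Lemma~\ref{lem:auxiliary_distributions} only get easier as the accuracy parameter shrinks, and condition~3 is the one the invoking per-family proof establishes at the target accuracy anyway. Because every sample-complexity bound above is polynomial in $\epsilon^{-1}$ (indeed the displayed threshold changes by a constant factor and the explicit polylog arguments absorb a constant shift), replacing $\epsilon$ by $\epsilon/4$ leaves the stated $\tilde{O}(np^{-1}\epsilon^{-2})$ threshold intact.

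I do not expect a genuine obstacle, since all of the mathematical content lives in Lemma~\ref{lem:analysis_revenue_monotonicity}, Lemma~\ref{lem:auxiliary_distributions}, and Corollary~\ref{cor:analysis_information_theory}. The only points warranting a line of care are (i) instantiating the two ingredients at the \emph{same} rescaled parameter so the chained inequality is consistent, and (ii) noting that the information-theoretic step contributes no probability loss, so the overall guarantee stays ``with probability at least $1-\delta$'' rather than degrading to something weaker.
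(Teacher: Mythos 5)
Your proof is correct and follows exactly the route the paper sketches in the remark preceding the corollary: chain Lemma~\ref{lem:analysis_revenue_monotonicity} (probability $1-\delta$) with Corollary~\ref{cor:analysis_information_theory} (deterministic, given the hypotheses) to get $(1-4\epsilon)\opt(\mathbf{D})$, then rescale $\epsilon \mapsto \epsilon/4$, which only changes the sample threshold by a constant factor. You also correctly flag the two points that need care — consistent rescaling across both ingredients, and that the only probability loss comes from the Bernstein-based event in Lemma~\ref{lem:empirical_error_bound} — so there is nothing to add.
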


\begin{corollary}
	\label{cor:meta_analysis_additive}
	Suppose there exist $\mathbf{\bar{v}}$ and $p$ satisfying the conditions in \Cref{lem:auxiliary_distributions} with $\beta = 1$.
	Then, with probability at least $1 - \delta$, we have:
	\[
	\rev(M_{\mathbf{\tilde{E}}}, \mathbf{D}) \ge \opt(\mathbf{D}) - \epsilon
	~.
	\]
	as long as the number of samples $m$ is at least:
	\[
	O\big(np^{-1}\epsilon^{-2}\ln(np^{-1}\epsilon^{-1}\delta^{-1})\ln(np^{-1}\epsilon^{-1})\big)=\tilde{O} \big(n p^{-1} \epsilon^{-2} \big)
	~.
	\]
\end{corollary}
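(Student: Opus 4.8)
The plan is to simply chain \Cref{lem:analysis_revenue_monotonicity} with \Cref{cor:analysis_information_theory}; the only real content is bookkeeping the constant in front of $\epsilon$ and keeping track of the single source of randomness. I would first describe the chain with a generic accuracy parameter, producing an additive $4\epsilon$ guarantee, and then obtain the stated $-\epsilon$ bound by running the whole argument with $\epsilon/4$ in place of $\epsilon$; by the remark preceding this corollary, that rescaling inflates the sample requirement only by a constant factor, since all bounds are polynomial in $\epsilon^{-1}$ and polylogarithmic in everything else.

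For the chain itself, condition on the event of \Cref{lem:empirical_error_bound}, which holds with probability at least $1-\delta$. On this event, \Cref{lem:analysis_revenue_monotonicity} gives $\rev(M_{\mathbf{\tilde{E}}}, \mathbf{D}) \ge \opt(\mathbf{\tilde{D}})$, where $M_{\mathbf{\tilde{E}}}$ is the output of \Cref{alg:dominated_empirical_myerson} and $\mathbf{\tilde{D}} = \doubleshading(\mathbf{D})$ — note that both depend only on the samples, $m$, $n$, $\delta$, and not on the target accuracy, so we are free to instantiate the next step with any accuracy we like. Since by hypothesis $\mathbf{\bar v}$ and $p$ satisfy the conditions of \Cref{lem:auxiliary_distributions} with $\beta = 1$, \Cref{cor:analysis_information_theory} applies and gives the deterministic inequality $\opt(\mathbf{\tilde{D}}) \ge \opt(\mathbf{D}) - 4\epsilon$, provided $m \ge \tilde{O}\big( n p^{-1} \epsilon^{-2} \big)$ with the logarithmic factors exactly as in that corollary. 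Concatenating the two inequalities yields $\rev(M_{\mathbf{\tilde{E}}}, \mathbf{D}) \ge \opt(\mathbf{D}) - 4\epsilon$ on the good event.

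Finally, I would replace $\epsilon$ by $\epsilon' \defeq \epsilon/4$ throughout, which turns the bound into $\rev(M_{\mathbf{\tilde{E}}}, \mathbf{D}) \ge \opt(\mathbf{D}) - \epsilon$ with probability at least $1-\delta$; the required sample bound becomes $\tilde{O}\big( n p^{-1} (\epsilon')^{-2} \big) = \tilde{O}\big( n p^{-1} \epsilon^{-2} \big)$, matching the statement since the factor $16$ and the extra $\ln 4$ inside the logarithms are absorbed by the $\tilde{O}$. I do not expect a genuine obstacle here: both ingredients are already proved, \Cref{cor:analysis_information_theory} is purely deterministic (a statement about $\opt(\mathbf{\tilde{D}})$, a fixed function of $\mathbf{D}$, $m$, $n$, $\delta$), so the $1-\delta$ failure probability is inherited entirely from \Cref{lem:empirical_error_bound}; the one point that deserves a moment's care is the factor-$4$ rescaling, together with the routine observation that the hypotheses of \Cref{lem:auxiliary_distributions} invoked in \Cref{cor:analysis_information_theory} are to be checked for the rescaled accuracy $\epsilon'$. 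The proof of \Cref{cor:meta_analysis_multiplicative} is the same verbatim with $\beta = \opt(\mathbf{D})$, the additive slack $4\epsilon'\beta$ reading $4\epsilon'\opt(\mathbf{D}) = \epsilon\opt(\mathbf{D})$.
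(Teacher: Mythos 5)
Your proposal is correct and takes essentially the same route the paper does: the paper's ``proof'' is simply the remark immediately preceding the two corollaries, which observes that chaining \Cref{lem:analysis_revenue_monotonicity} with \Cref{cor:analysis_information_theory} yields an additive $4\epsilon\beta$ loss and that rescaling $\epsilon\mapsto\epsilon/4$ costs only a constant factor in $m$. You correctly identify the one point that deserves care, namely that the hypotheses of \Cref{lem:auxiliary_distributions} must be re-verified at the rescaled accuracy $\epsilon/4$ (condition~3 in particular is \emph{not} monotone under shrinking $\epsilon$, so it does not follow for free from the $\epsilon$-version); the paper is silent on this, so if anything your write-up is slightly more careful than the original.
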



\subsection{Proof of \texorpdfstring{\Cref{thm:main_upper_bounds}}{Theorem~\ref{thm:main_upper_bounds}}}

Finally, we will prove the sample complexity upper bounds for specific families of distributions as stated in \Cref{thm:main_upper_bounds}.
Given \Cref{cor:meta_analysis_multiplicative} and \Cref{cor:meta_analysis_additive}, it remains to find, for each family of distributions, an appropriate vector of value upper bounds $\mathbf{\bar{v}}$, together with parameters $\beta \ge \opt(\mathbf{D})$ and $p > 0$, that satisfy the conditions in \Cref{lem:auxiliary_distributions}, which we restate below:
\begin{enumerate}
	\item
	$\beta \ge p \bar{v}_i$ for all $i \in [n]$.
	\item $q^{D_i}(\bar{v}_i)$ is at least $p \epsilon^2n^{-1}$ for all $i \in [n]$.
	\item  $\opt(\truncatetop_{\mathbf{\bar{v}}}(\mathbf{D})) \ge \opt(\mathbf{D}) - \epsilon \beta$.
\end{enumerate}


\subsubsection*{$[1, H]$-Bounded Support Distributions}

\label{sec:specific_type_ub}
Let $\beta = \opt(\mathbf{D})$, which is at least $1$ because the values are lower bounded by $1$.
Hence, by letting $p = \frac{1}{H}$, the first condition holds for any choice of $\mathbf{\bar{v}}$, because the values are upper bounded by $H$.
Hence, we only need to choose $\mathbf{\bar{v}}$ to satisfy the other two conditions.
Clearly, the larger $\mathbf{\bar{v}}$ is, the more likely the third condition will hold.
Hence, we will pick $\bar{v}_i$ for each $i \in [n]$ greedily to be the largest value that satisfies the second condition.
That is, define $\mathbf{\bar{v}}$ such that for all $i \in [n]$:
\[
\bar{v}_i = \sup \left\{ v:  q^{D_i}(v) \ge \frac{\epsilon^2}{nH} \right\}
~.
\]

It remains to verify the third condition, which holds due the following sequence of inequalities:
\begin{align*}
\opt(\mathbf{D}) - \opt(\truncatetop_{\mathbf{\bar{v}}}(\mathbf{D})) & \le H \cdot \Pr \big[ \exists i \in [n] : v_i > \bar{v}_i \big] && \text{(values bounded by $H$)} \\[1.5ex]
& \le H \cdot \sum_{i = 1}^n \Pr \big[ v_i > \bar{v}_i \big] && \text{(union bound)} \\
& \le H \cdot \sum_{i = 1}^n \frac{\epsilon^2}{nH} && \text{(definition of $\bar{v}_i$'s)} \\[1.5ex]
& = \epsilon^2 \le \epsilon^2 \beta 
~.
&& \text{($\beta = \opt(\mathbf{D}) \ge 1$)}
\end{align*}

Then, by \Cref{cor:meta_analysis_multiplicative}, we get the sample complexity upper bound of $\tilde{O}(nH\epsilon^{-2})$ as stated in \Cref{thm:main_upper_bounds} for $[1, H]$-bounded support distributions.

\subsubsection*{$[0, 1]$-Bounded Support Distributions}

Since this case considers additive approximation, we will rely on \Cref{cor:meta_analysis_additive} and thus let $\beta = 1$.
Then, the first condition holds trivially with $p = 1$.
Similar to the previous case, we will choose $\bar{v}_i$ for each $i \in [n]$ greedily to be the largest value that satisfies the second condition.
That is, define $\mathbf{\bar{v}}$ such that for all $i \in [n]$:
\[
\bar{v}_i = \sup \left\{ v:  q^{D_i}(v) \ge \frac{\epsilon^2}{n} \right\}
~.
\]

It remains to verify the third condition, which holds due the following sequence of inequalities:
\begin{align*}
\opt(\mathbf{D}) - \opt(\truncatetop_{\mathbf{\bar{v}}}(\mathbf{D})) & \le \Pr \big[ \exists i \in [n] : v_i > \bar{v}_i \big] && \text{(values bounded by $1$)} \\[1.5ex]
& \le \sum_{i = 1}^n \Pr \big[ v_i > \bar{v}_i \big] && \text{(union bound)} \\
& \le \sum_{i = 1}^n \frac{\epsilon^2}{n} && \text{(definition of $\bar{v}_i$'s)} \\[1.5ex]
& = \epsilon^2 
~.
\end{align*}

Then, by \Cref{cor:meta_analysis_additive}, we get the sample complexity upper bound of $\tilde{O}(n\epsilon^{-2})$ as stated in \Cref{thm:main_upper_bounds} for $[0, 1]$-bounded support distributions.

\subsubsection*{Regular Distributions}

Let $\beta = \opt(\mathbf{D})$, and $p = \frac{\epsilon}{8}$.
Consider two vectors of value upper bounds, $\mathbf{\bar{v}^1}$ and $\mathbf{\bar{v}^2}$. 
The former is used to truncate values that are much larger than $\beta = \opt(\mathbf{D})$ to satisfy the first condition.
The latter is used to truncate values with tiny quantiles to satisfy the second condition. 

Concretely, define $\mathbf{\bar{v}^1}$ such that for all $i \in [n]$:
\[
\bar{v}^1_i = p^{-1} \beta = p^{-1} \opt(\mathbf{D})
~.
\]

Define $\mathbf{\bar{v}^2}$ such that for all $i \in [n]$:
\[
\bar{v}^2_i = \sup \left\{ v:  q^{D_i}(v) \ge \frac{p \epsilon^2}{n} \right\}
~.
\]

Finally, let $\mathbf{\bar{v}}$ be the coordinate-wise minimum of $\mathbf{\bar{v}^1}$ and $\mathbf{\bar{v}^2}$, i.e., for all $i \in [n]$:
\[
\bar{v}_i = \min \left\{ \bar{v}^1_i, \bar{v}^2_i \right\}
~.
\]

Clearly, we have $\truncatetop_{\mathbf{\bar{v}}} = \truncatetop_{\mathbf{\bar{v}^2}} \circ \truncatetop_{\mathbf{\bar{v}^1}}$, and the first two conditions hold by our choice of $\mathbf{\bar{v}}$.
It remains to verify the last condition.
We start by bounding the revenue loss due to $\truncatetop_{\mathbf{\bar{v}^2}}$ with an argument similar to those in the previous two cases:
\begin{align*}
\opt \big( \truncatetop_{\mathbf{\bar{v}^1}}(\mathbf{D}) \big) & - \opt \big( \truncatetop_{\mathbf{\bar{v}^2}} \circ \truncatetop_{\mathbf{\bar{v}^1}}(\mathbf{D}) \big) \\[2ex]
& \le p^{-1} \beta \cdot \Pr_{\truncatetop_{\mathbf{\bar{v}^1}}(\mathbf{D})} \big[ \exists i \in [n] : v_i > \bar{v}^2_i \big] && \text{(values bounded by $\bar{v}^1_i = p^{-1} \beta$)} \\[1ex]
& \le p^{-1} \beta \cdot \sum_{i = 1}^n \Pr_{\truncatetop_{\bar{v}^1_i}(D_i)} \big[ v_i > \bar{v}^2_i \big] && \text{(union bound)} \\
& \le p^{-1} \beta \cdot \sum_{i = 1}^n \frac{p \epsilon^2}{n} && \text{(definition of $\bar{v}^2_i$'s)} \\[1.5ex]
& = \epsilon^2 \beta
~.
\end{align*}


Finally, we bound the revenue loss due to $\truncatetop_{\mathbf{\bar{v}^1}}$.
To do that, we need the following lemma by \citet{DevanurHP/2016/STOC} that bounds the tail contribution of regular distributions.

\begin{lemma}[\citet{DevanurHP/2016/STOC}, Lemma 2]
	\label{lem:regular_tail_bound}
	For any product regular distribution $\mathbf{D}$, any $\frac{1}{4} \ge p > 0$, suppose $\mathbf{\bar{v}^1}$ satisfies that $\bar{v}^1_i \ge p^{-1} \opt(\mathbf{D})$ for all $i \in [n]$.
	Then, we have:
	\[
	\opt \big( \truncatetop_{\mathbf{\bar{v}^1}}(\mathbf{D}) \big) \ge (1 - 4p) \opt(\mathbf{D})
	\]
\end{lemma}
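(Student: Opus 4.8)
The plan is to pass to quantile space, where truncation has a transparent effect on the revenue curves, reduce the multi-bidder statement to a per-bidder tail bound coming from concavity, and then control the aggregation over bidders using the single-item constraint. For each bidder $i$, regularity says the revenue curve $R_i(q)=q\,v_i(q)$ is concave with $R_i(0)=0$, and truncating values above $\bar{v}^1_i$ replaces $R_i$ by $R_i^{\mathrm{tr}}(q)=\min\{R_i(q),\,q\bar{v}^1_i\}$, which is again concave (a minimum of two concave functions) and coincides with $R_i$ at every quantile $q\ge q^*_i:=q^{D_i}(\bar{v}^1_i)$. The hypothesis $\bar{v}^1_i\ge p^{-1}\opt(\mathbf{D})$ forces $q^*_i$ to be tiny: since $q^*_i\bar{v}^1_i=R_i(q^*_i)\le\opt(D_i)\le\opt(\mathbf{D})\le p\,\bar{v}^1_i$, we get $q^*_i\le p\le\tfrac14$. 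So truncation disturbs each bidder's revenue curve only on her top $q^*_i$ fraction of quantiles, where it is flattened to the line $q\bar{v}^1_i$.

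Given this, I would lower bound $\opt\big(\truncatetop_{\mathbf{\bar{v}^1}}(\mathbf{D})\big)$ by the expected revenue of running Myerson's auction $M_{\mathbf{D}}$ on the truncated instance; this is legitimate because $M_{\mathbf{D}}$ is DSIC, hence a feasible mechanism for any distribution. The revenue lost this way is incurred only when the winner $i$ under $M_{\mathbf{D}}$ realizes $v_i>\bar{v}^1_i$, i.e.\ a quantile in her top $q^*_i\le p$ fraction, and accounting for this loss through the virtual-value representation of revenue it equals $\E_{\mathbf{q}}\big[(\text{winner's virtual value}-\bar{v}^1_{\text{winner}})^+\big]$. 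For a single regular bidder this is already enough: concavity of $R_i$ with $R_i(0)=0$ implies that the portion of her revenue curve above quantile $q^*_i$ retains at least a $(1-q^*_i)\ge(1-p)$ fraction of $\opt(D_i)$ — compare $R_i(q^*_i)$ with the chord of $R_i$ from its peak down to $(1,R_i(1))$, which lies below the chord to $(1,0)$ — so the single-bidder instance of the lemma holds with loss at most $p\,\opt(D_i)$.

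The main obstacle is the aggregation over the $n$ bidders. Bounding each bidder's loss separately costs $\Theta(\opt(D_i))$ (or $\Theta(R_i(q^*_i))$) per bidder, and $\sum_i\opt(D_i)$ can far exceed $\opt(\mathbf{D})$ — e.g.\ with many i.i.d.\ near-equal-revenue bidders — so a union bound over bidders is too lossy, by a factor $\Theta(p^{-1})$ or worse. One must instead exploit that exactly one bidder ever wins, so the high-value part is a small fraction of the \emph{winner's} payment rather than a sum of per-bidder tail revenues; equivalently, the single-item feasibility constraint has to be used inside the tail, not only in the core. A secondary subtlety is that heavy-tailed regular distributions (those approaching the equal-revenue curve) have a revenue curve that is discontinuous at quantile $0$, so the virtual-value representation of $\opt(\mathbf{D})$ carries an atom-at-$0$ correction that can dominate and must be tracked carefully, or removed by a preliminary truncation at a value tending to infinity together with the monotone convergence $\opt\big(\truncatetop_{(V,\dots,V)}(\mathbf{D})\big)\uparrow\opt(\mathbf{D})$. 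These are precisely the points resolved by the tail bound of \citet{DevanurHP/2016/STOC}, which we invoke as \Cref{lem:regular_tail_bound} rather than reprove here.
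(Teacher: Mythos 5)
The paper does not prove this lemma; it cites it as Lemma~2 of \citet{DevanurHP/2016/STOC}, and your proposal --- despite the substantial warm-up --- ultimately does the same, explicitly deferring to DHP for the multi-bidder argument. In that sense you take exactly the paper's approach and there is no gap to report. The warm-up itself is largely sound: the quantile-space picture $R_i^{\mathrm{tr}}(q)=\min\{R_i(q),\,q\bar v^1_i\}$, the chain $q^*_i\bar v^1_i\le\opt(D_i)\le\opt(\mathbf D)\le p\bar v^1_i$ giving $q^*_i\le p$, the single-bidder concavity estimate, and the correct diagnosis that a naive per-bidder tail sum of order $\sum_i R_i(q^*_i)$ (each as large as $\opt(D_i)$) need not be controlled by $\opt(\mathbf D)$, which is indeed why the DHP argument must exploit the single-winner constraint rather than a union bound. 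One small direction slip in the single-bidder step: the chord of $R_i$ from its peak to $(1,R_i(1))$ lies \emph{above}, not below, the segment from the peak to $(1,0)$, since $R_i(1)\ge 0$; that is the inequality that lets you drop $R_i(q^*_i)$ to $(1-q^*_i)\,\opt(D_i)\ge(1-p)\,\opt(D_i)$.
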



By this lemma and our choice of $p = \frac{\epsilon}{8}$ and $\beta = \opt(\mathbf{D})$, we get that:
\[
\opt \big( \truncatetop_{\mathbf{\bar{v}^1}}(\mathbf{D}) \big) \ge \left( 1 - \frac{\epsilon}{2} \right) \opt ( \mathbf{D} ) = \opt(\mathbf{D}) - \frac{\epsilon}{2} \beta
~.
\]

Putting together we have:
\[
\opt \big( \truncatetop_{\mathbf{\bar{v}^2}} \circ \truncatetop_{\mathbf{\bar{v}^1}}(\mathbf{D}) \big) 
\ge 
\opt ( \mathbf{D} ) - \left( \frac{\epsilon}{2} + \epsilon^2 \right) \beta 
\ge 
\opt(\mathbf{D}) - \epsilon \beta
~.
\]

Then, by \Cref{cor:meta_analysis_multiplicative}, we get the sample complexity upper bound of $\tilde{O}(n \epsilon^{-3})$ as stated in \Cref{thm:main_upper_bounds} for the regular distributions.

\subsubsection*{MHR Distributions}

Let $\beta = \opt(\mathbf{D})$, and $p = \frac{1}{c \log(2/\epsilon)}$ for some sufficiently large constant $c > 0$.
Similar to the regular case, we consider two vectors of value upper bounds to satisfy the first and the second conditions respectively. 
Concretely, define $\mathbf{\bar{v}^1}$ such that for all $i \in [n]$:
\[
\bar{v}^1_i = p^{-1} \beta = p^{-1} \opt(\mathbf{D})
~.
\]

Define $\mathbf{\bar{v}^2}$ such that for all $i \in [n]$:
\[
\bar{v}^2_i = \sup \left\{ v:  q^{D_i}(v) \ge \frac{p \epsilon^2}{n} \right\}
~.
\]

Finally, let $\mathbf{\bar{v}}$ be the coordinate-wise minimum of $\mathbf{\bar{v}^1}$ and $\mathbf{\bar{v}^2}$, i.e., for all $i \in [n]$:
\[
\bar{v}_i = \min \left\{ \bar{v}^1_i, \bar{v}^2_i \right\}
~.
\]

Again, we have $\truncatetop_{\mathbf{\bar{v}}} = \truncatetop_{\mathbf{\bar{v}^2}} \circ \truncatetop_{\mathbf{\bar{v}^1}}$, and the first two conditions hold by our choice of $\mathbf{\bar{v}}$.
It remains to verify the last condition.
Bounding the revenue loss due to $\truncatetop_{\mathbf{\bar{v}^2}}$ is verbatim to the regular case; we include it as follows for completeness.
\begin{align*}
\opt \big( \truncatetop_{\mathbf{\bar{v}^1}}(\mathbf{D}) \big) & - \opt \big( \truncatetop_{\mathbf{\bar{v}^2}} \circ \truncatetop_{\mathbf{\bar{v}^1}}(\mathbf{D}) \big) \\[2ex]
& \le p^{-1} \beta \cdot \Pr_{\mathbf{v} \sim \truncatetop_{\mathbf{\bar{v}^1}}(\mathbf{D})} \big[ \exists i \in [n] : v_i > \bar{v}^2_i \big] && \text{(value bounded by $\bar{v}^1_i = p^{-1} \beta$)} \\[1ex]
& \le p^{-1} \beta \cdot \sum_{i = 1}^n \Pr_{v_i \sim \truncatetop_{\bar{v}^1_i}(D_i)} \big[ v_i > \bar{v}^2_i \big] && \text{(union bound)} \\
& \le p^{-1} \beta \cdot \sum_{i = 1}^n \frac{p \epsilon^2}{n} && \text{(definition of $\bar{v}^2_i$'s)} \\[1.5ex]
& = \epsilon^2 \beta
~.
\end{align*}

Finally, we bound the revenue loss due to $\truncatetop_{\mathbf{\bar{v}^1}}$.
To do that, we need the extreme value theorem by \citet{CaiD/2011/FOCS}.
It was originally proved only for continuous MHR distributions~\cite{CaiD/2011/FOCS} but in fact holds for discrete MHR distributions as well~\cite{Cai/2018/communication}.
Below we restate an interpretation of the extreme value theorem by \citet{DevanurHP/2016/STOC} and \citet{MorgensternR/2015/NIPS}, which is most convenient for our analysis.

\begin{lemma}[\citet{CaiD/2011/FOCS}]
	\label{lem:MHR_extreme_value}
	For any product MHR distribution $\mathbf{D}$, and any $\frac{1}{4}\ge \epsilon\ge 0$, suppose $\mathbf{\bar{v}^1}$ satisfies that $\bar{v}^1_i\ge c \log (\frac{1}{\epsilon}) \opt (\mathbf{D})$ for all $i \in [n]$ for a sufficiently large constant $c$. 
	Then, we have:
	\[
	\opt \big( \truncatetop_{\mathbf{\bar{v}^1}}(\mathbf{D}) \big) \ge (1-\epsilon)\opt(\mathbf{D})
	~.
	\]
\end{lemma}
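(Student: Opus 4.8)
This lemma is a restatement of the extreme value theorem of \citet{CaiD/2011/FOCS}, phrased in the way that \citet{DevanurHP/2016/STOC} and \citet{MorgensternR/2015/NIPS} find convenient, so the most economical route is simply to invoke that theorem after matching notation: the ``$t$-truncated optimal revenue'' there is exactly $\opt(\truncatetop_{\mathbf{\bar{v}^1}}(\mathbf{D}))$ here. For completeness, the plan for a self-contained argument, parallel in spirit to \Cref{lem:regular_tail_bound}, is as follows.

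First I would reduce the claim to a tail estimate on the values. The direction $\opt(\truncatetop_{\mathbf{\bar{v}^1}}(\mathbf{D})) \le \opt(\mathbf{D})$ is immediate from $\mathbf{D} \succeq \truncatetop_{\mathbf{\bar{v}^1}}(\mathbf{D})$ and weak revenue monotonicity (\Cref{lem:weak_revenue_monotonicity}). For the other direction I would use that $M_{\mathbf{D}}$ is DSIC and hence remains a feasible mechanism on the truncated distribution, so $\opt(\truncatetop_{\mathbf{\bar{v}^1}}(\mathbf{D})) \ge \rev(M_{\mathbf{D}}, \truncatetop_{\mathbf{\bar{v}^1}}(\mathbf{D}))$. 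Coupling a draw $\mathbf{v} \sim \mathbf{D}$ with the truncated draw $\mathbf{v}'$ obtained by capping each coordinate $i$ at $\bar{v}^1_i$, the two runs of $M_{\mathbf{D}}$ produce identical payments whenever no coordinate is capped; hence the revenue loss is at most $\E\big[ \max_i v_i \cdot \mathbf{1}[\exists i : v_i > \bar{v}^1_i] \big]$, using that in Myerson's auction the (single) winner never pays more than her own value.

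Next I would bound this tail using the MHR property. For each bidder $i$, let $p^*_i$ be the monopoly price of $D_i$ and $\rev_i = p^*_i \, q^{D_i}(p^*_i) \le \opt(\mathbf{D})$. Since the hazard rate of $D_i$ is non-decreasing and first-order optimality of $p^*_i$ forces it to equal $1/p^*_i$ there, one gets $q^{D_i}(v) \le q^{D_i}(p^*_i)\, e^{1 - v/p^*_i}$ for $v \ge p^*_i$; and since the monopoly quantile of an MHR distribution is at least $1/e$, we also have $p^*_i \le e\,\rev_i \le e\,\opt(\mathbf{D})$. Plugging the hypothesis $\bar{v}^1_i \ge c \log(1/\epsilon)\,\opt(\mathbf{D})$ into these bounds makes each per-coordinate tail, and its integral, exponentially small in $c\log(1/\epsilon)$.

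The step I expect to be the real obstacle --- and the reason I would ultimately cite \citet{CaiD/2011/FOCS} rather than reprove it --- is the aggregation over the $n$ bidders. A naive union bound inside $\E[\max_i v_i \cdot \mathbf{1}[\exists i : v_i > \bar{v}^1_i]]$ costs an extra $\log n$ factor and would only establish the lemma under the stronger hypothesis $\bar{v}^1_i \ge c\log(n/\epsilon)\,\opt(\mathbf{D})$. Shaving this factor requires exploiting that $\opt(\mathbf{D})$ already aggregates the scales $p^*_i$ of all bidders: if many bidders had $\rev_i$ (equivalently $p^*_i$) comparable to $\opt(\mathbf{D})$, then $\opt(\mathbf{D})$ would itself be forced to be a $\log n$ factor larger than any individual scale, which is exactly enough to absorb the union bound. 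Turning this intuition into the clean $(1-\epsilon)$ bound is the content of the Cai--Daskalakis extreme value theorem, which I would invoke to finish the proof.
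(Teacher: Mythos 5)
The paper does not actually prove this lemma: it is stated as a citation to \citet{CaiD/2011/FOCS} (in the reformulation of \citet{DevanurHP/2016/STOC} and \citet{MorgensternR/2015/NIPS}), and your proposal makes the same choice, so the two approaches coincide. Your sketch of what a self-contained argument would require is accurate — the reduction to $\E\big[\max_i v_i \cdot \mathbf{1}[\exists i: v_i > \bar v^1_i]\big]$, the per-coordinate MHR tail $q^{D_i}(v)\le e^{1-v/p^*_i}$ together with $p^*_i \le e\,\opt(\mathbf{D})$, and the observation that a naive union bound only yields the weaker threshold $c\log(n/\epsilon)\opt(\mathbf{D})$ — and you correctly locate the step that shaves the $\log n$ (that $\opt(\mathbf{D})$ already absorbs the aggregate of the per-bidder scales) as exactly the content of the Cai–Daskalakis extreme value theorem, which is why the paper cites rather than reproves it.
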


By this lemma and our choice of $p = \frac{1}{c \log(2/\epsilon)}$ and $\beta = \opt(\mathbf{D})$, we get that:
\[
\opt \big( \truncatetop_{\mathbf{\bar{v}^1}}(\mathbf{D}) \big) \ge \left( 1 - \frac{\epsilon}{2} \right) \opt(\mathbf{D}) = \opt(\mathbf{D}) - \frac{\epsilon}{2} \beta
~.
\]

Putting together we have:
\[
\opt \big( \truncatetop_{\mathbf{\bar{v}^2}} \circ \truncatetop_{\mathbf{\bar{v}^1}}(\mathbf{D}) \big) 
\ge 
\opt(\mathbf{D}) - \left( \frac{\epsilon}{2} + \epsilon^2 \right) \beta 
\ge 
\opt(\mathbf{D}) - \epsilon \beta
~.
\]

Then, by \Cref{cor:meta_analysis_multiplicative}, we get the sample complexity upper bound of $\tilde{O}(n \epsilon^{-2})$ as stated in \Cref{thm:main_upper_bounds} for the MHR distributions.

	\newpage
	
	\section{Lower Bounds}
\label{sec:lower_bound_multi_bidder}

In this section, we will prove the following sample complexity lower bounds for all four families of distributions considered in this paper.
Each of them matches the corresponding upper bound in \Cref{sec:upper_bound_multi_bidder} up to a poly-logarithmic factor.

\begin{theorem}
	\label{thm:lower_bounds}
	Suppose an algorithm, given $m$ samples, returns a mechanism that is a $1 - \epsilon$ approximation, with probability at least $0.99$, for a given family of distribution.
	Then, we have:
	\begin{enumerate}
		\item $m$ is at least $\Omega(n \epsilon^{-3})$ if it is the family of regular distributions; or
		\item $m$ is at least $\Omega(n\epsilon^{-2}\ln^{-2}n)=\tilde{\Omega}(n \epsilon^{-2})$ if it is the family of MHR distributions; or
		\item $m$ is at least $\Omega(n H \epsilon^{-2})$ if the family is $[1, H]$-bounded distributions.
	\end{enumerate}
	Suppose it is an $\epsilon$ additive approximation with probability at least $0.99$.
	Then, we have:
	\begin{enumerate}
		\setcounter{enumi}{3}
		\item $m$ is at least $\Omega(n \epsilon^{-2})$ if it is the family of $[0, 1]$-bounded distributions.
	\end{enumerate}
\end{theorem}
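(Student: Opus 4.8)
The plan is to establish all four bounds through a single meta construction of hard instances, reducing the revenue-approximation task to $n-1$ parallel binary hypothesis tests and then invoking \Cref{lem:kl_divergence}. In the construction, bidder~$1$ is a point mass at a value $r$, chosen (together with the remaining parameters) according to the target family; she is the ``default winner''. For each $j\in\{2,\dots,n\}$, nature draws $\theta_j\in\{h,\ell\}$ uniformly and independently, and bidder~$j$'s distribution is $D^{\theta_j}\in\{D^h,D^\ell\}$. We design $D^h$ and $D^\ell$ so that: they coincide outside a value interval $I$; each assigns probability $\Theta(1/n)$ to $I$; and for every value in $I$ the ironed virtual value of $D^h$ beats bidder~$1$ while that of $D^\ell$ does not. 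Moreover, conditioned on $v_j\in I$, the revenue difference between serving bidder~$j$ optimally and keeping bidder~$1$ as the winner is a gap $\Delta$, tuned per family so that $\Delta\ll r$; by Myerson's characterization, $\opt(\mathbf{D}_\theta)=r+O(\Delta)$, equal to $r$ up to lower-order terms, and any near-optimal mechanism must serve bidder~$j$ on the event $v_j\in I$ precisely when $\theta_j=h$.

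For the reduction, note that from the $m$ samples the mechanism the algorithm outputs has, for each $j$, a well-defined behavior on the event $v_j\in I$; rounding its allocation probability for bidder~$j$ on that event to $\{0,1\}$ yields a guess $\hat\theta_j$ that is a deterministic function of the samples. A guess with $\hat\theta_j\ne\theta_j$ costs $\Omega(\Delta)$ in expected revenue conditioned on $v_j\in I$, hence $\Omega(\Delta/n)$ unconditionally; conditioning on the samples and types of the other bidders, the guesses for distinct $j$ become independent, so the total revenue loss concentrates around $\Omega\!\big(\tfrac1n\Delta\sum_j\Pr[\hat\theta_j\ne\theta_j]\big)$ with probability $1-e^{-\Omega(n)}$. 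Since only bidder~$j$'s samples are informative about $\theta_j$, the map from samples to $\hat\theta_j$ is a classifier between $D^h$ and $D^\ell$ with $m$ i.i.d.\ samples, so by \Cref{lem:kl_divergence}, if $m=o\big(\skl(D^h,D^\ell)^{-1}\big)$ then $\Pr[\hat\theta_j\ne\theta_j]\ge\tfrac13$ for every $j$ and the loss is $\Omega(\Delta)$ with probability close to $1$. Taking $\Delta$ to be a large constant times $\epsilon\cdot\opt(\mathbf{D}_\theta)$ (respectively a large constant times $\epsilon$ in the additive case) then contradicts the assumed $0.99$-probability approximation guarantee --- the concentration of the loss over $n-1$ bidders is exactly what lets the argument tolerate the constant failure probability. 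Hence $m=\Omega\big(\skl(D^h,D^\ell)^{-1}\big)$.

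It then remains to realize, within each family, building blocks $D^h,D^\ell$ with the prescribed gap $\Delta=\Theta(\epsilon\cdot\opt)$ (in the relevant normalization) while minimizing $\skl(D^h,D^\ell)$; the resulting $\skl^{-1}$ is precisely the stated bound. I expect $\skl(D^h,D^\ell)=\Theta(\epsilon^3/n)$ for regular distributions --- a perturbation of the revenue curve over a quantile window of width $\Theta(\epsilon)$ and height $\Theta(\epsilon^2)$, localized inside $I$ --- giving $\Omega(n\epsilon^{-3})$; $\skl=\tilde\Theta(\epsilon^2/n)$ for MHR distributions, where the monotone-hazard-rate constraint forces the perturbation deep into the tail at values of order $\log n$ times the scale, accounting for the $\ln^2 n$ slack and giving $\tilde\Omega(n\epsilon^{-2})$; $\skl=\Theta\big(\epsilon^2/(nH)\big)$ for $[1,H]$-bounded distributions, where the perturbation lives at a value of order $H$ with quantile of order $1/(nH)$ in the spirit of the single-bidder hard instance of \citet{HuangMR/2015/EC}, giving $\Omega(nH\epsilon^{-2})$; and $\skl=\Theta(\epsilon^2/n)$ for $[0,1]$-bounded distributions with additive error, giving $\Omega(n\epsilon^{-2})$.

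The main obstacle is the reduction: making it rigorous that an arbitrary (possibly randomized and non-truthful) mechanism's behavior on bidder~$j$ can be decoupled from the other $n-2$ bidders so that it genuinely constitutes a classifier of $\theta_j$, and that a wrong guess really costs $\Omega(\Delta)$ rather than being masked by favorable realizations elsewhere; this is where the point-mass anchor and the $\Theta(1/n)$ mass on $I$ are essential, since they keep the ``serve bidder~$1$'' fallback both available and near-optimal whenever $\theta_j=\ell$. The other nontrivial piece is the per-family design --- verifying regularity or MHR after the perturbation and computing the KL divergences tightly enough to match the upper bounds of \Cref{sec:upper_bound_multi_bidder} --- with the MHR case expected to match only up to the $\ln^2 n$ factor.
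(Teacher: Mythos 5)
Your high-level plan --- a point-mass default winner, $n-1$ bidders each carrying one of two nearby distributions $D^h,D^\ell$, a reduction to per-bidder binary hypothesis tests via \Cref{lem:kl_divergence}, and per-family realizations whose symmetric KL divergence matches the stated bounds --- is exactly the meta construction the paper uses, and your predicted $\skl$ values for the four families line up (up to constants and the $\ln^2 n$ slack for MHR) with those computed from \Cref{lem:dptrick} in \Cref{sec:meta_hardness_instantiation}. You even correctly flag the reduction as the nontrivial step. However, the specific way you propose to close that step has a gap.

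Your concentration claim is not justified. You write that, conditioning on the samples and types of the other bidders, the guesses $\hat\theta_j$ become independent across $j$, so the revenue loss concentrates with probability $1-e^{-\Omega(n)}$. But each $\hat\theta_j$ is extracted from the behavior of $A(\text{all } m \text{ samples})$ on the event $v_j\in I$, and hence is a function of the \emph{entire} sample set, not merely bidder $j$'s column. There is no single conditioning under which $\{\hat\theta_j\ne\theta_j\}$ are jointly independent across $j$: fixing the samples and types of all other bidders renders $\hat\theta_j$ measurable in $j$'s samples, but you would need a different conditioning for each $j$, and jointly the indicators share the same underlying $m\times n$ sample matrix. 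The \Cref{lem:kl_divergence} argument only tells you the \emph{marginal} error $\Pr[\hat\theta_j\ne\theta_j]\ge\tfrac13$ for each fixed $j$; it does not give a Chernoff bound for the sum. Furthermore, even if some concentration held, it would be over the joint randomness of types \emph{and} samples, whereas the theorem requires a single fixed $\mathbf{D}\in\mathcal{H}$ on which the algorithm fails with probability $>0.01$ over the samples alone --- so you would still need an extra averaging step to extract such a $\mathbf{D}$.

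The paper's \Cref{sec:meta_hardness_analysis} instead reaches the same conclusion by an elementary two-stage counting argument that never invokes independence: for each bidder $i$ and each pair $\mathbf{D}^1,\mathbf{D}^2\in\mathcal{H}$ differing only in coordinate $i$, the indistinguishability gives a constant-probability loss of $\Omega(\Delta)$ on $\mathcal{V}_i$ for at least one of the two (\Cref{lem:meta_hardness_revenue_loss_probability}); averaging over $\mathcal{H}$ and applying a pigeonhole argument (\Cref{lem:meta_hardness_counting}, \Cref{cor:meta_hardness_bad_instance}) produces a \emph{single} $\mathbf{D}\in\mathcal{H}$ with $\Omega(n)$ ``bad'' bidders; and a reverse-Markov bound on the bounded count $|\mathcal{B}_{\mathbf{D},A(\mathbf{D})}|\le n$ (\Cref{lem:meta_hardness_bad_instance_count_const_prob}) upgrades the expected loss to a loss of $\Omega(np\Delta)$ with probability $\ge 1/60>0.01$. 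This is exactly strong enough to contradict the $0.99$ success guarantee and avoids any claim of joint independence. If you want to pursue your route, the right fix is to replace the independence claim with this counting-plus-reverse-Markov argument; there is no need (and no clean way) to obtain exponential concentration.

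Two smaller points on the per-family details: your stated probability $\Theta(1/n)$ on the critical interval $I$ holds for the MHR and bounded cases, but the paper's regular construction takes $p=\epsilon/n$ on $I$, which is what makes $\skl=\Theta(\epsilon^3/n)$; a fixed $\Theta(1/n)$ mass with a $\pm\epsilon$ density perturbation would give only $\Theta(\epsilon^2/n)$. Also, the paper's meta lemma (\Cref{lem:meta_hardness}) keeps $\Delta$ and $p$ as separate parameters and only requires $np\Delta=\Omega(\epsilon\opt)$; per family one should check which of $p$ and $\Delta$ carries the $\epsilon$ dependence, since this is what determines whether the lower bound is $\epsilon^{-2}$ or $\epsilon^{-3}$.
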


We will present the meta construction of hard instances in \Cref{sec:meta_hard_instance}, and the corresponding meta analysis in \Cref{sec:meta_hardness_analysis}.
Finally, we will explain in \Cref{sec:meta_hardness_instantiation} how to use them to prove the sample complexity lower bounds stated in \Cref{thm:lower_bounds}.

\subsection{Meta Hard Instance}
\label{sec:meta_hard_instance}

Our construction of hard instances relies on finding three distributions in the family:
\begin{enumerate}
	\item $D^b$, a base distribution that is a point mass. The bidder with this distribution will serve as the default winner unless another bidder realizes an extremely high value.
	\item $D^h$, a distribution that has a relatively higher chance to win over $D^b$.
	\item $D^\ell$, a distribution that has a relatively lower chance to win over $D^b$.
\end{enumerate}

Given these distributions, consider the following family of hard instances.
Let bidder $1$'s value follows the base distribution $D^b$.
For any other bidder $2 \le i \le n$, let her value distribution be either $D^h$ or $D^\ell$.
Formally, let:
\[
\mathcal{H} = \big\{ \mathbf{D} : D_1 = D^b \text{, and } D_i = D^h \text{ or } D^\ell \text{ for all } 2 \le i \le n \}
~.
\]
Our plan is to show that any algorithm that gets a good enough approximation on all distributions in $\mathcal{H}$ must take a lot of samples.

What properties do we need from these three distributions $D^b$, $D^h$, and $D^\ell$ in order to show a sample complexity lower bound?
Let $\phi^b, \phi^h, \phi^\ell$ denote the corresponding virtual value functions.
For some positive $v_0, v_1, v_2$, where $v_1$ may be $+\infty$, and parameters $0 < p \le \frac{1}{n}$ and $\Delta > 0$, the three distributions shall satisfy the following conditions:
\begin{enumerate}[label=\alph*)]
	\item
	$D^b$ is a point mass at $v_0$.
	\label{property:base_pointmass}
	%
	%
	\item
	The probability of $v \ge v_2$ is at most $1/n$ for both $D^h$ and $D^\ell$.	
	\label{property:prob_high_values}
	\item
	The probability of $v_1 > v \ge v_2$ is at least $p$ for both $D^h$ and $D^\ell$.
	\label{property:prob_critical_interval}
	\item
	For any value $v_1 > v \ge v_2$, we have $\phi^\ell(v) + \Delta \le v_0 \le \phi^h(v) - \Delta$.
	\label{property:virtual_value_gap}
	\item
	For any value $v < v_2$, we have $\phi^h(v), \phi^\ell(v) \le v_0$.
	Our instances in fact satisfy that $D^h$ and $D^\ell$ are identical for values $v < v_2$, but this is not necessary in the meta analysis below. 
	\label{property:virtual_value_bound_small_values}
\end{enumerate}

The above conditions are essential for our construction of the meta hard instance.
The following three, on the other hand, are for the convenience of our argument.

\begin{enumerate}[label=\alph*)]
	\setcounter{enumi}{5}
	\item
	For any value $v_1 > v \ge v_2$, we have $\sqrt{2} \ge \frac{dD^\ell}{dD^h}(v) \ge \frac{1}{\sqrt{2}}$. Here, the factor $\sqrt{2}$ can be replaced by any other constants. Our instances will in fact satisfy this condition up to $1 \pm \epsilon$.
	\label{property:density_gap}
	\item
	$D^h$ is regular.
	\label{property:regular_high_distribution}
	\item
	Either $v_1 = +\infty$, or $v_1$ is a point mass and an upper bound of values in both $D^h$ and $D^\ell$.
	\label{property:v1_point_mass}
\end{enumerate}

Intuitively, to construct a mechanism that gives an approximately optimal expected revenue w.r.t.\ an unknown product value distribution in $\mathcal{H}$, the algorithm must be able to distinguish bidders with value distribution $D^h$, and those with value distribution $D^\ell$, from the samples. 
Otherwise, when there was exactly a bidder with a value between $v_1$ and $v_2$, the algorithm could not correctly decide whether to pick her to be the winner over the default winner, namely, bidder $1$.
We formalize this intuition with the following lemma and its proof in the next subsection.

%

\begin{lemma}
	\label{lem:meta_hardness}
	If an algorithm takes $m$ samples from an arbitrary product value distribution $\mathbf{D} \in \mathcal{H}$ and returns, with probability at least $0.99$, a mechanism whose expected revenue is at least:
	\[
	\opt(\mathbf{D}) - O( n p \Delta) 
	~.
	\]
	Then, the number of samples $m$ is at least:
	\[
	\Omega \left(\skl(D^h, D^\ell)^{-1} \right) 
	~.
	\]
\end{lemma}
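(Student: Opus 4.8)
The plan is to reduce the problem of distinguishing $D^h$ from $D^\ell$ (on a single coordinate) to the problem of learning an approximately optimal mechanism on instances drawn from $\mathcal{H}$, and then invoke \Cref{lem:kl_divergence}. The key structural fact I would establish first is a \emph{revenue gap}: conditioning on a particular value profile in which exactly one of bidders $2,\dots,n$, say bidder $i$, realizes a value $v$ with $v_1 > v \ge v_2$, while all other bidders $j \ne i, j \ne 1$ realize values below $v_2$ (and bidder $1$ is always at $v_0$), the optimal mechanism must allocate to bidder $i$ when $D_i = D^h$ (since $\phi^h(v) - \Delta \ge v_0 = \phi^b(v_0)$ and, by property~\ref{property:virtual_value_bound_small_values}, every other bidder has virtual value at most $v_0$), and must allocate to bidder $1$ when $D_i = D^\ell$ (since $\phi^\ell(v) + \Delta \le v_0$). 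Moreover on such a profile the two choices differ in revenue by at least $\Delta$: in the $D^h$ case the threshold payment charged to $i$ is at least $v_0 + \Delta$-ish more than the alternative, and in the $D^\ell$ case charging $v_0$ to bidder~$1$ beats any payment extractable from $i$. I would need property~\ref{property:regular_high_distribution} and~\ref{property:v1_point_mass} here to ensure the virtual values are the honest (un-ironed) ones in the relevant range and that $v_1$ being a point mass keeps the support controlled, so the threshold-payment accounting is clean.

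Next I would set up the reduction. Suppose, for contradiction, that there is an algorithm $A$ using $m$ samples that, with probability $\ge 0.99$, returns a mechanism with expected revenue at least $\opt(\mathbf{D}) - c\, n p \Delta$ for every $\mathbf{D}\in\mathcal{H}$, where $c$ is the small constant hidden in the $O(\cdot)$. Given $m$ samples purportedly from $D^h$ or $D^\ell$ on one coordinate, I build an instance: place them on bidder $i$ (a uniformly random index in $\{2,\dots,n\}$, or just bidder $2$), draw fresh samples for the other bidders from, say, $D^\ell$, run $A$, and obtain a mechanism $M$. Then I test $M$: feed it the value profile where bidder~$1$ is at $v_0$, bidder $i$ is at a fixed value $v^* \in [v_2, v_1)$, and all others are below $v_2$; if $M$ allocates to bidder $i$ declare "$D^h$", otherwise declare "$D^\ell$". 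The correctness argument is the crux: I must show this classifier is right with probability $\ge 2/3$. The point is that the expected-revenue shortfall $A$ is allowed, $c n p \Delta$, is small, while each "critical event" (exactly one bidder in $[v_2,v_1)$) contributes $\Theta(p)$ probability per bidder and $\Theta(\Delta)$ revenue when handled wrongly; by property~\ref{property:prob_high_values} and $p \le 1/n$, with constant probability exactly one bidder lands in the critical interval, and a Markov-type argument on the revenue loss forces $M$ to behave correctly on an overwhelming fraction of critical configurations, in particular on the one I test. Choosing $c$ small enough (relative to the constants in these counting bounds) and boosting $A$'s success probability if necessary gives the $2/3$ bound.

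Having a correct classifier using $m$ samples, \Cref{lem:kl_divergence} immediately yields $m = \Omega\big(\skl(D^h, D^\ell)^{-1}\big)$ — here I use that the KL divergence is taken coordinate-wise and only the one coordinate carrying $D^h$ vs.\ $D^\ell$ differs (the padding coordinates are identical under both hypotheses and contribute nothing), so the relevant divergence between the two full product distributions over the samples is exactly $m \cdot \skl(D^h, D^\ell)$ after accounting for $m$ i.i.d.\ draws. The main obstacle I anticipate is the correctness argument for the classifier: translating "$A$ loses at most $c n p \Delta$ in expected revenue" into "$M$ makes the right allocation on the specific test profile with probability $\ge 2/3$" requires carefully quantifying how many critical configurations there are, how much revenue is lost per mishandled one, and ruling out the possibility that $A$ hedges by randomizing or by being wrong only on a negligible-probability-but-adversarially-chosen set. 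Property~\ref{property:virtual_value_bound_small_values} (that $D^h$ and $D^\ell$ agree below $v_2$, so the "other bidders" genuinely cannot be confused for the critical one) and property~\ref{property:density_gap} (bounding how different the two distributions look, which also enters the KL bound) are what make this tractable, and I'd lean on them to keep the accounting honest.
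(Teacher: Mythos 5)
Your overall framing — reduce distinguishing $D^h$ from $D^\ell$ to learning, then invoke \Cref{lem:kl_divergence} — is the right high-level plan, and your identification of the critical interval $[v_2, v_1)$ and of the roles of conditions \ref{property:virtual_value_gap}, \ref{property:virtual_value_bound_small_values}, \ref{property:density_gap} all match the paper. But the step you yourself flag as the crux — proving the classifier is correct — has a genuine, unfixable gap, and it is in exactly the place you suspected.

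The issue is quantitative. Your classifier places the $m$ unknown samples on a single bidder $i$ and pads the other coordinates with $D^\ell$. When $D^* = D^h$, the instance is $\mathbf{D}^1_i$, which has exactly \emph{one} $D^h$ bidder. The revenue shortfall that $A$ is allowed on $\mathbf{D}^1_i$ is $O(np\Delta)$, but the loss an algorithm incurs by mishandling the single bidder $i$ on the event $\mathcal{V}_i$ is only $\Theta(p\Delta)$ — smaller by a factor of $n$. Concretely, consider the adversarial algorithm $A^*$ that, whenever it sees samples consistent with $\mathbf{D}^1_i$ or $\mathbf{D}^2$, always returns the conservative mechanism that allocates to bidder $1$ at price $v_0$ (and behaves optimally when the samples suggest many $D^h$ bidders). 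On any instance with at most one $D^h$ coordinate, $A^*$'s revenue loss is $O(p\Delta) \ll cnp\Delta$, so $A^*$ satisfies the hypothesis of the lemma; yet your classifier, run against $A^*$, always declares $D^\ell$, so it is not correct. Replacing the fixed $i$ by a uniformly random $i \in \{2,\dots,n\}$ does not help in the $D^h$ direction, because the padding instance $\mathbf{D}^1_i$ changes with $i$ and $A^*$ behaves the same on all of them. A ``Markov-type argument'' can bound how many bidders $A$ mishandles on a \emph{single} instance $\mathbf{D}$ with many $D^h$ coordinates, but it gives no leverage on an instance with only one, which is exactly what your reduction produces.

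What the paper does instead is the contrapositive, and it is not merely a cosmetic reversal: it assumes $m < c\, \skl(D^h,D^\ell)^{-1}$, uses \Cref{lem:kl_divergence} (via \Cref{lem:meta_hardness_mistakes}) to conclude that for every $i$ and every fixed $\mathbf{D}_{-i}$, $A$ must return a mechanism from the ``wrong'' half $\mathcal{M}^{3-j}$ with probability $> 1/3$ for at least one $j \in \{1,2\}$, and then \emph{aggregates} this failure over all $i$ via a counting argument (\Cref{lem:meta_hardness_counting}, \Cref{cor:meta_hardness_bad_instance}, \Cref{lem:meta_hardness_bad_instance_count_const_prob}) to exhibit a single $\mathbf{D} \in \mathcal{H}$ on which $\Omega(n)$ bidders are simultaneously mishandled, giving total loss $\Omega(np\Delta)$. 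The aggregation over bidders is the whole ballgame; a per-bidder classifier argument cannot recover the factor of $n$.

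Two smaller issues. First, testing $M$ on a single fixed profile $v^*$ cannot work: the revenue guarantee is an expectation, so it constrains $M$'s behavior only up to measure zero; you must test the \emph{conditional} allocation probability over the positive-measure set $\mathcal{V}_i$ (this is what $\mathcal{M}^1,\mathcal{M}^2$ and \Cref{lem:meta_hardness_similar_decisions} are for, and condition \ref{property:density_gap} is needed precisely so that this conditional probability is comparable under $\mathbf{D}^1$ and $\mathbf{D}^2$). Second, your ``revenue gap'' claim that the two allocation choices differ in revenue by $\Delta$ should be phrased in terms of virtual values (\Cref{lem:meta_hardness_virtual_value_maximizer} and Eqns.\ \eqref{eqn:optimal_accounting}--\eqref{eqn:algorithm_accounting}) rather than threshold payments on a single profile, for the same measure-zero reason.
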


\subsection{Meta Analysis: Proof of \texorpdfstring{\Cref{lem:meta_hardness}}{Lemma~\ref{lem:meta_hardness}}}
\label{sec:meta_hardness_analysis}

For some sufficiently small constant $c$, suppose for contrary that the algorithm, denoted as $A$, takes $m <c \cdot \skl(D^h, D^\ell)^{-1}$ samples.
We will account for the revenue loss due to the mistakes made by the mechanism chosen by algorithm $A$ on a bidder by bidder basis.
Concretely, for every bidder $2 \le i \le n$, define a subset of value vector $\mathcal{V}_i$ as follows:
\[
\mathcal{V}_i= \bigg\{ \mathbf{b} = (b_1, b_2, \dots, b_n) : \text{$b_1=v_0$, $v_1 > b_i \ge v_2$, and $b_j < v_2$ for all $j \ne 1, i$} \bigg\}
~.  
\]
Note that $\mathcal{V}_2, \mathcal{V}_3, \dots, \mathcal{V}_n$ are disjoint.
We will account for the revenue loss due to the mistakes made on the value vectors in each subset $\mathcal{V}_i$ separately.

\paragraph{Proof Sketch.}
The plan is to prove the lemma by showing there is a distribution $\mathbf{D} \in \mathcal{H}$ such that for at least $\Theta(n)$ different bidders $2 \le i \le n$, the mechanism chosen by the algorithm with $m <c \cdot \skl(D^h, D^\ell)^{-1}$ samples will make a lot of mistakes on the value vectors in $\mathcal{V}_i$ and, as a result, will have a revenue loss of $\Omega(p \Delta)$.
This will be formalized in \Cref{cor:meta_hardness_bad_instance} and \Cref{lem:meta_hardness_bad_instance_count_const_prob}.

How do we prove that? 
We will do so by showing that for any $2 \le i \le n$, and for a randomly chosen pair of distributions in $\mathcal{H}$ that differ only in the $i$-th coordinate, the algorithm $A$ must have an $\Omega(p \Delta)$ revenue loss due to value vectors in $\mathcal{V}_i$ on at least one of the two distributions.
This will be formally proved in \Cref{lem:meta_hardness_revenue_loss_probability}.
Then, the aforementioned claim follows by a simple counting argument, via \Cref{lem:meta_hardness_counting}.

Intuitively, this is because the algorithm $A$ cannot distinguish such a pair of distributions with so few samples but the winner must be selected differently for the two distributions for value vectors in $\mathcal{V}_i$.
This intuition is formalized with a sequence of claims in \Cref{lem:meta_hardness_similar_decisions}, \Cref{cor:meta_hardness_similar_decisions}, and
\Cref{lem:meta_hardness_mistakes}.

\paragraph{Formal Proof.}
Next, we instantiate the above proof sketch with a formal argument.
Fix any $2 \le i\le n$, and any $\mathbf{D}_{-i}= \times_{j \neq i} D_j$ such that $D_1=D^b$ and $D_j \in \{D^h,D^{\ell}\}$ for all $j \ne 1, i$.
Let $\mathbf{D}^1 = (\mathbf{D}_{-i}, D_i = D^h) \in \mathcal{H}$ and $\mathbf{D}^2 = (\mathbf{D}_{-i}, D_i = D^\ell) \in \mathcal{H}$ be a pair of distributions that differ only in the $i$-th coordinate. 
Then, we have:
\[
\skl \big( \mathbf{D}^1, \mathbf{D}^2 \big) = \skl(D^h, D^\ell)
~.
\]
Then, since algorithm $A$ takes $m < c \cdot \skl(D^h, D^\ell)^{-1} = c \cdot \skl(\mathbf{D}^1, \mathbf{D}^2)^{-1}$ samples for some sufficiently small constant $c$, by \Cref{lem:kl_divergence}, it cannot distinguish whether the underlying distribution is $\mathbf{D}^1$ or $\mathbf{D}^2$ correctly, and as a result will choose a mechanism from essentially the same distribution in both cases.

On the other hand, the optimal auctions w.r.t.\ $\mathbf{D}^1$ and $\mathbf{D}^2$ pick different bidders as the winner for value vectors in $\mathcal{V}_i$: 
the one w.r.t.\ $\mathbf{D}^1$ allocates the item to bidder $i$, while the one w.r.t.\ $\mathbf{D}^2$ allocates the item to the default winner, i.e., bidder $1$.
To instantiate this intuition, we first formally show that the subsets of mechanisms that are close to optimal for $\mathbf{D}^1$ and $\mathbf{D}^2$ respectively, in terms of their choices of winners when the value vector is in $\mathcal{V}_i$, are disjoint. 
We start with the following technical lemma.

\begin{lemma}
	\label{lem:meta_hardness_similar_decisions}
	For any mechanism $M$, the probability that $M$ picks bidder $i$ as the winner, conditioned on the value vector $\mathbf{v}$ is in $\mathcal{V}_i$, differs by at most a factor of $2$ whether $\mathbf{v}$ is drawn from $\mathbf{D}^1$ or $\mathbf{D}^2$.
\end{lemma}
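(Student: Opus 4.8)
The plan is to compare the conditional law of the full value vector $\mathbf{v}$ given the event $\{\mathbf{v}\in\mathcal{V}_i\}$ under $\mathbf{D}^1$ and under $\mathbf{D}^2$, and to show that these two conditional laws are mutually absolutely continuous with Radon--Nikodym derivative everywhere in $[\tfrac{1}{2},2]$. Since ``$M$ allocates the item to bidder $i$'' has, conditioned on $\mathbf{v}$, a probability $x_i^M(\mathbf{v})\in[0,1]$ that is a fixed function of $\mathbf{v}$ (the internal coins of $M$ being independent of $\mathbf{v}$), the quantity ``probability $M$ picks $i$ as winner conditioned on $\mathcal{V}_i$'' is exactly $\int x_i^M\,d\mu^k$ for $k=1,2$, where $\mu^k$ is the relevant conditional law; the factor-$2$ comparison of these two integrals then follows at once from the bound on $d\mu^1/d\mu^2$.

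\textbf{Step 1 (factor out the shared coordinates).} The event $\mathcal{V}_i$ is a product event: it constrains the $i$-th coordinate by $v_1>b_i\ge v_2$ and constrains the remaining coordinates by $b_1=v_0$ and $b_j<v_2$ for all $j\ne 1,i$. Call these $A_i$ (on coordinate $i$) and $B_{-i}$ (on the other coordinates), so $\mathcal{V}_i=A_i\times B_{-i}$. The distributions $\mathbf{D}^1$ and $\mathbf{D}^2$ are product distributions that coincide on every coordinate except the $i$-th, where they are $D^h$ and $D^\ell$ respectively; in particular their marginals on $\mathbf{v}_{-i}$ are the same $\mathbf{D}_{-i}$. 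Hence, conditioning a product law on the product event $A_i\times B_{-i}$, the conditional law of $\mathbf{v}$ given $\mathcal{V}_i$ factors as (conditional law of $\mathbf{v}_{-i}$ given $B_{-i}$) $\times$ (conditional law of $v_i$ given $A_i$), and the first factor is identical for $\mathbf{D}^1$ and $\mathbf{D}^2$. (The conditioning is non-degenerate: property~\ref{property:prob_critical_interval} gives $\Pr[A_i]\ge p>0$, and property~\ref{property:prob_high_values} gives $\Pr[v_j<v_2]\ge 1-1/n>0$ for each $j\ne 1,i$, so $\Pr_{\mathbf{D}^k}[\mathbf{v}\in\mathcal{V}_i]>0$.)

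\textbf{Step 2 (density ratio on the critical interval).} It remains to compare the law of $v_i$ conditioned on $A_i$, which is $D^h$ restricted to $[v_2,v_1)$ and renormalized under $\mathbf{D}^1$, versus $D^\ell$ restricted to $[v_2,v_1)$ and renormalized under $\mathbf{D}^2$. By property~\ref{property:density_gap} the Radon--Nikodym derivative $\frac{dD^\ell}{dD^h}(v)$ lies in $[\tfrac{1}{\sqrt{2}},\sqrt{2}]$ for every $v_1>v\ge v_2$; integrating this pointwise bound over $[v_2,v_1)$ shows the normalizing masses $\Pr_{D^\ell}[v_2\le v<v_1]$ and $\Pr_{D^h}[v_2\le v<v_1]$ are themselves within a factor $\sqrt{2}$. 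Therefore the ratio of the two renormalized conditional densities of $v_i$ is a product of two quantities each in $[\tfrac{1}{\sqrt{2}},\sqrt{2}]$, hence everywhere in $[\tfrac{1}{2},2]$.

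\textbf{Step 3 and the main point.} Combining Steps 1 and 2, $\mu^1$ and $\mu^2$ are mutually absolutely continuous with $d\mu^1/d\mu^2\in[\tfrac{1}{2},2]$ (the $\mathbf{v}_{-i}$ factor contributes exactly $1$, the $v_i$ factor contributes the range $[\tfrac12,2]$), so $\int x_i^M\,d\mu^1=\int x_i^M\,(d\mu^1/d\mu^2)\,d\mu^2$ is squeezed between $\tfrac12\int x_i^M\,d\mu^2$ and $2\int x_i^M\,d\mu^2$, which is precisely the claimed factor-$2$ bound. There is no real obstacle; the only care needed is the bookkeeping in Step~2 — that renormalizing the conditional density of $v_i$ can at most square the per-point density ratio, so the constant $\sqrt{2}$ of property~\ref{property:density_gap} becomes the factor $2$ in the statement — together with the observation in Step~1 that the conditioning on the non-$i$ coordinates is shared verbatim by $\mathbf{D}^1$ and $\mathbf{D}^2$ and hence contributes nothing.
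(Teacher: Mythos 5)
Your proof is correct and follows essentially the same route as the paper's: both reduce to the observation that the unconditional integral $\int_{\mathcal{V}_i} x_i^M\, d\mathbf{D}^k$ picks up a factor of at most $\sqrt{2}$ when swapping $\mathbf{D}^1$ for $\mathbf{D}^2$ (via the pointwise bound $d D^h/d D^\ell \in [1/\sqrt{2},\sqrt{2}]$ on $[v_2,v_1)$ from condition f)), and that normalizing by $\Pr[\mathcal{V}_i]$ contributes another $\sqrt{2}$, yielding the stated factor of $2$. Your explicit factorization of the conditional law into a shared $\mathbf{v}_{-i}$ part and an $i$-th-coordinate part, and your handling of randomized $M$ via the allocation rule $x_i^M(\mathbf{v})$, are slightly more careful in bookkeeping than the paper's direct manipulation of unconditional integrals with an indicator, but the substance is identical.
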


In the proof of this lemma the the rest of the subsection, we will use $\Pr_{\mathbf{v} \sim \mathbf{D} : \mathbf{v} \in \mathcal{V}}$ to denote the conditional probability when $\mathbf{v}$ is drawn from $\mathbf{D}$ conditioned on $\mathbf{v} \in \mathcal{V}$ for some subset $\mathcal{V}$ of value vectors.
Similarly, we will use $\E_{\mathbf{v} \sim \mathbf{D} : \mathbf{v} \in \mathcal{V}}$ to denote the conditional expectation.

\begin{proof}
	We have:
	\begin{align*}
	& \Pr_{\mathbf{v} \sim \mathbf{D}^1 : \mathbf{v} \in \mathcal{V}_i} \big[ \text{$M$ picks $i$ as the winnner} \big] \cdot \Pr_{\mathbf{v} \sim \mathbf{D}^1} \big[ \mathbf{v} \in \mathcal{V}_i \big] \\[1ex]
	& \qquad\qquad = \int_{\mathbf{v} \in \mathcal{V}_i} \mathbf{1}(\text{$M$ picks $i$ as the winnner}) ~ d \mathbf{D}^1 \\
	& \qquad\qquad = \int_{\mathbf{v} \in \mathcal{V}_i} \mathbf{1}(\text{$M$ picks $i$ as the winnner}) \frac{d \mathbf{D}^1}{d \mathbf{D}^2}(\mathbf{v}) ~ d \mathbf{D}^2 \\
	& \qquad\qquad = \int_{\mathbf{v} \in \mathcal{V}_i} \mathbf{1}(\text{$M$ picks $i$ as the winnner}) \frac{d D^h}{d D^\ell}(v_i) ~ d \mathbf{D}^2
	~.
	\end{align*}

	Here, $\mathbf{1}(\cdot)$ is the indicator function that equals $1$ if the event is true. 
	The last equality is due to the definition $\mathbf{D}^1$ and $\mathbf{D}^2$, which differ only in the $i$-th coordinate.
	
	Similarly, we also have:
	\begin{align*}
	& \Pr_{\mathbf{v} \sim \mathbf{D}^2 : \mathbf{v} \in \mathcal{V}_i} \big[ \text{$M$ picks $i$ as the winner} \big] \cdot \Pr_{\mathbf{v} \sim \mathbf{D}^2} \big[ \mathbf{v} \in \mathcal{V}_i \big] \\
	& \qquad\qquad = \int_{\mathbf{v} \in \mathcal{V}_i} \mathbf{1}(\text{$M$ picks $i$ as the winner}) ~ d \mathbf{D}^2
	~.
	\end{align*}
	
	Noting that $v_1 > v_i > v_2$ for any $\mathbf{v} \in \mathcal{V}_i$, the claim now follows by condition \ref{property:density_gap}, which indicates that $\sqrt{2} \ge \frac{d \mathbf{D}^1}{d \mathbf{D}^2}(\mathbf{v}) \ge \frac{1}{\sqrt{2}}$ and that the probability of $\mathbf{v} \in \mathcal{V}_i$ differs by at most a factor of $\sqrt{2}$.
\end{proof}

Next, we define the following partitions of mechanisms:
\begin{align*}
\mathcal{M}^1 & = \left\{ M : \Pr_{\mathbf{v} \sim \mathbf{D}^1 : \mathbf{v} \in \mathcal{V}_i} \big[ \text{$M$ picks $i$ as the winnner} \big] \ge \frac{2}{3} \right\} 
~, \\
\mathcal{M}^2 & = \left\{ M : \Pr_{\mathbf{v} \sim \mathbf{D}^1 : \mathbf{v} \in \mathcal{V}_i} \big[ \text{$M$ picks $i$ as the winnner} \big] < \frac{2}{3} \right\}
~.
\end{align*}

We have the following as a corollary of \Cref{lem:meta_hardness_similar_decisions}.

\begin{corollary}
	\label{cor:meta_hardness_similar_decisions}
	For any $M \in \mathcal{M}^1$, we have that:
	\[
	\Pr_{\mathbf{v} \sim \mathbf{D}^2 : \mathbf{v} \in \mathcal{V}_i} \big[ \text{\rm $M$ picks $i$ as the winnner} \big] \ge \frac{1}{3}
	~.
	\]
\end{corollary}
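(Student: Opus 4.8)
The plan is to simply unfold the definition of $\mathcal{M}^1$ and apply \Cref{lem:meta_hardness_similar_decisions}, which is the only nontrivial ingredient. Fix a mechanism $M \in \mathcal{M}^1$. By definition of $\mathcal{M}^1$, we have $\Pr_{\mathbf{v} \sim \mathbf{D}^1 : \mathbf{v} \in \mathcal{V}_i}[\text{$M$ picks $i$ as the winner}] \ge \frac{2}{3}$.

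\Cref{lem:meta_hardness_similar_decisions} states that this conditional probability changes by at most a multiplicative factor of $2$ when we switch the source distribution from $\mathbf{D}^1$ to $\mathbf{D}^2$; in particular, $\Pr_{\mathbf{v} \sim \mathbf{D}^2 : \mathbf{v} \in \mathcal{V}_i}[\text{$M$ picks $i$ as the winner}] \ge \frac{1}{2} \cdot \Pr_{\mathbf{v} \sim \mathbf{D}^1 : \mathbf{v} \in \mathcal{V}_i}[\text{$M$ picks $i$ as the winner}]$. Combining the two displays yields the claimed bound $\Pr_{\mathbf{v} \sim \mathbf{D}^2 : \mathbf{v} \in \mathcal{V}_i}[\text{$M$ picks $i$ as the winner}] \ge \frac{1}{2} \cdot \frac{2}{3} = \frac{1}{3}$.

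There is essentially no obstacle here: the corollary is a one-line consequence of the preceding lemma, and the only care needed is to extract the correct direction of the factor-$2$ comparison (lower bound on the $\mathbf{D}^2$-probability in terms of the $\mathbf{D}^1$-probability). The constants $\frac{2}{3}$ and $\frac{1}{3}$ in the statement of $\mathcal{M}^1$ and of the corollary are chosen precisely so that the factor-$2$ slack in \Cref{lem:meta_hardness_similar_decisions} is absorbed, keeping both thresholds bounded away from $\frac12$ so that $\mathcal{M}^1$ and $\mathcal{M}^2$ remain usable as a ``large gap'' partition in the subsequent argument. Accordingly, in the write-up I would keep the proof to two or three sentences, just citing \ref{lem:meta_hardness_similar_decisions} and doing the arithmetic.
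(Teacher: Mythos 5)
Your proof is correct and is exactly the argument the paper leaves implicit: the corollary is stated as an immediate consequence of \Cref{lem:meta_hardness_similar_decisions} with no written proof, and unfolding the definition of $\mathcal{M}^1$ and applying the factor-$2$ bound in the right direction (lower-bounding the $\mathbf{D}^2$-probability by half the $\mathbf{D}^1$-probability) is precisely what is intended. Nothing further is needed.
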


For any value vector $\mathbf{v} \in \mathcal{V}_i$, recall that the optimal auction w.r.t.\ $\mathbf{D}^1$ will allocate the item to bidder $i$ while the one w.r.t.\ $\mathbf{D}^2$ will allocate the item to the default winner, i.e., bidder $1$.
Therefore, if the underlying distribution is $\mathbf{D}^1$ and the value vector is in $\mathcal{V}_i$, the mechanisms in $\mathcal{M}^2$ will pick a wrong winner with probability at least $\frac{1}{3}$ by definition.
Similarly, if the underlying distribution is $\mathbf{D}^2$ and the value vector is in $\mathcal{V}_i$, the mechanisms in $\mathcal{M}^1$ will pick a wrong winner with probability at least $\frac{1}{3}$ by \Cref{cor:meta_hardness_similar_decisions}.
Informally, the algorithm shall return a mechanism in $\mathcal{M}^j$ most of the time if the underlying distribution is $\mathbf{D}^j$ for $j \in \{1, 2\}$, in order to ensure that the expected revenue is close to optimal.


Next, we formalize the intuition that $A$ is taking too few samples to make different decisions on $\mathbf{D}^1$ and $\mathbf{D}^2$ with the following lemma.

\begin{lemma}
	\label{lem:meta_hardness_mistakes}
	For either $j = 1$ or $j = 2$ (or both), we have:
	\[
	\Pr \big[ A(\mathbf{D}^j) \in \mathcal{M}^{3-j} \big] > \frac{1}{3}
	~.
	\]
\end{lemma}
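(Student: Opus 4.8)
The plan is a short contrapositive argument that turns a hypothetical failure of the lemma into a valid distinguisher for $\mathbf{D}^1$ and $\mathbf{D}^2$, contradicting the indistinguishability already noted above. Suppose the conclusion fails for both values, i.e.\ $\Pr\big[A(\mathbf{D}^1)\in\mathcal{M}^2\big]\le\frac13$ and $\Pr\big[A(\mathbf{D}^2)\in\mathcal{M}^1\big]\le\frac13$, where the probabilities are over the $m$ samples and $A$'s internal randomness. Since $\mathcal{M}^1$ and $\mathcal{M}^2$ partition the set of all mechanisms — the defining inequality of $\mathcal{M}^1$ is exactly the negation of that of $\mathcal{M}^2$ — these bounds are equivalent to $\Pr\big[A(\mathbf{D}^1)\in\mathcal{M}^1\big]\ge\frac23$ and $\Pr\big[A(\mathbf{D}^2)\in\mathcal{M}^2\big]\ge\frac23$.

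From here I would construct a classification algorithm $B$ that uses the same $m$ samples: run $A$ to obtain a (random) mechanism $M$, then test whether $M\in\mathcal{M}^1$, outputting $\mathbf{D}^1$ if so and $\mathbf{D}^2$ otherwise. This membership test is well defined precisely because $\{\mathcal{M}^1,\mathcal{M}^2\}$ is a partition (given $M$, the quantity $\Pr_{\mathbf{v}\sim\mathbf{D}^1:\mathbf{v}\in\mathcal{V}_i}[M \text{ picks } i]$ is a fixed number); as in the proof of \Cref{lem:difference_ori_shaded}, we only need $B$ to \emph{exist}, not to be efficiently computable. By the two bounds above, $B$ answers correctly with probability at least $\frac23$ under either hypothesis, hence distinguishes $\mathbf{D}^1$ and $\mathbf{D}^2$ correctly with $m$ samples.

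This contradicts the fact established earlier in this subsection that, for the sufficiently small constant $c$, $m < c\cdot\skl(D^h,D^\ell)^{-1} = c\cdot\skl(\mathbf{D}^1,\mathbf{D}^2)^{-1}$ is too few samples to distinguish $\mathbf{D}^1$ and $\mathbf{D}^2$ (combining \Cref{lem:kl_divergence} with $\skl(\mathbf{D}^1,\mathbf{D}^2)=\skl(D^h,D^\ell)$, which holds because $\mathbf{D}^1,\mathbf{D}^2$ are product distributions differing only in the $i$-th coordinate and KL divergence tensorizes). Therefore the conclusion must hold for at least one of $j=1,2$.

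There is essentially no hard step here; the only points requiring care are (i) verifying that $\mathcal{M}^1$ and $\mathcal{M}^2$ genuinely partition all mechanisms, so that the test inside $B$ is total, and (ii) invoking \Cref{lem:kl_divergence} only for the \emph{existence} of a classifier rather than an efficient implementation, exactly as is done in \Cref{lem:difference_ori_shaded}.
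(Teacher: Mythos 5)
Your proposal is correct and takes essentially the same approach as the paper: both construct the classifier ``run $A$, output $\mathbf{D}^1$ iff $A(\mathbf{D}) \in \mathcal{M}^1$'' and invoke \Cref{lem:kl_divergence} to conclude it must fail on at least one of $\mathbf{D}^1, \mathbf{D}^2$, then use the partition structure of $\{\mathcal{M}^1,\mathcal{M}^2\}$ to translate that into the claimed bound. The only cosmetic difference is that you phrase it contrapositively while the paper argues directly.
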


\begin{proof}
	Consider the following algorithm for distinguishing the two distributions $\mathbf{D}^1$ and $\mathbf{D}^2$. 
	Given an unknown distribution $\mathbf{D} \in \{ \mathbf{D}^1, \mathbf{D}^2 \}$, run algorithm $A$ with $m$ samples from $\mathbf{D}$.
	If the mechanism returned by $A$, i.e., $A(\mathbf{D})$, is in $\mathcal{M}^1$, return $\mathbf{D}^1$; otherwise, return $\mathbf{D}^2$.
	By our assumption for contrary that $A$ takes less than $c \cdot \skl(\mathbf{D}^1, \mathbf{D}^2)^{-1}$ samples, it cannot distinguish the two distributions correctly (\Cref{lem:kl_divergence}).
	That is, we have either:
	\[
	\Pr \big[ A(\mathbf{D^1}) \in \mathcal{M}^1 \big] < \frac{2}{3} ~,
	\]
	or:
	\[
	\Pr \big[ A(\mathbf{D^2}) \in \mathcal{M}^2 \big] < \frac{2}{3} ~,
	\]
	or both.
	The lemma now follows by that $\mathcal{M}^1$ and $\mathcal{M}^2$ form a partition of the mechanism space.
\end{proof}

Next, we will account for the revenue loss when the algorithm $A$ makes a mistake in the sense that it chooses a mechanism in $\mathcal{M}^{3-j}$ when the underlying distribution is $\mathbf{D}^j$ as in the statement of the previous lemma.
We first need to show a technical lemma.

\begin{lemma}
	\label{lem:meta_hardness_virtual_value_maximizer}
	For any value distribution $\mathbf{D} \in \mathcal{H}$, the optimal mechanism w.r.t.\ $\mathbf{D}$ always chooses the bidder with highest virtual value as winner.
\end{lemma}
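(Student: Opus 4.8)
The plan is to show that for every value distribution $\mathbf{D} \in \mathcal{H}$, every bidder's virtual value function is monotone non-decreasing, so that no ironing is needed and Myerson's optimal auction reduces to the plain ``highest virtual value wins'' rule. Since $\mathcal{H}$ consists of a point-mass bidder $1$ (with distribution $D^b$) and each other bidder having distribution $D^h$ or $D^\ell$, it suffices to check monotonicity of the virtual values of these three building-block distributions; the product structure then lets us invoke Myerson's characterization coordinate by coordinate.

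First I would handle the point mass $D^b$: a deterministic value $v_0$ has the trivial revenue curve $R^b(q) = q v_0$, which is linear, hence already convex, so the ironed virtual value equals the virtual value and is the constant $v_0$ on the support; this is vacuously monotone, and moreover the winner-selection rule treats bidder $1$ as having virtual value $v_0$, consistent with its role as the default winner. Next, $D^h$ is regular by condition~\ref{property:regular_high_distribution}, so its virtual value function is monotone by definition, and again no ironing is needed. The only non-trivial case is $D^\ell$. Here I would use the structural conditions we imposed: by condition~\ref{property:virtual_value_bound_small_values}, $D^h$ and $D^\ell$ agree on values $v < v_2$, so on that range $\phi^\ell = \phi^h$ is monotone; by conditions~\ref{property:prob_critical_interval} and~\ref{property:virtual_value_gap}, on the critical interval $v_2 \le v < v_1$ the virtual value $\phi^\ell(v)$ stays below $v_0$; and by condition~\ref{property:v1_point_mass}, either $v_1 = +\infty$ (so there is nothing above the critical interval) or $v_1$ is an upper-bounding point mass. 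The remaining work is to argue that, with these constraints, the revenue curve $R^\ell$ of $D^\ell$ has a convex hull that coincides with $R^\ell$ on the relevant ranges — or, where it does not (e.g.\ a flat ironed segment near $v_1$), the resulting ironed virtual values are still constant/monotone and still below $v_0$, so the conclusion ``highest virtual value wins'' is unaffected for the value vectors that matter. Concretely, I expect to verify that $D^\ell$ is itself regular, or at least that its ironing only affects a region where its ironed virtual value remains $\le v_0$, so that it never beats bidder $1$ outside the critical interval and the selection rule is the plain virtual-value maximizer throughout.

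The main obstacle is the $D^\ell$ case: $D^\ell$ is not assumed regular, so a priori Myerson's auction on a distribution in $\mathcal{H}$ could involve ironing $D_i = D^\ell$, after which ``choose the highest ironed virtual value'' is the correct statement but ``choose the highest virtual value'' need not be. I would resolve this by appealing to the explicit form of $D^\ell$ constructed in \Cref{sec:meta_hardness_instantiation} (which the meta-analysis is entitled to use) to check that either $D^\ell$ is in fact regular, or its non-regularity is confined to a sub-range whose ironed virtual value is a constant that does not exceed $v_0$ — in which case for every value vector encountered (in particular those in $\mathcal{V}_i$) the bidder with distribution $D^\ell$ is correctly ranked by plain virtual value against bidder $1$. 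If the intended reading is that the lemma is asserted only for the specific instances used, this check is routine; the subtlety is purely in making sure the ironing of $D^\ell$ cannot create an ironed virtual value large enough to change who wins, which follows from condition~\ref{property:virtual_value_gap} bounding $\phi^\ell$ below $v_0$ on the critical interval together with the point-mass/$+\infty$ structure of $v_1$ from condition~\ref{property:v1_point_mass}.
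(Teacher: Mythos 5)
Your decomposition by distribution type ($D^b$ point mass, $D^h$ regular, $D^\ell$ the hard case) is the same skeleton the paper uses, and you correctly identify that the heart of the matter is bounding $D^\ell$'s \emph{ironed} virtual value. But there are two genuine gaps in how you plan to close that last case, plus one methodological deviation.

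First, you never handle the sub-case where a $D^\ell$ bidder actually \emph{realizes} the value $v_1$. In that event her virtual value (ironed or not) equals $v_1$, which may well exceed $v_0$, so the claim ``a $D^\ell$ bidder never beats bidder $1$'' that your plan leans on is false there. The paper handles this by a case split: if some bidder realizes $v_1$, then $v_1$ is the maximum value in the support and the highest (ironed or plain) virtual value is simply $\max\{v_0, v_1\}$, achieved by the same bidder either way; only in the complementary case (no bidder at $v_1$) does one argue that $D^\ell$ bidders are dominated by bidder $1$. Without this split your argument has a hole.

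Second, the bound ``ironed virtual value of $D^\ell$ is at most $v_0$ for values below $v_1$'' needs a reason, and your plan hedges between two: ``$D^\ell$ is in fact regular'' (a dead end — it generally is not, and even if it were for a particular instantiation, the meta-lemma is stated under only conditions (a)--(h)), versus ``check the explicit $D^\ell$ from the instantiation section.'' Both miss the clean abstract argument: condition~\ref{property:v1_point_mass} guarantees $v_1$ is not inside any ironing interval; conditions~\ref{property:virtual_value_gap} and~\ref{property:virtual_value_bound_small_values} give $\phi^\ell(v) \le v_0$ for every $v \ne v_1$; and ironed virtual value is an average of plain virtual value over the ironing interval, so it is $\le v_0$ as well. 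This works for \emph{any} $D^\ell$ satisfying the stated conditions, which is what the meta-analysis requires.

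Finally, note that you also want $\phi^\ell(v) \le v_0$ for $v < v_2$, which is exactly what condition~\ref{property:virtual_value_bound_small_values} provides; your phrasing (``$D^h$ and $D^\ell$ agree below $v_2$, so $\phi^\ell = \phi^h$ is monotone'') leans on a side remark the paper explicitly flags as not part of the meta-conditions, and monotonicity below $v_2$ is not the fact you actually need — the inequality $\phi^\ell \le v_0$ is.
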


\begin{proof}
	Recall that the optimal auction picks the bidder with the highest non-negative ironed virtual value.
	By our construction, bidder $1$'s value is a point mass and always has virtual value equals her value $v_0$.
	Hence, non-negativity holds trivially.
	It remains to show that the highest ironed virtual value coincides with the highest virtual value. 
	
	Suppose $v_1 < +\infty$ and there is at least one bidder $2 \le i \le n$ with value equals $v_1$.
	Then, her virtual value is also $v_1$ due to condition \ref{property:v1_point_mass}.
	As a result, the highest virtual value, ironed or not, equals $\max \{v_0, v_1\}$, because the ironed virtual value of any bidder $2 \le j \ne i \le n$ cannot exceed her value, which is upper bounded by $v_1$.
	
	Next, suppose no bidder $2 \le i \le n$ has a value equals $v_1$.
	In this case, any bidder whose value distribution is $D^\ell$ cannot have an ironed virtual value higher than that of bidder $1$.
	To see this, first note that condition \ref{property:v1_point_mass} indicates $v_1$ is not in any ironed interval.
	Further, by conditions \ref{property:virtual_value_gap} and \ref{property:virtual_value_bound_small_values}, the virtual value in $D^\ell$ is at most $v_0$ for any value other than $v_1$.
	The ironed value is simply the average over the corresponding ironed interval, and therefore cannot be larger than $v_0$.
	 
	It remains to consider bidders whose value distribution is $D^h$.
	The lemma now follows by condition \ref{property:regular_high_distribution}, which states that $D^h$ is regular.
\end{proof}

In the following discussions, let $\phi_{A(\mathbf{D}^j)}(\mathbf{v})$ denote the virtual value of the winner chosen by $A(\mathbf{D}^j)$ when the value vector is $\mathbf{v}$.
By the connection between expected revenue and virtual values showed by \citet{Myerson/1981/MOR}, and \Cref{lem:meta_hardness_virtual_value_maximizer}, we have:
\begin{equation}
\label{eqn:optimal_accounting}
\opt(\mathbf{D}) 
= 
\E_{\mathbf{v} \sim \mathbf{D}} \left[ \max_{k \in [n]} \phi_k(v_k) \right] 
= 
\int_{\mathbf{v}} \max_{k \in [n]} \phi_k(v_k) d \mathbf{D} 
~,
\end{equation}
and
\begin{equation}
\label{eqn:algorithm_accounting}
\rev(A(\mathbf{D}), \mathbf{D}) 
= 
\E_{\mathbf{v} \sim \mathbf{D}} \left[ \phi_{A(\mathbf{D})}(\mathbf{v}) \right] 
= 
\int_{\mathbf{v}} \phi_{A(\mathbf{D})}(\mathbf{v}) d \mathbf{D}  ~.
\end{equation}

To account for the revenue loss due to value vectors in $\mathcal{V}_i$, we will consider the following quantity:
\[
\int_{\mathbf{v} \in \mathcal{V}_i} \left( \max_{k \in [n]} \phi_k(v_k) - \phi_{A(\mathbf{D})}(\mathbf{v}) \right) d \mathbf{D} 
~.
\]

We will prove that the above quantity is at least $\Omega(p \Delta)$ for either $\mathbf{D} = \mathbf{D}^1$, or $\mathbf{D} = \mathbf{D}^2$, or both, with the following lemmas.

\begin{lemma}
	\label{lem:meta_hardness_revenue_loss_probability}
	For either $j = 1$ or $j = 2$ (or both), we have:
	\[
	\Pr_{A(\mathbf{D}^j)} \left[ \E_{\mathbf{v} \sim \mathbf{D}^j : \mathbf{v} \in \mathcal{V}_i} \left[ \max_{k \in [n]} \phi_k(v_k) -  \phi_{A(\mathbf{D}^j)}(\mathbf{v}) \right] \ge \frac{\Delta}{3} \right] \ge \frac{1}{3}
	~.
	\]
\end{lemma}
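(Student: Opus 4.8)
The plan is to combine \Cref{lem:meta_hardness_mistakes}, which tells us that for some $j\in\{1,2\}$ the algorithm returns a mechanism in the ``wrong'' partition $\mathcal{M}^{3-j}$ with probability more than $\frac13$, with a lower bound on the revenue loss incurred whenever the algorithm does land in $\mathcal{M}^{3-j}$. Fix the $j$ guaranteed by \Cref{lem:meta_hardness_mistakes}; I claim that for every mechanism $M\in\mathcal{M}^{3-j}$ we have $\E_{\mathbf{v}\sim\mathbf{D}^j:\mathbf{v}\in\mathcal{V}_i}\big[\max_k\phi_k(v_k)-\phi_M(\mathbf{v})\big]\ge \frac{\Delta}{3}$, which immediately yields the lemma since $\Pr_{A(\mathbf{D}^j)}[A(\mathbf{D}^j)\in\mathcal{M}^{3-j}]>\frac13$.

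To prove that claim, first observe from conditions \ref{property:base_pointmass}, \ref{property:virtual_value_gap}, and \Cref{lem:meta_hardness_virtual_value_maximizer} what the quantity $\max_k\phi_k(v_k)-\phi_M(\mathbf{v})$ looks like for $\mathbf{v}\in\mathcal{V}_i$. On $\mathcal{V}_i$ we have $v_1=v_0$, $v_1>v_i\ge v_2$, and $v_j<v_2$ for all other $j$, so by condition \ref{property:virtual_value_bound_small_values} every bidder $j\ne 1,i$ has virtual value at most $v_0$, and bidder $1$ has virtual value exactly $v_0$. Hence whether bidder $i$ has the highest virtual value depends entirely on which of $D^h,D^\ell$ governs coordinate $i$: if $\mathbf{D}^j$ uses $D^h$ in coordinate $i$ (the case $j=1$), then $\phi_i(v_i)=\phi^h(v_i)\ge v_0+\Delta$, so $\max_k\phi_k(v_k)\ge v_0+\Delta$, while if the algorithm's mechanism does \emph{not} pick bidder $i$ it collects virtual value at most $v_0$ (all other bidders, and bidder $1$, are bounded by $v_0$), a loss of at least $\Delta$. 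In the case $j=2$ the coordinate is $D^\ell$, so $\phi_i(v_i)\le v_0-\Delta$ by condition \ref{property:virtual_value_gap}; here $\max_k\phi_k(v_k)=v_0$ achieved by bidder $1$, while picking bidder $i$ loses at least $\Delta$ relative to the optimum, and picking any other bidder loses nothing but the mechanisms in $\mathcal{M}^1$ (which is $\mathcal{M}^{3-j}$ for $j=2$) are precisely those that pick bidder $i$ with substantial probability.

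With that pointwise picture in place, I finish by taking conditional expectations over $\mathbf{v}\in\mathcal{V}_i$. For $j=1$: $M\in\mathcal{M}^2$ means $\Pr_{\mathbf{v}\sim\mathbf{D}^1:\mathbf{v}\in\mathcal{V}_i}[M\text{ picks }i]<\frac23$, so $M$ fails to pick bidder $i$ with probability at least $\frac13$, and on that event the integrand is at least $\Delta$, giving conditional expectation at least $\frac{\Delta}{3}$ (the integrand is nonnegative by \Cref{lem:meta_hardness_virtual_value_maximizer}, since $\max_k\phi_k$ is the optimal virtual surplus). For $j=2$: $M\in\mathcal{M}^1$ means $\Pr_{\mathbf{v}\sim\mathbf{D}^1:\mathbf{v}\in\mathcal{V}_i}[M\text{ picks }i]\ge\frac23$, which by \Cref{cor:meta_hardness_similar_decisions} forces $\Pr_{\mathbf{v}\sim\mathbf{D}^2:\mathbf{v}\in\mathcal{V}_i}[M\text{ picks }i]\ge\frac13$; on that event the item goes to bidder $i$ whose virtual value is at most $v_0-\Delta$ while the optimal choice (bidder $1$) has virtual value $v_0$, again contributing at least $\frac{\Delta}{3}$ to the conditional expectation. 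In both cases the claim holds, and combining with \Cref{lem:meta_hardness_mistakes} proves the lemma.

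The main obstacle I anticipate is bookkeeping around the definition of $\mathcal{M}^1,\mathcal{M}^2$: they are defined in terms of the probability of picking bidder $i$ under $\mathbf{D}^1$ regardless of which $\mathbf{D}^j$ is the true distribution, so for the $j=2$ case one must route through \Cref{cor:meta_hardness_similar_decisions} (the factor-$2$ transfer of the picking probability from $\mathbf{D}^1$ to $\mathbf{D}^2$) rather than arguing directly — getting the constants ($\frac23 \to \frac13$) to line up is the one place that needs care. A secondary subtlety is confirming the integrand $\max_k\phi_k(v_k)-\phi_{A(\mathbf{D}^j)}(\mathbf{v})$ is genuinely nonnegative so that restricting attention to the ``bad'' sub-event of $\mathcal{V}_i$ only underestimates the loss; this is exactly \Cref{lem:meta_hardness_virtual_value_maximizer} together with Myerson's accounting in \Cref{eqn:optimal_accounting}–\Cref{eqn:algorithm_accounting}.
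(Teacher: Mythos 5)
Your proof is correct and follows essentially the same approach as the paper: you fix the $j$ from \Cref{lem:meta_hardness_mistakes}, show that any $M \in \mathcal{M}^{3-j}$ has conditional loss at least $\tfrac{\Delta}{3}$ on $\mathcal{V}_i$ by routing through \Cref{cor:meta_hardness_similar_decisions} for the $j=2$ case, and use condition \ref{property:virtual_value_gap} to get the pointwise gap of $\Delta$. The only cosmetic difference is that you cite \Cref{lem:meta_hardness_virtual_value_maximizer} for nonnegativity of the integrand, which actually holds trivially since $\max_{k}\phi_k(v_k)$ dominates any single bidder's virtual value; that lemma is really needed for the Myerson accounting in \Cref{eqn:optimal_accounting}, not here.
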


\begin{proof}
	Let  $j \in \{1, 2\}$ be the superscript for which the conclusion of \Cref{lem:meta_hardness_mistakes} holds.
	Then, it suffices to show that:
	\[
	A(\mathbf{D}^j) \in \mathcal{M}^{3-j}
	\quad \Rightarrow \quad
	\E_{\mathbf{v} \sim \mathbf{D}^j : \mathbf{v} \in \mathcal{V}_i} \left[ \max_{k \in [n]} \phi_k(v_k) -  \phi_{A(\mathbf{D}^j)}(\mathbf{v}) \right] \ge \frac{\Delta}{3} 
	~.
	\]
	
	\paragraph{Case 1: $j = 1$.}
	By $A(\mathbf{D}^1) \in \mathcal{M}^{2}$, we have that:
	\[
	\Pr_{\mathbf{v} \sim \mathbf{D}^1 : \mathbf{v} \in \mathcal{V}_i} \big[ \text{$M$ picks $i$ as the winnner} \big] < \frac{2}{3}
	~.
	\]
	
	By conditions \ref{property:virtual_value_gap} and \ref{property:virtual_value_bound_small_values}, the definition of $\mathcal{V}_i$, and that $D_i = D^h$ in $\mathbf{D}^1$, whenever the mechanism picks anyone other than bidder $i$ as the winner when the value vector $\mathbf{v}$ is in $\mathcal{V}_i$, we have:
	\[
	\max_{k \in [n]} \phi_k(v_k) -  \phi_{A(\mathbf{D}^1)}(\mathbf{v}) \ge \Delta
	~.
	\]
	
	So the claim follows.
	
	\paragraph{Case 2: $j = 2$.}
	By $A(\mathbf{D}^2) \in \mathcal{M}^{1}$, and \Cref{cor:meta_hardness_similar_decisions}, we have that:
	\[
	\Pr_{\mathbf{v} \sim \mathbf{D}^2 : \mathbf{v} \in \mathcal{V}_i} \big[ \text{$M$ picks $i$ as the winnner} \big] \ge \frac{1}{3}
	~.
	\]
	
	By condition \ref{property:virtual_value_gap}, the definition of $\mathcal{V}_i$, and that $D_i = D^\ell$ in $\mathbf{D}^2$, whenever the mechanism picks bidder $i$, instead of bidder $1$, as the winner when the value vector $\mathbf{v}$ is in $\mathcal{V}_i$, we have:
	\[
	\max_{k \in [n]} \phi_k(v_k) -  \phi_{A(\mathbf{D}^2)}(\mathbf{v}) \ge \Delta
	~.
	\]
	
	So the claim follows.
\end{proof}

Let $\mathcal{B}_{\mathbf{D}}$ denote the set of bidders for which algorithm $A$ performs badly in the sense that the mechanism returned by $A$ suffers from a revenue loss of at least $\frac{\Delta}{3}$ conditioned on $\mathcal{V}_i$, with probability at least $\frac{1}{3}$, as stated in \Cref{lem:meta_hardness_revenue_loss_probability}:

\[
\mathcal{B}_{\mathbf{D}} 
= 
\left\{ 
i : \Pr_{A(\mathbf{D})} \left[ \E_{\mathbf{v} \sim \mathbf{D} : \mathbf{v} \in \mathcal{V}_i} \left[ \max_{k \in [n]} \phi_k(v_k) -  \phi_{A(\mathbf{D})}(\mathbf{v}) \right] \ge \frac{\Delta}{3} \right] \ge \frac{1}{3}
\right\}
~.
\]

\begin{lemma}
	\label{lem:meta_hardness_counting}
	Suppose a distribution $\mathbf{D}$ is drawn uniformly at random from $\mathcal{H}$.
	Then, for any $2 \le i \le n$, we have:
	\[
	\Pr \big[ i \in \mathcal{B}_{\mathbf{D}} \big] \ge \frac{1}{2}
	~.
	\]
\end{lemma}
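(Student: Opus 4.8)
The plan is to reduce the claim to \Cref{lem:meta_hardness_revenue_loss_probability} by a straightforward averaging argument over the random choice of the value distributions of the bidders other than $1$ and $i$. The starting observation is that sampling $\mathbf{D}$ uniformly from $\mathcal{H}$ is the same as independently setting $D_j \in \{D^h, D^\ell\}$ with probability $\frac12$ each, for every $2 \le j \le n$ (with $D_1 = D^b$ fixed). In particular, if we condition on $\mathbf{D}_{-i} = \times_{j \ne i} D_j$, the collection of choices for all bidders other than $i$, then $\mathbf{D}$ equals $\mathbf{D}^1 = (\mathbf{D}_{-i}, D_i = D^h)$ with probability $\frac12$ and equals $\mathbf{D}^2 = (\mathbf{D}_{-i}, D_i = D^\ell)$ with probability $\frac12$, independently of $\mathbf{D}_{-i}$.

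First I would fix an arbitrary realization of $\mathbf{D}_{-i}$ and apply \Cref{lem:meta_hardness_revenue_loss_probability} to this pair $\mathbf{D}^1, \mathbf{D}^2$. That lemma yields some $j \in \{1,2\}$ for which $\Pr_{A(\mathbf{D}^j)}\big[\,\E_{\mathbf{v} \sim \mathbf{D}^j : \mathbf{v} \in \mathcal{V}_i}[\max_{k \in [n]} \phi_k(v_k) - \phi_{A(\mathbf{D}^j)}(\mathbf{v})] \ge \frac{\Delta}{3}\,\big] \ge \frac13$, which is exactly the defining condition for $i \in \mathcal{B}_{\mathbf{D}^j}$. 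Since $\mathcal{B}_{\mathbf{D}^j}$ depends only on $\mathbf{D}^j$, conditioned on the chosen $\mathbf{D}_{-i}$ the event $i \in \mathcal{B}_{\mathbf{D}}$ occurs whenever $\mathbf{D} = \mathbf{D}^j$, an event of probability at least $\frac12$. Hence $\Pr[\,i \in \mathcal{B}_{\mathbf{D}} \mid \mathbf{D}_{-i}\,] \ge \frac12$ for every realization of $\mathbf{D}_{-i}$, and taking the expectation over $\mathbf{D}_{-i}$ via the law of total probability gives $\Pr[i \in \mathcal{B}_{\mathbf{D}}] \ge \frac12$.

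There is no real obstacle in this lemma: all the substantive work lies in \Cref{lem:meta_hardness_revenue_loss_probability}, and this statement is just the bookkeeping that lifts ``for each differing pair, at least one of the two distributions is bad for bidder $i$'' to ``bidder $i$ is bad for a uniformly random distribution with probability $\ge \frac12$''. The only point deserving a line of care is verifying that uniform sampling from $\mathcal{H}$ factors as an independent coordinate-wise choice, so that conditioning on $\mathbf{D}_{-i}$ indeed leaves $D_i$ uniform on $\{D^h, D^\ell\}$ and the two resulting product distributions are precisely the pair $\mathbf{D}^1, \mathbf{D}^2$ to which \Cref{lem:meta_hardness_revenue_loss_probability} applies.
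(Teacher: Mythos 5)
Your proof is correct and follows essentially the same counting/averaging argument as the paper: for each fixed $\mathbf{D}_{-i}$, \Cref{lem:meta_hardness_revenue_loss_probability} certifies that at least one of the two completions $\mathbf{D}^1, \mathbf{D}^2$ puts $i$ in $\mathcal{B}_{\mathbf{D}}$, and since the pairs $(\mathbf{D}^1,\mathbf{D}^2)$ over all $\mathbf{D}_{-i}$ partition $\mathcal{H}$, at least half of $\mathcal{H}$ satisfies $i \in \mathcal{B}_{\mathbf{D}}$. Your phrasing via conditioning and the law of total probability is just a probabilistic restatement of the paper's enumeration argument.
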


\begin{proof}
	Fix any $2 \le i \le n$.
	Enumerating over all possible $\mathbf{D}_{-i}$, $\mathbf{D}^1$ and $\mathbf{D}^2$ together enumerate over all distributions $\mathbf{D} \in \mathcal{H}$.
	Note that \Cref{lem:meta_hardness_revenue_loss_probability} holds for any $\mathbf{D}_{-i}$.
	We get that at least half of the distributions $\mathbf{D} \in \mathcal{H}$ satisfy that $i \in \mathcal{B}_{\mathbf{D}}$.	
\end{proof}

As a direct corollary, we have the following.

\begin{corollary}
	\label{cor:meta_hardness_bad_instance}
	There exists $\mathbf{D} \in \mathcal{H}$ such that:
	\[
	\big| \mathcal{B}_{\mathbf{D}} \big| \ge \frac{n-1}{2}
	~.
	\]
\end{corollary}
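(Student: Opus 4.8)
The plan is a one-line averaging argument on top of \Cref{lem:meta_hardness_counting}. Let $\mathbf{D}$ be drawn uniformly at random from $\mathcal{H}$, and write $|\mathcal{B}_{\mathbf{D}}| = \sum_{i=2}^{n} \mathbf{1}(i \in \mathcal{B}_{\mathbf{D}})$. By linearity of expectation,
\[
\E_{\mathbf{D}}\big[ |\mathcal{B}_{\mathbf{D}}| \big] = \sum_{i=2}^{n} \Pr_{\mathbf{D}}\big[ i \in \mathcal{B}_{\mathbf{D}} \big] \ge \sum_{i=2}^{n} \frac{1}{2} = \frac{n-1}{2}~,
\]
where the inequality is exactly \Cref{lem:meta_hardness_counting} applied to each $2 \le i \le n$.

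Since the expectation of $|\mathcal{B}_{\mathbf{D}}|$ over the uniform distribution on $\mathcal{H}$ is at least $\frac{n-1}{2}$, there must exist at least one $\mathbf{D} \in \mathcal{H}$ with $|\mathcal{B}_{\mathbf{D}}| \ge \frac{n-1}{2}$; otherwise every term in the average would be strictly smaller, contradicting the bound. This is a routine probabilistic-method / pigeonhole step, so I do not anticipate any real obstacle here. The only point worth stating carefully is that \Cref{lem:meta_hardness_counting} is invoked once per coordinate $i$ and that the same randomly chosen $\mathbf{D}$ is used simultaneously for all of them, which is fine because that lemma bounds the marginal probability $\Pr_{\mathbf{D}}[i \in \mathcal{B}_{\mathbf{D}}]$ for each fixed $i$ under the single uniform draw of $\mathbf{D}$; no independence across coordinates is needed for linearity of expectation.
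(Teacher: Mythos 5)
Your argument is correct and is exactly the averaging/probabilistic-method step the paper has in mind when it calls this a "direct corollary" of \Cref{lem:meta_hardness_counting}: linearity of expectation over a uniform draw from $\mathcal{H}$ gives $\E_{\mathbf{D}}[|\mathcal{B}_{\mathbf{D}}|] \ge \tfrac{n-1}{2}$, and hence some $\mathbf{D}$ attains at least that value. No gap, and no meaningful difference from the paper's intended proof.
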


In the rest of the analysis, we will focus on the distribution $\mathbf{D} \in \mathcal{H}$ for which the conclusion of the above corollary holds.
The above corollary is already good enough for proving a weaker claim that the \emph{expected} revenue loss is at least $\Theta(n p \Delta)$, noting that the probability of having a value vector in $\mathcal{V}_i$ for each $i \in \mathcal{B}_\mathbf{D}$ is $\Theta(p)$ due to conditions c) and d).

To get the stronger claim in \Cref{lem:meta_hardness} that we have the stated revenue loss with a (small) constant probability, we need to further discuss the the number of bidders for which the \emph{realized mechanism} $A(\mathbf{D})$ performs poorly.
For any realization of the mechanism $A(\mathbf{D})$ returned by the algorithm, further let $\mathcal{B}_{\mathcal{D}, A(\mathbf{D})}$ denote the set of bidders for which the returned mechanism $A(\mathbf{D})$ performs poorly in the sense that it suffers from a revenue loss at least $\frac{\Delta}{3}$ on $\mathcal{V}_i$:

\[
\mathcal{B}_{\mathbf{D}, A(\mathbf{D})} 
= 
\left\{ 
i : \E_{\mathbf{v} \sim \mathbf{D} : \mathbf{v} \in \mathcal{V}_i} \left[ \max_{k \in [n]} \phi_k(v_k) -  \phi_{A(\mathbf{D})}(\mathbf{v}) \right] \ge \frac{\Delta}{3}
\right\}
~.
\]

We will show in the next lemma there are at least $\Theta(n)$ such bidders with constant probability.

\begin{lemma}
	\label{lem:meta_hardness_bad_instance_count_const_prob}
	For the distribution $\mathbf{D}\in\mathcal{H}$ in \Cref{cor:meta_hardness_bad_instance}, with probability at least $0.01$, we have:
	\begin{equation*}
	|\mathcal{B}_{\mathbf{D},A(\mathbf{D})}|\ge \frac{n}{15}
	~.	
	\end{equation*}
\end{lemma}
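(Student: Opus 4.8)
The plan is to deduce the claim from a reverse Markov inequality applied to the (random) cardinality $|\mathcal{B}_{\mathbf{D},A(\mathbf{D})}|$, using the deterministic lower bound $|\mathcal{B}_{\mathbf{D}}| \ge \tfrac{n-1}{2}$ from \Cref{cor:meta_hardness_bad_instance} together with the defining property of $\mathcal{B}_{\mathbf{D}}$.

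First I would record the link between the two sets. By the very definition of $\mathcal{B}_{\mathbf{D}}$, every bidder $i \in \mathcal{B}_{\mathbf{D}}$ satisfies
\[
\Pr_{A(\mathbf{D})}\big[\, i \in \mathcal{B}_{\mathbf{D},A(\mathbf{D})} \,\big]
= \Pr_{A(\mathbf{D})}\left[\, \E_{\mathbf{v}\sim\mathbf{D}:\mathbf{v}\in\mathcal{V}_i}\Big[\textstyle\max_{k\in[n]}\phi_k(v_k) - \phi_{A(\mathbf{D})}(\mathbf{v})\Big] \ge \tfrac{\Delta}{3} \,\right] \ge \tfrac13 ~.
\]
Setting $X \defeq |\mathcal{B}_{\mathbf{D}} \cap \mathcal{B}_{\mathbf{D},A(\mathbf{D})}|$, summing this over $i \in \mathcal{B}_{\mathbf{D}}$ and using linearity of expectation yields $\E[X] \ge \tfrac13 |\mathcal{B}_{\mathbf{D}}| \ge \tfrac{n-1}{6}$, where the last step invokes \Cref{cor:meta_hardness_bad_instance}. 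Note also that $X \le |\mathcal{B}_{\mathbf{D}}|$ deterministically, and that $|\mathcal{B}_{\mathbf{D}}|$ is a fixed quantity (it already has the randomness of $A$ averaged out).

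Next I would apply Markov's inequality to the non-negative ``deficit'' $|\mathcal{B}_{\mathbf{D}}| - X$: for any $a < |\mathcal{B}_{\mathbf{D}}|$,
\[
\Pr\big[\, X \ge a \,\big]
= 1 - \Pr\big[\, |\mathcal{B}_{\mathbf{D}}| - X > |\mathcal{B}_{\mathbf{D}}| - a \,\big]
\ge 1 - \frac{|\mathcal{B}_{\mathbf{D}}| - \E[X]}{|\mathcal{B}_{\mathbf{D}}| - a}
= \frac{\E[X] - a}{|\mathcal{B}_{\mathbf{D}}| - a} ~.
\]
Taking $a = \tfrac{n}{15}$ and substituting $\E[X] \ge \tfrac13 |\mathcal{B}_{\mathbf{D}}|$ lower-bounds the right-hand side by $\frac{|\mathcal{B}_{\mathbf{D}}|/3 - n/15}{|\mathcal{B}_{\mathbf{D}}| - n/15}$. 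This expression is increasing in $|\mathcal{B}_{\mathbf{D}}|$ (its derivative in $|\mathcal{B}_{\mathbf{D}}|$ is $\tfrac{2a/3}{(|\mathcal{B}_{\mathbf{D}}|-a)^2} > 0$), so over the range $|\mathcal{B}_{\mathbf{D}}| \ge \tfrac{n-1}{2}$ it is minimized at $|\mathcal{B}_{\mathbf{D}}| = \tfrac{n-1}{2}$, where a one-line computation gives $\frac{3n-5}{13n-15}$, which is itself increasing in $n$ and hence at least $\tfrac{1}{11} > 0.01$ for every $n \ge 2$. Since $|\mathcal{B}_{\mathbf{D},A(\mathbf{D})}| \ge X$, this gives $\Pr\big[\,|\mathcal{B}_{\mathbf{D},A(\mathbf{D})}| \ge \tfrac{n}{15}\,\big] \ge 0.01$, as required.

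This argument is essentially routine once the expectation bound $\E[X]\ge\tfrac13|\mathcal{B}_{\mathbf{D}}|$ is in hand; the only point deserving care is the \emph{direction} of the Markov step. Because we need a lower bound on the probability of a \emph{large} $|\mathcal{B}_{\mathbf{D},A(\mathbf{D})}|$, Markov must be applied to the deficit $|\mathcal{B}_{\mathbf{D}}| - X$ rather than to $X$, which is why the deterministic bound $X \le |\mathcal{B}_{\mathbf{D}}|$ (hence \Cref{cor:meta_hardness_bad_instance}) is used both to control the numerator and to control the range. The remaining effort is just bookkeeping: verifying that the constants ($\tfrac{\Delta}{3}$ in the loss threshold, $\tfrac13$ in the probability threshold, and $\tfrac1{15}$ in the conclusion) are chosen so that the resulting reverse-Markov bound stays bounded away from $0$ uniformly in $n \ge 2$.
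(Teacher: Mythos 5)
Your proof is correct and is essentially the paper's argument: both lower-bound $\E_{A(\mathbf{D})}\big[|\mathcal{B}_{\mathbf{D},A(\mathbf{D})}|\big]$ via the per-bidder inclusion probability $\ge \tfrac13$ for $i \in \mathcal{B}_{\mathbf{D}}$ together with \Cref{cor:meta_hardness_bad_instance}, and then convert the expectation bound into a constant-probability bound by a reverse-Markov step on a bounded cardinality. The only difference is cosmetic --- the paper upper-bounds the cardinality by $n$ (yielding $p^* \ge 1/60$), whereas you intersect with $\mathcal{B}_{\mathbf{D}}$ and use the tighter range $|\mathcal{B}_{\mathbf{D}}|$ (yielding $\ge 1/11$); this refinement does not change the route.
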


\begin{proof}
	Let $p^* = \Pr_{A(\mathbf{D})} \big[ |\mathcal{B}_{\mathbf{D},A(\mathbf{D})}|\ge\frac{n}{15} \big]$ denote the probability that the conclusion of the lemma holds. 
	We need to show that $p^* \ge 0.01$.
	
	On one hand, each bidder $i \in \mathcal{B}_{\mathbf{D}}$ is in $\mathcal{B}_{\mathbf{D}, A(\mathbf{D})}$ with probability at least $\frac{1}{3}$ by definition.
	The expected size $\mathcal{B}_{\mathbf{D},A(\mathbf{D})}$ is therefore lower bounded as follows:
	\[
	\E_{A(\mathbf{D})} \big[ |\mathcal{B}_{\mathbf{D},A(\mathbf{D})}| \big] \ge \frac{1}{3} \cdot \frac{n - 1}{2} = \frac{n-1}{6} \ge \frac{n}{12}
	~.
	\]
	
	On the other hand, we have:
	\begin{align*}
	\E_{A(\mathbf{D})} \big[ |\mathcal{B}_{\mathbf{D},A(\mathbf{D})}| \big] 
	& \le 
	\frac{n}{15} \cdot \Pr_{A(\mathbf{D})} \left[ |\mathcal{B}_{\mathbf{D},A(\mathbf{D})}| < \frac{n}{15} \right] 
	+ 
	n \cdot \Pr_{A(\mathbf{D})} \left[ |\mathcal{B}_{\mathbf{D},A(\mathbf{D})}| \ge \frac{n}{15} \right] \\ 
	& = \frac{n}{15} \cdot ( 1 - p^* ) + n \cdot p^* \\
	& \le \frac{n}{15} + n \cdot p^*
	~.
	\end{align*} 
	
	Putting together gives $p^* \ge \frac{1}{60} > 0.01$.
%
%
\end{proof}

We now complete the proof of \Cref{lem:meta_hardness} by arguing that the algorithm must suffer from the stated revenue loss on the distribution $\mathbf{D}$ in \Cref{cor:meta_hardness_bad_instance} and \Cref{lem:meta_hardness_bad_instance_count_const_prob}.
In particular, when the conclusion of \Cref{lem:meta_hardness_bad_instance_count_const_prob} is true, which happens with probability at least $0.01$, we have the following sequence of inequalities:
\begin{align*}
& \opt(\mathbf{D}) -  \rev(A(\mathbf{D}), \mathbf{D})  \\[1.5ex]
& \qquad =  \int_{\mathbf{v}} \left( \max_{k \in [n]} \phi_k(v_k) - \phi_{A(\mathbf{D})}(\mathbf{v}) \right) d \mathbf{D}  && \text{(Eqn.~\eqref{eqn:optimal_accounting} and Eqn.~\eqref{eqn:algorithm_accounting})} \\
& \qquad \ge \sum_{i \in \mathcal{B}_{\mathbf{D},A(\mathbf{D})}} \int_{\mathbf{v} \in \mathcal{V}_i} \left( \max_{k \in [n]} \phi_k(v_k) - \phi_{A(\mathbf{D})}(\mathbf{v}) \right) d \mathbf{D}  && \text{($\mathcal{V}_i$'s are disjoint)} \\
& \qquad = \sum_{i \in \mathcal{B}_{\mathbf{D},A(\mathbf{D})}} \E_{\mathbf{v} \sim \mathbf{D} : \mathbf{v} \in \mathcal{V}_i} \left[ \max_{k \in [n]} \phi_k(v_k) -  \phi_{A(\mathbf{D})}(\mathbf{v}) \right] \cdot \Pr_{\mathbf{v} \sim \mathbf{D}} \big[ \mathbf{v} \in \mathcal{V}_i \big] \\
& \qquad \ge \sum_{i \in \mathcal{B}_{\mathbf{D},A(\mathbf{D})}} \frac{\Delta}{3} \cdot \Pr_{\mathbf{v} \sim \mathbf{D}} \big[ \mathbf{v} \in \mathcal{V}_i \big] && \text{(definition of $\mathcal{B}_{\mathbf{D},A(\mathbf{D})}$)} \\
& \qquad \ge \sum_{i \in \mathcal{B}_{\mathbf{D},A(\mathbf{D})}} \frac{\Delta}{3} \cdot p \cdot \left(1 - \frac{1}{n}\right)^{n-2} && \text{(conditions c) and d))} \\
& \qquad \ge \sum_{i \in \mathcal{B}_{\mathbf{D},A(\mathbf{D})}} \frac{p \Delta}{3e} \\
& \qquad \ge \frac{n p \Delta}{45 e} ~. && \text{(\Cref{lem:meta_hardness_bad_instance_count_const_prob})}
\end{align*}

\subsection{Proof of \texorpdfstring{\Cref{thm:lower_bounds}}{Theorem~\ref{thm:lower_bounds}}}
\label{sec:meta_hardness_instantiation}

Given \Cref{lem:meta_hardness}, it remains to construct three distributions $D^b$, $D^h$, and $D^\ell$ from each family of distributions that satisfy the conditions listed in \Cref{sec:meta_hard_instance}, and that:
\begin{enumerate}[label=\alph*)]
	\setcounter{enumi}{8}
	\item 
	$n p \Delta = \Omega \big( \epsilon\opt(\mathbf{D}) \big)$ (or $n p \Delta = \Omega(\epsilon)$ in the case of additive approximation).
	\label{property:revenue_gap}
	\item 
	$\skl(D^h, D^\ell)^{-1}$ matches the corresponding sample complexity lower bound in \Cref{thm:lower_bounds}.
	\label{property:kl}
\end{enumerate}

To make the argument of condition \ref{property:kl} easier, we will use a slight generalization of a technical lemma by \citet{HuangMR/2015/EC} as follows.

\begin{lemma}[\citet{HuangMR/2015/EC}, Lemma 4.4 and Lemma 4.5]
	\label{lem:dptrick}
	Suppose two distributions $P$ and $Q$ over a sample space $\Omega$, and $\Omega_1, \Omega_2, \dots, \Omega_k$ form a partition of $\Omega$.
	Further, suppose for every $1 \le i \le k$, there exists $0 \le \epsilon_i < 1$ such that:
	\[
	(1 + \epsilon_i)^{-1} \le \frac{dP}{dQ}(\omega) \le (1 + \epsilon_i)
	~,
	\]
	for every $\omega \in \Omega_i$.
	Then, we have:
	\[ 
	\skl(P, Q) \le \sum_{i=1}^k P_i(\Omega_i) \cdot \epsilon_i^2 
	~. 
	\]
\end{lemma}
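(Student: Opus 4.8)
The plan is to reduce the symmetric KL divergence to a single pointwise estimate on the likelihood ratio $r \defeq dP/dQ$ and then integrate blockwise over the partition. First I would record the identity
\[
\skl(P, Q) = \int_{\Omega} \frac{(r-1)\ln r}{r} \, dP
~,
\]
which is valid because the hypothesis forces $P$ and $Q$ to be mutually absolutely continuous (on each $\Omega_i$ the ratio lies in $[(1+\epsilon_i)^{-1}, 1+\epsilon_i]$, so $r$ is globally bounded away from $0$ and $\infty$): then $\kl(P \| Q) = \int_{\Omega} \ln r \, dP$ and, after the change of measure $dQ = \tfrac{1}{r}\, dP$, $\kl(Q \| P) = \int_{\Omega} \ln(1/r)\, dQ = -\int_{\Omega} \tfrac{\ln r}{r}\, dP$; adding the two and using $1 - \tfrac{1}{r} = \tfrac{r-1}{r}$ gives the displayed formula. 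All integrals are finite since $|\ln r| \le \ln(1+\epsilon_i)$ on $\Omega_i$ and $k$ is finite.

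The heart of the argument is the pointwise inequality
\[
\frac{(r-1)\ln r}{r} \le \epsilon_i^2 \qquad \text{for all } r \in \big[(1+\epsilon_i)^{-1},\, 1+\epsilon_i\big]
~.
\]
To prove it I would study $h(r) \defeq \tfrac{(r-1)\ln r}{r} = \ln r - \tfrac{\ln r}{r}$, which is nonnegative on $(0,\infty)$ and vanishes only at $r=1$. Its derivative is $h'(r) = \tfrac{r - 1 + \ln r}{r^2}$, and since $r \mapsto r-1+\ln r$ is strictly increasing with a zero at $r=1$, the function $h$ is decreasing on $(0,1)$ and increasing on $(1,\infty)$; hence on the interval in question (which contains $1$) its maximum is attained at an endpoint. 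At $r = 1+\epsilon_i$ we get $h = \tfrac{\epsilon_i \ln(1+\epsilon_i)}{1+\epsilon_i} \le \epsilon_i^2$, and at $r = (1+\epsilon_i)^{-1}$ we get $h = \epsilon_i \ln(1+\epsilon_i) \le \epsilon_i^2$, both using $\ln(1+x) \le x$. So the pointwise bound holds throughout the interval.

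Finally, combining the two steps and using that the $\Omega_i$ partition $\Omega$,
\[
\skl(P, Q) = \sum_{i=1}^{k} \int_{\Omega_i} \frac{(r-1)\ln r}{r} \, dP \le \sum_{i=1}^{k} \epsilon_i^2 \, P(\Omega_i)
~,
\]
which is the claim (and recovers the single-block estimate of \citet{HuangMR/2015/EC} as the case $k=1$). I do not anticipate a genuine obstacle: the only mildly delicate points are the measure-theoretic bookkeeping behind the identity of the first step and checking that the endpoint values dominate $h$ on the interval, both of which are routine.
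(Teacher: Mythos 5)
Your argument is correct and complete. Note that the paper itself gives no proof of this lemma—it is imported verbatim from \citet{HuangMR/2015/EC} (their Lemmas~4.4 and~4.5)—so there is no in-paper proof to compare against; you have filled the gap with a self-contained derivation. Both steps check out: the identity
\[
\skl(P,Q)=\int_\Omega \left(1-\frac{1}{r}\right)\ln r \, dP
\]
follows exactly as you write from the change of measure $dQ=r^{-1}dP$, and the pointwise bound is established correctly by observing that $h(r)=\ln r - \frac{\ln r}{r}$ has $h'(r)=\frac{r-1+\ln r}{r^2}$, so $h$ is unimodal with a minimum at $r=1$, and both endpoint values $\frac{\epsilon_i\ln(1+\epsilon_i)}{1+\epsilon_i}$ and $\epsilon_i\ln(1+\epsilon_i)$ are $\le \epsilon_i^2$ via $\ln(1+x)\le x$. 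One cosmetic remark: the lemma's displayed bound writes ``$P_i(\Omega_i)$''; you silently (and sensibly) read this as $P(\Omega_i)$, which is what the argument actually requires and what the application in the paper needs.
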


We present in \Cref{fig:hardness_revenue_curves} the revenue curves, in the quantile space, of the distributions $D^h$ and $D^\ell$ that we use in the sample complexity lower bounds for different families of distributions. 
Readers who are familiar with this interpretation of value distributions may find the revenue curves more intuitive than the formal definitions of these distributions in the proof.

\begin{figure}
	\centering
	\begin{subfigure}{\textwidth}
		\includegraphics[width=0.48\textwidth]{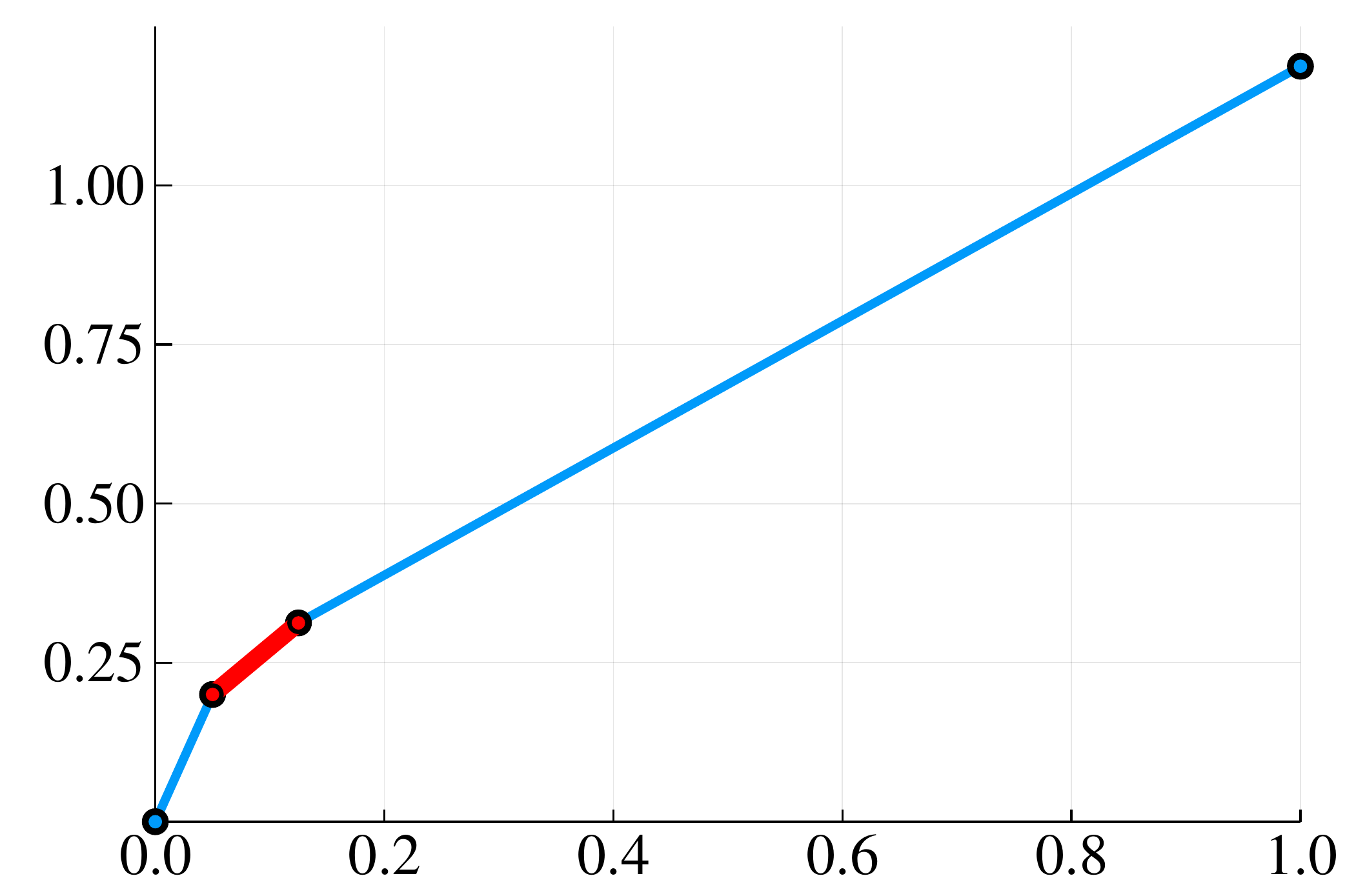}
		\includegraphics[width=0.48\textwidth]{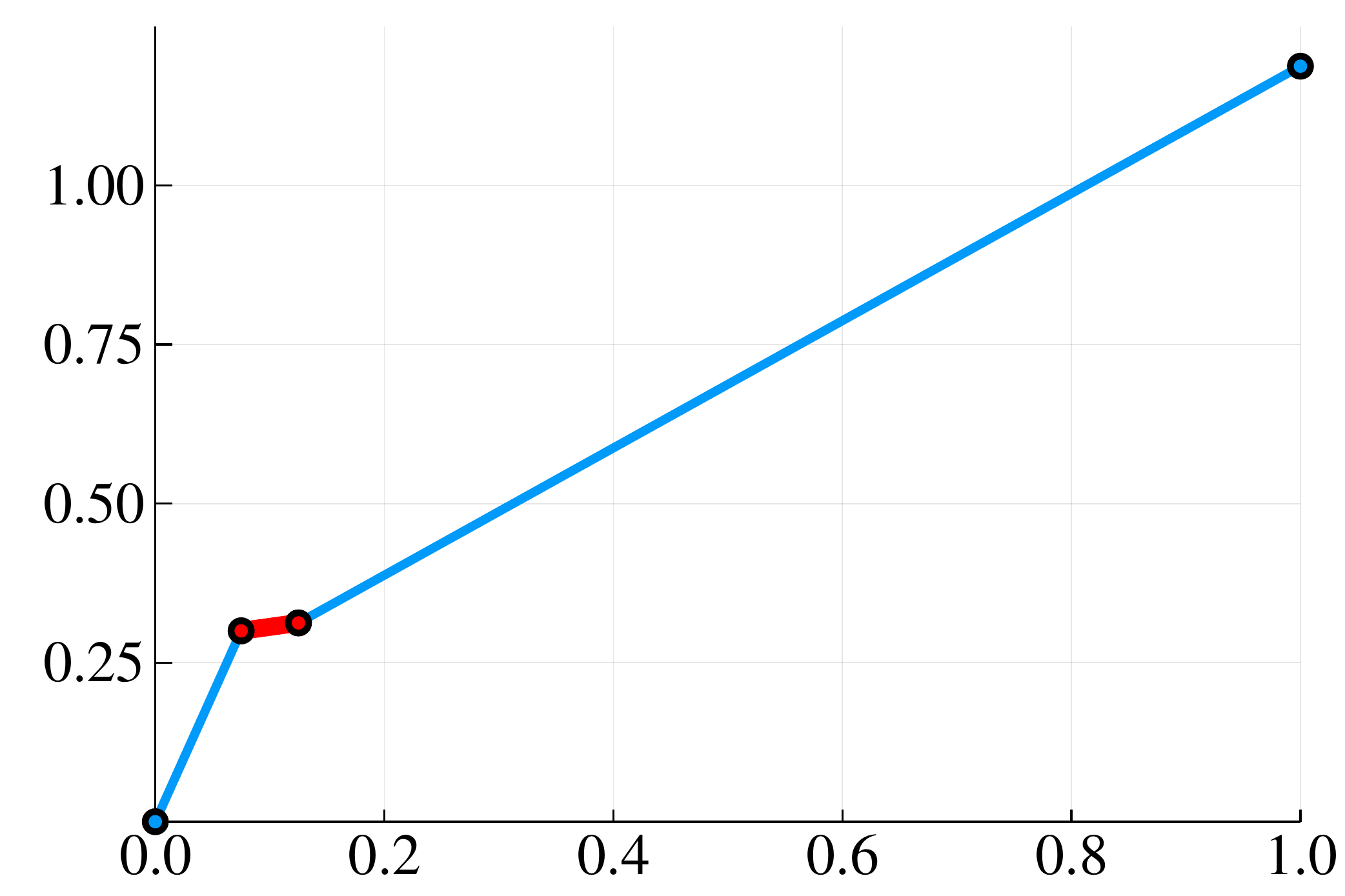}
		\caption{$[1, H]$-bounded support distributions with $n = 4$, $H = 4$, and $\epsilon = 0.2$.}
		\label{fig:bounded-support}
	\end{subfigure}
	
	\bigskip
	
	\begin{subfigure}{\textwidth}
		\includegraphics[width=0.48\textwidth]{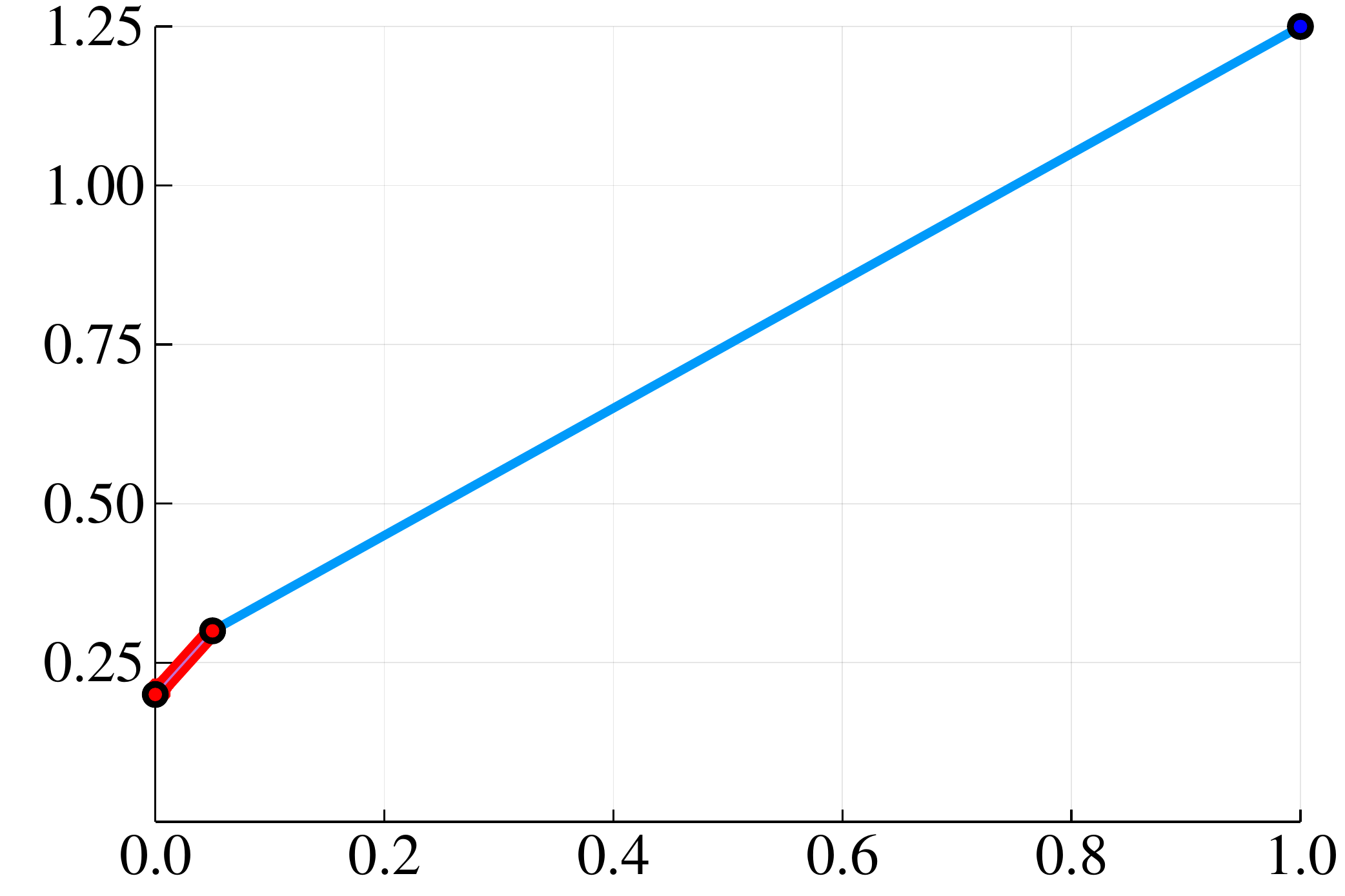}
		\includegraphics[width=0.48\textwidth]{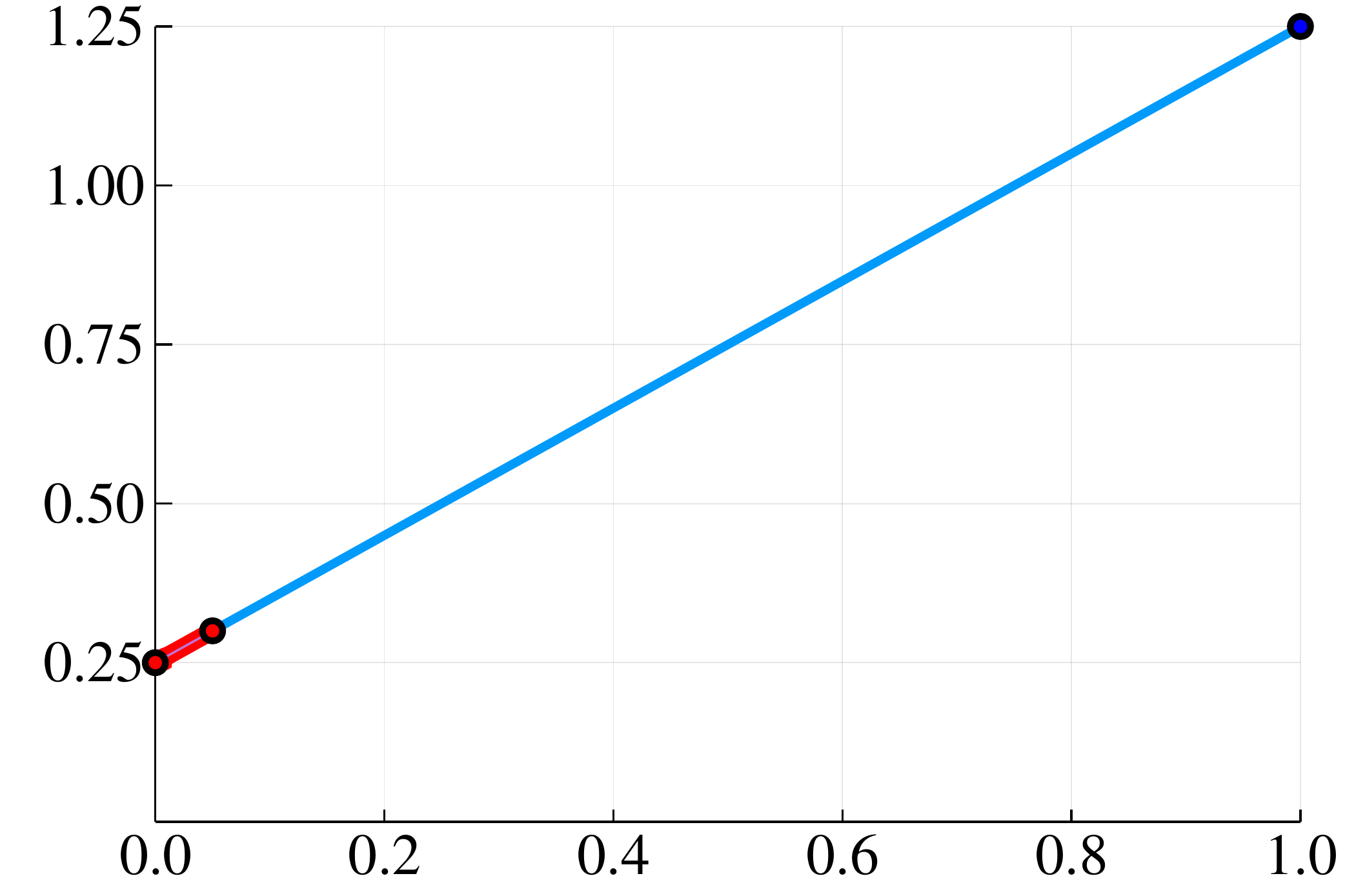}
		\caption{Regular distributions with $n = 4$, and $\epsilon = 0.2$.}
		\label{fig:regular}
	\end{subfigure}
	
	\bigskip
	
	\begin{subfigure}{\textwidth}
		\includegraphics[width=0.48\textwidth]{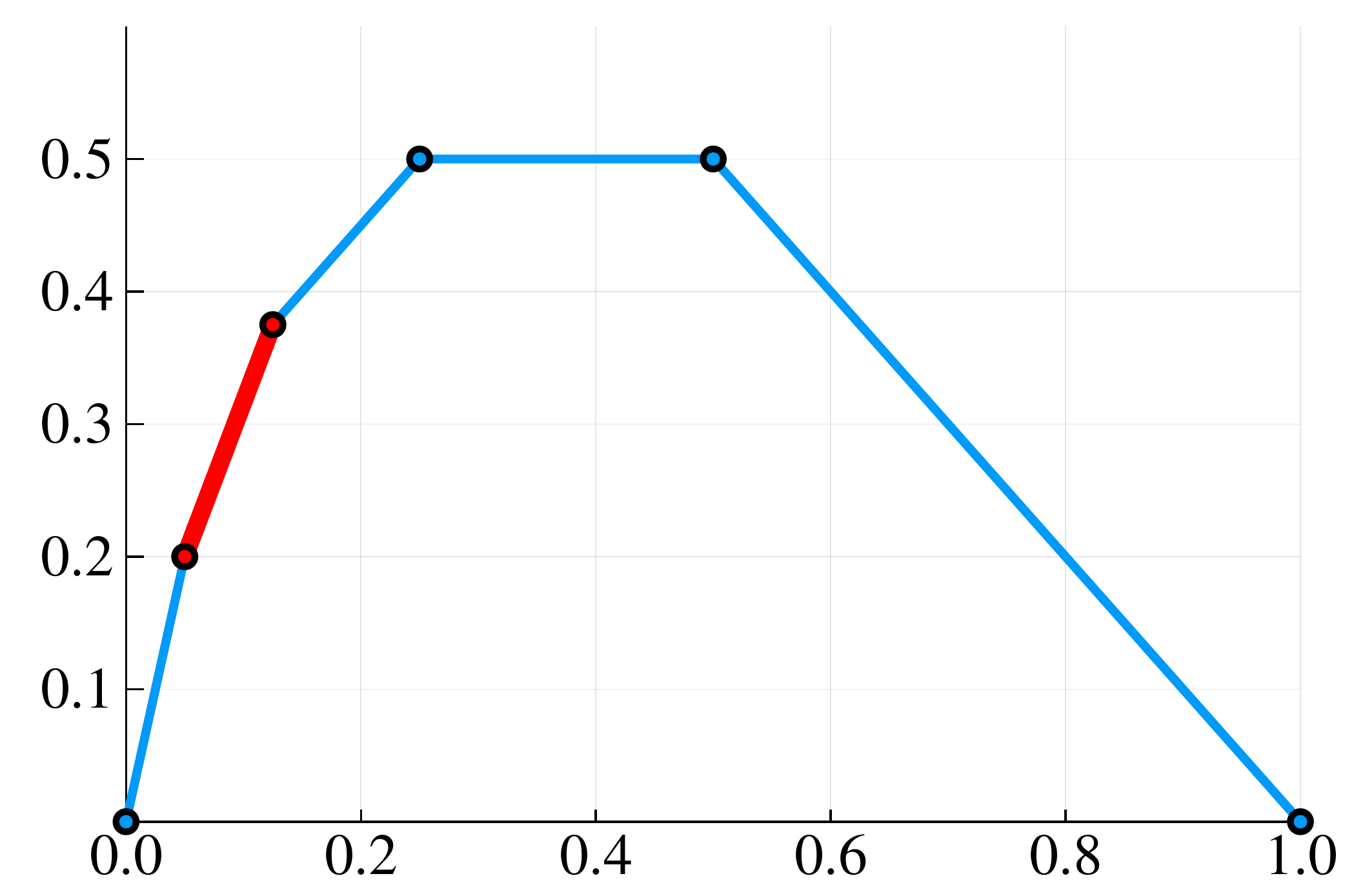}
		\includegraphics[width=0.48\textwidth]{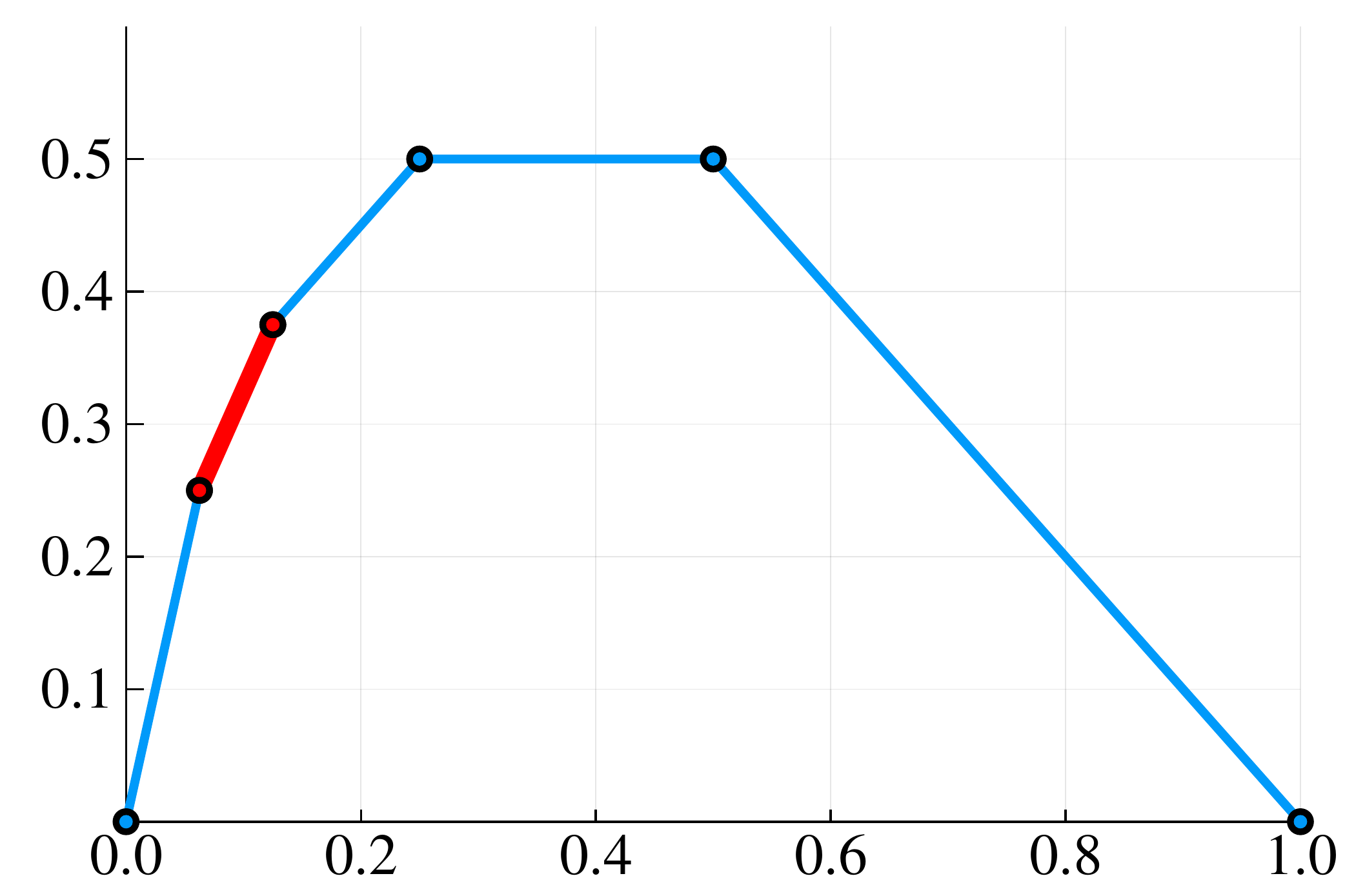}
		\caption{MHR distributions with $n = 8$, and $\epsilon_0 = 0.2$.}
		\label{fig:mhr}
	\end{subfigure}

	\caption{
		The revenue curves, in the quantile space, of the distributions that we use to prove the sample complexity lower bounds for different families of distributions. 
		In each sub-figure, the left-hand-size and the right-hand-side are the revenue curves of $D^h$ and $D^\ell$ respectively for the corresponding family.
		The critical value interval between $v_2$ and $v_1$ is plotted in bold (red).
	}
	\label{fig:hardness_revenue_curves}
\end{figure}

\subsubsection*{$[1,H]$-Bounded Distributions}

Let $v_0 = 1$, $v_1 = H$, $v_2 = \frac{H}{2} + \frac{1}{2}$, $p = \frac{2}{nH}$, and $\Delta = \epsilon H$ be the parameters.
Let $D^b$ be a singleton at $v_0 = 1$.
Define $D^\ell$ and $ D^h$ with the following probability mass:
\begin{align*}
f_{D^\ell}(v) & = 
\begin{cases}
1-\frac{2}{nH}& v = 1 + \frac{1}{n} - \frac{1}{nH} \\
\frac{1-\epsilon}{nH}& v = \frac{H}{2} + \frac{1}{2} \\
\frac{1+\epsilon}{nH}& v = H
\end{cases} \\
f_{D^h}(v) & = 
\begin{cases}
1-\frac{2}{nH}& v = 1 + \frac{1}{n} - \frac{1}{nH} \\
\frac{1+\epsilon}{nH}& v = \frac{H}{2} + \frac{1}{2}\\
\frac{1-\epsilon}{nH}& v = H
\end{cases}
\end{align*}

Conditions \ref{property:base_pointmass}, \ref{property:prob_high_values}, \ref{property:prob_critical_interval}, \ref{property:density_gap}, and \ref{property:v1_point_mass} hold trivially by the construction.
Condition \ref{property:kl} follows by the construction and \Cref{lem:dptrick}, with $\Omega_1 = \{ 1 \}$, $\epsilon_1 = 0$, and $\Omega_2 = \{ \frac{H}{2} + \frac{1}{2}, H \}$, $\epsilon_2 = \epsilon$.
Further, conditions \ref{property:virtual_value_gap}, \ref{property:virtual_value_bound_small_values}, and \ref{property:regular_high_distribution} can be verified from the virtual values of $D^h$ and $D^\ell$ below, which follow from straightforward calculations:
\begin{align*}
\phi^\ell(v) & =
\begin{cases}
1 & v = 1 + \frac{1}{n} - \frac{1}{nH} \\
1 - \Theta(\epsilon H) & v=\frac{H}{2}+\frac{1}{2}\\
H & v = H
\end{cases} \\
\phi^h(v) & =
\begin{cases}
1 & v = 1 + \frac{1}{n} - \frac{1}{nH} \\
1 + \Theta(\epsilon H) & v=\frac{H}{2}+\frac{1}{2}\\
H & v = H
\end{cases}    
\end{align*}


To show condition \ref{property:revenue_gap}, by our choice of $p$ and $\Delta$ and, thus, $n p \Delta = \Theta(\epsilon)$, it remains to show that $\opt(\mathbf{D}) \le O(1)$ for all $\mathbf{D} \in \mathcal{H}$.
This follows by that the optimal revenue is upper bounded by the expected optimal social welfare, which is upper bounded by:
\begin{align*}
& \left( 1 + \frac{1}{n} - \frac{1}{nH} \right) + H \cdot \Pr \left[ \exists i : v_i > 1 + \frac{1}{n} - \frac{1}{nH} \right] \\
& \qquad\qquad \le 1 + \frac{1}{n} - \frac{1}{nH} + H \cdot\sum_{i = 2}^n \Pr \left[ v_i > 1 + \frac{1}{n} - \frac{1}{nH} \right] \\
& \qquad\qquad \le 1 + \frac{1}{n} - \frac{1}{nH} + H \cdot (n - 2) \cdot \frac{2}{nH} < 3
~.
\end{align*}

Now that we have verified all conditions, we get the sample complexity lower bound as stated in \Cref{thm:lower_bounds} for $[1, H]$-bounded support distributions as a corollary of \Cref{lem:meta_hardness}. 

\subsubsection*{$[0, 1]$-Bounded Support Distributions}

In this case, we can simply take the construction of the $[1, H]$-bounded case, with $H$ being a large constant, say $H = 10$, and scale all values down by a multiplicative factor of $\frac{1}{H}$. 
Then, we have the sample complexity lower bound as stated in \Cref{thm:lower_bounds}.

\subsubsection*{Regular Distributions}

Let $v_0 = \frac{3}{2}$, $v_1 = +\infty$, $v_2 = 1 + \frac{1}{\epsilon}$, $p = \frac{\epsilon}{n}$, and $\Delta = \frac{1}{2}$ be the parameters.
Let $D^b$ be a singleton at $v_0 = \frac{3}{2}$, and define $D^\ell$ and $D^h$ with support $[1 + \frac{1}{n}, +\infty)$ and the following pdf:
%
\begin{align*}
&f_{D^h}(v)=\begin{cases}\frac{1}{n(v-1)^2} & 1 + \frac{1}{n} \le v < 1+\frac{1}{\epsilon}\\ \frac{1-\epsilon}{n(v-2)^2}&v \ge 1+\frac{1}{\epsilon}\end{cases}\\
&f_{D^\ell}(v) = \frac{1}{n(v-1)^2} 
\end{align*}

The corresponding complementary cdf's are as follows, via simple calculations:
\begin{align*}
1 - F_{D^h}(v) & =
\begin{cases}
\frac{1}{n(v-1)} & 1 + \frac{1}{n} \le v < 1 + \frac{1}{\epsilon} \\ 
\frac{1-\epsilon}{n(v-2)} & v \ge 1+\frac{1}{\epsilon}
\end{cases} \\
1 - F_{D^\ell}(v) & = \frac{1}{n(v-1)}
\end{align*}

Conditions \ref{property:base_pointmass}, \ref{property:prob_high_values}, \ref{property:prob_critical_interval}, \ref{property:density_gap}, and \ref{property:v1_point_mass} hold trivially by the construction.
Condition \ref{property:kl} follows by the construction and \Cref{lem:dptrick}, with $\Omega_1 = [1+\frac{1}{n}, 1 + \frac{1}{\epsilon})$, $\epsilon_1 = 0$, and $\Omega_2 = [1 + \frac{1}{\epsilon}, +\infty)$, $\epsilon_2 = \epsilon$.
Further, conditions \ref{property:virtual_value_gap}, \ref{property:virtual_value_bound_small_values}, and \ref{property:regular_high_distribution} can be verified from the virtual values of $D^h$ and $D^\ell$ below, which follow from straightforward calculations:
\begin{align*}
\phi_{D^h}(v) & =
\begin{cases}
1 & 1 \le v < 1+\frac{1}{\epsilon} \\ 
2 & v \ge 1+\frac{1}{\epsilon}
\end{cases} \\
\phi_{D^\ell}(v) & = 1
\end{align*}

To show condition \ref{property:revenue_gap}, by our choice of $p$ and $\Delta$ and, thus, $n p \Delta = \Theta(\epsilon)$, it remains to show that $\opt(\mathbf{D}) \le O(1)$ for all $\mathbf{D} \in \mathcal{H}$.
This holds because the virtual values in the three distributions $D^b$, $D^h$, and $D^\ell$ are at most $2$ everywhere as discussed above.

Now that we have verified all conditions, we get the sample complexity lower bound as stated in \Cref{thm:lower_bounds} for regular distributions as a corollary of \Cref{lem:meta_hardness}. 

\subsubsection*{MHR Distributions}

Assume for simplicity of notations that $\log n$ is an integer.
Let $\epsilon_0 = \epsilon \log n$.
Let $v_0 = \log n - 1 + \epsilon_0$, $v_1 = \log n + 1$, $v_2 = \log n$, $p = \frac{1}{n}$, and $\Delta = \epsilon_0$ be the parameters.
Let $D^b$ be a singleton at $v_0 = \log n - 1 + \epsilon_0$.
Define $D^h$ and $D^\ell$ with support $\{0, 1, 2, \dots, \log n + 1\}$ and the following probability mass:
\begin{align*}
f_{D^h}(v) & = 
\begin{cases}
2^{-v-1} & 0 \le v < \log n \\ 
\frac{1 + \epsilon_0}{2n} & v = \log n \\
\frac{1 - \epsilon_0}{2n} & v = \log n + 1
\end{cases} \\
f_{D^\ell}(v) & =
\begin{cases}
2^{-v-1} & 0 \le v \le \log n \\ 
\frac{1}{2n} & v = \log n \\
\frac{1}{2n} & v = \log n + 1
\end{cases}
\end{align*}
%

The corresponding complementary cdf are as follows, via simple calculations:
\begin{align*}
1- F_{D^h}(v) & = 
\begin{cases}
2^{-v-1} & 0 \le v < \log n \\ 
\frac{1 - \epsilon_0}{2n} & v = \log n \\
0 & v = \log n + 1
\end{cases} \\
1 - F_{D^\ell}(v) & =
\begin{cases}
2^{-v-1} & 0 \le v < \log n \\ 
\frac{1}{2n} & v = \log n \\
0 & v = \log n + 1
\end{cases}
\end{align*}

Conditions \ref{property:base_pointmass}, \ref{property:prob_high_values}, \ref{property:prob_critical_interval}, \ref{property:density_gap}, and \ref{property:v1_point_mass} hold trivially by the construction.
Condition \ref{property:kl} follows by the construction and \Cref{lem:dptrick}, with $\Omega_1 = \{0, 1, 2, \dots, \log n - 1\}$, $\epsilon_1 = 0$, and $\Omega_2 = \{ \log n, \log n + 1\}$, $\epsilon_2 = \epsilon_0$.
Further, conditions \ref{property:virtual_value_gap}, \ref{property:virtual_value_bound_small_values}, and \ref{property:regular_high_distribution} can be verified from the virtual values of $D^h$ and $D^\ell$ below, which follow from straightforward calculations:
\begin{align*}
\phi_{D^h}(v) & = 
\begin{cases}
v - 1 & 0 \le v < \log n \\ 
\log n - 1 + 2\epsilon_0 - O(\epsilon_0^2) & v = \log n \\
\log n + 1 & v = \log n + 1
\end{cases} \\
\phi_{D^\ell}(v) & =
\begin{cases}
v - 1 \qquad\qquad\qquad\quad~~ & 0 \le v < \log n \\ 
\log n - 1 & v = \log n \\
\log n + 1 & v = \log n + 1
\end{cases}
\end{align*}

To show condition \ref{property:revenue_gap}, by our choice of $p$ and $\Delta$ and, thus, $n p \Delta = \Theta(\epsilon \log n)$, it remains to show that $\opt(\mathbf{D}) \le O(\log n)$ for all $\mathbf{D} \in \mathcal{H}$.
This holds trivially because the values are upper bounded by $\log n + 1$ in all three distributions $D^b$, $D^h$, and $D^\ell$.

Now that we have verified all conditions, we get the sample complexity lower bound as stated in \Cref{thm:lower_bounds} for MHR distributions as a corollary of \Cref{lem:meta_hardness}. 

Finally, we include in \Cref{app:continuous_mhr_lb} a discussion on the sample complexity lower bounds under the same framework if we insist on using only continuous MHR distributions.

	\newpage
	
	\bibliographystyle{plainnat}
	\bibliography{learning-auction-refs}
	
	\newpage
		
	\appendix
	
	\section{Previous Approaches Rely on Prior Knowledge}
\label{app:prior_knowledge}

In this section we demonstrate through an example how the previous approaches crucially rely on knowing the family of distributions upfront, even in the special case of a single bidder.
In this case, any truthful auction is effectively posting a take-it-or-leave-it price.
What the algorithm needs to do is to learn an approximately optimal price from the samples.

Consider a value distribution $D$ with the following cdf:
\[
F_D(v) = 
\begin{cases}
\frac{v}{v+2} & 0 \le v \le 2 \\
\frac{2v-3}{2v-2} & v > 2
\end{cases}
\]
This is a regular distribution.
Its revenue curve in the quantile space is piece-wise linear with two segments, as shown in \Cref{fig:prior_knowledge_revenue_curve}.
The $x$-axis is the quantile; the $y$-axis is the expected revenue for a price with the corresponding quantile in $D$.
The optimal price is $2$.
At this price, the bidder will buy the item with probability $0.5$.
The corresponding optimal revenue is $1$.
Importantly, the distribution has a heavy tail in the sense that the expected revenue does not diminish as the price tends to infinity; instead, it tends to $0.5$.

We evaluated the performance of the best known previous algorithms for regular, MHR, and $[1, H]$-bounded support (with $H = 100$) distributions on $D$.
The results are shown in \Cref{fig:prior_knowledge}.
The $x$-axis is the number of samples.
The $y$-axis is the approximation ratio in comparison with the optimal revenue as the number of samples increases.

First, suppose we have an incorrect belief that the underlying distribution is an MHR distribution and simply use the empirical price, i.e., the optimal price w.r.t.\ the uniform distribution over the samples~\cite{DhangwatnotaiRY/2015/GEB, HuangMR/2015/EC}.
Then, the approximation ratio does not converge at all, as shown by the dashed line in \Cref{fig:prior_knowledge}.
To see why, consider any large integer $m$ and let there be $m$ samples.
Consider the value $\frac{m}{2}$ whose quantile is roughly $\frac{1}{m}$.
In expectation, there shall be $1$ sample, out of all $m$ of them, that has value at least $\frac{m}{2}$.
However, there is a non-negligible probability that there are at least $3$ such samples, in which case it may seem superior than the actual optimal price $2$, based on the empirical distribution.
As a result, the approximation ratio fails to converge to $1$.

Next, suppose we have the correct prior knowledge and choose the best know previous algorithm for regular distributions~\cite{DhangwatnotaiRY/2015/GEB, HuangMR/2015/EC}, which picks the best price according to the empirical distribution \emph{subject to having a sale probability at least $\delta = m^{-1/3}$}.
This is called the $\delta$-guarded empirical price.
The choice of $\delta = m^{-1/3}$ is due to the choice of $\delta = \epsilon$ in previous works when the goal is a multiplicative $(1 - \epsilon)$-approximation, and the optimal sample complexity of $\tilde{\Theta}(\epsilon^{-3})$ for regular distributions.
Then, the approximation ratio quickly converges to almost $1$ with less than a thousand samples, as shown by the solid line in \Cref{fig:prior_knowledge}.

Further, suppose we have an incorrect belief that the underlying distribution is a $[1, H]$-bounded distribution, with $H = 100$. 
The best known previous algorithm for this case is again the $\delta$-guarded empirical price, but with a different choice of $\delta = \frac{1}{H}$~\cite{HuangMR/2015/EC}.
Then, even though the approximation ratio does converge to $1$, it is much slower, as shown by the dotted line in \Cref{fig:prior_knowledge}.

Last but not least, the performance of the algorithm proposed in this paper, which picks the optimal price w.r.t.\ the dominated empirical distribution with no prior knowledge on the family of distributions at all, is almost indistinguishable with that of the best known previous algorithm tailored specifically for regular distributions.
This is shown by the dash-dotted line in \Cref{fig:prior_knowledge}.

\begin{figure}
	\centering
	\includegraphics[width=.8\textwidth]{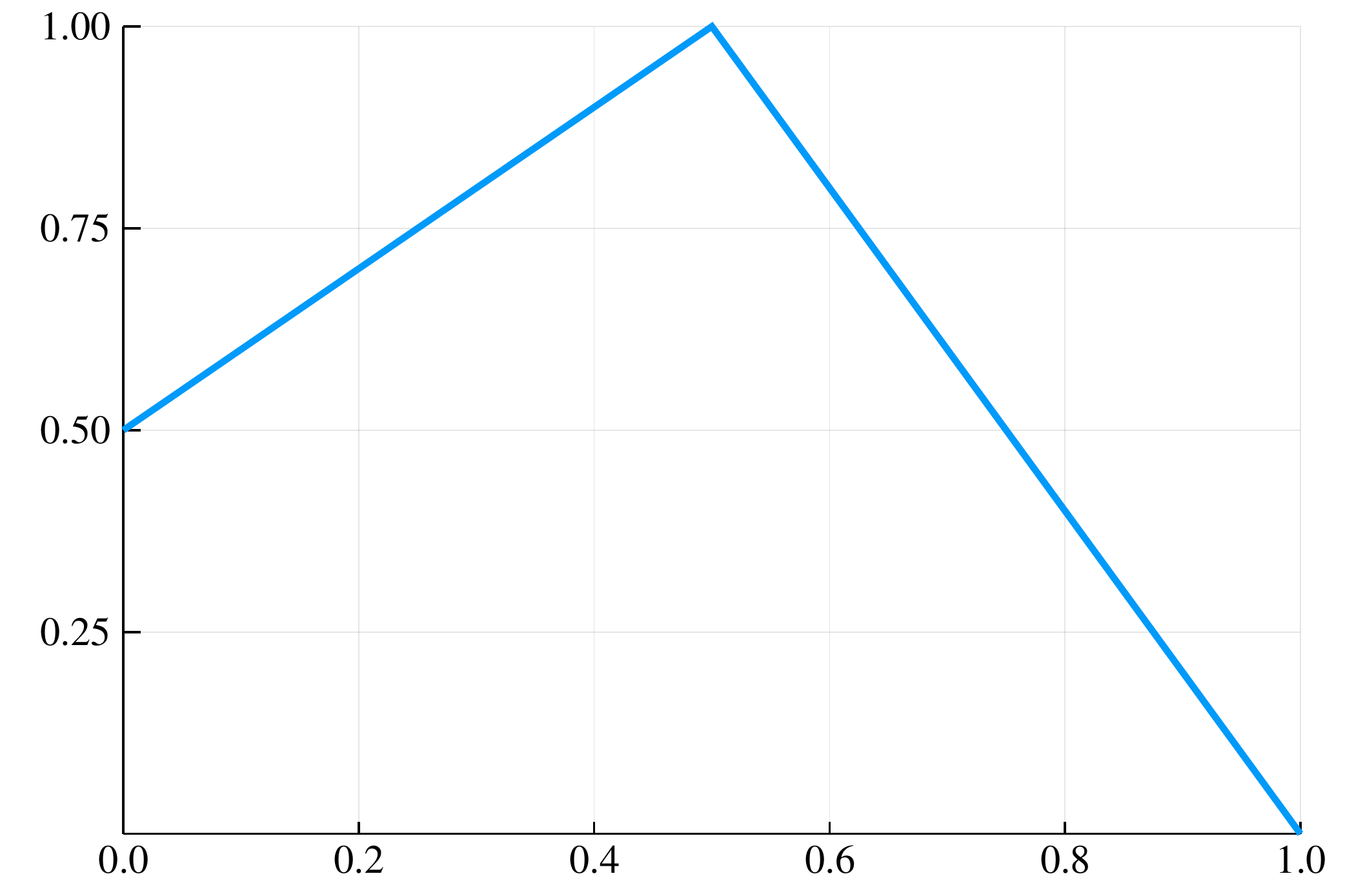}
	\caption{The revenue curve, in the quantile space, of a heavy-tail regular distribution that demonstrates how previous approaches crucially rely on knowing the family of distributions}
	\label{fig:prior_knowledge_revenue_curve}
\end{figure}

\begin{figure}
	\centering
	\includegraphics[width=.8\textwidth]{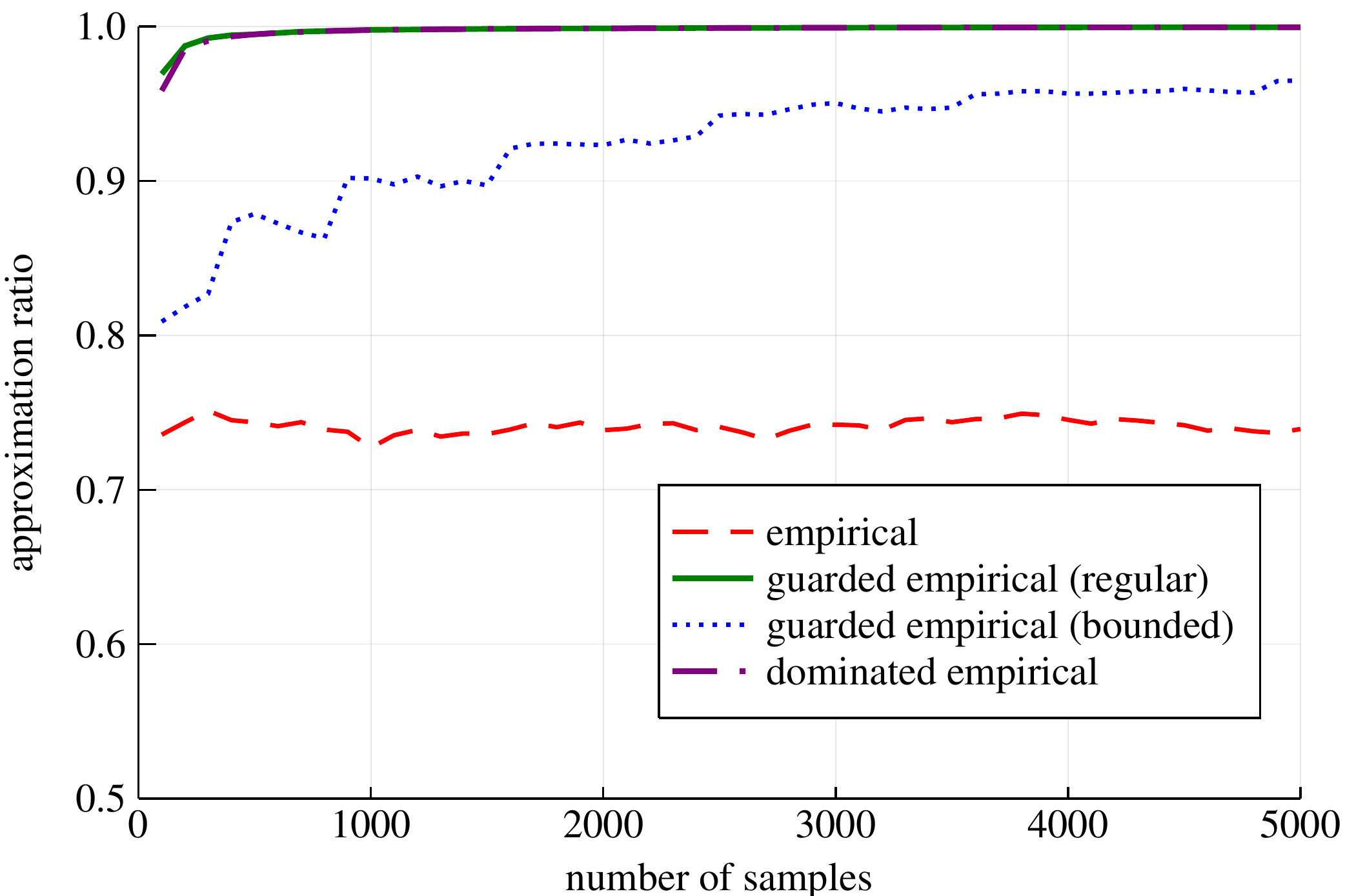}
	\caption{The convergences of best known previous algorithms for different families of distributions when there is a single bidder whose value distribution is defined by the revenue curve in \Cref{fig:prior_knowledge_revenue_curve}}
	\label{fig:prior_knowledge}
\end{figure}

	\section{Missing Proofs in \texorpdfstring{\Cref{sec:upper_bound_multi_bidder}}{Section~\ref{sec:upper_bound_multi_bidder}}}
\label{app:missing_proofs_upper_bounds}

\subsection{Proof of \texorpdfstring{\Cref{lem:empirical_error_bound}}{Lemma~\ref{lem:empirical_error_bound}}}

Consider quantiles that are multiples of $\frac{1}{m}$ and the corresponding values.
There are at most $m$ of them.
We will show that a stronger bound holds for each of such value $v$:
\begin{equation}
\label{eqn:empirical_error_bound}
\left| q^{E_i}(v) - q^{D_i}(v) \right| \le \sqrt{\frac{2 q^{D_i}(v) \big( 1-q^{D_i}(v) \big) \ln(2mn\delta^{-1})}{m} } + \frac{2\ln(2mn\delta^{-1})}{3m}
~.
\end{equation}
Then, the lemma follows because the bound on difference in the quantiles of any value in the two distributions bares an extra additive factor of at most $\frac{1}{m} < \frac{\ln(2mn\delta^{-1})}{3m}$.

Now fix any such value $v$.
Consider random variables $X_1, X_2, \dots, X_m$ such that $X_i = 1$ if the $i$-th sample is at least $v$ and $X_i = 0$ otherwise.
Then, we have $q^{E_i}(v) = \frac{1}{m} \sum_{i = 1}^m X_i$.
By Bernstein's inequality (\Cref{lem:bernstein}), with $\sigma^2 = q^{D_i}(v) \big( 1 - q^{D_i}(v) \big)$, $M = 1$, and 
\[
t = \sqrt{2 q^{D_i}(v) \big( 1-q^{D_i}(v) \big) m \ln(2mn\delta^{-1})} + \frac{2}{3}\ln(2mn\delta^{-1})
~,
\]
the bound stated in \Cref{eqn:empirical_error_bound} holds for $v$ fails with probability at most $\frac{\delta}{mn}$.

By union bound we get that, with probability at least $1 - \delta$, \Cref{eqn:empirical_error_bound} holds for all $m$ values that correspond to quantiles that are multiples of $\frac{1}{m}$, and all $n$ bidders.

\subsection{Proof of \texorpdfstring{\Cref{lem:D_and_tilde_E}}{Lemma~\ref{lem:D_and_tilde_E}}}
	
Fix any value $v > 0$.
We would like to show that $q^{\tilde{E}_i}(v) \le q^{D_i}(v)$.
By \Cref{lem:empirical_error_bound}, we have:
\[
q^{E_i}(v) \le q^{D_i}(v) + \sqrt{\frac{2 q^{D_i}(v) \big( 1-q^{D_i}(v) \big) \ln(2mn\delta^{-1})}{m} } + \frac{\ln(2mn\delta^{-1})}{m}
\]
Of course, we also have that $q^{E_i}(v) \le 1$.

Further, by the definition of $\mathbf{\tilde{E}}$, we have $q^{\tilde{E}_i}(v) = \shading \big( q^{E_i}(v) \big)$.
Noting that $\shading$ is monotone, it suffices to show $q^{\tilde{E}_i}(v) \le q^{D_i}(v)$ when the above equation holds with equality.

To simply notation, we write $q^D = q^{D_i}(v)$, $q^E = q^{E_i}(v)$,  and $q^{\tilde{E}} = q^{\tilde{E}_i}(v)$, and let $c = \frac{\ln(2mn\delta^{-1})}{m}$ in the rest of the proof.
We would like to show that $q^{\tilde{E}} \le q^D$ if 
\begin{equation}
\label{eqn:D_and_E}
q^E = q^D + \sqrt{2cq^D(1-q^D)} + c \le 1
~,
\end{equation}
and 
\begin{equation}
\label{eqn:E_and_tilde_E}
q^{\tilde{E}} = \shading(q^E) = q^E - \sqrt{2cq^E(1-q^E)} - 4c
~.
\end{equation}

Rearranging terms $q^D$ and $c$ to the left-hand-side of \Cref{eqn:D_and_E} and squaring both sides, we get that it is equivalent to:
\[
\left( q^E-q^D-c \right)^2 = 2c q^D(1 - q^D)
~.
\]

Next, organize the terms as a quadratic equation of $q^D$:
\[
(1+2c) \left( q^D \right)^2 - 2 q^E q^D + \left( q^E - c \right)^2 = 0
~.
\]

Solving it and noting that $q^D \le q^E$, we get that:
\begin{align*}
(1 + 2c) q^D & = q^E - \sqrt{2c q^E (1 - q^E)^2 + 4c^2 q^E - (1+2c)c^2} \\
& \ge q^E - \sqrt{2c q^E (1 - q^E)^2 + 4c^2} && \text{($q^E \le 1$)} \\
& \ge q^E - \sqrt{2c q^E (1 - q^E)^2} - 2c 
~.
\end{align*}

Further, the left-hand-side is at most $q^D + 2c$ because $q^D \le 1$.
Putting together with \Cref{eqn:E_and_tilde_E}, we have that:
\[
q^D \ge q^E - \sqrt{2c q^E (1 - q^E)^2} - 4c = q^{\tilde{E}}
~. 
\]

So the lemma follows.

\subsection{Proof of \texorpdfstring{\Cref{lem:tilde_E_and_tilde_D}}{Lemma~\ref{lem:tilde_E_and_tilde_D}}}

Fix any value $v > 0$.
We would like to show that $q^{\tilde{E}_i}(v) \ge q^{\tilde{D}_i}(v)$.
By \Cref{lem:empirical_error_bound}, we have:
\[
q^{E_i}(v) \ge q^{D_i}(v) - \sqrt{\frac{2 q^{D_i}(v) \big( 1-q^{D_i}(v) \big) \ln(2mn\delta^{-1})}{m} } - \frac{\ln(2mn\delta^{-1})}{m}
\]
Of course, we also have that $q^{E_i}(v) \ge 0$.

Further, by the definition of $\mathbf{\tilde{E}}$, we have $q^{\tilde{E}_i}(v) = \shading \big( q^{E_i}(v) \big)$.
Noting that $\shading$ is monotone, it suffices to show $q^{\tilde{E}_i}(v) \ge q^{\tilde{D}_i}(v) = \doubleshading \big( q^{D_i}(v) \big)$ when the above equation holds with equality.

To simply notation, we write $q^D = q^{D_i}(v)$, $q^{\tilde{D}} = q^{\tilde{D}_i}(v)$, $q^E = q^{E_i}(v)$, and $q^{\tilde{E}} = q^{\tilde{E}_i}(v)$, and let $c = \frac{\ln(2mn\delta^{-1})}{m}$ in the rest of the proof.
We would like to show that $q^{\tilde{E}} \ge q^{\tilde{D}}$ if 
\begin{equation}
\label{eqn:D_and_E_2}
q^E = q^D - \sqrt{2cq^D(1-q^D)} - c \le 1
~,
\end{equation}
and 
\begin{equation}
\label{eqn:E_and_tilde_E_2}
q^{\tilde{E}} = \shading(q^E) = q^E - \sqrt{2cq^E(1-q^E)} - 4c
~,
\end{equation}
and 
\begin{equation}
\label{eqn:D_and_tilde_D}
q^{\tilde{D}} = \doubleshading(q^D) = q^D - \sqrt{8cq^D(1-q^D)} - 7c
~.
\end{equation}	

By \Cref{eqn:D_and_E_2} and \Cref{eqn:E_and_tilde_E_2}, we have:
\[
q^{\tilde{E}} = q^D - \sqrt{2cq^D(1-q^D)} - \sqrt{2cq^E(1-q^E)} - 5c 
~.
\]

Comparing with the right-hand-side of \Cref{eqn:D_and_tilde_D}, it suffices to show that:
\[
\sqrt{2cq^D(1-q^D)} + 2c \ge \sqrt{2cq^E(1-q^E)}
\]

This holds due to the following sequence of inequalities:
\begin{align*}
2cq^E(1-q^E) & = 2c q^D(1-q^D) + 2c (q^D-q^E)(2q^D-1) - 2c (q^D-q^E)^2 \\
& \le 2c q^D(1-q^D) + 2c (q^D-q^E) - 2c (q^D-q^E)^2 && \text{($q^D \le 1$)} \\
& \le 2c q^D(1-q^D) + 2c (q^D-q^E) \\
& = 2c q^D(1-q^D) + 2c \sqrt{2cq^D(1-q^D)} + 2c^2 && \text{(\Cref{eqn:D_and_E_2})} \\ 
& < \big( \sqrt{2cq^D(1-q^D)} + 2c \big)^2 
~.
\end{align*}

\subsection{Proof of \texorpdfstring{\Cref{lem:truncate_bottom}}{Lemma~\ref{lem:truncate_bottom}}}

Recall that $M_{\mathbf{D}}$ denotes Myerson's optimal auction w.r.t.\ $\mathbf{D}$. 
Consider the following mechanism $M$ for distribution $\truncatebottom_{\epsilon}(\mathbf{D})$:
\begin{enumerate}
	\item Given a value profile $\mathbf{v}$, define an auxiliary value profile $\mathbf{v'}$ such that for every $i \in [n]$, let $v'_i = v_i$ if $v_i > 0$, and let $v'_i$ be a random sample from $D_i$ \emph{conditioned on its quantile is between $1 - \epsilon$ and $1$}.
	\item Let $i^*$ and $p^*$ be the winner and her payment given by $M_{\mathbf{D}}$ on the auxiliary value profile $\mathbf{v'}$.
	\item Let $i^*$ be the winner of running $M$ on value profile $\mathbf{v}$.
	\item Let $i^*$ pay $p^*$ if $v_i > 0$, and $0$ otherwise.
\end{enumerate}

Let us compare the expected revenue from running $M$ on the true value profile $\mathbf{v}$ drawn from $\truncatebottom_{\epsilon}(\mathbf{D})$, and that from running $M_{\mathbf{D}}$ on the corresponding auxiliary value profile $\mathbf{v'}$, which follows $\mathbf{D}$.
We will do so by comparing the contribution from each bidder $i$ in these two cases.

Fix any bidder $i$ and any value profile $v_{-i}$ and, thus, the corresponding auxiliary value profile $v'_{-i}$ of the bidders other than $i$.
Since the auxiliary values of the other bidders are fixed, from $i$'s viewpoint, $M_{\mathbf{D}}$ sets a price $p^*$ and $i$ wins the item so long as her value is at least $p^*$.
Therefore, the expected payment by $i$ from running $M_{\mathbf{D}}$ on the auxiliary value profile $\mathbf{v'}$ is $q^{D_i}(p^*) \cdot p^*$;
that from running $M$ on the true value profile is $\mathbf{v}$ is:
$q^{\truncatebottom_{\epsilon}(D_i)}(p^*) p^* = \min \{1 - \epsilon, q^{D_i}(p^*)\} \cdot p^*$.
Note that they differ by at most a $1 - \epsilon$ multiplicative factor for any value of $q^{D_i}(p^*)$.

Hence, bidder $i$'s contribution to the expected revenue of $M$ on $\truncatebottom_{\epsilon}(\mathbf{D})$ is at least a $1 - \epsilon$ of her contribution to the optimal revenue of $\mathbf{D}$.
Summing over all bidders proves the lemma.

	\section{Upper Bounds without Information Theory}
\label{app:without_information_theory}

In this section, we will demonstrate how to prove weaker sample complexity upper bounds building only on revenue monotonicity, without the information theoretic argument.
Recall that what we need to prove is the following inequality when the number of samples $m$ is sufficiently large:
\[
\opt(\mathbf{\tilde{D}}) \ge (1 - \epsilon) \opt(\mathbf{D})
\]

To simplify the discussion, let us introduce a dummy bidder $n+1$ whose value is a point mass at $0$.
Then, we have $\tilde{D}_{n+1} = D_{n+1}$, and the virtual value of the bidder is always $0$.
As a result, we can assume without loss of generality that the item is always allocated to some bidder.

Further, recall that we assume for simplicity that the optimal mechanism $M_{\mathbf{D}}$ breaks ties over bidders with the same ironed virtual value in the lexicographical order.
Therefore, if two bidders $i < j$ have the same virtual value, we will still write $\bar{\phi}_i(v_i) > \bar{\phi}_j(v_j)$ in the following discussions.

The main technical lemma is the following, which states that $\mathbf{\tilde{D}}$ approximately preserves the density of $\mathbf{D}$ almost point-wise, except for the values with very small quantiles.

\begin{lemma}\label{lem:density_approximation}
	For any product distribution $\mathbf{D}$, suppose $\tilde{\mathbf{D}}=d_{m,n,\delta}(\mathbf{D})$, and $m$ is at least $\frac{4 \ln (2mn\delta^{-1})}{\epsilon^2}$.
	Then, for any $i \in [n]$, and any $v$ such that $q^{D_i}(v) \ge \Omega \big(\frac{\ln (2mn\delta^{-1})}{m\epsilon^2} \big)$, we have:
	\[
	1 + \frac{\epsilon}{2} \ge \frac{d\tilde{D}_i}{dD_i}(v) \ge 1 - \frac{\epsilon}{2} 
	~.
	\]
\end{lemma}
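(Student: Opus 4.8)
The plan is to recognize that $\mathbf{\tilde D} = \doubleshading(\mathbf{D})$ merely reparametrizes the quantile axis of each coordinate through the smooth, strictly increasing function $\doubleshading$, so that the density ratio $\frac{d\tilde D_i}{dD_i}$ at a value $v$ is just $\doubleshading'$ evaluated at the quantile $q = q^{D_i}(v)$. Concretely, write $a = \frac{\ln(2mn\delta^{-1})}{m}$. If $D_i$ has no atom at $v$, then $1 - F_{\tilde D_i}(v) = \doubleshading\big(1 - F_{D_i}(v)\big)$, and differentiating in $v$ gives $\frac{d\tilde D_i}{dD_i}(v) = \doubleshading'(q)$; if $v$ carries a point mass occupying the quantile interval $[q, q+p_v]$, then the ratio of its masses in $\tilde D_i$ and $D_i$ is $\frac{\doubleshading(q+p_v) - \doubleshading(q)}{p_v} = \doubleshading'(\xi)$ for some $\xi \in [q, q+p_v]$ by the mean value theorem. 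In both cases, the hypotheses ($q \ge \Omega(\frac{\ln(2mn\delta^{-1})}{m\epsilon^2})$ and $m \ge \frac{4\ln(2mn\delta^{-1})}{\epsilon^2}$) place the whole quantile interval in the range where $\doubleshading$ is not clipped to $0$ and where $\doubleshading'$ varies slowly, so it suffices to bound $|\doubleshading'(q') - 1|$ uniformly over $q' \ge q$.

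Next I would compute $\doubleshading'$ directly from \eqref{eqn:double_shading}: in the unclipped region $\doubleshading(q) = q - \sqrt{8a\, q(1-q)} - 7a$, hence $\doubleshading'(q) = 1 - \sqrt{2a}\,\frac{1-2q}{\sqrt{q(1-q)}}$ and therefore $\big|\doubleshading'(q) - 1\big| = \sqrt{2a}\,\frac{|1-2q|}{\sqrt{q(1-q)}} \le \sqrt{\frac{2a}{q(1-q)}}$, using $|1-2q| \le 1$ — this is exactly the estimate already used in the proof of \Cref{lem:KL_upper_bound}. It then remains to plug in the hypotheses. The condition $m \ge \frac{4\ln(2mn\delta^{-1})}{\epsilon^2}$ says $a \le \epsilon^2/4$; writing the $\Omega(\cdot)$ in the statement with an explicit constant (say $q^{D_i}(v) \ge 16a/\epsilon^2$) and taking $q \le \tfrac12$, we get $q(1-q) \ge q/2 \ge 8a/\epsilon^2$, so $\sqrt{\frac{2a}{q(1-q)}} \le \sqrt{\epsilon^2/4} = \epsilon/2$, which is precisely the claimed two-sided bound $1 \pm \frac{\epsilon}{2}$ on $\frac{d\tilde D_i}{dD_i}(v)$; the atom case is identical since $\doubleshading'$ stays inside this window across $[q, q+p_v]$.

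The step I expect to be the main obstacle is the complementary range where $q^{D_i}(v)$ is \emph{close to $1$}, i.e.\ $v$ near the bottom of the support of $D_i$: there $\sqrt{q(1-q)}$ can be arbitrarily small and $\doubleshading'(q)$ can exceed $1 + \tfrac{\epsilon}{2}$, so the one-sided lower bound on $q$ is not by itself enough. I would handle this by exploiting the context in which the lemma is applied: the bottom of each value distribution is truncated first (via $\truncatebottom_\epsilon$, with the dummy bidder at $0$ absorbing the truncated mass), so every positive value has quantile bounded away from $1$; the symmetric form of the computation above — now invoking a lower bound on $1-q$ instead of on $q$ — then closes this case. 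Alternatively, one may simply observe that a value with $q^{D_i}(v)$ near $1$ has deeply negative virtual value and hence never matters for the revenue comparison in which \Cref{lem:density_approximation} is used, so the density bound need only be asserted on the relevant part of the support.
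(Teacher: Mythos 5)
Your first two paragraphs reproduce the paper's own argument almost exactly: identify $\frac{d\tilde D_i}{dD_i}(v)=\doubleshading'(q)$ in the continuous case, compute $\doubleshading'(q)=1-\sqrt{2a}\,\frac{1-2q}{\sqrt{q(1-q)}}$ with $a=\ln(2mn\delta^{-1})/m$, and plug in the hypotheses; for atoms you use the mean value theorem where the paper uses monotonicity of $\doubleshading'$ to get $\frac{\doubleshading(q+p_v)-\doubleshading(q)}{p_v}\ge\doubleshading'(q)$, a cosmetic difference.

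Your third paragraph is the part worth commenting on, because you have caught a genuine flaw that the paper's terse closing line glosses over: for $q$ close to $1$, $\sqrt{q(1-q)}$ is small and $\doubleshading'(q)$ exceeds $1+\tfrac\epsilon 2$, so the stated \emph{upper} bound on $\tfrac{d\tilde D_i}{dD_i}$ fails; a lower bound on $q$ alone cannot deliver a two-sided bound. Of your two proposed repairs, the second is the right one, though it can be made crisper than ``negative virtual value never matters'': in the only place \Cref{lem:density_approximation} is invoked, namely the proof of \Cref{lem:quantile_truncate_opt_bound}, the density bound is used solely to replace $d\tilde D_i\,d\tilde D_j$ by $(1-\epsilon)\,dD_i\,dD_j$ in a \emph{lower} bound on revenue, so only $\tfrac{d\tilde D_i}{dD_i}\ge 1-\tfrac\epsilon 2$ is ever needed — and that inequality does hold for every $q$ above the stated threshold, since for $q>\tfrac12$ the term $1-2q$ is negative and $\doubleshading'(q)>1$ automatically, while for $q\le\tfrac12$ your $q(1-q)\ge q/2$ estimate closes the case. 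Your first proposed repair does not line up with the paper: \Cref{app:without_information_theory} applies only $\truncatetop$ (top truncation), and the dummy bidder there is a brand-new bidder pinned at $0$, not a bottom-truncation of the existing marginals, so nothing caps the existing bidders' quantiles away from $1$.
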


\begin{proof}
	Let $q = q^{D_i}(v)$ for simplicity in this proof.
	By $q \ge \Omega \big(\frac{\ln (2mn\delta^{-1})}{m\epsilon^2} \big)$ and $m \ge \frac{4 \ln (2mn\delta^{-1})}{\epsilon^2}$, we get that $\doubleshading(q)$ is continuous at $q$.
	Therefore, if $v$ is not a point mass, we have:
	\begin{align*}
	\frac{d\tilde{D}_i}{dD_i}(v) & = \doubleshading'(q) \\
	& = 1-\sqrt{\frac{8\ln (2mn\delta^{-1})}{m}}\frac{1-2q}{2\sqrt{(1-q)q}} 
	~.
	\end{align*}
	
	On the other hand, suppose $v$ is a point mass.
	Denote its probability mass as $p_v$.
	We have:
	\begin{align*}
	\frac{d\tilde{D}_i}{dD_i}(v)&= \frac{\doubleshading \big( q \big) - \doubleshading \big( q - p_v \big)}{p_v} \\
	& = \frac{\int_{q-p_v}^{q} \doubleshading'(x) dx}{p_v} \\[1ex]
	& \ge \doubleshading'(q) && \text{(monotonicity of $\doubleshading'$)} \\[1ex]
	& = 1-\sqrt{\frac{8\ln (2mn\delta^{-1})}{m}}\frac{1-2q}{2\sqrt{(1-q)q}} 
	~.
	\end{align*}

	Then, the lemma follows by $q \ge \Omega \big(\frac{\ln (2mn\delta^{-1})}{m\epsilon^2} \big)$.
\end{proof}

Next, we present in the next lemma a meta analysis that will be further specialized for each family of distributions.

\begin{lemma}\label{lem:quantile_truncate_opt_bound}
	Suppose $\mathbf{D}$ have a bounded support in $[0, \bar{v}_1] \times [0, \bar{v}_2] \times \dots \times [0, \bar{v}_n]$ such that the probability mass of $\bar{v}_i$ is at least $\Omega \big(\frac{\ln (2mn\delta^{-1})}{m\epsilon^2} \big)$ in $D_i$ for all $i \in [n]$, and $m$ is at least $\frac{4 \ln (2mn\delta^{-1})}{\epsilon^2}$.
	Then, we have:
	\[
	\opt (\Tilde{\mathbf{D}})\ge (1-\epsilon)\opt(\mathbf{D})
	~.
	\]
\end{lemma}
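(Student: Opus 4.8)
The plan is to bound $\opt(\mathbf{\tilde{D}})$ from below by the optimal revenue of an auxiliary distribution $\mathbf{\hat{D}}$ that is first-order stochastically dominated by $\mathbf{\tilde{D}}$ — so that weak revenue monotonicity (\Cref{lem:weak_revenue_monotonicity}) applies — and that is a mild ``dilution'' of $\mathbf{D}$ whose optimal revenue visibly loses only a $1-\epsilon$ factor. Concretely, I would let $\hat{D}_i$ be the mixture that equals $D_i$ with probability $1-\epsilon$ and equals $0$ with probability $\epsilon$, and set $\mathbf{\hat{D}} = \hat{D}_1 \times \dots \times \hat{D}_n$; then $q^{\hat{D}_i}(v) = (1-\epsilon)\,q^{D_i}(v)$ for every $v > 0$, and $q^{\hat{D}_i}(0) = (1-\epsilon)\,q^{D_i}(0) \le 1$. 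It then suffices to prove (i) $\mathbf{\tilde{D}} \succeq \mathbf{\hat{D}}$ and (ii) $\opt(\mathbf{\hat{D}}) \ge (1-\epsilon)\opt(\mathbf{D})$.

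For (i), the case $v = 0$ is immediate because $q^{\tilde{D}_i}(0) = 1$, and for $v > 0$ the claim reduces to $\doubleshading\big(q^{D_i}(v)\big) \ge (1-\epsilon)\,q^{D_i}(v)$. A short computation from the definition of $\doubleshading$ shows $\doubleshading(q) \ge (1-\epsilon)q$ whenever $q \ge C\,\ln(2mn\delta^{-1})/(m\epsilon^2)$ for a large enough absolute constant $C$. This is exactly where the two hypotheses enter: since $\bar{v}_i$ carries probability mass $\Omega\big(\ln(2mn\delta^{-1})/(m\epsilon^2)\big)$ in $D_i$, every $v \in (0,\bar{v}_i)$ satisfies $q^{D_i}(v) = \Pr_{D_i}[v' > v] \ge \Pr_{D_i}[v' = \bar{v}_i] \ge C\,\ln(2mn\delta^{-1})/(m\epsilon^2)$ (taking the hidden constant appropriately, using $m \ge 4\ln(2mn\delta^{-1})/\epsilon^2$), while for $v = \bar{v}_i$ both sides vanish. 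Hence $q^{\tilde{D}_i}(v) = \doubleshading\big(q^{D_i}(v)\big) \ge (1-\epsilon)\,q^{D_i}(v) = q^{\hat{D}_i}(v)$ for all $v$, so $\mathbf{\tilde{D}} \succeq \mathbf{\hat{D}}$ and $\opt(\mathbf{\tilde{D}}) \ge \opt(\mathbf{\hat{D}})$.

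For (ii), I would run Myerson's auction $M_{\mathbf{D}}$ for $\mathbf{D}$ on $\mathbf{\hat{D}}$ — legitimate since $M_{\mathbf{D}}$ is DSIC, hence a feasible BIC mechanism for every distribution — and account bidder by bidder. Condition on the random set $S \subseteq [n]$ of bidders whose value was \emph{not} reset; the bidders outside $S$ bid $0$, a bid whose ironed virtual value is non-positive, so they pay nothing and do not change the revenue collected from the remaining bidders. Consequently $\rev(M_{\mathbf{D}}, \mathbf{\hat{D}}) = \E_S\big[W_S\big]$ with $W_S = \E_{\mathbf{v} \sim \mathbf{D}}\big[\max\{0,\max_{i \in S}\bar\phi_i(v_i)\}\big]$, $\bar\phi_i$ the ironed virtual value of $D_i$. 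Writing $\opt(\mathbf{D}) = W_{[n]} = \sum_i r_i$ with $r_i = \E_{\mathbf{v}}\big[\mathbf{1}[i \text{ wins}]\,\bar\phi_i(v_i)\big] \ge 0$, one observes that for any fixed $\mathbf{v}$ the difference $\max\{0,\max_{[n]}\bar\phi_i\} - \max\{0,\max_{S}\bar\phi_i\}$ is nonzero only when the winner $i^\ast$ lies outside $S$, in which case it is at most $\bar\phi_{i^\ast}(v_{i^\ast})$; summing over $\mathbf{v}$ gives $W_{[n]} - W_S \le \sum_{i \notin S} r_i$, and taking expectation over $S$ (each $i \notin S$ independently with probability $\epsilon$) yields $W_{[n]} - \E_S[W_S] \le \epsilon\sum_i r_i = \epsilon\opt(\mathbf{D})$. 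Hence $\opt(\mathbf{\hat{D}}) \ge \rev(M_{\mathbf{D}},\mathbf{\hat{D}}) = \E_S[W_S] \ge (1-\epsilon)\opt(\mathbf{D})$, and combining with (i) completes the proof.

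I expect the only real work to be bookkeeping: verifying $\doubleshading(q) \ge (1-\epsilon)q$ on the stated quantile range with concrete constants consistent with the lemma's hypotheses, and checking that zero bids in $M_{\mathbf{D}}$ genuinely contribute nothing and do not perturb the allocation among the active bidders, so that $\rev(M_{\mathbf{D}},\mathbf{\hat{D}})$ decomposes as $\E_S[W_S]$. I note that one could instead prove (ii) directly from the pointwise density/virtual-value estimate of \Cref{lem:density_approximation}, but the dilution argument above seems shorter and avoids any reparametrization of quantiles.
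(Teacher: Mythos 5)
Your proposal is correct in outline and takes a genuinely different route from the paper, so a comparison is worthwhile. The paper proves the lemma by combining a pointwise density-ratio bound (\Cref{lem:density_approximation}, which shows $d\tilde D_i/dD_i \ge 1-\epsilon/2$ on the relevant quantile range) with an explicit accounting of Myerson revenue as a double integral over the top two bidders; the $(1-\epsilon)$ loss then falls out of the two $d\tilde D_i\,d\tilde D_j$ factors plus a stochastic-dominance step for the remaining bidders. You instead interpose the dilution $\hat D_i = (1-\epsilon)D_i + \epsilon\,\delta_0$, prove the single scalar inequality $\doubleshading(q)\ge(1-\epsilon)q$ on quantiles $q\ge C\ln(2mn\delta^{-1})/(m\epsilon^2)$, and then appeal to the (essentially folklore) fact that diluting every marginal by $\epsilon$ costs at most a $1-\epsilon$ factor of revenue. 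The dilution dominance step (i) is genuinely cleaner than \Cref{lem:density_approximation}: it needs only one elementary estimate on $\doubleshading$ rather than a point-mass-aware derivative bound, and it turns the whole problem into a single application of weak revenue monotonicity. In return you push the real work into step (ii), which is close kin to \Cref{lem:truncate_bottom}.

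Two points deserve care. First, your step (ii) hinges on the claim that a bidder who bids $0$ has non-positive ironed virtual value in $M_{\mathbf{D}}$ and therefore ``pays nothing and does not change the revenue collected from the remaining bidders.'' That claim is not automatic: if $D_i$ is supported away from $0$, the formula $\bar\phi_i(v)=\bar R_i'\bigl(q_i(v)\bigr)$ can give $\bar\phi_i(0)=\bar R_i'(1^-)>0$ (e.g.\ a point mass, or a uniform distribution on $[5,6]$), so a naively extended $M_{\mathbf{D}}$ would allocate to a zero-bid for free. You should either adopt the explicit convention that Myerson's auction never allocates to bids below the support, or route through the device in the paper's proof of \Cref{lem:truncate_bottom}: build an auxiliary mechanism that, on a zero bid, internally resamples a value from $D_i$, runs $M_{\mathbf{D}}$, and forgives the payment of a winner whose true bid was $0$; conditioning on the set $S$ of undiluted bidders then makes $S$ independent of the resampled profile and gives $\rev(\cdot,\hat{\mathbf D})\ge(1-\epsilon)\opt(\mathbf D)$ in one line, without the per-bidder $W_S$ bookkeeping. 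Second, the reduction $q^{D_i}(v)\ge \Pr_{D_i}[v'=\bar v_i]$ for all $v\in(0,\bar v_i)$ — which is what lets the hypothesis on the top point mass cover the entire quantile range you need — is exactly the observation the paper also relies on for \Cref{lem:density_approximation}; it is worth stating explicitly since it is where both hypotheses of the lemma get used.
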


\begin{proof}
	%
%
	Consider the optimal mechanism $M_{\mathbf{D}}$ w.r.t.\ $\mathbf{D}$.
	We can write the expected revenue by enumerating all possible configuration of the highest bidder and the second highest bidder in terms of ironed virtual values.
	\begin{align*}
	\opt(\mathbf{D}) =& \sum_i \sum_{j \ne i} \iint_{\bar{\phi}_i(v_i) > \bar{\phi}_j(v_j)} \bar{\phi}_i^{-1} \big( \bar{\phi}_j(v_j) \big) \cdot \prod_{k \ne i,j} \Pr_{v_k \sim D_k} \left[ \bar{\phi}_k(v_k) < \bar{\phi}_j(v_j) \right] dD_idD_j
	~.
	\end{align*}
	
	Next, suppose we run the same mechanism $M_{\mathbf{D}}$ on the auxiliary distribution $\tilde{\mathbf{D}}$.
	Its expected revenue can be written in a similar fashion as:
	\begin{align*}
		\rev(M_{\mathbf{D}}, \mathbf{\tilde{D}})
		 = &
		\sum_i \sum_{j \ne i} \iint_{\bar{\phi}_i(v_i) > \bar{\phi}_j(v_j)} \bar{\phi}_i^{-1} \big( \bar{\phi}_j(v_j) \big) \cdot \prod_{k \ne i,j} \Pr_{v_k \sim \Tilde{D}_k} \left[ \bar{\phi}_k(v_k) < \bar{\phi}_j(v_j) \right] d \tilde{D}_i d\tilde{D}_j \\
		\ge &  
		\sum_i \sum_{j \ne i} \iint_{\bar{\phi}_i(v_i) > \bar{\phi}_j(v_j)} \bar{\phi}_i^{-1} \big( \bar{\phi}_j(v_j) \big) \cdot \prod_{k \ne i,j} \Pr_{v_k \sim D_k} \left[ \bar{\phi}_k(v_k) < \bar{\phi}_j(v_j) \right] d \tilde{D}_i d\tilde{D}_j \\
		\ge &
		\sum_i \sum_{j \ne i} \iint_{\bar{\phi}_i(v_i) > \bar{\phi}_j(v_j)} \bar{\phi}_i^{-1} \big( \bar{\phi}_j(v_j) \big) \cdot \prod_{k \ne i,j} \Pr_{v_k \sim D_k} \left[ \bar{\phi}_k(v_k) < \bar{\phi}_j(v_j) \right] (1 - \epsilon) d D_i d D_j \\[1ex]
		= & (1-\epsilon) \opt(\mathbf{D})
		~.
	\end{align*}
	
	Here, the first inequality is due to stochastic dominance, i.e., $D_k \succeq \tilde{D}_k$;
	the second inequality follows by \Cref{lem:density_approximation}.
	The lemma then follows by $\opt(\mathbf{\tilde{D}}) \ge \rev(M_{\mathbf{D}}, \mathbf{\tilde{D}})$.
\end{proof}

To simplify notations in the rest of the section, we define a function $\truncatetop_{\epsilon}$ that truncate the top $\epsilon$-fraction of the distribution.
For every bidder $i \in [n]$, let $\truncatetop_{\epsilon}(D_i)$ be the distribution obtained by truncating top $\epsilon$ fraction of values to the value whose original quantile is equal to $\epsilon$.
In other words, the quantiles of the truncated distribution is defined as:
\[
q^{\truncatetop_{\epsilon}(D_i)}(v) \defeq 
\begin{cases}
q^{D_i}(v) & \text{if $\epsilon\le q^{D_i}(v)\le 1$} \\
0 & \text{if $0\le q^{D_i}(v)<\epsilon$}
\end{cases}
\]

Further, for any product value distribution, define:
\[
\truncatetop_{\epsilon}(\mathbf{D}) = \truncatetop_{\epsilon}(D_1) \times \truncatetop_{\epsilon}(D_2) \times \dots \times \truncatetop_{\epsilon}(D_n)
~.
\]

\subsubsection*{Regular Distributions}

We first explain how to prove the optimal $\tilde{O}(n \epsilon^{-3})$ sample complexity upper bound without using information theory.
The main technical lemma is the following.

\begin{lemma}\label{lem:regular_truncate}
	For any regular product distribution $\mathbf{D}$, we have:
	\[
	\opt \big( \truncatetop_{\frac{\epsilon}{2n}}(\mathbf{D}) \big) \ge (1-\epsilon) \cdot \opt \big( \mathbf{D} \big)
	~.
	\]
\end{lemma}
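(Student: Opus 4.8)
Write $\eta=\tfrac{\epsilon}{2n}$ and let $R_i$ denote the revenue curve of $D_i$ in quantile space, so that $\opt(\mathbf D)=\E_{\mathbf q}\big[\max_i R_i'(q_i)^+\big]$ with $\mathbf q$ uniform on $[0,1]^n$ (identifying a value with its quantile; this uses regularity, i.e.\ no ironing). The first step is to observe that $\truncatetop_\eta(D_i)$ is again regular with a very explicit shape: its revenue curve coincides with $R_i$ on $[\eta,1]$ and equals the chord through the origin and $\big(\eta,R_i(\eta)\big)$ on $[0,\eta]$, and concavity of $R_i$ makes the glued curve concave. Hence its (ironed) virtual value as a function of the quantile is $\psi_i(q)=R_i'(q)$ for $q>\eta$ and the constant $v_i(\eta)=R_i(\eta)/\eta$ for $q\le\eta$, so $\opt(\truncatetop_\eta(\mathbf D))=\E_{\mathbf q}\big[\max_i\psi_i(q_i)^+\big]$. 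The elementary but crucial fact is the \emph{mass-preservation identity} $\int_0^{\eta}\psi_i(q)\,dq=\eta\,v_i(\eta)=R_i(\eta)=\int_0^{\eta}R_i'(q)\,dq$: replacing the tail by a point mass at its $\eta$-quantile value does not change the area under the virtual-value curve on $[0,\eta]$. Thus the claim is equivalent to the revenue-loss bound $\E_{\mathbf q}\big[\max_i R_i'(q_i)^+-\max_i\psi_i(q_i)^+\big]\le\epsilon\,\opt(\mathbf D)$ (after rescaling $\epsilon$ by a constant at the end).

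\textbf{The accounting.} I would split the bidders into \emph{heavy} ones, whose monopoly quantile is below $\eta$ (equivalently $R_i'$ can go negative on $[0,\eta]$, meaning the optimal single-bidder sale probability is tiny), and \emph{light} ones. For the heavy bidders, whose contribution really does live in the extreme tail, I would not analyze $\psi_i$ directly but instead dominate $\truncatetop_\eta(D_i)$ by a value-truncation at a threshold of order $\epsilon^{-1}\opt(\mathbf D)$ and invoke the regular tail bound of \citet{DevanurHP/2016/STOC} (\Cref{lem:regular_tail_bound}) together with weak revenue monotonicity (\Cref{lem:weak_revenue_monotonicity}), absorbing an $O(\epsilon)\,\opt(\mathbf D)$ loss there. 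For the light bidders I would peel the truncation off one coordinate at a time (a hybrid between $\truncatetop_\eta(\mathbf D)$ and $\mathbf D$): fixing the other coordinates' quantiles and letting $M\ge 0$ be the maximum of their (truncated) virtual values, the incremental loss from coordinate $i$ is $\int_0^{\eta}\big(\max(R_i'(q),M)-\max(v_i(\eta),M)\big)\,dq$, which by the mass-preservation identity is \emph{exactly zero} whenever $M\le R_i'(\eta)$, and is at most $\eta\,\big(v_i(\eta)-R_i'(\eta)\big)$ in general.

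\textbf{The main obstacle.} The delicate point, and where a naive argument fails, is summing these incremental losses over the $n$ coordinates without an $n$-fold blow-up: a crude union bound of the per-coordinate quantities $\eta\big(v_i(\eta)-R_i'(\eta)\big)$ only yields an $O(\sqrt{\epsilon})\,\opt(\mathbf D)$ bound, which is useless for small $\epsilon$. One must simultaneously exploit (i) that the incremental loss \emph{vanishes} unless the competing virtual value $M$ already exceeds $R_i'(\eta)\ge v_i(\eta)$, so such events are rare, and (ii) that on these events the overall virtual welfare $\max_i R_i'(q_i)^+$ is itself at least of order $v_i(\eta)$, so the loss can be charged against $\opt(\mathbf D)$ — with the frequency of ``the virtual welfare is of order $\epsilon^{-1}\opt(\mathbf D)$'' again controlled by \Cref{lem:regular_tail_bound}. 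Carrying this out carefully, the total loss over all coordinates comes out to $O(\epsilon)\,\opt(\mathbf D)$; rescaling $\epsilon$ by the appropriate constant gives the $(1-\epsilon)$ factor. (The resulting $\truncatetop_{\epsilon/2n}(\mathbf D)$ has a point mass of probability $\eta=\Theta(\epsilon/n)=\tilde\Omega(1/m)$ at the top of each coordinate, which is precisely what is needed to then apply \Cref{lem:density_approximation}/\Cref{lem:quantile_truncate_opt_bound} to its doubly shaded version.)
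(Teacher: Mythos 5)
Your proposal starts from a correct observation — the ``mass-preservation identity'' $\int_0^{\eta}\psi_i=R_i(\eta)=\int_0^{\eta}R_i'$ — and correctly pinpoints the central difficulty: bounding and then summing the per-coordinate losses without an $n$-fold blow-up. But the resolution you sketch for that difficulty has a genuine gap. You write that the incremental loss vanishes unless $M$ exceeds ``$R_i'(\eta)\ge v_i(\eta)$'' — the inequality is backwards ($R_i'(\eta)\le v_i(\eta)$ always, with equality only in degenerate cases). More importantly, item (ii) does not hold: the event $M>R_i'(\eta)$ only certifies that some other bidder's virtual value exceeds $R_i'(\eta)$, which can be far smaller than $v_i(\eta)$, so the claim that the overall virtual welfare is then of order $v_i(\eta)$ is unjustified, and the proposed charging against $\opt(\mathbf D)$ does not go through as written. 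The heavy-bidder step is also under-specified, since \Cref{lem:regular_tail_bound} performs a simultaneous value-truncation over all coordinates and the reduction from the per-bidder quantile truncation $\truncatetop_\eta$ would need to be spelled out.

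The paper's proof avoids your obstacle entirely: it neither splits into heavy/light bidders, nor uses \Cref{lem:regular_tail_bound}, nor peels coordinates one at a time in the way you do. It decomposes $\max_i\phi_i(q_i)^+\le\max_{i\in H(\mathbf q)}\phi_i(q_i)^++\max_{i\in L(\mathbf q)}\phi_i(q_i)^+$ where $H(\mathbf q),L(\mathbf q)$ are the coordinates with quantile $\le\eta$, $>\eta$. The $L$-term is handled by a hybrid enlarging $[\eta,1]^n$ to $[0,1]^n$ one coordinate at a time, each step losing only a $1-\eta$ factor (total $(1-\eta)^n\ge 1-\epsilon/2$). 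The $H$-term is bounded above by $\sum_i\int_0^{\eta}\phi_i(q_i)^+dq_i=\sum_i\max_{q\le\eta}R^{D_i}(q)$, while $\int\max_i\hat\phi_i$ over $[0,1]^n\setminus[\eta,1]^n$ is bounded below by the exactly-one-coordinate-in-top-band events, which contribute $\sum_i\eta(1-\eta)^{n-1}v_i\approx\sum_i R^{D_i}(\eta)$. The crux is then the one-dimensional inequality $R^{D_i}(\eta)\ge(1-\eta)\max_{q\le\eta}R^{D_i}(q)$, which follows directly from concavity of the revenue curve (the defining feature of regularity) and handles ``heavy'' and ``light'' bidders uniformly. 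This concavity step is the ingredient your proposal is missing.
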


The proof is similar to that of Lemma 4.6 in \citet{DevanurHP/2016/STOC}.
The main difference is that \citet{DevanurHP/2016/STOC} truncate in the value space while here we truncate in the quantile space.

\begin{proof}
	For simplicity of notations, we extend the definition of virtual values to the quantile space and abuse notation in letting $\phi_i(q_i)$ and $\hat{\phi}_i(q_i)$ denote the virtual value of the $i$-th bidder when her value has quantile $q_i$ in $D_i$ and $\truncatetop_{\frac{\epsilon}{2n}}(D_i)$ respectively.
	Let the truncation point with quantile $\frac{\epsilon}{2n}$ in distribution $D_i$ be $\bar{v}_i$. 
	Given any $n$-dimensional quantile vector $\mathbf{q}$, let $H(\mathbf{q})=\{i\in[n],q_i\le\frac{\epsilon}{2n}\}$, and $L(\mathbf{q})=\{i\in[n],q_i>\frac{\epsilon}{2n}\}$. 
	
	First, consider a quantile vector $\mathbf{q}\in [0,1]^n \setminus [\frac{\epsilon}{2n},1]^n$. 
	We claim that:
	\begin{equation}
	\label{eqn:quantile_truncate_regular_1}
	\int_{[0,1]^n \setminus [\frac{\epsilon}{2n},1]^n} \max_i \hat{\phi}_i(q_i)d\mathbf{q} \ge ( 1 - \epsilon ) \int_{[0,1]^n}\max_{i\in H(\mathbf{q})}\phi_i(q_i)d \mathbf{q} 
	\end{equation}
	
	On one hand, left-hand-side of \eqref{eqn:quantile_truncate_regular_1} is lower bounded by:
	\begin{align*}
	\int_{[0,1]^n \setminus [\frac{\epsilon}{2n},1]^n} \max_i\bar{\phi}_i(q_i) d\mathbf{q}
	& \ge \sum_j \int_{q_j \in [0, \frac{\epsilon}{2n}], \mathbf{q}_{-j} \in [\frac{\epsilon}{2n},1]^{n-1}} \max_i\bar{\phi}_i(q_i) d\mathbf{q} \\
	& \ge \sum_j \int_{q_j \in [0, \frac{\epsilon}{2n}], \mathbf{q}_{-j} \in [\frac{\epsilon}{2n},1]^{n-1}} v_j d\mathbf{q} & \text{($\hat{\phi}_j(q_j) = v_j$)} \\
	& = \frac{\epsilon}{2n} \left( 1 - \frac{\epsilon}{2n} \right)^{n-1} \sum_j v_j \\
	& \ge \frac{\epsilon}{2n} \left( 1 - \frac{\epsilon}{2} \right) \sum_j v_j 
	~.
	\end{align*}
	
	On the other hand, the right-hand-side of \eqref{eqn:quantile_truncate_regular_1}, omitting the $1 - \frac{\epsilon}{2}$ factor, is upper bounded by:
	\begin{align*}
	\int_{[0,1]^n} \max_{i\in H(\mathbf{q})}\phi_i(q_i)d \mathbf{q}	
	& \le 
	\int_{[0,1]^n} \sum_{i\in H(\mathbf{q})} \phi_i(q_i)d \mathbf{q} \\
	& = 
	\sum_{i \in [n]} \int_{[0,\frac{\epsilon}{n}]} \max \{\phi_i(q_i), 0\} dq_i
	~.
	\end{align*}

	Thus, it suffices to show that $\forall i\in[n]$, we have:
	\[
	\frac{\epsilon}{2n} v_i \ge \left( 1 - \frac{\epsilon}{2} \right) \int_{[0,\frac{\epsilon}{2n}]} \max\{\phi_i(q_i),0\}dq_i
	~.
	\]
	
	The left-hand-side, $\frac{\epsilon}{2n} v_i$, is exactly the expected revenue of price $ v_i$, which has quantile $\frac{\epsilon}{2n}$.
	That is, we have:
	\[
	\frac{\epsilon}{2n} v_i = R^{D_i} \left( \frac{\epsilon}{2n} \right)
	~.
	\]
	
	The right-hand-side without the $1 - \frac{\epsilon}{2}$ factor, $\int_{[0,\frac{\epsilon}{2n}]} \max\{\phi_i(q_i),0\}dq_i$, is the maximum expected revenue subject to having a sale probability at most $\frac{\epsilon}{2n}$.
	That is, we have:
	\[
	\int_{[0,\frac{\epsilon}{2n}]} \max\{\phi_i(q_i),0\} dq_i = \max_{0 \le q \le \frac{\epsilon}{2n}} R^{D_i}(q)
	~.
	\]
		 
	Suppose $q_i^*$ is the quantile of the monopoly price if there is a single bidder with value distribution $D_i$.
	Then, if $q_i^* \ge \frac{\epsilon}{2n}$, by concavity of revenue curve of regular distributions, it must be increasing from $0$ to $q_i^*$.
	In particular, we have $\max_{0 \le q \le \frac{\epsilon}{2n}} R^{D_i}(q) = R^{D_i}(\frac{\epsilon}{2n})$.
	
	On the other hand, suppose $q_i^* \le \frac{\epsilon}{2n}$.
	Then, by the concavity of revenue curve of regular distributions, we have:
	\begin{align*}
	R^{D_i} \left( \frac{\epsilon}{2n} \right) & \ge \frac{1 - \frac{\epsilon}{2n}}{1 - q_i^*} R^{D_i}(q_i^*) + \frac{\frac{\epsilon}{2n} - q^*}{1 - q_i^*} R^{D_i}(1) \\
	& \ge \left(1 - \frac{\epsilon}{2n}\right) R^{D_i}(q_i^*) \\
	& = \left(1 - \frac{\epsilon}{2n}\right)\max_{0 \le q \le \frac{\epsilon}{2n}} R^{D_i}(q) 
	~.
	\end{align*}
	
	Putting together proves \eqref{eqn:quantile_truncate_regular_1}.
	
	\bigskip

	Next, we will show that:
	\begin{equation}\label{eqn:quantile_truncate_regular_2}
	\int_{[\frac{\epsilon}{2n},1]^n}\max_i \hat{\phi}_i(q_i)d\mathbf{q}
	\ge 
	\left( 1 - \epsilon \right) \int_{[0,1]^n}\max_{i\in L(\mathbf{q})}\phi_i(q_i)d \mathbf{q}
	\end{equation}
	
	The left-hand-side of \eqref{eqn:quantile_truncate_regular_2} can be rewritten as follows because $D_i$ and $\truncatetop_{\frac{\epsilon}{2n}}(D_i)$ are identical, and $L(\mathbf{q}) = [n]$, for quantiles $\mathbf{q}$ greater than $\frac{\epsilon}{2n}$:
	\[
	\int_{[\frac{\epsilon}{2n},1]^n} \max_{i} \phi_i(q_i) d\mathbf{q}
	=
	\int_{[\frac{\epsilon}{2n},1]^n} \max_{i \in L(\mathbf{q})} \phi_i(q_i) d\mathbf{q}
	~.
	\]
	 
	Further, note that for any $i \in [n]$, any $\mathbf{q} = (q_i,\mathbf{q}_{-i})$, and any $\mathbf{q'} = (q_i',\mathbf{q}_{-i})$ where $q_i < \frac{\epsilon}{2n} \le q_i'$, we have $L(\mathbf{q})\subseteq L(\mathbf{q'})$. 
	Thus, $\max_{i \in L(\mathbf{q})}\phi_i(q_i)\le \max_{i\in L(\mathbf{q'})}\phi_i(q_i)$. 
	Therefore, Eqn.~\eqref{eqn:quantile_truncate_regular_2} follows by a hybrid argument as follows:
	\begin{align*}
		\int_{[\frac{\epsilon}{2n},1]^n}\max_{i\in L(\mathbf{q})}\phi_i(q_i)d\mathbf{q}
		& \ge \left(1-\frac{\epsilon}{2n}\right) \int_{[0,1]\times[\frac{\epsilon}{2n},1]^{n-1}}\max_{i\in L(\mathbf{q})}\phi_i(q_i) d\mathbf{q} \\
		& \ge \left(1-\frac{\epsilon}{2n}\right)^2 \int_{[0,1]^2\times[\frac{\epsilon}{2n},1]^{n-2}}\max_{i\in L(\mathbf{q})}\phi_i(q_i) d\mathbf{q} \\[1ex]
		& \ge \dots \\[1ex]
		& \ge \left( 1-\frac{\epsilon}{2n} \right)^n \int_{[0,1]^n}\max_{i\in L(\mathbf{q})}\phi_i(q_i) d\mathbf{q} \\
		& \ge \left( 1 - \frac{\epsilon}{2} \right)\int_{[0,1]^n}\max_{i\in L(\mathbf{q})}\phi_i(q_i) d\mathbf{q}
		~.
	\end{align*}
	
	Summing \eqref{eqn:quantile_truncate_regular_1} and \eqref{eqn:quantile_truncate_regular_2} give:
	\begin{align*}
	\int_{[0,1]^n} \max_i \hat{\phi}_i(q_i)d\mathbf{q} 
	& \ge 
	( 1 - \epsilon ) \left( \int_{[0,1]^n} \max_{i\in H(\mathbf{q})} \phi_i(q_i) + \max_{i \in L(\mathbf{q}))} \phi_i(q_i) \right) d \mathbf{q} \\
	& \ge 
	( 1 - \epsilon ) \int_{[0,1]^n} \max_{i} \phi_i(q_i) d \mathbf{q} 
	~.
	\end{align*}
	
	The left-hand-side and the right-hand-side are precisely the optimal revenue of distributions $\truncatetop_{\frac{\epsilon}{2n}}(\mathbf{D}) $ and $\mathbf{D}$ respectively.
\end{proof}

	We now finish the analysis for regular distributions with the following sequence of inequalities:
	\begin{align*}
        \rev(M_{\Tilde{\mathbf{E}}},\mathbf{D}) & \ge \opt(\mathbf{\tilde{D}}) && \text{(\Cref{lem:tilde_E_and_tilde_D})} \\[1.5ex]
        & = \opt \left( \doubleshading(\mathbf{D}) \right) \\[0.5ex]
	& \ge \opt \left( \doubleshading \circ \truncatetop_{\frac{\epsilon}{2n}} (\mathbf{D}) \right) && \text{(weak revenue monotonicity, i.e., \Cref{lem:weak_revenue_monotonicity})} \\ 
	& \ge (1 - \epsilon) \opt \left( \truncatetop_{\frac{\epsilon}{2n}} (\mathbf{D}) \right) && \text{(\Cref{lem:quantile_truncate_opt_bound}, and $m \ge \tilde{O}(n \epsilon^{-3})$)} \\
	& \ge (1 - 2 \epsilon) \opt(\mathbf{D}) && \text{(\Cref{lem:regular_truncate})}
	\end{align*}

\subsubsection*{MHR Distributions, Weaker Upper Bound}

Note that the upper bound of $\tilde{O}(n \epsilon^{-3})$ also holds for MHR distributions since all MHR distributions are regular.

\subsubsection*{[1, H]-Bounded Support Distributions, Weaker Upper Bound}

Rather than truncating at quantile $\frac{\epsilon}{2n}$, we will pick the truncation point to be $\frac{\epsilon}{Hn}$ to bound the contribution by large values up to $H$.
We first show the following lemma.

\begin{lemma}\label{lem:one_H_truncate}
	For any $[1, H]$-bounded support distribution $\mathbf{D}$, we have:
	\[
	\opt \left( \truncatetop_{\frac{\epsilon}{Hn}}(\mathbf{D}) \right)\ge (1-\epsilon)\opt(D)
	~.
	\]
\end{lemma}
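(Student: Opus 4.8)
The plan is a direct mechanism-comparison argument that sidesteps ironing entirely. Set $\epsilon_0 = \frac{\epsilon}{Hn}$ for the truncation quantile, and let $M_{\mathbf{D}}$ be Myerson's optimal auction for the original distribution $\mathbf{D}$. The key observation is that $M_{\mathbf{D}}$ is DSIC and individually rational as a property of the mechanism alone, so it remains a feasible BIC, IR mechanism for the truncated distribution $\truncatetop_{\epsilon_0}(\mathbf{D})$; hence
\[
\opt\big(\truncatetop_{\epsilon_0}(\mathbf{D})\big) \;\ge\; \rev\big(M_{\mathbf{D}}, \truncatetop_{\epsilon_0}(\mathbf{D})\big) .
\]
So it suffices to show $\rev(M_{\mathbf{D}}, \truncatetop_{\epsilon_0}(\mathbf{D})) \ge \rev(M_{\mathbf{D}}, \mathbf{D}) - \epsilon = \opt(\mathbf{D}) - \epsilon$, and then convert the additive loss into a multiplicative one using $\opt(\mathbf{D}) \ge 1$, which holds because offering the item to bidder $1$ at price $1$ is a DSIC mechanism with revenue $1$ (all values lie in $[1,H]$). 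This yields $\opt(\truncatetop_{\epsilon_0}(\mathbf{D})) \ge \opt(\mathbf{D}) - \epsilon \ge (1-\epsilon)\opt(\mathbf{D})$.

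To bound the revenue loss, I would couple $\mathbf{v}\sim\mathbf{D}$ with $\mathbf{v}'\sim\truncatetop_{\epsilon_0}(\mathbf{D})$ so that $\mathbf{v}' = \mathbf{v}$ on the event that no coordinate of $\mathbf{v}$ falls in its truncated region (i.e.\ $q^{D_i}(v_i) \ge \epsilon_0$ for all $i$); this is possible because $D_i$ and $\truncatetop_{\epsilon_0}(D_i)$ agree outside the top $\epsilon_0$ mass. Then
\[
\rev(M_{\mathbf{D}}, \mathbf{D}) - \rev\big(M_{\mathbf{D}}, \truncatetop_{\epsilon_0}(\mathbf{D})\big) = \E_{\mathbf{v}\sim\mathbf{D}}\big[\, P(\mathbf{v}) - P(\mathbf{v}') \,\big],
\]
where $P(\cdot)$ is the total payment collected by $M_{\mathbf{D}}$. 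On the no-truncation event the integrand vanishes; otherwise $P(\mathbf{v}) - P(\mathbf{v}') \le P(\mathbf{v}) \le H$, since in $M_{\mathbf{D}}$ the winner pays at most her own bid ($\le H$) and all payments are nonnegative. Since coordinate $i$ is truncated only when $q^{D_i}(v_i) < \epsilon_0$, an event of probability at most $\epsilon_0$, a union bound bounds the probability of any truncation by $n\epsilon_0$, and the revenue loss is at most $H\cdot n\epsilon_0 = \epsilon$.

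I do not expect a serious obstacle here; the proof is routine once set up. The two points needing mild care are (i) the step $\rev(M_{\mathbf{D}}, \truncatetop_{\epsilon_0}(\mathbf{D})) \le \opt(\truncatetop_{\epsilon_0}(\mathbf{D}))$, which relies on DSIC/IR being intrinsic to $M_{\mathbf{D}}$, and (ii) the handling of point masses sitting exactly at the truncation quantile, which is harmless because only the one-sided estimates $\Pr_{D_i}[q^{D_i}(v_i) < \epsilon_0] \le \epsilon_0$ and $0 \le P(\cdot) \le H$ are used. An alternative route, paralleling the proof of \Cref{lem:regular_truncate}, would write $\opt(\mathbf{D}) = \int_{[0,1]^n}\max_i\phi_i(q_i)\,d\mathbf{q}$ and split the integral into the contribution from coordinates with quantile below $\epsilon_0$ — bounded by $\sum_i \epsilon_0\cdot H = \epsilon \le \epsilon\,\opt(\mathbf{D})$ using $\phi_i\le H$ — and the rest, handled by a hybrid argument as in \Cref{lem:regular_truncate}; but that approach must track how truncation interacts with ironing, which the mechanism-comparison proof avoids.
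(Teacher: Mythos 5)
Your proposal is correct and takes essentially the same route as the paper: both bound $\opt(\mathbf{D}) - \opt\bigl(\truncatetop_{\epsilon/Hn}(\mathbf{D})\bigr)$ by $H$ times the probability that some coordinate lands in a truncated region (a union bound giving $\epsilon$), then convert this additive loss to a $(1-\epsilon)$ multiplicative guarantee using $\opt(\mathbf{D})\ge 1$. You simply make explicit the coupling and the ``run $M_{\mathbf{D}}$ on the truncated distribution'' step that the paper's one-line ``Since the values are upper bounded by $H$'' leaves implicit.
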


\begin{proof}
	Since the values are upper bounded by $H$, we have:
	\begin{align*}
	\opt(\mathbf{D}) - \opt \left( \truncatetop_{\frac{\epsilon}{Hn}}(\mathbf{D}) \right) & \le H \cdot \Pr \left[ \exists i,\ q(v_i)\le \frac{\epsilon}{Hn}\right] \\
	& \le H \cdot \sum_{i = 1}^n\Pr \left[ q(v_i)\le \frac{\epsilon}{Hn}\right] \\
	& = H \cdot \sum_{i = 1}^n \frac{\epsilon}{Hn} \\
	& = \epsilon 
	~.
	\end{align*}
	
	The lemma now follows by $\opt(\mathbf{D}) \ge 1$ because the values are lower bounded by $1$.
\end{proof}

The rest of the analysis is similar to the regular case.
If the number of samples $m$ is at least $\tilde{O}(n H \epsilon^{-3})$, we have the following sequence of inequalities:
\begin{align*}
    \rev(M_{\Tilde{\mathbf{E}}},\mathbf{D}) & \ge \opt(\mathbf{\tilde{D}}) && \text{(\Cref{lem:tilde_E_and_tilde_D})} \\[1.5ex]
    & = \opt \left( \doubleshading(\mathbf{D}) \right) \\[0.5ex]
& \ge \opt \left( \doubleshading \circ \truncatetop_{\frac{\epsilon}{Hn}} (\mathbf{D}) \right) && \text{(weak revenue monotonicity, i.e., \Cref{lem:weak_revenue_monotonicity})} \\ 
& \ge (1 - \epsilon) \opt \left( \truncatetop_{\frac{\epsilon}{Hn}} (\mathbf{D}) \right) && \text{(\Cref{lem:quantile_truncate_opt_bound}, and $m \ge \tilde{O}(n H \epsilon^{-3})$)} \\
& \ge (1 - 2 \epsilon) \opt(\mathbf{D}) && \text{(\Cref{lem:one_H_truncate})}
\end{align*}
	

\subsubsection*{$[0, 1]$-Bounded Support Distributions, Weaker Upper Bound}

Similar to the previous cases, we first establish a lemma that bound the revenue loss due to the truncation of small quantiles.

\begin{lemma}\label{lem:zero_one_truncate}
	For any $[0, 1]$-bounded support distribution $\mathbf{D}$, we have:
	\[
	\opt \left( \truncatetop_{\frac{\epsilon}{n}}(\mathbf{D}) \right) \ge \opt(D) - \epsilon
	~.
	\]
\end{lemma}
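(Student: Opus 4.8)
The plan is to mirror the proof of \Cref{lem:one_H_truncate}, with the only change being that since the values now lie in $[0,1]$ rather than $[1,H]$ we obtain an additive $\epsilon$ loss rather than a multiplicative $(1-\epsilon)$ one, so no final ``$\opt(\mathbf{D})\ge 1$'' step is needed. Concretely, I would derive the chain
\[
\opt(\mathbf{D}) - \opt\big(\truncatetop_{\frac{\epsilon}{n}}(\mathbf{D})\big) \;\le\; \Pr\big[\exists i : q^{D_i}(v_i) < \tfrac{\epsilon}{n}\big] \;\le\; \sum_{i=1}^n \Pr\big[q^{D_i}(v_i) < \tfrac{\epsilon}{n}\big] \;\le\; \sum_{i=1}^n \tfrac{\epsilon}{n} \;=\; \epsilon ,
\]
which is the $H=1$ analogue of the display in \Cref{lem:one_H_truncate}; the second step is a union bound and the third bounds the probability mass of the top $\tfrac{\epsilon}{n}$ quantiles of each $D_i$ by $\tfrac{\epsilon}{n}$. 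Rearranging gives exactly $\opt(\truncatetop_{\epsilon/n}(\mathbf{D})) \ge \opt(\mathbf{D}) - \epsilon$, as claimed.

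The one step that deserves a short argument is the first inequality, i.e.\ that capping each coordinate at its $\tfrac{\epsilon}{n}$-quantile loses at most $\epsilon$ in optimal revenue. I would show it by a coupling. Let $\bar v_i$ be bidder $i$'s truncation point, draw $\mathbf{v}\sim\mathbf{D}$ and set $v_i' = \min\{v_i,\bar v_i\}$, so that $\mathbf{v}'\sim\truncatetop_{\epsilon/n}(\mathbf{D})$ and $\Pr[v_i'\ne v_i]=q^{D_i}(\bar v_i)\le \tfrac{\epsilon}{n}$. Run Myerson's optimal auction $M_{\mathbf{D}}$ for the original distribution on $\mathbf{v}'$; since $M_{\mathbf{D}}$ is DSIC it is a legitimate mechanism for $\truncatetop_{\epsilon/n}(\mathbf{D})$, so $\opt(\truncatetop_{\epsilon/n}(\mathbf{D})) \ge \rev(M_{\mathbf{D}},\truncatetop_{\epsilon/n}(\mathbf{D}))$. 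On the event that $v_i'=v_i$ for all $i$ — i.e.\ no bidder lies in a truncated region — the two profiles coincide and $M_{\mathbf{D}}$ collects identical payments; on the complementary event, which by a union bound has probability at most $\epsilon$, the payment under $\mathbf{v}$ is at most $1$ (individual rationality together with $v_i\le 1$) and the payment under $\mathbf{v}'$ is nonnegative. Hence $\rev(M_{\mathbf{D}},\mathbf{D}) - \rev(M_{\mathbf{D}},\truncatetop_{\epsilon/n}(\mathbf{D})) \le \epsilon$, and since $\rev(M_{\mathbf{D}},\mathbf{D})=\opt(\mathbf{D})$ this establishes the first inequality above.

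I do not anticipate a genuine obstacle: this is a routine truncation estimate and the boundedness of the support does all the work, exactly as in the $[1,H]$ case. The only mild care required is in the coupling — keeping $M_{\mathbf{D}}$ fixed as a DSIC mechanism, observing that the only place its behavior on $\mathbf{v}$ and $\mathbf{v}'$ can differ is the probability-$\le\epsilon$ event that some bidder's value is truncated, and using value-boundedness to control the loss on that event. The rest is bookkeeping parallel to \Cref{lem:one_H_truncate}, after which the displayed chain feeds into the final sequence of inequalities for the $[0,1]$-bounded case (together with \Cref{lem:tilde_E_and_tilde_D}, \Cref{lem:weak_revenue_monotonicity}, and \Cref{lem:quantile_truncate_opt_bound}) in the usual way.
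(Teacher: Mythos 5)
Your proof is correct and follows the same three-inequality chain the paper uses; the only difference is that the paper states the first inequality (truncation loses at most the probability of the truncated event, since values are bounded by $1$) without proof, whereas you spell out the coupling/DSIC argument that justifies it.
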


The proof is almost verbatim to that of \Cref{lem:one_H_truncate}.
We include it for completeness.

\begin{proof}
	Since the values are upper bounded by $1$, we have:
	\begin{align*}
	\opt(\mathbf{D}) - \opt \left( \truncatetop_{\frac{\epsilon}{n}}(\mathbf{D}) \right) & \le \Pr\left[ \exists i, q^{D_i}(v_i)\le \frac{\epsilon}{n}\right] \\
	& \le \sum_{i = 1}^n \Pr \left[ q^{D_i}(v_i)\le \frac{\epsilon}{n} \right]  \\
	& = \sum_{i = 1}^n \frac{\epsilon}{n} = \epsilon 
	~.
	\end{align*}
\end{proof}

The rest of the analysis is also almost verbatim to the previous case, changing the parameters appropriately and replacing \Cref{lem:one_H_truncate} with \Cref{lem:zero_one_truncate}.
Concretely, if the number of samples $m$ is at least $\tilde{O}(n \epsilon^{-3})$, we have the following sequence of inequalities:
\begin{align*}
    \rev(M_{\Tilde{\mathbf{E}}},\mathbf{D}) & \ge \opt(\mathbf{\tilde{D}}) && \text{(\Cref{lem:tilde_E_and_tilde_D})} \\[1.5ex]
    & = \opt \left( \doubleshading(\mathbf{D}) \right) \\[0.5ex]
& \ge \opt \left( \doubleshading \circ \truncatetop_{\frac{\epsilon}{n}} (\mathbf{D}) \right) && \text{(weak revenue monotonicity, i.e., \Cref{lem:weak_revenue_monotonicity})} \\ 
& \ge (1 - \epsilon) \opt \left( \truncatetop_{\frac{\epsilon}{n}} (\mathbf{D}) \right) && \text{(\Cref{lem:quantile_truncate_opt_bound}, and $m \ge \tilde{O}(n \epsilon^{-3})$)} \\
& \ge (1 - 2 \epsilon) \opt(\mathbf{D}) && \text{(\Cref{lem:zero_one_truncate})}
\end{align*}

	\section{Optimality in the Single-bidder Case}
\label{app:single_ub}

Here we present a brief discussion on the optimality of the dominated empirical algorithm proposed in this paper in the special case with only one bidder.
In this case, any truthful mechanism is effectively offering a take-it-or-leave-it price.
All four families of distributions have been studied in this special case and the tight sample complexity is known for all of them up to a logarithmic factor.
We summarize the tight sample complexity from previous works in \Cref{tab:single-bidder}.

\begin{table}[h]
	\centering
	\begin{tabular}{|c|c|}
		\hline
		Setting & Sample Complexity \\
		\hline
		Regular & $\tilde{\Theta}(\epsilon^{-3})$ \cite{DhangwatnotaiRY/2015/GEB} \\
		\hline
		(Continuous) MHR & $\tilde{\Theta}(\epsilon^{-1.5})$ \cite{HuangMR/2015/EC} \\
		\hline
		$[1,H]$ & $\tilde{\Theta}(H \epsilon^{-2})$ \cite{HuangMR/2015/EC} \\
		\hline
		$[0,1]$ & $\tilde{\Theta}(\epsilon^{-2})$ \cite{HuangMR/2015/EC}\footnotemark[1] \\
		\hline 
		\multicolumn{2}{p{3.5in}}
		{\footnotesize\footnotemark[1] This bound is not explicitly stated in \cite{HuangMR/2015/EC} but follows straightforwardly from the analysis framework therein.}
	\end{tabular}
	\caption{Summary of Sample Complexity with a Single Bidder}
	\label{tab:single-bidder}
\end{table}

Let $n = 1$ in \Cref{tab:multi-bidder-results} compare with the bounds in \Cref{tab:single-bidder}.
Our analysis matches the tight sample complexity for $[1, H]$-bounded support distributions, $[0,1]$-bounded support distributions, and regular distributions.
For MHR distributions, our upper bound in \Cref{tab:multi-bidder-results} applies to both continuous and discrete MHR distributions while that by \citet{HuangMR/2015/EC} considers only continuous ones. 
Next, we will show in \Cref{app:discrete_MHR_lb} that our bound is in fact tight up to a logarithmic factor for discrete MHR distributions. 
Finally, we will present in \Cref{app:continuous_MHR_ub} a more specialized analysis for continuous MHR distributions, incorporating the techniques by \citet{HuangMR/2015/EC}, to show that our algorithm also achieves the optimal sample complexity in \Cref{tab:single-bidder} for continuous MHR distributions.

\subsection{Lower Bound for Discrete MHR Distributions}
\label{app:discrete_MHR_lb}

We will show the following sample complexity lower bound for discrete MHR distributions when there is a single bidder, using the framework by \citet{HuangMR/2015/EC}.

\begin{lemma}[\citet{HuangMR/2015/EC}, Theorem 4.2]
	\label{lem:single_bidder_lb}
	If two distributions $D_1$ and $D_2$ have disjoint $(1 - 3
	\epsilon)$-approximate price sets, and there is a pricing algorithm
	that is $(1 - \epsilon)$-approximate for both $D_1$ and $D_2$, then
	the algorithm uses at least $\Omega(\skl( D_1, D_2 )^{-1})$ samples.
\end{lemma}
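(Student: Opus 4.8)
The plan is to reduce the claimed sample complexity lower bound to the information-theoretic statement of \Cref{lem:kl_divergence} by turning a good pricing algorithm into a classifier that distinguishes $D_1$ from $D_2$. Write $S_j$ for the $(1-3\epsilon)$-approximate price set of $D_j$, i.e., the set of prices $p$ with $\rev(p, D_j) \ge (1 - 3\epsilon)\opt(D_j)$; by hypothesis $S_1 \cap S_2 = \emptyset$.

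First I would observe that an algorithm that is $(1-\epsilon)$-approximate \emph{in expectation} outputs a price in $S_j$ with constant probability whenever the samples come from $D_j$. Indeed, let $P$ denote the (random) price output by the algorithm, over the randomness of the $m$ samples and its internal coins, and set $Y = \opt(D_j) - \rev(P, D_j) \ge 0$. The approximation guarantee gives $\E[Y] \le \epsilon\,\opt(D_j)$, so Markov's inequality yields $\Pr[Y \ge 3\epsilon\,\opt(D_j)] \le \tfrac13$, i.e., $\Pr[P \in S_j] \ge \tfrac23$. This is precisely where the slack between the $(1-\epsilon)$ guarantee and the coarser $(1-3\epsilon)$ price set is consumed.

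Next I would build the classifier: given $m$ i.i.d.\ samples from an unknown $D \in \{D_1, D_2\}$, run the pricing algorithm to obtain a price $P$, and output $D_1$ if $P \in S_1$ and output $D_2$ otherwise. If the true distribution is $D_1$, then $P \in S_1$ with probability at least $\tfrac23$, so the classifier is correct with probability at least $\tfrac23$. If the true distribution is $D_2$, then $P \in S_2$ with probability at least $\tfrac23$; since $S_1$ and $S_2$ are disjoint, $P \in S_2$ implies $P \notin S_1$, so the classifier again answers correctly with probability at least $\tfrac23$. Hence this is a valid classification algorithm distinguishing $D_1$ and $D_2$ using $m$ samples, and \Cref{lem:kl_divergence} forces $m = \Omega\big(\skl(D_1, D_2)^{-1}\big)$.

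The only routine points left are measurability of $P$ and of the events $\{P \in S_j\}$, which is immediate for the discrete price sets relevant here, and the remark — exactly as in the proof of \Cref{lem:difference_ori_shaded} — that the classifier need not be efficiently implementable, since \Cref{lem:kl_divergence} only requires its existence. The single genuinely substantive step is the Markov-inequality conversion from an expected-revenue guarantee to with-constant-probability membership in the $(1-3\epsilon)$-approximate price set; I expect this to be the ``hard part'' conceptually, though it is short, and everything else is bookkeeping.
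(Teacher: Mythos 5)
Your proof is correct, and it is essentially the argument given for Theorem~4.2 in \citet{HuangMR/2015/EC} (the paper cites this lemma rather than reproving it): convert the pricing algorithm into a classifier via the disjointness of the approximate-price sets, and invoke \Cref{lem:kl_divergence}. The Markov-inequality step is exactly the one-line conversion that consumes the slack between the $(1-\epsilon)$ and $(1-3\epsilon)$ thresholds under the in-expectation reading of the approximation guarantee; under a with-constant-probability reading the step is unnecessary but harmless, and the rest of the reduction goes through unchanged.
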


\begin{theorem}
	\label{thm:discrete_MHR_lb}
	Suppose an algorithm guarantees a $(1 - \epsilon)$-approximation with high probability for all discrete MHR distributions using $m$ samples.
	Then, $m$ must be at least $\Omega(\epsilon^{-2})$.
\end{theorem}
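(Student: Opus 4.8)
The plan is to invoke the single-bidder hardness framework of \citet{HuangMR/2015/EC}, restated here as \Cref{lem:single_bidder_lb}. It suffices to construct two \emph{discrete MHR} distributions $D_1$ and $D_2$ such that (i) their $(1-3\epsilon)$-approximate price sets are disjoint, and (ii) $\skl(D_1, D_2) = O(\epsilon^2)$. Given these, any algorithm that is $(1-\epsilon)$-approximate for all discrete MHR distributions is in particular $(1-\epsilon)$-approximate for both $D_1$ and $D_2$, so \Cref{lem:single_bidder_lb} yields $m = \Omega\big(\skl(D_1, D_2)^{-1}\big) = \Omega(\epsilon^{-2})$, as claimed.

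For the construction I would use the simplest distributions that still exploit a point mass, which is precisely the feature distinguishing discrete MHR from continuous MHR: a point mass contributes no curvature to the revenue curve, so a tiny perturbation can swing the optimal price between two well-separated prices. Concretely, let $D_1$ and $D_2$ both be supported on $\{1, 2\}$ with $\Pr_{D_1}[v = 2] = \tfrac12 - 2\epsilon$ and $\Pr_{D_2}[v = 2] = \tfrac12 + 2\epsilon$. For both of these two-point distributions the hazard rate is at most $\tfrac12 + 2\epsilon$ at the value $1$ and equals $1$ at the value $2$, hence non-decreasing for small $\epsilon$; thus $D_1$ and $D_2$ are discrete MHR. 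A direct calculation gives $\kl(D_1 \,\|\, D_2) = \kl(D_2 \,\|\, D_1) = 4\epsilon \ln \tfrac{1/2 + 2\epsilon}{1/2 - 2\epsilon} = \Theta(\epsilon^2)$, so $\skl(D_1, D_2) = \Theta(\epsilon^2)$, giving (ii).

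For (i), the revenue $\rev(p, D)$ of posting price $p$ equals $p$ for $p \in [0,1]$ (everyone buys) and $p \cdot \Pr_D[v = 2]$ for $p \in (1,2]$. Hence $\opt(D_1) = 1$, attained only at prices $p \le 1$, while $\opt(D_2) = 1 + 4\epsilon$, attained only at $p = 2$. Writing out the revenue as a function of $p$, the $(1-3\epsilon)$-approximate price set of $D_1$ is $[1 - 3\epsilon, 1]$ --- no price in $(1,2]$ qualifies since the best such revenue is $1 - 4\epsilon < 1 - 3\epsilon$ --- whereas the $(1-3\epsilon)$-approximate price set of $D_2$ lies in $[2 - O(\epsilon), 2]$, since no price $p \le 1$ can reach $(1-3\epsilon)(1+4\epsilon) > 1$. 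These two sets are disjoint for all sufficiently small $\epsilon$, proving (i).

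The part requiring care --- and the main obstacle --- is the verification of (i): one must rule out \emph{every} price rather than just the two obvious candidates $1$ and $2$, and one must choose the perturbation magnitude (here $2\epsilon$) so that it overcomes the $3\epsilon$ slack in \Cref{lem:single_bidder_lb} while remaining $\Theta(\epsilon)$, so that the KL divergence stays $O(\epsilon^2)$. Both checks are elementary once the revenue curves of the two-point distributions are written out explicitly, after which \Cref{thm:discrete_MHR_lb} is immediate from \Cref{lem:single_bidder_lb}.
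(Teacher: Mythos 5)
Your proof is correct and matches the paper's proof essentially verbatim: you use the same two-point distributions on $\{1,2\}$ with probabilities $\tfrac{1}{2}\pm 2\epsilon = \tfrac{1\pm 4\epsilon}{2}$, the same KL computation giving $\skl(D_1,D_2) = 8\epsilon\ln\tfrac{1+4\epsilon}{1-4\epsilon}=\Theta(\epsilon^2)$, and the same disjoint $(1-3\epsilon)$-approximate-price-set argument feeding into \Cref{lem:single_bidder_lb}. Your explicit check that both distributions have non-decreasing hazard rate is a small but worthwhile addition that the paper leaves implicit.
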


\begin{proof}
	Let $D_1$ and $D_2$ be two distributions with support $\{ 1, 2 \}$.
	$D_1$ takes value $1$ with probability $\tfrac{1 + 4 \epsilon}{2}$, and $2$ with probability $\tfrac{1 - 4 \epsilon}{2}$.
	$D_2$ takes value $1$ with probability $\tfrac{1 - 4 \epsilon}{2}$, and $2$ with probability $\tfrac{1 + 4 \epsilon}{2}$.
	The KL divergence of the distributions is bounded by:
	\[ 
	\skl(D_1, D_2) 
	= 
	2 \cdot \left( \tfrac{1+4\epsilon}{2} \ln
	\tfrac{1+4\epsilon}{1-4\epsilon} + \tfrac{1-4\epsilon}{2} \ln
	\tfrac{1-4\epsilon}{1+4\epsilon} \right) = 8\epsilon \ln
	\tfrac{1+4\epsilon}{1-4\epsilon} = \Theta(\epsilon^2).
	\]
	
	On one hand, any price that is at least a $(1 - 3\epsilon)$-approximation for $D_1$ must be at most $1$ because the optimal is $1$, which is achieved when the price is $1$, but setting a price larger than $1$ gets at most $1 - 4 \epsilon$, which is achieved when the price is $2$.
	On the other hand, any price that is at least a $(1 - 3\epsilon)$-approximation for $D_2$ must be greater than $1$ because the optimal is $1 + 4\epsilon$, which is achieved when the price is $2$, but setting a price at most $1$ gets at most $1$, which is achieved when the price is $1$. 
	Hence, $D_1$ and $D_2$ have disjoint $(1-3\epsilon)$-approximate
	price sets.
	
	Therefore, the claim follows from \Cref{lem:single_bidder_lb}.
\end{proof}

\subsection{Improved Upper Bound Continuous MHR Distributions}
\label{app:continuous_MHR_ub} 

We will show the following improved sample complexity upper bound for \Cref{alg:dominated_empirical_myerson}.

\begin{theorem}
	\label{thm:single_MHR_upper_bound}
	For any $0 < \epsilon < 1$ and any continuous MHR distribution $D$, suppose $m$ is at least $\tilde{O}(\epsilon^{-1.5})$.
	Then, \Cref{alg:dominated_empirical_myerson} returns a mechanism with an expected revenue at least $(1 - \epsilon) \opt(D)$, with high probability.
\end{theorem}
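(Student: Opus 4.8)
The plan is to specialize \Cref{alg:dominated_empirical_myerson} to the single-bidder case, where it simply posts the monopoly price $\hat p$ of the shaded empirical distribution $\tilde E = \shading(E)$ (with $\ln(2mn\delta^{-1})$ reducing to $\ln(2m\delta^{-1})$), and to analyze $\rev(M_{\tilde E},D) = \hat p\cdot q^{D}(\hat p)$ directly, importing the structural facts about continuous MHR distributions that underlie the bound of \citet{HuangMR/2015/EC}: (i) the monopoly quantile $q^* := q^{D}(p^*)$ of the true optimal price $p^*$ satisfies $q^* = \Theta(1)$ (in fact $q^*\ge 1/e$), so $\opt(D)=p^*q^*$ and $p^* = \Theta(\opt(D))$; (ii) the revenue curve is flat to second order around $q^*$, i.e.\ there is an absolute constant $c>0$ and a constant-width neighborhood of $q^*$ on which $\opt(D) - R^{D}(q) \ge c\cdot\opt(D)\cdot(q-q^*)^2$ (this is the genuinely MHR-specific input, following from $\phi'\ge 1$); and (iii) the exponential tail, so that a sample exceeds $O(\log\epsilon^{-1})\cdot\opt(D)$ with probability only $\mathrm{poly}(\epsilon)$. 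Fact (iii), together with the truncation arguments already in \Cref{sec:upper_bound_multi_bidder} (or simply the observation that the support of $\tilde E$ is contained in the sample set), lets us assume that every price we consider, and in particular $\hat p$, is at most $\tilde O(\opt(D))$ with high probability.

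\textbf{Crude bound.} Since $q^{D}\ge q^{\tilde E}$ (\Cref{lem:D_and_tilde_E}) and $\hat p$ is the monopoly price of $\tilde E$,
\[
R^{D}(\hat p) = \hat p\, q^{D}(\hat p) \;\ge\; \hat p\, q^{\tilde E}(\hat p) \;=\; \opt(\tilde E) \;\ge\; p^*\, q^{\tilde E}(p^*) \;\ge\; \opt(D) - \tilde O\!\big(\opt(D)/\sqrt m\big),
\]
where the last step uses \Cref{lem:empirical_error_bound}, the definition of $\shading$, and $p^*\sqrt{q^*} = \sqrt{p^*\opt(D)} = \tilde O(\opt(D))$. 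For $m$ larger than an absolute constant this already forces the true quantile $\hat q := q^{D}(\hat p)$ of the posted price into the good neighborhood of $q^*$, and by fact (ii) it gives $\theta := |\hat q - q^*| = \tilde O(m^{-1/4})$.

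\textbf{Bootstrap.} Write $\mathrm{gap}(v) := q^{D}(v)-q^{\tilde E}(v)\ge 0$, so the inequality $\hat p\,q^{\tilde E}(\hat p)\ge p^*\,q^{\tilde E}(p^*)$ rearranges to
\[
R^{D}(\hat p) \;\ge\; \opt(D) - \big( p^*\,\mathrm{gap}(p^*) - \hat p\,\mathrm{gap}(\hat p) \big).
\]
Up to lower-order terms, $\mathrm{gap}(v) \approx -e(v) + \Psi\big(q^{D}(v)\big)$, where $e(v)=q^{E}(v)-q^{D}(v)$ is the empirical error process and $\Psi(q) = \sqrt{2q(1-q)\ln(2m\delta^{-1})/m} + 4\ln(2m\delta^{-1})/m$ is a deterministic, smooth, slowly varying function of the quantile. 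Since $\hat p$ and $p^*$ differ in quantile by only $\theta$ and in value by only $\tilde O(\theta\cdot\opt(D))$ (the slope of the value curve at $q^*$ is $p^*/q^*=\tilde O(\opt(D))$), and since both the oscillation of the empirical process over a quantile interval of mass $\theta$ and the variation of $\Psi$ over such an interval are $\tilde O(\sqrt{\theta/m})$ and $\tilde O(\theta/\sqrt m)$ respectively — which I would prove uniformly over all such intervals by a localized Bernstein/union-bound argument analogous to \Cref{lem:empirical_error_bound} — the two gap terms nearly cancel, giving $p^*\,\mathrm{gap}(p^*) - \hat p\,\mathrm{gap}(\hat p) = \tilde O\big(\opt(D)\sqrt{\theta/m}\big)$. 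Combining with fact (ii), namely $\opt(D)-R^{D}(\hat p)\ge c\cdot\opt(D)\,\theta^2$, yields $\theta^2 = \tilde O(\sqrt{\theta/m})$, hence $\theta = \tilde O(m^{-1/3})$, and then $\opt(D)-R^{D}(\hat p) = \tilde O\big(\opt(D)\,m^{-2/3}\big)$, which is at most $\epsilon\,\opt(D)$ once $m = \tilde O(\epsilon^{-3/2})$.

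\textbf{Main obstacle and subtleties.} The crux is the near-cancellation of the shading gaps, i.e.\ the $\tilde O(\opt(D)\sqrt{\theta/m})$ estimate: it needs a uniform, width-dependent control of the increments of the empirical quantile process over short quantile intervals (in the spirit of \Cref{eqn:intro_estimation_error} but for differences $e(\hat p)-e(p^*)$), together with careful handling of the nonlinearity of $\shading$ (the $\sqrt{q(1-q)}$ factor) when comparing it at two nearby quantiles. A secondary subtlety is that $\tilde E$ is not itself MHR, so the argument must use the MHR structure of $D$ only — through the quadratic plateau of $R^{D}$ and the tail bound — and the bootstrap must be arranged so that the crude bound places $\hat q$ inside the plateau before the refined bound is invoked. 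The tail truncation and the reduction to a bounded price range are routine given the machinery of \Cref{sec:upper_bound_multi_bidder}.
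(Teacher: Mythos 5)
Your proposal is correct in spirit and arrives at the right bound, but it takes a structurally different route from the paper, so let me compare. The paper's proof of \Cref{thm:single_MHR_upper_bound} is a per-candidate-price union bound: for each sample value $v$ that yields a suboptimal revenue, \Cref{lem:single_bidder_MHR_large_quantile} shows that the algorithm's event ``pick $v$ over $v^*$'' forces the empirical quantile gap $\bar q - \bar q^*$ to exceed the true gap $q - q^*$ by $\Omega(\Delta)$, and then a Bernstein bound on the number of samples landing in the quantile interval $[q^*,q]$ (variance $\propto \delta_q m$, not $m$) gives probability $\le e^{-\Omega(\Delta^2 m / \delta_q)} \le \delta/m$; the small-quantile tail is handled separately in \Cref{lem:single_bidder_MHR_small_quantile}, and a union bound over the $m$ sample-value candidates finishes. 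Your approach instead decomposes the loss of the \emph{realized} price $\hat p$ as $\opt(D) - R^D(\hat p) \le p^*\mathrm{gap}(p^*) - \hat p\,\mathrm{gap}(\hat p)$ and makes the two gap terms nearly cancel via a crude-then-refined bootstrap: the crude $\tilde O(\opt(D)/\sqrt m)$ bound localizes $\hat q$ to $\theta \le \tilde O(m^{-1/4})$ of $q^*$ (incidentally dispensing with the paper's separate small-quantile case), and the refined step bounds the cancellation by $\tilde O(\opt(D)\sqrt{\theta/m})$ via uniform control of the empirical quantile process increments on short intervals, which combined with the quadratic plateau forces $\theta = \tilde O(m^{-1/3})$ and hence the claimed $\tilde O(\epsilon^{-1.5})$ sample complexity. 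Both proofs rely on exactly the same MHR-specific inputs (\Cref{lem:MHR_single_quadratic_R} and \Cref{lem:MHR_optimal_quantile_bound}) and the same local-variance Bernstein estimate (fluctuation over a width-$\theta$ quantile interval is $\tilde O(\sqrt{\theta/m})$ rather than $\tilde O(1/\sqrt m)$); the paper packages this as a per-price tail bound and union bound, while you package it as a uniform increment control plus bootstrap. Your route is somewhat more direct and avoids the case split on small quantiles, but it does require you to actually prove the uniform increment bound you flag as needed — this is a standard chaining/union-bound-over-grid argument analogous to \Cref{lem:empirical_error_bound}, but it is genuine extra work not already in the paper's toolbox, and without it the bootstrap is not closed. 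Filling that in would make your argument fully rigorous, and it would match the paper's bound.
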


For any MHR distribution $D$, let $R(q)=(q^{D})^{-1}(q)\cdot q$ denote the revenue of setting a reserve price with quantile $q$. Let $v^*$ denote the optimal reserve price and let $q^*$ denote the corresponding quantile.
The following lemma by \citet{HuangMR/2015/EC} is crucial for the getting the improved sample complexity upper bound.

\begin{lemma}[\citet{HuangMR/2015/EC}, Lemma 3.3]
	\label{lem:MHR_single_quadratic_R}
	For any MHR distribution $D$, any $0 \le q \le 1$:
	\[
	R(q^*)-R(q)\ge \frac{1}{4}(q^* - q)^2R(q^*)
	~.
	\]
\end{lemma}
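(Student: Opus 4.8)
The plan is to specialize everything to a single bidder, where Myerson's optimal auction $M_{\tilde E}$ output by \Cref{alg:dominated_empirical_myerson} is just the take-it-or-leave-it price $\tilde v$ maximizing the shaded empirical revenue $v\mapsto\shading\big(q^{E}(v)\big)\cdot v$, and the goal is a lower bound on its revenue on the truth, $\rev(M_{\tilde E},D)=q^{D}(\tilde v)\cdot\tilde v$. Throughout I condition on the high-probability event of \Cref{lem:empirical_error_bound} (with $n=1$), which together with \Cref{lem:D_and_tilde_E} and \Cref{lem:tilde_E_and_tilde_D} gives $\doubleshading\big(q^{D}(v)\big)\le q^{\tilde E}(v)\le q^{D}(v)$ for every $v$. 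Write $L=\ln(2m\delta^{-1})$, let $v^{*}$ be the monopoly price of $D$ with quantile $q^{*}$, set $\hat q=q^{D}(\tilde v)$, and let $g=1-\rev(M_{\tilde E},D)/\opt(D)$. I will use the standard fact that a continuous MHR distribution has $q^{*}\ge 1/e$, so concentration is uniformly good around the optimum.

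Step 1 (crude bound and localization). Since $\tilde v$ maximizes the shaded empirical revenue, $q^{\tilde E}(\tilde v)\tilde v\ge q^{\tilde E}(v^{*})v^{*}\ge\doubleshading(q^{*})v^{*}$, while $q^{\tilde E}(\tilde v)\tilde v\le q^{D}(\tilde v)\tilde v=\rev(M_{\tilde E},D)$; hence $g\le 1-\doubleshading(q^{*})/q^{*}=\sqrt{8(1-q^{*})L/(q^{*}m)}+7L/(q^{*}m)$. If $1-q^{*}\le\sqrt{\epsilon}$ this is already at most $\epsilon$ once $m=\tilde\Omega(\epsilon^{-3/2})$, and we are done. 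Otherwise $q^{*}\in[1/e,1-\sqrt{\epsilon}]$; the crude bound still gives $g=\tilde O(m^{-1/2})$, and \Cref{lem:MHR_single_quadratic_R} converts it into a \emph{localization} of $\tilde v$, namely $\tfrac14(\hat q-q^{*})^{2}\le g$, so $\mu:=|\hat q-q^{*}|=\tilde O(m^{-1/4})$, and in particular $\hat q\ge 1/(2e)$ and $\tilde v=\rev(M_{\tilde E},D)/\hat q\le 2e\,\opt(D)$ once $m$ is large.

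Step 2 (bootstrap). Repeat the comparison of Step 1, but now exploit that $\tilde v$ and $v^{*}$ have quantiles within $\mu$ of each other. Writing $q^{\tilde E}(v)=q^{D}(v)-\theta(v)$ with $0\le\theta(v)\le q^{D}(v)-\doubleshading\big(q^{D}(v)\big)$, the comparison yields $g\,\opt(D)\le\theta(v^{*})v^{*}-\theta(\tilde v)\tilde v=\theta(v^{*})(v^{*}-\tilde v)+\big(\theta(v^{*})-\theta(\tilde v)\big)\tilde v$. Here $|v^{*}-\tilde v|=O(\mu\,\opt(D))$ directly from $\rev(M_{\tilde E},D)=\hat q\,\tilde v\in[(1-g)\opt(D),\opt(D)]$ and $q^{*}v^{*}=\opt(D)$; $\theta(v^{*})=O(\sqrt{L/m})$; and $|\theta(v^{*})-\theta(\tilde v)|=\tilde O\!\big(\sqrt{\mu/m}+(1-q^{*})^{-1/2}\mu\sqrt{L/m}\big)$, using the explicit form of $\shading$, a Bernstein bound (\Cref{lem:bernstein}) on the empirical mass of the value-interval between $v^{*}$ and $\tilde v$ (whose true mass is $\mu$), and $|1-2q|/\sqrt{q(1-q)}=O((1-q^{*})^{-1/2})$ near $q^{*}$. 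Combining these with $1-q^{*}\ge\sqrt{\epsilon}$ gives $g=\tilde O(\epsilon^{-1/4}\mu\sqrt{L/m})$; feeding in $\mu\le 2\sqrt{g}$ yields $g=\tilde O(\epsilon^{-1/2}L/m)$, which is at most $\epsilon$ once $m=\tilde\Omega(\epsilon^{-3/2})$. A final application of \Cref{lem:MHR_single_quadratic_R} (or simply rescaling $\epsilon$ by a constant) then gives $\rev(M_{\tilde E},D)\ge(1-\epsilon)\opt(D)$.

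I expect Step 2 to be the main obstacle, for two reasons. First, bounding $|\theta(v^{*})-\theta(\tilde v)|$ requires carefully disentangling the mean-zero empirical fluctuation $q^{E}(v)-q^{D}(v)$ from the two deterministic shading corrections inside $\shading$, and arguing that only the \emph{difference} of the fluctuations at the two nearby prices survives, so that it is governed by $\sqrt{\mu/m}$ rather than the uniform $\sqrt{1/m}$; this is precisely the place where the quadratic growth of \Cref{lem:MHR_single_quadratic_R} feeds back into the estimation error and produces the $\epsilon^{-3/2}$ exponent. Second, the derivative of $q\mapsto\sqrt{q(1-q)}$ blows up as $q\to1$, which forces the case split at $1-q^{*}=\sqrt{\epsilon}$ and some bookkeeping to keep the $(1-q^{*})^{-1/2}$ factor under control. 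By contrast, the crude bound, the two invocations of \Cref{lem:MHR_single_quadratic_R}, and the fact $q^{*}\ge 1/e$ are routine.
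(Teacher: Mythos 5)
Your proposal does not prove the statement in question. The statement is Lemma~\ref{lem:MHR_single_quadratic_R} itself --- the purely distributional fact that for an MHR distribution the revenue curve drops at least quadratically away from its maximizer, $R(q^*)-R(q)\ge \tfrac14 (q^*-q)^2 R(q^*)$. What you have written is instead a proof sketch of the downstream sample-complexity result (essentially Theorem~\ref{thm:single_MHR_upper_bound}, the $\tilde O(\epsilon^{-1.5})$ bound for a single MHR bidder), and in that sketch you explicitly \emph{invoke} Lemma~\ref{lem:MHR_single_quadratic_R} twice as a known black box (``the two invocations of \Cref{lem:MHR_single_quadratic_R} \dots are routine''). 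So the argument is circular with respect to the target: it assumes exactly the inequality it is supposed to establish, and provides no evidence for it.

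A genuine proof of the lemma has nothing to do with samples, shading, or Bernstein bounds; it is a deterministic statement about the shape of the revenue curve and must use the MHR hypothesis directly. The standard route (this is how \citet{HuangMR/2015/EC} prove their Lemma 3.3) is to observe that MHR means the cumulative hazard $-\ln(1-F(v))$ is convex, so the value $v(q)$ is a concave function of $\ln(1/q)$; after this change of variables one gets a second-order (strong concavity) estimate for $R$ around $q^*$, and the constant $\tfrac14$ then comes from combining that estimate with the fact that $q^*\ge 1/e$ for MHR distributions. None of these ingredients appears in your write-up, so as a proof of the stated lemma it is missing the entire argument.
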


We will also use the following well known fact about continuous MHR distributions.

\begin{lemma}[e.g., \citet{HuangMR/2015/EC}, Lemma 2.1]
	\label{lem:MHR_optimal_quantile_bound}
	For any MHR distribution $D$, we have:
	\[
	q^* \ge \frac{1}{e}
	~.
	\]
\end{lemma}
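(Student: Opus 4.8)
The plan is to work in the value space. Write $h(v)=f(v)/(1-F(v))$ for the hazard rate and $\underline v$ for the left endpoint of the support of $D$, so $\underline v \ge 0$ and, by continuity, $F(\underline v)=0$. For a continuous MHR distribution $h$ is non-decreasing, and integrating the identity $\tfrac{d}{dv}\big[-\ln(1-F(v))\big]=h(v)$ from $\underline v$ gives $1-F(v)=\exp\big(-\int_{\underline v}^{v}h(t)\,dt\big)$ on the support. The monopoly revenue from reserve price $v$ is $\mathrm{Rev}(v)=v\,(1-F(v))$, and a direct computation gives $\mathrm{Rev}'(v)=(1-F(v))-vf(v)=-f(v)\,\phi(v)$, where $\phi$ is the virtual value. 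Since the distribution is MHR, $\phi'\ge 1$, so $\phi$ is strictly increasing and $\mathrm{Rev}$ is single-peaked. Moreover $\mathrm{Rev}$ is continuous, non-negative, and tends to $0$ at the top of the support (for bounded support because $1-F\to 0$ there; for unbounded support because MHR forces a finite mean, so $v(1-F(v))\le 2\int_{v/2}^{\infty}(1-F(t))\,dt\to 0$), hence it attains its maximum; ignoring the degenerate point mass at $0$, the maximizer $v^*$ is either the left endpoint $\underline v$ or an interior point.

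I would first dispose of the boundary case: if $v^*=\underline v$, then $q^*=1-F(\underline v)=1\ge 1/e$, and we are done. Otherwise $v^*$ is an interior maximizer, so $\mathrm{Rev}'(v^*)=0$, i.e. $\phi(v^*)=0$, i.e. $v^*=\frac{1-F(v^*)}{f(v^*)}$, equivalently $v^*\,h(v^*)=1$.

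The heart of the argument is then a single estimate: because $h$ is non-decreasing and $\underline v\ge 0$,
\[
\int_{\underline v}^{v^*}h(t)\,dt\;\le\;(v^*-\underline v)\,h(v^*)\;\le\;v^*\,h(v^*)\;=\;1,
\]
so that
\[
q^*\;=\;1-F(v^*)\;=\;\exp\!\Big(-\!\int_{\underline v}^{v^*}h(t)\,dt\Big)\;\ge\;e^{-1}.
\]
This settles the continuous case, which is the only one needed by \Cref{thm:single_MHR_upper_bound}. For discrete MHR distributions over non-negative integers the same scheme goes through with sums in place of integrals: the exponential identity becomes a telescoping product in the discrete hazard rates, the first-order condition is replaced by the optimality inequalities $\mathrm{Rev}(v^*)\ge\mathrm{Rev}(v^*\pm1)$, and one bounds the resulting discrete integrated hazard by $1$ in the same way.

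The core estimate is immediate; essentially all of the work is bookkeeping around it --- existence of the optimal reserve, the split between the boundary and interior cases, and (in the discrete variant) choosing the hazard-rate convention so that the telescoping identity and the optimality inequality line up. None of this is deep, so I expect the main ``obstacle'' to be merely writing the case analysis cleanly rather than any substantive difficulty; restricting to continuous MHR distributions, as the paper does when invoking this lemma, makes even that largely disappear.
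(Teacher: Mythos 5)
Your argument is correct and is the standard proof of this fact: at an interior monopoly reserve the first-order condition gives $v^*h(v^*)=1$, monotonicity of the hazard rate then bounds the integrated hazard up to $v^*$ by $1$, and the survival-function identity yields $q^*=1-F(v^*)\ge e^{-1}$. The paper does not prove this lemma itself but cites it from \citet{HuangMR/2015/EC} (Lemma 2.1), whose proof is essentially the same hazard-rate argument, so there is nothing to add beyond noting that your handling of the boundary and attainment cases is fine for the continuous MHR setting in which the paper invokes the lemma.
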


The lemmas show two properties of single-bidder auction/pricing when the value distribution is MHR.
\Cref{lem:MHR_single_quadratic_R} shows that the gap between the expected revenue of a quantile and the optimal revenue is lower bounded by a quadratic function of the corresponding quantile gap, which should be maintained with high probability with enough samples. 
\Cref{lem:MHR_optimal_quantile_bound} asserts that small quantiles cannot be optimal. 

To prove \Cref{thm:single_MHR_upper_bound}, we need to show that with high probability the algorithm will not pick any price with expected revenue less than $(1 - \epsilon)\opt(D)$.
We will divide into two subcases and analyze separately.
First, suppose the quantile of the price is at least $\frac{1}{9e}$.
Then, we will make use of \Cref{lem:MHR_single_quadratic_R} and present in \Cref{lem:single_bidder_MHR_large_quantile} an analysis similar to that by \citet{HuangMR/2015/EC}, incorporating the fact that we pick the price based on the dominated empirical distribution instead of the empirical distribution.
Otherwise, suppose the quantile of the price is less than $\frac{1}{9e}$, i.e., much smaller than the quantile of the optimal price according to \Cref{lem:MHR_optimal_quantile_bound}.
The reason why we need to handle this case differently is that the dominated empirical distribution may have large estimation errors for small quantiles, specifically, those that are at most $O(\frac{1}{m})$.
We analyze this case in \Cref{lem:single_bidder_MHR_small_quantile}.

\begin{lemma}
	\label{lem:single_bidder_MHR_large_quantile}
	Suppose a sample value $v$ has quantile $q \ge \frac{1}{9e}$ and $R(q) < (1 - \epsilon) R(q^*)$, and $m$ is at least $\tilde{O}(\epsilon^{-1.5})$.
	Then, the probability that \Cref{alg:dominated_empirical_myerson} picks $v$ over $v^*$ is at most $\frac{\delta}{m}$.
\end{lemma}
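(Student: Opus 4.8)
The plan is to work on the high-probability event that the empirical quantiles of the two relevant values $v$ and $v^*$ are accurate, to then bound the shaded-empirical revenue of $v$ from above and that of $v^*$ from below, and finally to derive a contradiction by feeding these bounds into the quadratic revenue gap of \Cref{lem:MHR_single_quadratic_R}. Recall that \Cref{alg:dominated_empirical_myerson} outputs the optimal price for $\mathbf{\tilde E}=\shading(\mathbf E)$, and that in the single-bidder case the revenue of offering a price $v$ under $\tilde E$ is exactly $v\cdot\shading\!\big(q^{E}(v)\big)$. Hence the event ``picks $v$ over $v^*$'' implies $v\cdot\shading\!\big(q^{E}(v)\big)\ge v^*\cdot\shading\!\big(q^{E}(v^*)\big)$, and it suffices to bound the probability of this inequality.

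First I would apply the per-value Bernstein estimate from the proof of \Cref{lem:empirical_error_bound} to the two values $v$ and $v^*$: each fails with probability at most $\delta/m$ (absorbing absolute constants), so on the good event, writing $c=\ln(2mn\delta^{-1})/m$ with $n=1$, we have $\big|q^{E}(v)-q\big|\le\sqrt{2cq(1-q)}+c$ and likewise for $v^*$. Using that $\shading$ is monotone and that, by its definition, it subtracts essentially the same Bernstein term that governs the fluctuation of $q^{E}$, I would show $\shading\!\big(q^{E}(v)\big)\le q+O(c)$ and $\shading\!\big(q^{E}(v^*)\big)\ge q^*-O\!\big(\sqrt{c\,q^*(1-q^*)}\big)$; here $q\ge\frac1{9e}$ is used to bound $v=R(q)/q\le 9e\,R(q^*)$ (so the $O(c)$ slack at $v$ costs only $O\big(c\,R(q^*)\big)$ in revenue), and \Cref{lem:MHR_optimal_quantile_bound}, i.e.\ $q^*\ge 1/e$, is used to bound $v^*\sqrt{c\,q^*(1-q^*)}=O(\sqrt c)\,R(q^*)$. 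Combining these, the event that $v$ is preferred to $v^*$ forces
\[
R(q^*)-R(q)\;\le\;O\!\big(v^*\sqrt{c\,q^*(1-q^*)}\big)\;+\;O\!\big((v+v^*)\,c\big)~.
\]

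To close the argument I would use the two lower bounds on $R(q^*)-R(q)$: the hypothesis $R(q)<(1-\epsilon)R(q^*)$ gives $R(q^*)-R(q)>\epsilon\,R(q^*)$, and \Cref{lem:MHR_single_quadratic_R} gives $R(q^*)-R(q)\ge\frac14(q^*-q)^2R(q^*)$. Read together with the displayed upper bound, the quadratic bound pins the quantile displacement $|q^*-q|$ down to $O(c^{1/4})$; feeding this back into the local quadratic behaviour of the MHR revenue curve around its peak bounds $R(q^*)-R(q)$ from above by a quantity that is $o(\epsilon)\,R(q^*)$ once $m\ge\tilde{O}(\epsilon^{-1.5})$, contradicting $R(q^*)-R(q)>\epsilon\,R(q^*)$. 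Therefore the preference event can occur only on the complementary event, of probability at most $\delta/m$.

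The main obstacle is precisely this final balancing step. The crude estimate $v^*\sqrt{c\,q^*(1-q^*)}=\Theta(\sqrt c)\,R(q^*)$ for the revenue that the optimal price $v^*$ itself loses to shading would, by itself, only yield the weaker sample bound $m\ge\tilde{\Omega}(\epsilon^{-2})$; extracting the optimal $\epsilon^{-1.5}$ needs the two-sided use of \Cref{lem:MHR_single_quadratic_R} — first converting the small surviving revenue slack into a small quantile displacement, then converting it back into a revenue loss through the curvature of $R$ near $q^*$ — and one must check that the error terms near $q^*$ are controlled uniformly, in particular when $q^*$ is close to $1$. This is exactly why the argument is restricted to $q\ge\frac1{9e}$ and the complementary small-quantile regime is treated separately in \Cref{lem:single_bidder_MHR_small_quantile}.
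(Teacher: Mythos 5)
Your proposal has a genuine gap. You correctly identify that the naive per-value argument — condition on each of $\bar q = q^E(v)$ and $\bar q^* = q^E(v^*)$ being within $O(\sqrt{c})$ of its true quantile (with $c = \ln(2m\delta^{-1})/m$), then derive a contradiction from $R(q^*)-R(q)\le O(\sqrt{c})R(q^*)$ — only yields $m \gtrsim \tilde{O}(\epsilon^{-2})$. But the proposed repair, a ``two-sided use of \Cref{lem:MHR_single_quadratic_R},'' does not close the gap. \Cref{lem:MHR_single_quadratic_R} is a one-sided inequality: it lower bounds $R(q^*)-R(q)$ by the quantile displacement, but there is no corresponding upper bound (nothing in the MHR hypothesis prevents $R$ from dropping steeply away from $q^*$). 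Deriving $|q^*-q|=O(c^{1/4})$ and ``feeding this back into the local quadratic behaviour'' to conclude $R(q^*)-R(q) = o(\epsilon)R(q^*)$ is circular: the quadratic relation you would feed into is again a lower bound, and even if an upper bound held, plugging $|q^*-q|=O(c^{1/4})$ into a quadratic returns $O(\sqrt{c})R(q^*)$ — exactly where you started. So the step ``bounds $R(q^*)-R(q)$ from above by a quantity that is $o(\epsilon)R(q^*)$'' does not follow.

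The idea you are missing is that the relevant random quantity is not the two empirical quantiles separately but their \emph{difference} $\bar q - \bar q^*$, i.e.\ the empirical mass of the interval between $v^*$ and $v$, whose true mass is $\delta_q := |q-q^*|$. Its variance under $m$ samples is $\Theta(\delta_q/m)$, not $\Theta(1/m)$, so Bernstein gives a much sharper bound when $\delta_q$ is small. The paper shows (Eqn.~\eqref{eqn:single_bidder_MHR_large_quantile_3}) that the shading operation perturbs the gap $\bar q - \bar q^*$ only by $O(\sqrt{c})\,\delta_q$, and then deduces that the preference event $v\cdot\shading(\bar q)\ge v^*\cdot\shading(\bar q^*)$ forces $\bar q - \bar q^* \ge \delta_q + \Omega(\Delta)$, where $\Delta$ is the revenue gap. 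Applying Bernstein to the interval count gives failure probability $e^{-\Omega(\Delta^2 m/\delta_q)}$; combining $\Delta\ge\epsilon$ with $\delta_q\le 2\sqrt{\Delta}$ (from \Cref{lem:MHR_single_quadratic_R}) yields exponent $\Omega(\Delta^{3/2}m)\ge\Omega(\epsilon^{3/2}m)$, which is $\ge\ln(m/\delta)$ once $m\ge\tilde O(\epsilon^{-1.5})$. This joint (interval) concentration, together with the $\delta_q$-proportional shading correction, is what extracts the $\epsilon^{-1.5}$ rate; conditioning on per-value accuracy of the two quantiles separately, as in your proof, cannot.
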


\begin{proof}
	We will prove the lemma for $q > q^*$.
	The other case when $\frac{1}{9e} \le q \le q^*$ is similar.

	Let $\delta_q = q - q^*$, and let $\Delta$ be such that $R(q) = (1-\Delta) R(q^*)$.	
	Then, by the assumption on $q$ in the lemma, we have:
	\begin{equation}
	\label{eqn:single_bidder_MHR_large_quantile_00}
	\Delta \ge \epsilon
	~.
	\end{equation}
	
	Further, by \Cref{lem:MHR_single_quadratic_R}, we have:
	\begin{equation}
	\label{eqn:single_bidder_MHR_large_quantile_0}
	\Delta \ge \frac{1}{4} \delta_q^2
	~.
	\end{equation}
	
	By the definition of the revenue curve $R(q)$, we also have:
	\begin{equation}
	\label{eqn:single_bidder_MHR_large_quantile_1}
	v \cdot q = (1 - \Delta) \cdot v^* \cdot q^*
	~.
	\end{equation}
	
	Let $\bar{q}^*$ and $\bar{q}$ denote the empirical quantiles of $v^*$ and $v$.
	Noting that $q^*$ and $q$ are both $\Theta(1)$, by Bernstein inequality (\Cref{lem:bernstein}), so are $\bar{q}^*$ and $\bar{q}$ and, subsequently, $\shading(\bar{q}^*)$ and $\shading(\bar{q})$, with high probability. 
	In particular, by the definition of $\shading$ with $n = 1$, we have:
	\begin{equation}
	\label{eqn:single_bidder_MHR_large_quantile_2}	
	\big| q - \shading(\bar{q}) \big| \le O \left( \sqrt{\frac{\ln(2 m \delta^{-1})}{m}} \right)
	~.
	\end{equation}
	
	Further, we have (with $n = 1$):
	\begin{align}
		\big( \bar{q} - \bar{q}^* \big) - \big( \shading(\bar{q}) - \shading(\bar{q}^*) \big) & 
		= \sqrt{\frac{2 \ln(2 m\delta^{-1})}{m}} \left( \sqrt{\bar{q}(1 - \bar{q})} - \sqrt{\bar{q}^*(1 - \bar{q}^*)} \right) 
		\notag \\
		& = \sqrt{\frac{2 \ln(2 m\delta^{-1})}{m}} \frac{(\bar{q} - \bar{q}^*)(1 - \bar{q} - \bar{q}^*)}{\sqrt{\bar{q}(1 - \bar{q})} + \sqrt{\bar{q}^*(1 - \bar{q}^*)}} 
		\notag \\
		& = O \left( \sqrt{\frac{\ln(2 m \delta^{-1})}{m}} \right) \cdot (\bar{q} - \bar{q}^*) && \text{($\bar{q}, \bar{q}^* = \Theta(1)$)} 
		\notag \\
		& = O \left( \sqrt{\frac{\ln(2 m \delta^{-1})}{m}} \right) \cdot \delta_q ~. \label{eqn:single_bidder_MHR_large_quantile_3}
	\end{align}
	
	Next, consider the event that the algorithm picks $v$ over $v^*$.
	By the definition of the algorithm, that means:
	\begin{equation}
	\label{eqn:single_bidder_MHR_large_quantile_4}
	v \cdot \shading(\bar{q}) \ge v^* \cdot \shading(\bar{q}^*)
	~.
	\end{equation}
	
	We claim that it means there must be a much bigger gap between dominated empirical quantiles $\bar{q}$ and $\bar{q}^*$ than that between the actual quantiles.
	This is formalized with the following sequence of inequalities:
	\begin{align*}
		\bar{q} - \bar{q}^* + O \left( \sqrt{\frac{\ln(2 m \delta^{-1})}{m}} \right) \cdot \delta_q & \ge \shading(\bar{q}) - \shading(\bar{q}^*) && \text{(Eqn.~\eqref{eqn:single_bidder_MHR_large_quantile_3})} \\
		& = \shading(\bar{q}) \left( 1 - \frac{\shading(\bar{q}^*)}{\shading(\bar{q})} \right) \\[1ex]
		& \ge \shading(\bar{q}) \left( 1 - \frac{v}{v^*} \right) && \text{(Eqn.~\eqref{eqn:single_bidder_MHR_large_quantile_4})} \\[1ex]
		& = \shading(\bar{q}) \left( 1 - (1 - \Delta) \cdot \frac{q^*}{q} \right) && \text{(Eqn.~\eqref{eqn:single_bidder_MHR_large_quantile_1})} \\
		& \ge \left( q - O \left( \sqrt{\frac{\ln(2 m \delta^{-1})}{m}} \right) \right) \left( 1 - (1 - \Delta) \cdot \frac{q^*}{q} \right) && \text{(Eqn.~\eqref{eqn:single_bidder_MHR_large_quantile_2})} \\[1ex]
		& \ge q - q^* - O \left( \sqrt{\frac{\ln(2 m \delta^{-1})}{m}} \right) \cdot \delta_q + \Omega(\Delta) && \text{($q^*, q = \Theta(1)$)}
		~.
	\end{align*}
	
	Further note that by Eqn.~\eqref{eqn:single_bidder_MHR_large_quantile_00}, Eqn.~\eqref{eqn:single_bidder_MHR_large_quantile_0}, and that $m \ge \tilde{O}(\epsilon^{-1.5})$, we have:
	\[
	\Delta \ge O \left( \sqrt{\frac{\ln(2 m \delta^{-1})}{m}} \right) \cdot \delta_q
	\]
	
	Hence, we get that:
	\[
	\bar{q} - \bar{q}^* \ge q - q^* + \Omega(\Delta)
	~.
	\]
	
	It remains to upper bound the probability that the above inequality holds.
	Note that it means that the fraction of samples that fall betwen $q^*$ and $q$ is larger than its expectation $\delta_q$ by at least $\Omega(\Delta)$. 
	By Bernstein inequality (\Cref{lem:bernstein}), we have:
	\[
	\Pr \left[ \bar{q} - \bar{q}^* \ge q - q^* + \Omega(\Delta) \right] \le  e^{- \Omega (\Delta^2 m \delta_q^{-1})} \le \frac{\delta}{m}
	~,
	\]
	where the last inequality holds because $\Delta^2 \delta_q^{-1} \ge \Delta^{-1.5} \ge \epsilon^{1.5}$ (due to Eqn.~\eqref{eqn:single_bidder_MHR_large_quantile_00} and Eqn.~\eqref{eqn:single_bidder_MHR_large_quantile_0})
	and that $m \ge \tilde{O}(\epsilon^{-1.5})$.
\end{proof}

\begin{lemma}
	\label{lem:single_bidder_MHR_small_quantile}
	Suppose a sample value $v$ has quantile $q < \frac{1}{9e}$, and $m$ is at least $\tilde{O}(\epsilon^{-1.5})$.
	Then, the probability that \Cref{alg:dominated_empirical_myerson} picks $v$ over $v^*$ is at most $\frac{\delta}{m}$.
\end{lemma}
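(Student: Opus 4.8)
The plan is to establish the stronger, essentially deterministic statement that, conditioned on the high-probability event $\mathcal{E}$ that the empirical-quantile bound of \Cref{lem:empirical_error_bound} holds at both $v$ and $v^{*}$ (with $n=1$), the algorithm cannot select $v$ over $v^{*}$ at all; since this event fails with probability $O(\delta/m)$ by the per-value Bernstein bound in the proof of \Cref{lem:empirical_error_bound}, the lemma follows (the constant being harmless, as this lemma is applied inside a union bound in the proof of \Cref{thm:single_MHR_upper_bound}). The point that distinguishes this case from \Cref{lem:single_bidder_MHR_large_quantile} is that we will never need to control the --- potentially wildly fluctuating --- empirical quantile $\bar q$ of $v$; instead we exploit that the shading only ever decreases quantiles.

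Concretely, I would first upper bound the shaded empirical revenue the algorithm assigns to $v$. Under $\mathcal{E}$ we have $\mathbf{D}\succeq\mathbf{\tilde{E}}$ by \Cref{lem:D_and_tilde_E}, so the quantile of $v$ in $\tilde{E}_{1}$ satisfies $\shading(\bar q)=q^{\tilde{E}_{1}}(v)\le q^{D_{1}}(v)=q$, whence $v\cdot\shading(\bar q)\le vq=R(q)$. Then, using $q^{*}\ge\frac1e$ from \Cref{lem:MHR_optimal_quantile_bound} together with the hypothesis $q<\frac{1}{9e}$, I get $q<q^{*}/9$ and hence $q^{*}-q>\frac89 q^{*}\ge\frac{8}{9e}$; feeding this into the MHR curvature bound \Cref{lem:MHR_single_quadratic_R} yields $R(q)\le\bigl(1-\tfrac14(q^{*}-q)^{2}\bigr)R(q^{*})\le\gamma\, R(q^{*})$ for an explicit constant $\gamma<1$ (one may take $\gamma=\frac{39}{40}$). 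Thus $v\cdot\shading(\bar q)\le\gamma\,R(q^{*})$ holds deterministically on $\mathcal{E}$.

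On the other side, I would lower bound the objective value of $v^{*}$: on $\mathcal{E}$ the empirical quantile obeys $\bar q^{*}\ge q^{*}-O\bigl(\sqrt{\ln(2mn\delta^{-1})/m}\bigr)$, so $\shading(\bar q^{*})\ge q^{*}-O\bigl(\sqrt{\ln(2mn\delta^{-1})/m}\bigr)$ as well, and therefore $v^{*}\cdot\shading(\bar q^{*})\ge R(q^{*})\bigl(1-\tfrac{1}{q^{*}}\,O(\sqrt{\ln(2mn\delta^{-1})/m})\bigr)\ge R(q^{*})\bigl(1-O(\sqrt{\ln(2mn\delta^{-1})/m})\bigr)$, once more using $q^{*}\ge\frac1e$. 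Since $m\ge\tilde\Omega(\epsilon^{-1.5})$, this error term is below $1-\gamma$, so $v^{*}\cdot\shading(\bar q^{*})>\gamma\,R(q^{*})\ge v\cdot\shading(\bar q)$; as the algorithm outputs the price maximizing the shaded empirical revenue $p\mapsto p\cdot\shading(q^{E_{1}}(p))$, it does not pick $v$ over $v^{*}$. The only real work is the constant-chasing in the last two displays, i.e.\ checking that the $O(\sqrt{\ln(2mn\delta^{-1})/m})$ slack is genuinely dominated by the fixed gap $1-\gamma$; this is precisely what the sample bound $m\ge\tilde\Omega(\epsilon^{-1.5})$ (with a sufficiently large polylogarithmic factor) provides, and everything else is an immediate consequence of the cited lemmas.
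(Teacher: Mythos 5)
Your proof is correct and follows essentially the same route as the paper's: condition on the per-value Bernstein event, use $\shading(\bar q)\le q$ from stochastic dominance to cap $v$'s shaded empirical revenue by $R(q)\le \gamma R(q^*)$ with $\gamma<1$ a fixed constant (via $q<q^*/9$ and Lemmas~\ref{lem:MHR_single_quadratic_R} and~\ref{lem:MHR_optimal_quantile_bound}), and show that $v^*\shading(\bar q^*)$ exceeds $\gamma R(q^*)$ once the $\tilde O(m^{-1/2})$ shading slack is smaller than the constant gap. The only cosmetic difference is bookkeeping of the bad event --- the paper splits $\delta/m$ into two $\delta/(2m)$ terms (one for Bernstein at $v^*$, one for the dominance at $v$) while you fold both into a single event $\mathcal E$ --- which is immaterial.
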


\begin{proof}
	Note that in this lemma we no longer need the assumption that $R(q) < (1 - \epsilon) R(q^*)$ because it holds for all value with quantile less than $\frac{1}{9e}$ due to \Cref{lem:MHR_single_quadratic_R} and that $q < \frac{1}{9e}$ and $q^* \ge \frac{1}{e}$ (\Cref{lem:MHR_optimal_quantile_bound}).
	In fact, the lemma asserts something stronger.
	For some constant $c < 1$, we have:
	\[
	v \cdot q = R(q) \le c R(q^*) = c \cdot v^* \cdot q^*
	\]
	
	Let $\bar{q}^*$ and $\bar{q}$ denote the empirical quantiles of $v^*$ and $v$.
	By Bernstein's inequality, with probability at least $1 - \frac{\delta}{2m}$, we have:
	\[
	\bar{q}^* \ge q - \tilde{O} \left( \frac{1}{\sqrt{m}} \right)
	~.
	\]
	
	Further, by the definition of $\shading$ with $n = 1$, we have:
	\[
	\shading(\bar{q}^*) \ge \bar{q}^* - \tilde{O} \left( \frac{1}{\sqrt{m}} \right)
	\]	
	
	Putting together, we get that:
	\[
	\shading(\bar{q}^*) > c \cdot q^*
	\]
	
	On the other hand, by \Cref{lem:D_and_tilde_E} we get that with probability, $\frac{\delta}{2m}$ we have $D \succeq \tilde{E}$ and, in particular, we have:
	\[
	\shading(\bar{q}) \le q
	~.
	\]
	
	Therefore, taking a union bound over the above two events, the revenue of $v$ and $v^*$ in w.r.t.\ the dominated empirical distribution satisfy the following with probability at least $1 - \delta$:
	\[
	v \cdot \shading(\bar{q}) \le v \cdot q < c \cdot v^* \cdot q^* < v^* \cdot \shading(\bar{q}^*)
	~.
	\]
	
	This means that the algorithm will not pick $v$ over $v^*$. 
	So the lemma follows.
\end{proof}

Given the lemmas, we can now get the sample complexity bound in \Cref{thm:single_MHR_upper_bound} by taking a union bound over all $m$ samples.

	\section{Lower Bound for Continuous MHR Distributions}
\label{app:continuous_mhr_lb}

The proof of the sample complexity lower bound for MHR distributions in \Cref{sec:lower_bound_multi_bidder} makes use of discrete MHR distributions in the construction of the hard instance.
If we insist on using MHR distributions whose supports are continuous intervals, our meta lower bound framework still gives the following weaker lower bound of $\Omega(n \epsilon^{3/2})$, which nonetheless improves the previous best known bound by \citet{ColeR/2014/STOC}.
We stress that our sample complexity upper bound hold for both discrete and continuous MHR distributions.

We will follow the meta analysis in \Cref{sec:lower_bound_multi_bidder} and proceed by constructing three distribution $D^b$, $D^h$, and $D^\ell$ and verify the conditions listed in \Cref{sec:lower_bound_multi_bidder}. 
Let $\epsilon_0 = \epsilon \ln n$.
Let $v_0 = \ln n - 1 + \sqrt{\epsilon_0}$, $v_1 = \ln n$, $v_2 = \ln \frac{n}{1+\sqrt{\epsilon_0}} \approx \ln n - \sqrt{\epsilon_0}$, $p = \frac{2 \sqrt{\epsilon_0}}{n}$, and $\Delta = \sqrt{\epsilon_0}$ be the parameters.
Let $D^b$ be a singleton at $v_0 = \ln n - 1 + \sqrt{\epsilon_0}$.
Let the probability mass at $v_1$ be $\frac{1}{n} e^{-3\sqrt{\epsilon_0}(v_1 - v_2)} \approx \frac{1 - 3 \epsilon_0}{n}$ in $D^h$ and $\frac{1}{n}$ in $D^\ell$.
Further, define $D^h$ and $D^\ell$ with the following pdf for $0 \le v < v_1$:
\begin{align*}
f_{D^h}(v) & =
\begin{cases}
e^{-v} & 0 \le v < v_2 \\ 
(1+3\sqrt{\epsilon_0}) e^{-(1+3\sqrt{\epsilon_0})v + 3\sqrt{\epsilon_0} v_2} & v_2 \le v < v_1 
\end{cases} \\
f_{D^\ell}(v) & = e^{-v} 
\end{align*}
%

The corresponding complementary cdf are as follows, via simple calculations:
\begin{align*}
1- F_{D^h}(v) & = 
\begin{cases}
e^{-v} & 0 \le v < v_2 \\ 
e^{-(1+3\sqrt{\epsilon_0})v + 3\sqrt{\epsilon_0} v_2} & v_2 \le v < v_1 \\
0 & v = v_1
\end{cases} \\
1 - F_{D^\ell}(v) & = 
\begin{cases}
e^{-v} & 0 \le v < v_2 \\
0 & v = v_1
\end{cases}
\end{align*}

Conditions \ref{property:base_pointmass}, \ref{property:prob_high_values}, \ref{property:prob_critical_interval}, \ref{property:density_gap}, and \ref{property:v1_point_mass} hold trivially by the construction.\
Condition \ref{property:kl} follows by the construction and \Cref{lem:dptrick}, with $\Omega_1 = [0, v_2)$, $\epsilon_1 = 0$, $\Omega_2 = [v_2, v_1)$, $\epsilon_2 = 3 \sqrt{\epsilon_0}$, and $\Omega_3 = \{ v_1 \}$, $\epsilon_3 = 3\epsilon_0$.
Further, conditions \ref{property:virtual_value_gap}, \ref{property:virtual_value_bound_small_values}, and \ref{property:regular_high_distribution} can be verified from the virtual values of $D^h$ and $D^\ell$ below, which follows from straightforward calculations:
\begin{align*}
\phi_{D^h}(v) & = 
\begin{cases}
v - 1 & 0 \le v < v_1 \\ 
v - \frac{1}{1 + 3\sqrt{\epsilon_0}} \approx v - 1 + 3\sqrt{\epsilon_0} & v_2 \le v < v_1 \\
\ln n & v = v_1
\end{cases} \\
\phi_{D^\ell}(v) & = 
\begin{cases}
v - 1 \qquad\qquad\qquad\qquad\quad~ & 0 \le v < v_2 \\
\ln n & v = v_1
\end{cases}
\end{align*}

To show condition \ref{property:revenue_gap}, by our choice of $p$ and $\Delta$, it remains to show that $\opt(\mathbf{D}) \le O(\ln n)$ for all $\mathbf{D} \in \mathcal{H}$.
This holds trivially because the values are upper bounded by $\ln n$ in all three distributions $D^b$, $D^h$, and $D^\ell$.
Putting together shows the weaker lower bound of $\tilde{\Omega}(n \epsilon^{-3/2})$.

\begin{figure}
	\centering
	\includegraphics[width=0.48\textwidth]{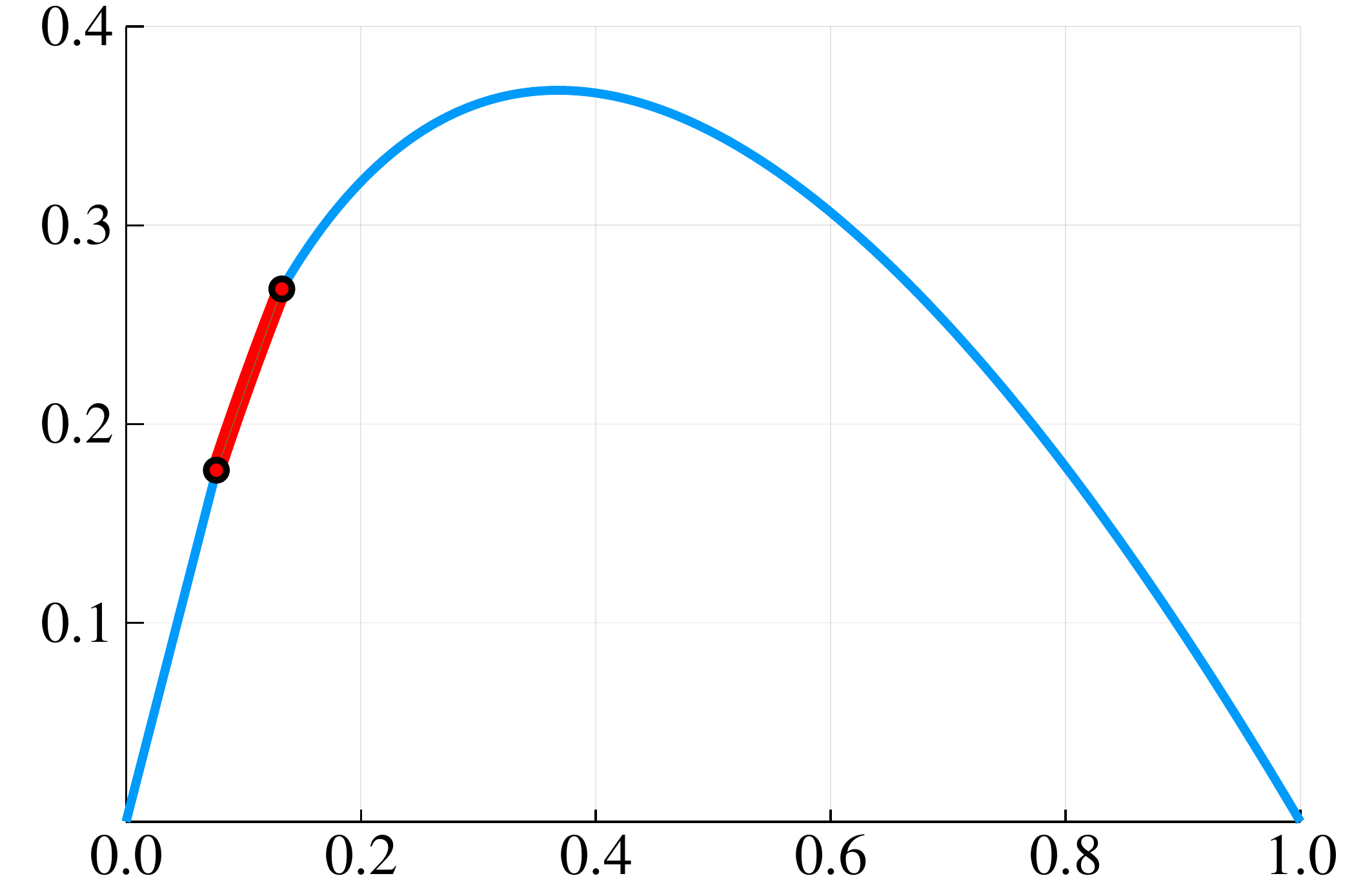}
	\includegraphics[width=0.48\textwidth]{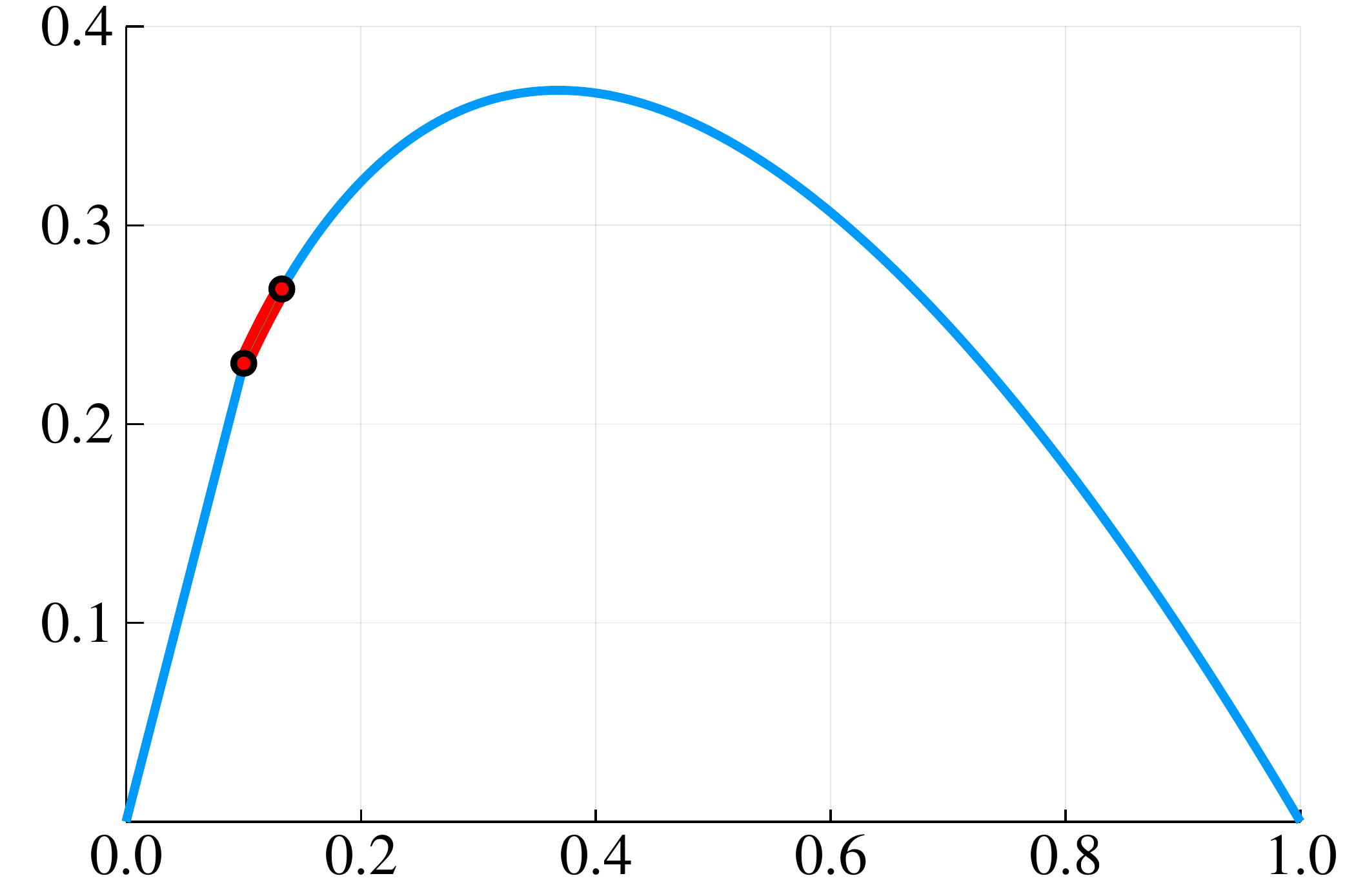}
	\caption{
		The left-hand-size and the right-hand-side are the revenue curves, in the quantile space, of $D^h$ and $D^\ell$ respectively used in the lower bound for continuous MHR distributions, with $n = 10$, and $\epsilon_0 = 0.1$.
		The critical value interval between $v_2$ and $v_1$ is plotted in bold (red).
	}
\label{fig:continuous_mhr}
\end{figure}

	\section{Matroid Constraint: Upper Bound}
\label{app:matroid_ub}

In this section, we generalize our analysis of sample complexity upper bound to single-parameter auctions where each bidder is unit-demanded and the allocation has to meet a matroid constraint $\mathcal{M}=([n],\mathcal{I})$ with rank $k$. 
Readers may think of the $k$-uniform matroid, i.e., there are $k$ copies of the item and therefore any feasible allocation can allocate to up to $k$ bidders, as a running example.

Formally, we define auction with matroid constraint as follows:
\begin{definition}[Auction with matroid constraint]
An auction with matroid constraint of rank $k$ is specified by an integer $k\in [n]$ a matroid $\mathcal{M}=([n],\mathcal{I})$ with rank $k$, where the allocation vector $\mathbf{x}$ must be probabilistic combination of indicator function of independent sets of $\mathcal{M}$. 
\end{definition}

For simplicity, we only consider the case of $[0,1]$-bounded support distributions with additive error. We will show that our algorithm can guarantee an up to $\epsilon$ error to the optimal expected revenue with $\tilde{O}(\frac{kn}{\epsilon^2})$ samples.

\begin{theorem}\label{thm:matroid-ub}
	For any $0 < \epsilon < \frac{1}{2}$, any $n$-bidder product value distribution $\mathbf{D}$ bounded in $[0,1]^n$, any $k\in [n]$ and any matroid constraint $\mathcal{M}=([n],\mathcal{I})$ with rank $k$, \Cref{alg:dominated_empirical_myerson} returns a mechanism with an expected revenue at least $\opt(\mathbf{D}) - O(\epsilon)$ with high probability, if $m$ is at least $\tilde{\Omega}(kn\epsilon^{-2})$.
\end{theorem}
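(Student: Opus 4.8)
The plan is to re-run the two-step meta-analysis of \Cref{sec:meta_analysis}, replacing the single-item Myerson auction by the matroid-constrained ironed-virtual-welfare maximizer: given bids, compute ironed virtual values and allocate to the maximum virtual-weight independent set of $\mathcal{M}$ among the bidders with non-negative ironed virtual value. Myerson's characterization and the strong revenue monotonicity of \citet{DevanurHP/2016/STOC} both hold in single-parameter settings with matroid feasibility constraints, so \Cref{lem:strong_revenue_monotonicity} and \Cref{lem:weak_revenue_monotonicity} remain available. Since \Cref{lem:empirical_error_bound}, \Cref{lem:D_and_tilde_E}, and \Cref{lem:tilde_E_and_tilde_D} only concern the quantiles of the input distributions and are indifferent to the feasibility structure, the chain $\mathbf{D} \succeq \mathbf{\tilde{E}} \succeq \mathbf{\tilde{D}}$ still holds, and Step~1 carries over verbatim: $\rev(M_{\mathbf{\tilde{E}}}, \mathbf{D}) \ge \opt(\mathbf{\tilde{E}}) \ge \opt(\mathbf{\tilde{D}})$. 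It remains to show $\opt(\mathbf{\tilde{D}}) \ge \opt(\mathbf{D}) - O(\epsilon)$.

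For Step~2 I would build surrogates $\mathbf{D'}$ and $\mathbf{\tilde{D}'} = \doubleshading(\mathbf{D'})$ as in \Cref{lem:auxiliary_distributions}, scaling the truncation parameters down by a factor of $k$ to absorb the fact that a single mistake can now cost up to the social welfare, which is at most $k$. Concretely, the bottom truncation $\truncatebottom_{\epsilon/k}$ loses at most $(\epsilon/k)\opt(\mathbf{D}) \le \epsilon$ in revenue — the matroid analogue of \Cref{lem:truncate_bottom}, whose per-bidder threshold coupling still applies since, conditioned on the other bids, each bidder faces a monotone allocation rule — and leaves a point mass of weight $\ge \epsilon/k = \tilde\Omega(1/m)$ at $0$. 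For the top truncation I would take $\bar v_i = \sup\{v : q^{D_i}(v) \ge \epsilon/(2nk)\}$; losing at least one bidder above her threshold costs at most $k$, so the revenue loss is at most $k\sum_i q^{D_i}(\bar v_i) = O(\epsilon)$, and $q^{D_i}(\bar v_i) \ge \epsilon/(2nk) = \tilde\Omega(1/m)$ meets the hypothesis of \Cref{lem:KL_upper_bound}. Then $\mathbf{D'}$ is $[0,1]$-bounded, $\opt(\mathbf{D'}) \ge \opt(\mathbf{D}) - O(\epsilon)$, $\mathbf{\tilde{D}} \succeq \mathbf{\tilde{D}'}$, and $\skl(\mathbf{D'}, \mathbf{\tilde{D}'}) = \sum_i \skl(D'_i, \tilde D'_i) = \tilde O(n/m)$ by \Cref{lem:KL_upper_bound}, with $k$ entering only inside logarithms.

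The one genuinely new ingredient is condition~2 of \Cref{lem:difference_ori_shaded}: for any mechanism on a $[0,1]$-bounded rank-$k$ matroid auction, $N = \tilde O(k\epsilon^{-2})$ i.i.d.\ sample profiles should suffice to estimate its expected revenue to additive $\epsilon$. The realized per-sample revenue $\sum_i p_i(\mathbf{v})$ lies in $[0,k]$, so the crude range bound gives only $\tilde O(k^2\epsilon^{-2})$; instead I would prove that its variance is $O(k)$ and apply Bernstein's inequality (\Cref{lem:bernstein}), which then needs only $\tilde O(k\epsilon^{-2})$ samples. With this in hand, \Cref{lem:difference_ori_shaded} applied with $N = \tilde O(k\epsilon^{-2})$ and $\alpha = O(\epsilon)$ requires $\skl(\mathbf{D'}, \mathbf{\tilde{D}'}) \le cN^{-1} = \tilde\Omega(\epsilon^2/k)$, which combined with $\skl(\mathbf{D'}, \mathbf{\tilde{D}'}) = \tilde O(n/m)$ is exactly $m = \tilde\Omega(nk\epsilon^{-2})$; chaining gives $\opt(\mathbf{\tilde{D}}) \ge \opt(\mathbf{\tilde{D}'}) \ge \opt(\mathbf{D'}) - O(\epsilon) \ge \opt(\mathbf{D}) - O(\epsilon)$, which together with Step~1 finishes the proof.

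The main obstacle is the variance bound $\mathrm{Var}_{\mathbf{v}}\big[\sum_i p_i(\mathbf{v})\big] = O(k)$. The intended argument is that the per-bidder payments — each supported in $[0,1]$, with at most $k$ of them nonzero on any profile — have non-positive total pairwise covariance, a negative-association-type property of matroid auctions run on independent values (one bidder winning uses up matroid capacity), so that $\mathrm{Var}\big[\sum_i p_i\big] \le \sum_i \mathrm{Var}(p_i) \le \sum_i \E[p_i^2] \le \sum_i \E[p_i] \le k$, the last step because $\sum_i \E[p_i]$ is the expected revenue, itself at most $k$. A fallback is an Efron--Stein estimate: perturbing one bidder's value triggers at most a single exchange in the greedy matroid allocation and hence a bounded, local change in payments, which is moreover nonzero essentially only when that bidder participates in the winning set, so the sum of squared coordinate differences is $O(\E[\text{welfare}]) = O(k)$. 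Carefully handling the ironing step and the identity of the swapped-out bidder is where the real work lies; everything else is a routine adaptation of \Cref{sec:meta_analysis}.
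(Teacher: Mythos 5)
Your Step~1 (revenue monotonicity chain) and your choice of surrogate truncation parameters match the paper's, and your accounting of where $k$ enters the $\skl$ bound is right. The real divergence — and the gap — is in how you discharge condition~2 of \Cref{lem:difference_ori_shaded} for the rank-$k$ setting.

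You want to estimate $\rev(M,\cdot)$ directly with $N=\tilde O(k\epsilon^{-2})$ samples by proving $\mathrm{Var}\bigl[\sum_i p_i(\mathbf{v})\bigr]=O(k)$, but neither of your two arguments for this goes through. First, \Cref{lem:difference_ori_shaded} as stated quantifies over \emph{all} mechanisms, and for an arbitrary matroid-feasible $M$ the variance is $\Theta(k^2)$ in the worst case (e.g., a mechanism that, whenever $v_1\ge 1/2$, sells to a fixed size-$k$ independent set each at price $1/2$: all $k$ payments are perfectly correlated). So the "negative-association-type" covariance inequality is simply false at the level of generality the lemma is written in. Second, even if you only invoke the argument for $M=M_{\mathbf{D'}}$, your Efron--Stein fallback misses that in a matroid VCG a single bid is not "local": in a $k$-uniform matroid, resampling the bidder who sets the threshold changes the threshold paid by \emph{all} $k$ winners simultaneously, so the per-coordinate squared change can be $\Theta(k^2)$ and the "one exchange" intuition does not localize the payment change. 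Establishing $O(k)$ variance for the matroid Myerson auction on arbitrary product distributions would be its own nontrivial lemma, and you do not prove it.

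The paper sidesteps the variance question entirely with a hybrid argument. It partitions $[n]$ into $k$ groups $B_1,\dots,B_k$ of size $n/k$ and re-shades one group at a time, producing hybrids $\mathbf{D_0}=\mathbf{\tilde D'},\dots,\mathbf{D_k}=\mathbf{D'}$. Three ingredients make each step cheap: (i) a matroid exchange lemma (\Cref{lem:matroid-change-value}, via \Cref{lem:matroid_opt_dif_and_group_dif}) showing that when only the values in group $B_j$ are raised, the winner set can change only within $B_j$, so $\opt(\mathbf{D_j})-\opt(\mathbf{D_{j-1}})$ is controlled by the change in the \emph{within-group} payment $\sum_{l\in B_j}p_l$; (ii) a probabilistic-method argument (\Cref{lem:matroid_ub_partition}, Bernstein for sampling without replacement) producing a partition for which the number of winners inside every group is $O(\log(n/\epsilon))$ with high probability, making the per-step estimation target a bounded, low-mean random variable that needs only $\tilde O(k^2\epsilon^{-2})$ samples for accuracy $\epsilon/k$; and (iii) $\skl(\mathbf{D_j},\mathbf{D_{j-1}})=\tilde O(n/(km))$ since only one group is re-shaded per step (\Cref{lem:hybrid-KL-ub}). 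Balancing $\tilde O(n/(km))\le \tilde O(\epsilon^2/k^2)$ gives $m=\tilde O(nk\epsilon^{-2})$, and the $k$ hybrid steps each contribute an $\epsilon/k$ loss, summing to $\epsilon$. If you want to pursue your direct route, you would need to actually prove the $O(k)$ variance bound for the matroid Myerson mechanism (for both surrogate distributions and with possibly non-i.i.d.\ bidders), which is the part your proposal defers; the hybrid/grouping route is precisely what lets the paper avoid proving any such statement.
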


In particular, this bound holds for the $k$-uniform matroid, which corresponds to having $k$ identical copies of the item.
Further, we will establish for this case a matching lower bound, up to a polylogarithmic factor, in Appendix~\ref{app:matroid_lb}

\begin{corollary}
	For any $0 < \epsilon < \frac{1}{2}$, any $n$-bidder product value distribution $\mathbf{D}$ bounded in $[0,1]^n$, any $k\in [n]$ such that the total number of identical items is $k$ and each bidder has unit-demand, \Cref{alg:dominated_empirical_myerson} returns a mechanism with an expected revenue at least $\opt(\mathbf{D}) - O(\epsilon)$, with probability at least $1-o(1)$, if $m$ is at least $\tilde{\Omega}(kn\epsilon^{-2})$.
\end{corollary}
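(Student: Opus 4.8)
The plan is to instantiate the two-step meta-analysis of \Cref{sec:meta_analysis} with every single-item ingredient replaced by its matroid analogue. All of these are available: Myerson's optimal mechanism for a single-parameter matroid environment allocates an independent set of maximum total ironed virtual value and charges threshold payments; strong revenue monotonicity (\Cref{lem:strong_revenue_monotonicity}) is proved by \citet{DevanurHP/2016/STOC} at exactly this level of generality; and \Cref{lem:empirical_error_bound,lem:D_and_tilde_E,lem:tilde_E_and_tilde_D,lem:truncate_bottom} concern one coordinate at a time and are oblivious to the feasibility constraint. Hence \Cref{lem:analysis_revenue_monotonicity} holds verbatim with $\rev,\opt$ read relative to $\mathcal M$: with probability $1-\delta$, $\rev(M_{\mathbf{\tilde E}},\mathbf D)\ge\opt(\mathbf{\tilde D})$ for $\mathbf{\tilde D}=\doubleshading(\mathbf D)$, so it remains to prove $\opt(\mathbf{\tilde D})\ge\opt(\mathbf D)-O(\epsilon)$.

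For Step 2 I would follow \Cref{lem:auxiliary_distributions} with $\beta=1$ and, crucially, $p=1/k$ (the single-item case would use $p=1$; the factor $k$ is what drives the extra $k$ in the sample bound). Set $\bar v_i=\sup\{v:q^{D_i}(v)\ge\epsilon^2/(kn)\}$. Its hypotheses hold: condition 1 since $\bar v_i\le 1\le p^{-1}\beta$; condition 2 by the choice of $\bar v_i$; condition 3 since capping coordinate $i$ at $\bar v_i$ loses, in expectation, at most $\sum_i q^{D_i}(\bar v_i)\le\epsilon^2/k\le\epsilon\beta$ of the optimal revenue (on any sample the over-charge is at most the number of bidders above their caps, as all values are at most $1$). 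One small change to the construction is needed: truncate the bottom with $\truncatebottom_\rho$ for $\rho=\Theta(\ln(2mn\delta^{-1})/m)$ rather than $\truncatebottom_\epsilon$, so that the mass of $0$ still meets the hypothesis of \Cref{lem:KL_upper_bound} while the induced revenue loss is at most $\rho\,\opt(\mathbf D)\le\rho k\le\epsilon$ (using $m\ge\tilde\Omega(k\epsilon^{-1})$). Then, setting $\mathbf D'=\truncatebottom_\rho\circ\truncatetop_{\mathbf{\bar v}}(\mathbf D)$ and $\mathbf{\tilde D}'=\doubleshading(\mathbf D')$, \Cref{lem:KL_upper_bound} applied coordinate-wise (its statement is about a single distribution) and additivity of KL give $\skl(\mathbf D',\mathbf{\tilde D}')=\tilde O(n/m)=O(\epsilon^2/k)$ once $m\ge\tilde\Omega(nk\epsilon^{-2})$, while $\mathbf{\tilde D}\succeq\mathbf{\tilde D}'$ and $\opt(\mathbf D')\ge\opt(\mathbf D)-O(\epsilon)$ as in \Cref{lem:auxiliary_distributions}. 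By weak revenue monotonicity it suffices to show $\opt(\mathbf{\tilde D}')\ge\opt(\mathbf D')-O(\epsilon)$.

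This last inequality comes from the distinguishing argument behind \Cref{lem:difference_ori_shaded}, applied to the single mechanism $M^\star=M_{\mathbf D'}$ (enough, since $\opt(\mathbf{\tilde D}')\ge\rev(M^\star,\mathbf{\tilde D}')$): if $\rev(M^\star,\mathbf{\tilde D}')<\opt(\mathbf D')-2\epsilon$, then estimating $\rev(M^\star,\cdot)$ to additive error $\epsilon$ distinguishes $\mathbf D'$ from $\mathbf{\tilde D}'$, contradicting $\skl(\mathbf D',\mathbf{\tilde D}')=O(\epsilon^2/k)$ together with \Cref{lem:kl_divergence}, provided $N=\tilde O(k\epsilon^{-2})$ samples suffice for that estimation. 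Establishing the latter is the main obstacle. The per-sample revenue of any matroid mechanism on a $[0,1]$-bounded distribution lies in $[0,k]$, so a crude variance bound of $O(k^2)$ would only give $N=\tilde O(k^2\epsilon^{-2})$, i.e.\ a sample complexity $\tilde O(nk^2\epsilon^{-2})$ --- a factor $k$ too large. To shave it one uses that $\mathbf D'$ and $\mathbf{\tilde D}'$ are \emph{product} distributions and that the winning set of $M^\star$ has size at most $k$: the per-sample revenue is a function of the $n$ independent coordinates whose influence is concentrated on the $O(k)$ bidders in the winning set (the only subtlety being boundary effects when a bidder enters or leaves the optimal basis), so an Efron--Stein/bounded-differences estimate bounds its variance by $O(k)$; Bernstein's inequality (\Cref{lem:bernstein}) with variance $O(k)$ and range $k$ then yields $N=\tilde O(k\epsilon^{-2})$. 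Feeding this back through \Cref{lem:difference_ori_shaded} gives $\opt(\mathbf{\tilde D}')\ge\opt(\mathbf D')-O(\epsilon)$; chaining with $\opt(\mathbf{\tilde D})\ge\opt(\mathbf{\tilde D}')$, $\opt(\mathbf D')\ge\opt(\mathbf D)-O(\epsilon)$, Step 1, and a constant rescaling of $\epsilon$ completes the proof, and the $k$-uniform corollary is the special case of a uniform matroid.
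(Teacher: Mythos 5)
Your Step~1 (revenue monotonicity applied coordinate-wise) is the same as the paper's, and your choices $\beta=1$, $p=1/k$, $\bar v_i=\sup\{v: q^{D_i}(v)\ge\epsilon^2/(kn)\}$, the modified $\truncatebottom$, and the KL bound $\skl(\mathbf D',\mathbf{\tilde D}')=\tilde O(n/m)$ are all consistent with the paper's surrogate construction. Where you diverge is Step~2, and that is where the argument has a genuine gap. You apply the distinguishing argument to the full revenue $f(\mathbf v)=\sum_j p_j(\mathbf v)$ and justify $N=\tilde O(k\epsilon^{-2})$ by asserting an Efron--Stein variance bound of $O(k)$. That assertion is false in general: take $n=k+1$, bidders $1,\dots,k$ drawn i.i.d.\ from $U[1-\eta,1]$ for tiny $\eta$, and bidder $n$ from $U[0,1]$. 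Bidders $1,\dots,k$ always win and each pays the threshold $\max(1/2,\,v_n)$ set by bidder $n$, so $f(\mathbf v)\approx k\cdot\max(1/2,v_n)$ and $\mathrm{Var}(f)=\Theta(k^2)$ (the surrogate $\mathbf D'$ and $M^\star=M_{\mathbf D'}$ are small perturbations of $\mathbf D$ and $M_{\mathbf D}$ for large $m$, so the same order persists). The phrase ``influence is concentrated on the $O(k)$ bidders in the winning set'' buries the real obstacle: a single bidder on the boundary of the optimal basis can be the threshold-setter for \emph{all} $k$ winners simultaneously, producing a per-coordinate difference of order $k$ and hence an Efron--Stein sum, and variance, of order $k^2$. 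With variance $\Theta(k^2)$, Bernstein only yields $N=\tilde O(k^2\epsilon^{-2})$, and feeding that through $\skl^{-1}\ge N$ gives $m=\tilde O(nk^2\epsilon^{-2})$ --- exactly the factor-$k$-too-large bound you yourself flagged.

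The paper's Step~2 is genuinely different and is designed to sidestep the variance obstruction. It uses a hybrid argument: partition $[n]$ at random into $k$ groups $B_1,\dots,B_k$ of size $n/k$, and interpolate from $\mathbf{\tilde D}'$ to $\mathbf D'$ one group at a time, giving hybrids $\mathbf D_0,\dots,\mathbf D_k$ with $\skl(\mathbf D_j,\mathbf D_{j-1})=\tilde O(n/(km))$. Two ingredients that are absent from your proposal then carry the argument. First, a matroid exchange lemma (\Cref{lem:matroid-change-value}) shows that when only coordinates in $B_j$ become ``higher'', the winners outside $B_j$ under $M_{\mathbf D_j}$ weakly \emph{shrink}, so the revenue difference $\opt(\mathbf D_j)-\opt(\mathbf D_{j-1})$ is upper-bounded by the change in the \emph{group payment} $\sum_{l\in B_j}p_l$ alone (\Cref{lem:matroid_opt_dif_and_group_dif}); the estimation target per hybrid step is therefore a single group's payment rather than the full revenue. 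Second, for a random partition, each group has at most $O(\log(n/\epsilon))$ winners except with probability $\mathrm{poly}(\epsilon/n)$ (\Cref{lem:matroid_ub_partition}, via Bernstein for sampling without replacement), so the truncated group payment is $O(\log(n/\epsilon))$-bounded and can be estimated to error $\epsilon/k$ with $N=\tilde O(k^2\epsilon^{-2})$ samples. Matching $N^{-1}$ against the per-step KL gives $\opt(\mathbf D_j)-\opt(\mathbf D_{j-1})\le\epsilon/k$, and summing over $k$ steps yields the claimed $\tilde O(nk\epsilon^{-2})$ sample complexity. To repair your proof you would either need to prove the $O(k)$ variance bound (which the example above rules out), or adopt a group-by-group decomposition as the paper does.
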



Similar to the single-item case, we use a two-step argument to show the sample complexity upper bound: the first step uses revenue monotonicity and the second step uses information theory. Our result for the second step can be generated into a broader class of auctions, which is stated as follows:

To show this lemma, we use 
a hybrid argument: 
First partition $[n]$ into $k$ groups of size $\frac{n}{k}$ as $B_1,\cdots,B_k$, then define $k+1$ \emph{hybrid distributions} $\mathbf{D_0},\cdots,\mathbf{D_k}$ where $\mathbf{D_0}=\mathbf{\tilde{D}'}$, $\mathbf{D_k}=\mathbf{D'}$ (here $\mathbf{D'}$ is the auxiliary distribution defined in Lemma \ref{lem:auxiliary_distributions}), and $\forall i\in [k]$, $\mathbf{D_i}$ is the product of marginal auxiliary distributions for bidders in $\cup_{j=1}^iB_j$ and their doubly shaded version for bidders in $\cup_{j=i+1}^k B_j$. Therefore $\forall i\in [k+1]$, $\mathbf{D_{i}}\succ \mathbf{D_{i-1}}$. We then show an upper bound of $\opt(\mathbf{D_i}) - \opt(\mathbf{D_{i-1}})$ for all $i$ using information theory.

\begin{definition}[Hybrid Distributions]
    Let $B_1, \dots, B_k$ be a partition of $[n]$ and each group has size $\frac{n}{k}$. 
    Define $\mathbf{D_i}$ to be shaded surrogate only on the coordinates that belong to $\bigcup_{\ell=1+1}^k B_\ell$, i.e., $\mathbf{D_i}\defeq\prod_{j\in \cup_{\ell=1}^iB_\ell}D_j'\times\prod_{j\in \cup_{\ell=i+1}^kB_\ell}\tilde{D_j'}$.
\end{definition}

\begin{lemma}\label{lem:hybrid-KL-ub}
	For any partition $B_1,\cdots,B_k$ of $[n]$ where each group has size $\frac{n}{k}$, and $\forall 2\le i\le k$, $\skl(\mathbf{D_i},\mathbf{D_{i-1}})=\tilde{O}\left(\frac{n}{km}\right)$.
\end{lemma}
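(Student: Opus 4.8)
The plan is to reduce the bound on $\skl(\mathbf{D_i},\mathbf{D_{i-1}})$ to the single-bidder KL bound already established in \Cref{lem:KL_upper_bound}, exactly as part d) of \Cref{lem:auxiliary_distributions} does for the full product distribution. The key observation is that $\mathbf{D_i}$ and $\mathbf{D_{i-1}}$ are product distributions that differ \emph{only} on the coordinates in the group $B_i$: on coordinates in $\cup_{\ell=1}^{i-1}B_\ell$ both equal $D_j'$, on coordinates in $\cup_{\ell=i+1}^k B_\ell$ both equal $\tilde D_j'$, and on coordinates $j\in B_i$ one has $D_j'$ while the other has $\tilde D_j'=\doubleshading(D_j')$. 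Since KL divergence (and hence $\skl$) is additive over independent coordinates, $\skl(\mathbf{D_i},\mathbf{D_{i-1}})=\sum_{j\in B_i}\skl(D_j',\tilde D_j')$.

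Next I would invoke the per-coordinate bound from the proof of part d) of \Cref{lem:auxiliary_distributions}: for every bidder $j$, the surrogate $D_j'$ and its double shading satisfy the hypotheses of \Cref{lem:KL_upper_bound} with $\ell=0$, $u=\bar v_j$, $p_\ell\ge\epsilon$ and $p_u\ge p\epsilon^2n^{-1}$, so that $\skl(D_j',\tilde D_j')=O\!\big(\tfrac{\ln(mn\delta^{-1})}{m}\ln(n\epsilon^{-1}p^{-1})\big)$. In the $[0,1]$-bounded additive case relevant here we have $p=\Theta(1)$ (as in the $[0,1]$ instantiation in \Cref{sec:upper_bound_multi_bidder}), so each term is $\tilde O(1/m)$. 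Summing over the $|B_i|=n/k$ coordinates in $B_i$ gives $\skl(\mathbf{D_i},\mathbf{D_{i-1}})=|B_i|\cdot\tilde O(1/m)=\tilde O\!\big(\tfrac{n}{km}\big)$, which is exactly the claimed bound. I should also note $\mathbf{D_i}\succeq\mathbf{D_{i-1}}$ coordinate-wise, which follows since $\doubleshading$ is monotone and $q^{D_j'}\ge q^{\tilde D_j'}$ — useful for the surrounding hybrid argument though not strictly needed for the KL statement itself.

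The one point that needs a little care — and the main (mild) obstacle — is making sure the construction of the surrogates $D_j'$ in \Cref{lem:auxiliary_distributions} genuinely yields per-coordinate quantities $p_\ell, p_u$ that are $\tilde\Omega(1/m)$ uniformly in $j$, so that \Cref{lem:KL_upper_bound} applies with a $\ln(mn\delta^{-1})$-type factor rather than blowing up; this is precisely what conditions 1--3 of \Cref{lem:auxiliary_distributions} with $p=\Theta(1)$ guarantee in the $[0,1]$ case, so I would state explicitly that we are invoking that instantiation. A secondary subtlety is that the hybrid indices run $2\le i\le k$ (the $i=1$ step, comparing $\mathbf{D_1}$ with $\mathbf{D_0}=\mathbf{\tilde D'}$, is handled separately in the surrounding argument), so I only need the bound for those $i$; but since the bound is symmetric in the roles and depends only on $|B_i|$, the same estimate works for every $i$ including $i=1$. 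Modulo these bookkeeping remarks the proof is a direct combination of additivity of KL over product distributions with the already-proved single-coordinate bound, so I would keep it to a few lines.
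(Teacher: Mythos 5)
Your proof is correct and follows essentially the same route as the paper's: observe that $\mathbf{D_i}$ and $\mathbf{D_{i-1}}$ differ only on the coordinates in $B_i$, use additivity of $\skl$ over product coordinates to reduce to $\sum_{j\in B_i}\skl(D_j',\tilde D_j')$, bound each term by $\tilde O(1/m)$ via \Cref{lem:KL_upper_bound}, and sum over $|B_i|=n/k$ terms. The one small bookkeeping slip is that you cite the single-item $[0,1]$ instantiation of the surrogates from \Cref{lem:auxiliary_distributions} (with $p_\ell\ge\epsilon$, $p_u\ge p\epsilon^2 n^{-1}$, $p=\Theta(1)$), whereas in the matroid section the surrogates are the ones from \Cref{lem:matroid-shade-opt-dif}, namely $D_i'=\truncatebottom_{\epsilon/k^2}\circ\truncatetop_{\mathbf{\bar v}}(D_i)$ with $\bar v_i=\sup\{v:q^{D_i}(v)\ge\epsilon^2/(nk^2)\}$; since these thresholds are still polynomial in $\epsilon,n,k$, the per-coordinate bound remains $\tilde O(\ln(mn\delta^{-1})\ln(nk\epsilon^{-1})/m)=\tilde O(1/m)$ and your conclusion is unaffected.
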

\begin{proof}
    According to Lemma~\ref{lem:KL_upper_bound} in Section~\ref{sec:upper_bound_multi_bidder}, $\forall j\in B_i$:
    \begin{equation*}
        \skl(D_j',\tilde{D}_j')\le O(\frac{\ln (mn\delta^{-1}) \ln (nk\epsilon^{-1})}{m}) = \tilde{O}(\frac{1}{m})
    \end{equation*}
    Therefore, 
    \begin{equation*}
        \skl(\mathbf{D_i},\mathbf{D_{i-1}}) \le \sum_{j\in B_i}\skl(D_j',\tilde{D}_j') = \tilde{O}(\frac{n}{km})
    \end{equation*}
\end{proof}

\begin{lemma}[Bernstein's inequality for sampling without replacement, e.g., \cite{bardenet2015concentration}]\label{lem:Bernstein-without-replacement}
    Let $\mathcal{X}=(x_1,\cdots,x_N)$ be a finite population of $N$ points and $(X_1,\cdots,X_n)$ be a random sample drawn without replacement from $\mathcal{X}$. Suppose $\forall i\in[N],x_i\in[a,b]$, let $\mu=\frac{1}{N}\sum_{i=1}^N x_i$ be the mean of $\mathcal{X}$, and $\sigma^2 = \frac{1}{N}\sum_{i=1}^N(x_i-\mu)^2$ be the variance of $\mathcal{X}$, then
    \begin{equation*}
        \Pr\left[\frac{1}{n} \sum_{i=1}^n (X_i-\mu) \ge \epsilon \right] \le exp(-\frac{n\epsilon^2}{2\sigma^2 + (2/3)(b-a)\epsilon}) ~.
    \end{equation*}
\end{lemma}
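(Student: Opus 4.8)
The plan is to prove this by the Cram\'er--Chernoff method after reducing the sampling-without-replacement sum to a sum of i.i.d.\ bounded random variables. Fix $\lambda>0$ and write $S_n=\sum_{i=1}^n X_i$. The Chernoff bound gives
\[
\Pr\!\left[\tfrac1n S_n-\mu\ge\epsilon\right]=\Pr\!\left[S_n-n\mu\ge n\epsilon\right]\le e^{-\lambda n\epsilon}\,\mathbf{E}\!\left[e^{\lambda(S_n-n\mu)}\right],
\]
so the whole statement reduces to (i) controlling the moment generating function $\mathbf{E}[e^{\lambda(S_n-n\mu)}]$ and (ii) optimizing over $\lambda$. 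The trivial case $\sigma^2=0$ (constant population, so $S_n=n\mu$ almost surely) I would dispatch immediately; assume $\sigma^2>0$ hereafter.

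The first real step is a \emph{reduction to the i.i.d.\ case}. Realize the sample as $X_i=x_{\pi(i)}$ for a uniformly random permutation $\pi$ of $[N]$. By the classical fact that the permutation distribution is negatively associated (Joag-Dev and Proschan), the vector $(X_1,\dots,X_n)$ is negatively associated, and hence for the non-negative non-decreasing functions $z\mapsto e^{\lambda z}$,
\[
\mathbf{E}\!\left[e^{\lambda(S_n-n\mu)}\right]=e^{-\lambda n\mu}\,\mathbf{E}\!\left[\prod_{i=1}^n e^{\lambda X_i}\right]\le e^{-\lambda n\mu}\prod_{i=1}^n\mathbf{E}\!\left[e^{\lambda X_i}\right]=\left(\mathbf{E}\!\left[e^{\lambda(X_1-\mu)}\right]\right)^{n},
\]
because each $X_i$ is uniform on the population and therefore has the same law as a single with-replacement draw. (Equivalently one may invoke Hoeffding's convex-ordering lemma, which directly gives $\mathbf{E}[g(S_n)]\le\mathbf{E}[g(\sum_{i=1}^n Y_i)]$ for every convex $g$, in particular $g(s)=e^{\lambda(s-n\mu)}$.) I expect this reduction — showing that the without-replacement sum is MGF-dominated by the with-replacement sum — to be the only non-routine ingredient and hence the main obstacle to a fully self-contained write-up; everything afterwards is the textbook Bernstein computation.

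For the remaining steps I would proceed exactly as in the proof behind \Cref{lem:bernstein}, but keeping only one tail. Set $Z=X_1-\mu$, so $\mathbf{E}[Z]=0$, $\mathbf{E}[Z^2]=\sigma^2$, and $|Z|\le b-a=:c$. Expanding $e^{\lambda Z}$, using $|\mathbf{E}[Z^k]|\le\mathbf{E}|Z|^k\le c^{k-2}\sigma^2$ for $k\ge2$ together with the elementary bound $\sum_{k\ge2}u^{k}/k!\le (u^2/2)/(1-u/3)$ valid for $0<u<3$, yields
\[
\mathbf{E}\!\left[e^{\lambda Z}\right]\le 1+\frac{\lambda^2\sigma^2/2}{1-\lambda c/3}\le\exp\!\left(\frac{\lambda^2\sigma^2/2}{1-\lambda c/3}\right),\qquad 0<\lambda<3/c .
\]
Combining the three displays gives $\Pr[\tfrac1n S_n-\mu\ge\epsilon]\le\exp\!\big(-n(\lambda\epsilon-\tfrac{\lambda^2\sigma^2/2}{1-\lambda c/3})\big)$. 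I would then take $\lambda=\epsilon/(\sigma^2+c\epsilon/3)$, which satisfies $\lambda<3/c$ since $\sigma^2>0$; substituting shows $1-\lambda c/3=\sigma^2/(\sigma^2+c\epsilon/3)$, hence $\lambda\epsilon-\tfrac{\lambda^2\sigma^2/2}{1-\lambda c/3}=\epsilon^2/\big(2(\sigma^2+c\epsilon/3)\big)$, so the bound becomes $\exp\!\big(-n\epsilon^2/(2\sigma^2+(2/3)(b-a)\epsilon)\big)$, which is exactly the claimed inequality. Tightening this calculation is entirely mechanical, so the crux of the argument remains the negative-association (or convex-order) reduction in the second paragraph.
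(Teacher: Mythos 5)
Your proposal is correct. Note that the paper does not prove this lemma at all---it is quoted directly from the cited reference of Bardenet and Maillard---so there is no in-paper argument to compare against. Your derivation (reduce the without-replacement sum to an i.i.d.\ sum via Hoeffding's convex-ordering lemma or negative association of the permutation distribution, then run the standard Bernstein MGF computation and optimize $\lambda=\epsilon/(\sigma^2+(b-a)\epsilon/3)$) is the standard route taken in that reference, and your algebra recovers exactly the stated constants, including the check that $\lambda<3/(b-a)$ and the trivial $\sigma^2=0$ case.
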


\begin{lemma}\label{lem:matroid_ub_partition}
    For any distribution $\mathbf{D}$ that has support in $[0,1]^n$ and any deterministic allocation rule $\mathbf{x}(\mathbf{v})$, there exists a partition $P=\{B_1\cdots B_k\}$ that separate $[n]$ into $k$ groups of size $\frac{n}{k}$, such that, 
    \begin{equation*}
        \Pr_{\mathbf{v}\in\mathbf{D}}\left[\forall i\in [k], \sum_{j\in B_i}x_j(\mathbf{v})\le O(\log \frac{n}{\epsilon})\right] \ge 1-\frac{\epsilon}{4nk^2} ~.
    \end{equation*} 
\end{lemma}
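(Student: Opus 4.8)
The plan is to prove the statement by the probabilistic method: I will analyze a \emph{uniformly random} partition of $[n]$ into $k$ equal-size groups and then swap the order of the two independent sources of randomness, namely the random partition $P$ and the random value profile $\mathbf{v}\sim\mathbf{D}$. Concretely, draw a uniformly random permutation $\pi$ of $[n]$ and set $B_i=\{\pi(j) : (i-1)n/k < j \le in/k\}$ for $i\in[k]$, assuming $k\mid n$ as the statement implicitly does. First I would fix an arbitrary value profile $\mathbf{v}$ and let $S=S(\mathbf{v})=\{j\in[n] : x_j(\mathbf{v})=1\}$. Since the allocation is feasible for the rank-$k$ matroid, $\mathbf{x}(\mathbf{v})$ is the indicator of an independent set, hence $|S|\le k$. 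For a fixed group $B_i$, the quantity $\sum_{j\in B_i}x_j(\mathbf{v})=|B_i\cap S|$ is the number of elements of $S$ that land in a uniformly random $(n/k)$-subset of $[n]$, i.e.\ a hypergeometric random variable with mean $|S|\cdot(n/k)/n=|S|/k\le 1$.

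The second step is the tail bound. I would apply \Cref{lem:Bernstein-without-replacement} to the population $\bigl(x_1(\mathbf{v}),\dots,x_n(\mathbf{v})\bigr)\in\{0,1\}^n$ with sample size $n/k$: the population mean is $|S|/n\le k/n$ and its variance is at most $k/n$, so choosing the deviation parameter $\eta=\tfrac{tk}{2n}$ and plugging in (using $|S|\le k$ so that $t/(n/k)-\mu\ge\eta$ once $t$ exceeds a small constant) yields $\Pr_P\bigl[|B_i\cap S|\ge t\bigr]\le e^{-\Omega(t)}$. Taking $t=C\log(nk/\epsilon)=O(\log(n/\epsilon))$ for a sufficiently large absolute constant $C$ (here I use $k\le n$ to absorb $\log k$ and $\log(1/\epsilon)$ into $\log(n/\epsilon)$), and union-bounding over the $k$ groups, gives $\Pr_P\bigl[\exists i:\sum_{j\in B_i}x_j(\mathbf{v})>t\bigr]\le \tfrac{\epsilon}{4nk^2}$ for this fixed $\mathbf{v}$.

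Finally, since this bound holds for \emph{every} $\mathbf{v}$, by Fubini applied to the nonnegative indicator $\mathbf{1}\bigl[\exists i:\sum_{j\in B_i}x_j(\mathbf{v})>t\bigr]$ over the product measure,
\[
\E_P\Bigl[\Pr_{\mathbf{v}\sim\mathbf{D}}\bigl[\exists i:\textstyle\sum_{j\in B_i}x_j(\mathbf{v})>t\bigr]\Bigr]=\E_{\mathbf{v}\sim\mathbf{D}}\Bigl[\Pr_P\bigl[\exists i:\textstyle\sum_{j\in B_i}x_j(\mathbf{v})>t\bigr]\Bigr]\le\frac{\epsilon}{4nk^2},
\]
so there exists a concrete partition $P=\{B_1,\dots,B_k\}$ with $\Pr_{\mathbf{v}\sim\mathbf{D}}\bigl[\exists i:\sum_{j\in B_i}x_j(\mathbf{v})>t\bigr]\le\tfrac{\epsilon}{4nk^2}$, which is exactly the claim with the threshold $O(\log(n/\epsilon))$.

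The part requiring the most care (rather than a genuine obstacle) is the second step: checking that the sampling-without-replacement tail is strong enough that a threshold of only $O(\log(n/\epsilon))$ beats the target failure probability $\tfrac{\epsilon}{4nk^2}$ even after the union bound over $k$ groups — this means being slightly careful about the regime of $t$ relative to the mean $|S|/k$ and about the arithmetic that folds $\log k$ and $\log(1/\epsilon)$ into $\log(n/\epsilon)$. The matroid rank bound $|S|\le k$ and the Fubini swap are routine.
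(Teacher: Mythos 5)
Your proof is correct and follows essentially the same route as the paper's: you analyze a uniformly random equal-size partition, fix $\mathbf{v}$ and note $\sum_{j\in B_i}x_j(\mathbf{v})$ is a hypergeometric sum with mean at most $1$ (via $|S|\le k$), apply the same Bernstein-for-sampling-without-replacement bound (\Cref{lem:Bernstein-without-replacement}), union bound over the $k$ groups, and then swap the two sources of randomness (Fubini/averaging) to extract a concrete good partition. The only differences are cosmetic — you parametrize the threshold as $C\log(nk/\epsilon)$ and observe $k\le n$ to fold it into $O(\log(n/\epsilon))$, whereas the paper plugs in an explicit constant directly — so there is nothing to correct.
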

\begin{proof}
    Suppose $\mathcal{P}$ is the set of all partitions that separate $[n]$ into $k$ groups of size $\frac{n}{k}$. Let $U(\mathcal{P})$ be the uniform distribution of $\mathcal{P}$, if for all $\mathbf{v}\in [0,1]^n$,
    \begin{equation}\label{eqn:matroid-random-partition}
        \Pr_{P\sim U(\mathcal{P})}\left[\exists i\in[k] \text{, such that }\sum_{j\in B_i}x_j(\mathbf{v}) \ge 3\log\frac{n}{\epsilon}\right] \le \frac{\epsilon}{4nk^2} ~,
    \end{equation}
    Then
    \begin{equation*}
        \Pr_{\mathbf{v}\sim \mathbf{D}, P\sim U(\mathcal{P})}\left[\exists i\in[k] \text{, such that }\sum_{j\in B_i}x_j(\mathbf{v}) \ge 3\log\frac{n}{\epsilon}\right] \le\frac{\epsilon}{4nk^2}
    \end{equation*}
    Therefore $\exists P\in \mathcal{P}$,
    \begin{equation*}
        \Pr_{\mathbf{v}\sim \mathbf{D}}\left[\exists i\in[k] \text{, such that }\sum_{j\in B_i}x_j(\mathbf{v}) \ge 3\log\frac{n}{\epsilon}\right] \le \frac{\epsilon}{4nk^2} ~.
    \end{equation*}
    So it suffices to show (\ref{eqn:matroid-random-partition}). Fix $\mathbf{v}\in[0,1]^n$, for all $i\in [k]$, $\{x_j(\mathbf{v})\}_{j\in B_i}$ is a random sample (without replacement) of $k$ elements in $\{x_j(\mathbf{v})\}_{j\in [n]}$, which has exactly $k$ 1's and $n-k$ 0's. Then according to Bernstein's inequality for sampling without replacement (Lemma \ref{lem:Bernstein-without-replacement}), if $\epsilon<\frac{n}{2k}$,
    \begin{equation*}
        \Pr_{P\sim U(\mathcal{P})}\left[\sum_{j\in B_i}x_j(\mathbf{v})-1>4\log(\frac{n}{\epsilon}) \right] \le exp(-\frac{16\log^2(n/\epsilon)}{2(n-k)/n+8/3\cdot \log(n/\epsilon)})\le \frac{\epsilon^4}{n^4}\le \frac{\epsilon}{8nk^3}
    \end{equation*}
    Therefore from union bound,
    \begin{equation*}
        \Pr_{P\sim U(\mathcal{P})}\left[\exists i\in[k] \text{, such that }\sum_{j\in B_i}x_j(\mathbf{v}) \ge 4\log\frac{n}{\epsilon}\right] \le \frac{\epsilon}{4nk^2} ~.
    \end{equation*}
\end{proof}

\begin{corollary}\label{cor:matroid_ub_low_allocation_inside_group}
    There exists a partition $\mathcal{P}=\{B_1,B_2\cdots,B_k\}$ that partitions $[n]$ into $k$ groups of size $\frac{n}{k}$ such that for any $i\in [k]$,
    \begin{equation*}
        \Pr_{\mathbf{v}\in\mathbf{D_i}}\left[\forall l\in [k], \sum_{j\in B_l}x_j(\mathbf{v})\le O(\log \frac{n}{\epsilon})\right] \ge 1-\frac{\epsilon}{4kn} ~,
    \end{equation*}
\end{corollary}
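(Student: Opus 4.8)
The plan is the probabilistic method: show that a \emph{uniformly random} ordered partition $P$ of $[n]$ into $k$ equal-size groups $B_1,\dots,B_k$ fulfils, with positive probability, all $k$ requirements at once, where the $i$-th requirement is read off the hybrid $\mathbf{D_i}$ built from that same $P$. Fix a constant $C$ and set $g_i(P)=\Pr_{\mathbf v\sim\mathbf{D_i}}\!\big[\exists\,l:\ \sum_{j\in B_l}x_j(\mathbf v)>C\ln(n/\epsilon)\big]$. A union bound over $i$ together with Markov gives $\Pr_P[\exists\,i:\,g_i(P)>\tfrac{\epsilon}{4kn}]\le\sum_{i=1}^k\tfrac{4kn}{\epsilon}\,\E_P[g_i(P)]$, so it is enough to prove $\E_P[g_i(P)]<\tfrac{\epsilon}{8nk^2}$ for each $i$; expanding $\E_P[g_i(P)]=\Pr_{P,\mathbf v\sim\mathbf{D_i}}[\exists\,l:\cdots]$ and taking one more union bound over $l$, it suffices to show, for every pair $(i,l)$, that $\Pr_{P,\mathbf v\sim\mathbf{D_i}}\big[\sum_{j\in B_l}x_j(\mathbf v)>C\ln(n/\epsilon)\big]<\tfrac{\epsilon}{8nk^3}$. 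As $\mathbf x$ is a deterministic, monotone allocation respecting a rank-$k$ matroid, $|W(\mathbf v)|\le k$ for all $\mathbf v$, where $W(\mathbf v)=\{j:x_j(\mathbf v)=1\}$; in particular the claim is trivial when $k\le C\ln(n/\epsilon)$, so assume $k>C\ln(n/\epsilon)$.

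For $l>i$ the group $B_l$ lies in the shaded block. Couple $\mathbf v\sim\mathbf{D_i}$ with $\mathbf v'\sim\mathbf{\tilde D'}$ at matched quantiles so that $\mathbf v\succeq\mathbf v'$ coordinate-wise with equality on all shaded coordinates, which include $B_l$; monotonicity of $\mathbf x$ in the \emph{other} bidders' bids then yields $x_j(\mathbf v)\le x_j(\mathbf v')$ for every $j\in B_l$, hence $\sum_{j\in B_l}x_j(\mathbf v)\le|W(\mathbf v')\cap B_l|$. Since $\mathbf v'$ is drawn from the \emph{fixed} product distribution $\mathbf{\tilde D'}$, independently of $P$, conditioning on $\mathbf v'$ leaves $W(\mathbf v')$ a fixed set of size $\le k$ and $B_l$ a uniform size-$(n/k)$ subset of $[n]$; Bernstein's inequality for sampling without replacement (Lemma~\ref{lem:Bernstein-without-replacement}) then gives $\Pr_P[|W(\mathbf v')\cap B_l|>C\ln(n/\epsilon)]<\tfrac{\epsilon}{8nk^3}$ for $C$ a large enough constant, exactly as in the proof of Lemma~\ref{lem:matroid_ub_partition}, and integrating over $\mathbf v'$ preserves this.

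For $l\le i$ the bidders of $B_l$ are (slightly) boosted, so the naive coupling points the wrong way. Writing $a_j$ for bidder $j$'s unshaded value and $\mathbf b$ for the all-shaded profile, and using that every other coordinate of $\mathbf v$ weakly exceeds its shaded value, monotonicity in others' bids gives $x_j(\mathbf v)\le x_j((a_j,\mathbf b_{-j}))$ for $j\in B_l$; hence $\sum_{j\in B_l}x_j(\mathbf v)\le|B_l\cap\widetilde W|$ with $\widetilde W=\{j:x_j((a_j,\mathbf b_{-j}))=1\}$, which \emph{no longer depends on $P$}. Now $\widetilde W$ overshoots the matroid rank $k$ only by bidders whose shaded virtual value lies below, but whose unshaded virtual value reaches, the per-bidder winning threshold against $\mathbf b$ --- an event whose probability is at most the shading increment of that bidder's quantile, which is $\tilde O\big(\sqrt{n/(p\epsilon^2 m)}\big)$ once $m\ge\tilde\Omega(np^{-1}\epsilon^{-2})$ (using $q^{D_i}(\bar v_i)\ge p\epsilon^2 n^{-1}$ and the density-gap bound of Lemma~\ref{lem:density_approximation}); a Cauchy--Schwarz accounting then gives $\E|\widetilde W|=O(k)$. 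Conditioning on $\mathbf b$ and on $B_l$, the indicators $\{j\in\widetilde W\}_{j\in B_l}$ become \emph{independent} Bernoullis (each $a_j$ is independent and the threshold depends only on $\mathbf b_{-j}$), with mean $\sum_{j\in B_l}p_j(\mathbf b)$; this mean concentrates around $O(1)$ over the random choice of $B_l$ by Lemma~\ref{lem:Bernstein-without-replacement} applied to the population $\{p_j(\mathbf b)\}_j$, and a Chernoff bound then closes the $(i,l)$ estimate. Summing over all $(i,l)$ finishes the proof.

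The step I expect to be the main obstacle is precisely this $l\le i$ analysis: because the set of shaded coordinates of $\mathbf{D_i}$ is decided by the very partition we are trying to control, the relevant winner set is not literally a fixed set, and the work lies in quantifying --- via the quantile/density gap between $D'_j$ and its doubly shaded surrogate, and via the rank-$k$ matroid structure limiting how one bid perturbs the winning threshold --- that flipping these shading bits changes the number of winners inside any single group by only $O(\ln(n/\epsilon))$ with overwhelming probability. The boundary indices $i$ near $1$ or $k$, and the degenerate regime $k\le C\ln(n/\epsilon)$, are handled by the trivial ``at most $k$ winners'' bound.
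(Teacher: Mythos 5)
The paper proves this corollary in one line, and your route is genuinely different. The paper invokes Lemma~\ref{lem:matroid_ub_partition} with the mixture $\mathbf{D}=\tfrac{1}{k}\sum_{i=1}^k\mathbf{D_i}$, obtaining a partition for which the \emph{average} over $i$ of $\Pr_{\mathbf v\sim\mathbf{D_i}}\bigl[\forall\,l:\sum_{j\in B_l}x_j(\mathbf v)\le O(\log(n/\epsilon))\bigr]$ is at least $1-\tfrac{\epsilon}{4nk^2}$, and then observes that since each of the $k$ summands lies in $[0,1]$, the average being within $\tfrac{\epsilon}{4nk^2}$ of $1$ forces every individual summand to be at least $1-\tfrac{\epsilon}{4nk}$. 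This averaging-and-pigeonhole step is the ingredient you missed; it replaces your entire union-bound-over-$(i,l)$-plus-coupling scaffold.

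Your worry about circularity --- the hybrids $\mathbf{D_i}$ are defined with respect to the very partition that Lemma~\ref{lem:matroid_ub_partition} is supposed to produce, and the paper's one-line invocation does not explicitly address this --- is a fair one, so wanting a more careful probabilistic-method argument is well motivated. But the repair you propose has two concrete flaws. First, both your $l>i$ coupling and your $l\le i$ bound $x_j(\mathbf v)\le x_j((a_j,\mathbf b_{-j}))$ rely on $\mathbf x$ being monotone (substitutable) in the other bidders' bids. The corollary, inheriting from Lemma~\ref{lem:matroid_ub_partition}, is stated for an \emph{arbitrary} deterministic rank-$k$ allocation; no such monotonicity is assumed, so your argument targets a weaker claim than the one stated (even if the downstream Myerson allocation does happen to satisfy it). Second, the $l\le i$ step contains a genuine error: you condition on the all-shaded profile $\mathbf b$ and then treat $\{j\in\widetilde W\}_{j\in B_l}$ as independent Bernoullis on the grounds that ``each $a_j$ is independent,'' but $a_j$ and $b_j$ are both deterministic functions of the same quantile $q_j$, so conditioning on $\mathbf b$ pins down $\mathbf q$, hence $\mathbf a$, hence $\widetilde W$ in full --- there is no residual Bernoulli randomness to which a Chernoff bound could apply. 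The Cauchy--Schwarz claim $\E|\widetilde W|=O(k)$ is also asserted rather than derived, and in any case an expectation bound alone does not supply the high-probability control on $|\widetilde W|$ that you need before applying Bernstein over the random choice of $B_l$.
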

\begin{proof}
    We can choose the $\mathbf{D}$ in Lemma~\ref{lem:matroid_ub_partition} to be $\frac{1}{k}\sum_{i=1}^k\mathbf{D_i}$, then we have
    \begin{equation*}
    \frac{1}{k}\sum_{i=1}^k\Pr_{\mathbf{v}\in\mathbf{D_i}}\left[\forall l\in [k], \sum_{j\in B_l}x_j(\mathbf{v})\le O(\log \frac{n}{\epsilon})\right]\ge 1-\frac{\epsilon}{4nk^2}
    ~.    
    \end{equation*}
    
\end{proof}

\begin{lemma}{(\cite{schrijver2003combinatorial}, Corollary 39.12a)}\label{lem:maxindeset-perfect-matching}
If both $I$ and $I'$ are independent sets of matroid $\mathcal{M}=([n],\mathcal{I})$ and $|I|=|I'|$, then there exists a perfect matching $M$ between $I \backslash I'$ and $I' \backslash I$ such that for each $i\in I \backslash I'$ and $j\in I' \backslash I$ such that $(i,j)\in M$, we have $I\cup \{j\} \backslash \{i\}\in \mathcal{I}$ and $I'\cup \{i\} \backslash \{j\}\in \mathcal{I}$.
\end{lemma}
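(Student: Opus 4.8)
The plan is to exhibit the required matching as a perfect matching in an \emph{exchange graph} and to certify its existence by a Hall-type deficiency argument, after two standard reductions.

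First I would reduce to the case of two disjoint \emph{bases}. Passing to the rank-$|I|$ truncation of $\mathcal M$ leaves the ground set and the sets $I,I'$ unchanged, makes them bases, and does not affect whether a size-$|I|$ swap such as $I-i+j$ or $I'-j+i$ is independent; then contracting $I\cap I'$ and restricting to $A:=I\setminus I'$ together with $B:=I'\setminus I$ turns $A$ and $B$ into disjoint bases of a matroid on $A\cup B$, again preserving whether each of these exchanges lands in $\mathcal I$. So it suffices to prove: for disjoint bases $A,B$ there is a bijection $\sigma:A\to B$ with $A-a+\sigma(a)$ and $B-\sigma(a)+a$ both bases for every $a\in A$, and undoing the two reductions then yields the lemma.

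Next I would form the bipartite graph $G$ on parts $A$ and $B$ with an edge $ab$ exactly when \emph{both} $A-a+b$ and $B-b+a$ are bases --- equivalently, in terms of fundamental circuits, when $a$ lies on the circuit of $A+b$ and $b$ lies on the circuit of $B+a$ --- and look for a perfect matching, for which I would verify Hall's condition $|N_G(S)|\ge|S|$ for every $S\subseteq A$. For the one-directional variant $G_1$ (edge $ab$ iff $A-a+b$ is a base) this is a short rank computation: the non-neighbours of $S$ inside $B$ are precisely $B\cap\overline{A\setminus S}$, an independent set contained in a flat of rank $|A|-|S|$, hence of size at most $|A|-|S|$, so $|N_{G_1}(S)|\ge|S|$, which already gives the one-sided strong exchange. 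The harder part is to upgrade this to the two-sided graph $G$, where a set $S\subseteq A$ must be served by exchanges valid simultaneously for $A$ and for $B$; here the plan is to invoke the matroid-union theorem and use it to bound the deficiency of $G$, playing $\mathcal M$ against its dual $\mathcal M^\ast$ so that the rank terms on both sides are controlled at once. Since this is precisely the statement quoted from \cite{schrijver2003combinatorial}, in the paper we simply cite it rather than reprove it.

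The step I expect to be the main obstacle is exactly this two-sided upgrade: one-directional exchange is an elementary rank/flat computation, but forcing a \emph{single} matching every edge of which is a valid exchange for both $A$ and $B$ requires the matroid-union (or an equivalent min-max) machinery, and that is the nontrivial content of the cited corollary.
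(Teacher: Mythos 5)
The paper does not prove this lemma; it is quoted verbatim from Schrijver's book and invoked as an external black box, exactly as you propose, so there is no in-paper proof to compare your sketch against. Your supplementary reconstruction is nonetheless accurate: the reduction (truncate to rank $|I|$, contract $I\cap I'$, restrict to the symmetric difference) correctly reduces to disjoint bases while preserving independence of all the size-$|I|$ swaps; your one-sided Hall computation is right, since the non-neighbours of $S\subseteq A$ in the one-sided graph are precisely $B\cap\mathrm{span}(A\setminus S)$, an independent subset of a flat of rank $|A|-|S|$; and you correctly flag that forcing a single matching that is simultaneously an exchange for $A$ and for $B$ (equivalently, via duality, a common basis of $\mathcal M$ and $\mathcal M^\ast$) is the nontrivial content of Schrijver's Corollary 39.12a, which is why the paper --- like you --- cites it rather than reproving it.
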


\begin{lemma}\label{lem:matroid-change-value}
For any matroid $\mathcal{M}=([n],\mathcal{I})$,  any weight function $v:[n]\to \mathbb{R}^n$, and any fixed tie-breaking rule, suppose $I$ is the maximum weighted independent set of $\mathcal{I}$ regarding $v$ (assume elements with 0 weight are always not included), then
$\forall i\in [n]$, $\forall$ weight function $v':[n]\to \mathbb{R}^n$ such that $v'(i)\le v(i)$ and $v'(j)=v(j)$ for any $j\not=i$, the maximal weighted independent set $J$ of $(\mathcal{M},v')$ satisfy $I \backslash J \subseteq \{i\}$.
\end{lemma}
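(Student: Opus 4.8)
The plan is to reduce to the case where the perturbed element still carries positive weight and then exploit the matroid exchange property in \Cref{lem:maxindeset-perfect-matching}. First I would dispatch the easy cases. If $v(i)\le 0$, then $v'(i)\le 0$ as well, so neither $v$ nor $v'$ ever selects $i$ and the two weight functions agree on every element that a maximum-weight independent set may contain; since the tie-breaking rule is fixed, greedy is deterministic and $I=J$. If $v'(i)=v(i)$ then $v=v'$ and again $I=J$. It remains to treat $v(i)>v'(i)$ with $v(i)>0$. If moreover $v'(i)\le 0$, I would bump it up: let $\hat v$ agree with $v'$ except $\hat v(i)=\eta$ for a positive $\eta$ smaller than $v(i)$ and than every positive value of $v'$. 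Greedy run with $\hat v$ processes all of $\{e:\hat v(e)>0\}\setminus\{i\}$ before $i$, exactly as greedy with $v'$ does, producing $J$, and then possibly appends $i$; hence the maximum-weight set $\hat J$ for $\hat v$ satisfies $J\subseteq\hat J$ and $\hat J\setminus J\subseteq\{i\}$. Applying the lemma to the pair $(v,\hat v)$ (both of whose values at $i$ are positive) would give $I\setminus\hat J\subseteq\{i\}$, and then $I\setminus J\subseteq(I\setminus\hat J)\cup(\hat J\setminus J)\subseteq\{i\}$. So we may assume $v(i)>v'(i)>0$.

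Under this assumption the sets $P:=\{e:v(e)>0\}$ and $P':=\{e:v'(e)>0\}$ coincide, $v$ and $v'$ agree on $P\setminus\{i\}$, and the greedy outputs $I$ and $J$ are both \emph{maximal} independent subsets of $P$ (any $e\in P$ not chosen by greedy lies in the span of the elements chosen before it, hence in the span of the final set). Since all maximal independent subsets of a fixed ground set in a matroid have the same cardinality, $|I|=|J|$. I would also record the following ranking characterization, proved directly from the greedy algorithm: if $S$ is the maximum-weight independent set for a weight function $w$ under the fixed tie-break $\prec$, and $a\in S$, $b\notin S$, $S\cup\{b\}\setminus\{a\}\in\mathcal{I}$, then $a$ precedes $b$ in the $w$-order (that is, $w(a)>w(b)$, or $w(a)=w(b)$ and $a\prec b$); indeed, otherwise $b$ is considered before $a$, the partial solution accumulated at that moment is contained in $S\cup\{b\}\setminus\{a\}\in\mathcal{I}$, so greedy would have added $b$, a contradiction.

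Now suppose for contradiction that there is $a\in I\setminus J$ with $a\ne i$. Since $|I|=|J|$, \Cref{lem:maxindeset-perfect-matching} supplies a partner $b\in J\setminus I$ with $I\cup\{b\}\setminus\{a\}\in\mathcal{I}$ and $J\cup\{a\}\setminus\{b\}\in\mathcal{I}$. Optimality of $I$ for $v$ and the ranking characterization give that $a$ precedes $b$ in the $v$-order; optimality of $J$ for $v'$ (with the roles of $a$ and $b$ swapped) gives that $b$ precedes $a$ in the $v'$-order. If $b\ne i$, then $v=v'$ on $\{a,b\}$, so the $v$-order and the $v'$-order restrict to the same order on $\{a,b\}$, and it is impossible for $a$ to precede $b$ and for $b$ to precede $a$ — contradiction. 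If $b=i$, then ``$a$ precedes $i$ in the $v$-order'' forces $v(a)\ge v(i)$, while ``$i$ precedes $a$ in the $v'$-order'' forces $v'(i)\ge v'(a)=v(a)$; chaining these with $v(i)>v'(i)$ yields $v'(i)\ge v(a)\ge v(i)>v'(i)$, again a contradiction. Hence no such $a$ exists, i.e., $I\setminus J\subseteq\{i\}$.

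The main obstacle is the bookkeeping around ties and cardinalities rather than any deep idea: making the ranking characterization airtight in the equal-weight case, and justifying $|I|=|J|$ through the observation that the greedy output is a maximal independent subset of the positive-weight elements (together with the reduction that lets us assume the perturbed weight stays positive, so that $P=P'$). Once these are in place, the exchange step is an immediate application of \Cref{lem:maxindeset-perfect-matching} and the two-line case analysis above.
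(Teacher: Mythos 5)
Your proof is correct and yields the same conclusion, but it takes a genuinely different route through the two technical points the paper has to navigate. The paper splits on the cardinality of $J$: when $v'(i)>0$ it shows $|J|=|I|$ and runs the exchange argument via \Cref{lem:maxindeset-perfect-matching}; when $v'(i)\le 0$ it allows $|J|=|I|-1$ and then has to do a more delicate matroid-augmentation argument to locate a usable exchange partner. You avoid the second, harder case entirely with the bump-up trick: replacing $v'(i)\le 0$ by a tiny positive $\eta$ so that the new greedy output $\hat J$ satisfies $J\subseteq\hat J$ and $\hat J\setminus J\subseteq\{i\}$, then applying the already-handled positive case to $(v,\hat v)$ and chaining $I\setminus J\subseteq(I\setminus\hat J)\cup(\hat J\setminus J)$. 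Once in the positive case, your observation that $I$ and $J$ are both bases of the matroid restricted to the common positive support $P=P'$ gives $|I|=|J|$ immediately. The second point of difference is tie-handling: the paper compares raw weights $v(j)$ vs.\ $v(k)$ and quietly leaves the $v(j)=v(k)$ case to the unstated tie-break; your ranking characterization (if $a\in S$, $b\notin S$, $S\cup\{b\}\setminus\{a\}\in\mathcal I$, then $a$ precedes $b$ in the $w$-order including tie-break) makes that step airtight, at the modest cost of one extra lemma proved from the greedy run. Both arguments reach the same contradiction via the same exchange lemma; yours is arguably cleaner because it needs only the $|I|=|J|$ case and is explicit about ties, whereas the paper's is more direct but carries extra casework and an implicit tie-breaking assumption.
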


\begin{proof}
 Suppose $j\in I \backslash J$ and $j\neq i$.

First, $|I|\ge |J|$ otherwise another element from $J$ could be added to $I$ to form a new independent set, but any element in $J$ is positive, which contradicts with the definition of $I$. 
For the same reason, if $v'(i)>0$ then $|J|\ge |I|$, if $v'(i)\le 0$ then $|J|\ge |I|-1$.

For the case when $|J|=|I|$, then from Lemma \ref{lem:maxindeset-perfect-matching}, $\exists k\in J \backslash I$, such that $I\cup \{k\} \backslash \{j\}\in \mathcal{I}$ and $I'\cup \{k\} \backslash \{j\}\in \mathcal{I}$. If $v(j)<v(k)$, then the total value of $I\cup \{k\} \backslash \{j\}$ should be larger than that of $I$, which contradicts to its definition; if $v(j)>v(k)$, then $J\cup \{j\} \backslash \{k\}$ should be larger than that of $J$, which also leads to a contradiction.

For the case when $|J|=|I|-1$, we have $v'(i)\le 0$. From Lemma \ref{lem:maxindeset-perfect-matching}, $\exists k\in J \backslash I$, such that $I\cup \{k\} \backslash \{j,i\}\in \mathcal{I}$ and $J\cup \{j\} \backslash \{k\}\in \mathcal{I}$. Because $|I\cup \{k\}\backslash\{i,j\}|<|I|$, we have $I\cup \{k\}\backslash\{i\}\in \mathcal{I}$ or $I\cup \{k\}\backslash\{j\}\in\mathcal{I}$. But if $I\cup \{k\}\backslash\{i\}\in \mathcal{I}$, $|I\cup \{k\}\backslash\{i\}|>|J|$ and each of its element is positive, contradicting with the definition of $J$, so $I\cup \{k\}\backslash\{j\}\in\mathcal{I}$. If $v(j)<v(k)$, then the total value of $I\cup \{k\} \backslash \{j\}$ should be larger than that of $I$, which contradicts to its definition; if $v(j)>v(k)$, then $J\cup \{j\} \backslash \{k\}$ should be larger than that of $J$, which also leads to a contradiction.
\end{proof}

\begin{lemma}\label{lem:matroid_opt_dif_and_group_dif}
    For any $j\in [k]$, let $p_l$ be the payment function of $M_{\mathbf{D_{j}}}$
    $$\opt(\mathbf{D_j})-\opt(\mathbf{D_{j-1}})\le\E_{\mathbf{v}\sim \mathbf{D_j}}\sum_{l\in B_j} p_l(\mathbf{v})-\E_{\mathbf{v}\sim \mathbf{D_{j-1}}}\sum_{l\in B_j} p_l(\mathbf{v})$$
\end{lemma}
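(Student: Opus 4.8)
The plan is to fix Myerson's optimal auction $M_{\mathbf{D_j}}$ for the ``larger'' distribution and charge the entire revenue gap to the group $B_j$. Write $p_l$ for the payment rule of $M_{\mathbf{D_j}}$. Since $M_{\mathbf{D_j}}$ is DSIC, it is BIC with respect to every product distribution, in particular $\mathbf{D_{j-1}}$, so it is a feasible mechanism in the maximization defining $\opt(\mathbf{D_{j-1}})$; hence $\opt(\mathbf{D_{j-1}}) \ge \rev(M_{\mathbf{D_j}}, \mathbf{D_{j-1}})$. Combining this with $\opt(\mathbf{D_j}) = \rev(M_{\mathbf{D_j}}, \mathbf{D_j})$ and decomposing revenue into per-bidder payments,
\[
\opt(\mathbf{D_j}) - \opt(\mathbf{D_{j-1}}) \le \rev(M_{\mathbf{D_j}}, \mathbf{D_j}) - \rev(M_{\mathbf{D_j}}, \mathbf{D_{j-1}}) = \sum_{l=1}^n \bigl( \E_{\mathbf{v}\sim\mathbf{D_j}}[p_l] - \E_{\mathbf{v}\sim\mathbf{D_{j-1}}}[p_l] \bigr).
\]
It therefore suffices to show that every bidder $l \notin B_j$ contributes non-positively to this sum, i.e.\ $\E_{\mathbf{v}\sim\mathbf{D_j}}[p_l] \le \E_{\mathbf{v}\sim\mathbf{D_{j-1}}}[p_l]$.

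Fix $l \notin B_j$. Observe that $\mathbf{D_j}$ and $\mathbf{D_{j-1}}$ are product distributions whose marginals agree on every coordinate outside $B_j$ (in particular on coordinate $l$; call that marginal $D^{(l)}$), whereas on each coordinate $i \in B_j$ the marginal of $\mathbf{D_j}$ is $D_i'$ and that of $\mathbf{D_{j-1}}$ is $\tilde{D}_i' = \doubleshading(D_i') \preceq D_i'$. Setting $h_l(\mathbf{v}_{-l}) = \E_{v_l \sim D^{(l)}}[p_l(v_l, \mathbf{v}_{-l})]$, a coordinate-by-coordinate quantile coupling of the $B_j$-coordinates (identity on the others) reduces the claim to showing that $h_l$ is non-increasing in each $v_i$ with $i \in B_j$; I will in fact argue $h_l$ is non-increasing in $v_i$ for every $i \ne l$. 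By Lemma~\ref{lem:matroid-change-value}, raising the ironed virtual value of a single bidder $i$ can only remove bidders other than $i$ from the maximum-weight independent set, so the allocation $x_l(v_l, \mathbf{v}_{-l})$ of $M_{\mathbf{D_j}}$ is non-increasing in $v_i$ for each fixed $v_l$. By Myerson's per-bidder payment identity, $h_l(\mathbf{v}_{-l}) = \E_{v_l \sim D^{(l)}}\bigl[ \bar{\phi}_l^{D^{(l)}}(v_l)\, x_l(v_l, \mathbf{v}_{-l}) \bigr]$, and since the optimal auction never allocates to a bidder with negative ironed virtual value, the integrand vanishes where $\bar{\phi}_l^{D^{(l)}} < 0$ and is a non-negative multiple of $x_l$ elsewhere; hence $h_l$ inherits the monotonicity of $x_l$. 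Therefore $\E_{\mathbf{v}\sim\mathbf{D_j}}[h_l] \le \E_{\mathbf{v}\sim\mathbf{D_{j-1}}}[h_l]$, i.e.\ $\E_{\mathbf{v}\sim\mathbf{D_j}}[p_l] \le \E_{\mathbf{v}\sim\mathbf{D_{j-1}}}[p_l]$. Summing over $l \notin B_j$ and substituting into the displayed inequality yields the lemma.

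I expect the main obstacle to be the structural claim that $h_l$ is non-increasing in $v_i$: one must make three standard ingredients simultaneously precise in the presence of ironing and a fixed deterministic (lexicographic) tie-breaking rule, namely (i) the comparative statics of maximum-weight independent sets from Lemma~\ref{lem:matroid-change-value} when several bidders share the same ironed virtual value, (ii) the per-bidder revenue-equivalence identity $\E_{v_l}[p_l] = \E_{v_l}[\bar{\phi}_l x_l]$, which requires $x_l(\cdot, \mathbf{v}_{-l})$ to be constant on each ironing interval of $D^{(l)}$, and (iii) the ``no allocation below the reserve'' property of the optimal auction. Each of these is routine on its own; the care is only in stating them together. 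The remaining steps --- DSIC feasibility, the per-bidder decomposition of revenue, and the stochastic-dominance coupling for product distributions --- are straightforward.
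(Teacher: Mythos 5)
Your proof is correct and follows essentially the same route as the paper's: both bound $\opt(\mathbf{D_j})-\opt(\mathbf{D_{j-1}})$ by $\rev(M_{\mathbf{D_j}},\mathbf{D_j})-\rev(M_{\mathbf{D_j}},\mathbf{D_{j-1}})$, invoke Lemma~\ref{lem:matroid-change-value} together with a coordinate-wise quantile coupling on the $B_j$ coordinates to get allocation monotonicity for bidders outside $B_j$, and then convert that into a revenue comparison via the Myerson virtual-welfare identity. Your presentation just carries out the bookkeeping per bidder in payment form and then converts to virtual welfare, whereas the paper writes the $[n]\setminus B_j$ contribution directly as virtual welfare in quantile space (and, incidentally, has the inequality sign in its intermediate allocation claim flipped -- it should read $x_l(\mathbf{v}^{(j)}(\mathbf{q}))\le x_l(\mathbf{v}^{(j-1)}(\mathbf{q}))$, consistent with the concluding $I_j\setminus I_{j-1}\subseteq B_j$ and with your argument).
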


\begin{proof}
    Let $v_i^{(j)}(q)$ be the inverse of quantile function for bidder $i$ on distribution $\mathbf{D_j}$ and $\mathbf{v}^{(j)}(\mathbf{q})$ be the vector of quantile functions. First
    $\opt(\mathbf{D_j})-\opt(\mathbf{D_{j-1}})$ 
    is bounded by $\rev(M_{\mathbf{D_j}},\mathbf{D_j})-\rev(M_{\mathbf{D_j}},\mathbf{D_{j-1}})$, 
    and because for $i=j$ or $i=j-1$,
    $$
    \rev(M_{\mathbf{D_j}},\mathbf{D_i})= \int_{\mathbf{q}\in [0,1]^n} \sum_{l\in [n]\backslash B_j} x_l(\mathbf{v}^{(i)}(\mathbf{q}))\phi_l(v_l^{(i)}(q_l))d\mathbf{q} + \sum_{l\in B_j}\E_{\mathbf{v}\sim \mathbf{D_i}}p_l(\mathbf{v})~,
    $$
    so it suffices to show that for any $\mathbf{q}\in [0,1]^n$ and any $l\in [n]\backslash B_j$,
    $$
     x_l(\mathbf{v}^{(j)}(\mathbf{q}))
    \ge  x_l(\mathbf{v}^{(j-1)}(\mathbf{q}))
    ~.
    $$
    If we view the ironed virtual value of $\mathbf{D_j}$ as a weight of the matroid $\mathbf{M}$, $M_{\mathbf{D_j}}$ picks the maximum weighted independent set as winners. Suppose the set of winners for $\mathbf{D_j}$ and $\mathbf{D_{j-1}}$ are respectively $I_j$ and $I_{j-1}$ Because for any $l\in [n]\backslash B_j$, $v_l^{(j)}(q_l)=v_l^{(j-1)}(q_l)$, and for any $l\in B_j$, $v_l^{(j)}(q_l)\ge v_l^{(j-1)}(q_l)$, we can apply Lemma~\ref{lem:matroid-change-value} to every elements in $B_j$ sequentially and get $I_{j}\backslash I_{j-1}\subseteq B_j$.
\end{proof}

\begin{lemma} \label{lem:matroid-shade-opt-dif}
For any $0 < \epsilon < \frac{1}{2}$, any $n$-bidder product value distribution $\mathbf{D}$ bounded in $[0,1]^n$, any $k\in [n]$ and any matroid constraint $\mathcal{M}=([n],\mathcal{I})$ with rank $k$,
it holds when $m=\tilde{\Omega}(kn\epsilon^{-2})$, 
	\begin{equation*}
	    \opt(\mathbf{D'}) - \opt (\mathbf{\tilde{D}'}) \le \epsilon ~,
	\end{equation*}
where 
    \begin{equation*}
    D_i' = \truncatebottom_{\epsilon/k^2} \circ \truncatetop_{\mathbf{\bar{v}}} (D_i) ~,~
    \Bar{v}_i=\sup \{v: q^{D_i}(v)\ge \frac{\epsilon^2}{nk^2}\} ~,
    \end{equation*}
and
    \begin{equation*}
    \tilde{D_i'} = \doubleshading(D_i') ~.
\end{equation*}
\end{lemma}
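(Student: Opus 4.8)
The plan is to telescope across the hybrid distributions $\mathbf{D_0}=\mathbf{\tilde{D}'},\mathbf{D_1},\dots,\mathbf{D_k}=\mathbf{D'}$, controlling each consecutive gap by the information-theoretic argument of \Cref{lem:difference_ori_shaded} but applied to a \emph{partial} revenue functional rather than to a full mechanism. First I would fix the partition $B_1,\dots,B_k$ of $[n]$ to be the one guaranteed by \Cref{cor:matroid_ub_low_allocation_inside_group}, applied to the allocation rules of the Myerson auctions $M_{\mathbf{D_0}},\dots,M_{\mathbf{D_k}}$; the apparent circularity (the hybrids, hence these auctions, depend on the partition) is harmless because for \emph{every} partition the winner set of $M_{\mathbf{D_j}}$ on any value vector is an independent set of $\mathcal{M}$ and therefore has size at most the rank $k$, so the random-partition union bound inside \Cref{lem:matroid_ub_partition} can be taken over all $k+1$ hybrids and all $k$ groups at once. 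Since $D'_l\succeq\tilde D'_l=\doubleshading(D'_l)$ coordinatewise we have $\mathbf{D_j}\succeq\mathbf{D_{j-1}}$, so by weak revenue monotonicity (\Cref{lem:weak_revenue_monotonicity}) each term $\opt(\mathbf{D_j})-\opt(\mathbf{D_{j-1}})$ is nonnegative, and it suffices to bound it by $\tilde O(\sqrt{n/(km)})$ and sum the $k$ terms.

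For a fixed $j$, \Cref{lem:matroid_opt_dif_and_group_dif} already reduces $\opt(\mathbf{D_j})-\opt(\mathbf{D_{j-1}})$ to $\E_{\mathbf{v}\sim\mathbf{D_j}}[r_j(\mathbf{v})]-\E_{\mathbf{v}\sim\mathbf{D_{j-1}}}[r_j(\mathbf{v})]$, where $r_j(\mathbf{v})=\sum_{l\in B_j}p_l(\mathbf{v})$ is the payment collected from group $B_j$ by the \emph{fixed} mechanism $M_{\mathbf{D_j}}$. I would then run the argument of \Cref{lem:difference_ori_shaded} with the test ``mechanism'' replaced by the functional $r_j$: (i) \Cref{lem:hybrid-KL-ub} gives $\skl(\mathbf{D_j},\mathbf{D_{j-1}})=\tilde O(n/(km))$, so by \Cref{lem:kl_divergence} no algorithm distinguishes $\mathbf{D_j}$ from $\mathbf{D_{j-1}}$ with $N$ of order $km/n$ (up to polylog) samples; (ii) since each payment is at most the corresponding value, which lies in $[0,1]$, and since by \Cref{cor:matroid_ub_low_allocation_inside_group} at most $O(\log(n/\epsilon))$ bidders of $B_j$ win under either $\mathbf{D_j}$ or $\mathbf{D_{j-1}}$ except on an event of probability at most $\epsilon/(4kn)$ on which trivially $r_j\le|B_j|=n/k$, the functional $r_j$ is bounded by $O(\log(n/\epsilon))$ with high probability and the tail event contributes at most $\epsilon/(4k^2)$ to either expectation; hence Bernstein's inequality (\Cref{lem:bernstein}) shows $\tilde O(\log^2(n/\epsilon)\,\alpha^{-2})$ samples estimate $\E[r_j]$ within an additive $\alpha$. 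Equating the two sample budgets forces $\alpha=\tilde O(\sqrt{n/(km)})$, and the contradiction argument of \Cref{lem:difference_ori_shaded} then yields $|\E_{\mathbf{D_j}}[r_j]-\E_{\mathbf{D_{j-1}}}[r_j]|\le 2\alpha=\tilde O(\sqrt{n/(km)})$.

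Summing over $j=1,\dots,k$ gives $\opt(\mathbf{D'})-\opt(\mathbf{\tilde D'})\le\tilde O(k\sqrt{n/(km)})=\tilde O(\sqrt{kn/m})$, which is at most $\epsilon$ as soon as $m\ge\tilde\Omega(kn\epsilon^{-2})$, proving the lemma. I expect the main obstacle to be step (ii): the functional $r_j$ is not bounded pointwise---an adversarial value vector could let all of $B_j$ win and collect up to $n/k$---so the boundedness that Bernstein needs must be extracted from \Cref{cor:matroid_ub_low_allocation_inside_group}, together with a careful check that the low-probability ``overloaded group'' event contributes $o(\epsilon/k)$ to the expectation and hence cannot eat the per-step budget $\tilde O(\sqrt{n/(km)})$; a secondary point is making sure one fixed partition serves all $k$ hybrid Myerson auctions despite their circular dependence on the partition, which I handle via the rank-$k$ uniform bound above.
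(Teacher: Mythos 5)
Your proposal matches the paper's proof essentially step for step: the telescope over the hybrid distributions $\mathbf{D_0},\dots,\mathbf{D_k}$ together with \Cref{lem:matroid_opt_dif_and_group_dif} to reduce each gap to a difference in group payments; \Cref{lem:hybrid-KL-ub} for the per-step KL bound of $\tilde O(n/(km))$; \Cref{cor:matroid_ub_low_allocation_inside_group} to truncate the group-payment functional at $O(\log(n/\epsilon))$ while paying only an $O(\epsilon/k^2)$ expectation penalty for the tail event; Bernstein to estimate the truncated functional with $N=\tilde O(km/n)$ samples; and the distinguishing-game contradiction of \Cref{lem:difference_ori_shaded} applied to that functional in place of a full mechanism's revenue. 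Your per-step budget $\tilde O(\sqrt{n/(km)})$ is just a rephrasing of the paper's target of $\epsilon/k$ once $m=\tilde\Omega(kn\epsilon^{-2})$ is plugged in. The one place your account diverges from what the paper writes is your explicit treatment of the partition's circular dependence on the hybrids. You are right to flag it — the paper applies \Cref{cor:matroid_ub_low_allocation_inside_group} to the allocation rules of $M_{\mathbf{D_j}}$ even though those hybrids are defined relative to the very partition being chosen — but your resolution does not actually close that gap: the rank-$k$ cardinality bound and a union bound over the $k+1$ hybrids and $k$ groups only say that each winner set is small, whereas the sampling-without-replacement concentration inside \Cref{lem:matroid_ub_partition} needs the winner set for a fixed $\mathbf{v}$ to be independent of the random partition, which it is not once the hybrid mechanism depends on $P$. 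Since the paper itself does not confront this point, your proposal is faithful to the published argument, but you should be aware that the "rank-$k$ plus union bound" remark is not a proof of the claim; a genuine repair would have to control the partition-dependence of $W^{(j,P)}(\mathbf{v})$ (for instance by conditioning on $A_j(P)=\cup_{i'\le j}B_{i'}$ and arguing group by group), and even then the $i=j$ or $i=k$ groups require extra care.
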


\begin{proof}
We will use a proof similar to Lemma~\ref{lem:difference_ori_shaded} to show that for any $1\le j\le k$, $\opt(\mathbf{D_j})-\opt(\mathbf{D_{j-1}})\le \epsilon/k$. But instead of considering the difference between $\opt(\mathbf{D_j})$ and $\opt(\mathbf{D_{j-1}})\le \epsilon/k$, we consider the payment of bidders in $B_j$, i.e., 
\begin{equation*}
    \E_{\mathbf{v}\sim \mathbf{D_j}}\sum_{l\in B_j} p_l(\mathbf{v})-\E_{\mathbf{v}\sim \mathbf{D_{j-1}}}\sum_{l\in B_j} p_l(\mathbf{v})
\end{equation*}
According to Lemma~\ref{lem:matroid_opt_dif_and_group_dif},
the difference between $\opt(\mathbf{D_j})$ and $\opt(\mathbf{D_{j-1}})$ is upper bounded by that. Choose $N$ in Lemma~\ref{lem:difference_ori_shaded} to be $\tilde{O}(\frac{\epsilon^2}{k^2})$.

On the one hand, $\skl(\mathbf{D_j},\mathbf{D_{j-1}})=\tilde{O}\left(\frac{n}{km}\right)=\tilde{O}(\frac{\epsilon^2}{k^2})$ by Lemma~\ref{lem:hybrid-KL-ub}.

On the other hand, consider the following algorithm: 
Take $N$ samples $\mathbf{v_1},\cdots,\mathbf{v_N}$, calculate $\frac{1}{N}\sum_{s=1}^N\sum_{l\in B_j} \min(p_l(\mathbf{v_s}),3\log\frac{n}{\epsilon})$ .

According to Corollary~\ref{cor:matroid_ub_low_allocation_inside_group}, for $i=j$ or $j-1$, $\E_{\mathbf{v}\sim \mathbf{D_i}}\sum_{l\in B_i} \min(p_l(\mathbf{v}),3\log\frac{n}{\epsilon})$ and $\E_{\mathbf{v}\sim \mathbf{D_i}}\sum_{l\in B_i} p_l(\mathbf{v})$ differ by at most $\frac{\epsilon}{4kn}*\lceil\frac{n}{k}\rceil\le \frac{\epsilon}{4k}$. And use Bernstein inequality (Lemma~\ref{lem:bernstein}), we have 
\begin{equation*}
    \Pr\left[\left|\frac{1}{N}\sum_{s=1}^N\sum_{l\in B_i} \min(p_l(\mathbf{v_s}),3\log\frac{n}{\epsilon})-\E_{\mathbf{v}\sim \mathbf{D_i}}\sum_{l\in B_i} \min(p_l(\mathbf{v}),3\log\frac{n}{\epsilon})\right|>\frac{\epsilon}{2k}\right]\le e^{-\frac{(\frac{\epsilon N}{k})^2}{18N\log^2 \frac{n}{\epsilon}+2\log(\frac{n}{\epsilon})\frac{\epsilon N}{k}}}=o(1)
\end{equation*}
Therefore, with probability $1-o(1)$, the algorithm approximate $\E_{\mathbf{v}\sim \mathbf{D_i}}\sum_{l\in B_i} p_l(\mathbf{v})$ with an additive factor of $\frac{\epsilon}{4k}+\frac{\epsilon}{2k}\le \frac{\epsilon}{k}$.
\end{proof}

\begin{proof}{(of Theorem \ref{thm:matroid-ub})}
We need to show that when $m=\tilde{\Omega}(kn\epsilon^{-2})$
\begin{equation*}
    \opt(\mathbf{D}) - \opt(\mathbf{\tilde{D}}) \le 3\epsilon
\end{equation*}
Then, by Lemma \ref{lem:analysis_revenue_monotonicity}, 
\begin{equation*}
    \rev(M_{\mathbf{\tilde{E}}},\mathbf{D})\ge \opt(\mathbf{\tilde{D}}) \ge \opt(\mathbf{D}) - 3\epsilon
\end{equation*}

Since for all $[0,1]$-bounded support product distribution $\mathbf{D}$,
\begin{align*}
\opt(\mathbf{D}) - \opt(\truncatetop_{\mathbf{\bar{v}}}(\mathbf{D})) & \le k\cdot \Pr \big[ \exists i \in [n] : v_i > \bar{v}_i \big] && \text{(at most $k$ items, values bounded by $1$)} \\[1.5ex]
& \le k\cdot \sum_{i = 1}^n \Pr \big[ v_i > \bar{v}_i \big] && \text{(union bound)} \\
& \le k\cdot \sum_{i = 1}^n \frac{\epsilon^2}{nk} && \text{(definition of $\bar{v}_i$'s)} \\[1.5ex]
& = \epsilon^2 
~.
\end{align*}
and since if we run $M_{\mathbf{D}}$ on $\truncatebottom_{\epsilon/k^2}(\mathbf{D})$, then with probability at least $(1-\frac{\epsilon}{k^2})^k$, allocation is the same as running $M_{\mathbf{D}}$ on $\mathbf{D}$, we have
\begin{align*}
\opt(\mathbf{D}) - \opt(\truncatebottom_{\epsilon/k^2}(\mathbf{D}))
&\le \opt(\mathbf{D}) - \rev(M_{\mathbf{D}},\truncatebottom_{\epsilon/k^2}(\mathbf{D}))\\
&\le (1-(1-\frac{\epsilon}{k^2})^k)\cdot k \\
&\le \frac{\epsilon}{k} \cdot k = \epsilon
\end{align*}
we have
\begin{align}
    \opt(\mathbf{D'}) &= \opt(\truncatebottom_{\epsilon/k^2} \circ \truncatetop_{\mathbf{\bar{v}}}(\mathbf{D})) \nonumber \\
    &\ge \opt(\truncatetop_{\mathbf{\bar{v}}}(\mathbf{D})) - \epsilon \nonumber \\
    &\ge \opt(\mathbf{D}) - 2\epsilon \label{eqn:matroid-ancillary-opt}
\end{align}
Then from $\mathbf{\tilde{D}}\succ \mathbf{\tilde{D}'}$, we have:
\begin{align*}
    \opt(\mathbf{D}) - \opt(\mathbf{\tilde{D}}) &\le \opt(\mathbf{D}) - \opt(\mathbf{\tilde{D}'}) & \text{(weak revenue monotoinicity)} \\
    & = (\opt(\mathbf{D}) - \opt(\mathbf{D'})) + (\opt(\mathbf{D'}) - \opt(\mathbf{\tilde{D}'}))\\
    & \le 2\epsilon + \epsilon = 3\epsilon & \text{(\eqref{eqn:matroid-ancillary-opt} and Lemma~\ref{lem:matroid-shade-opt-dif})}
\end{align*}
Combine this and Lemma~\ref{lem:analysis_revenue_monotonicity}, we get $\rev(M_{\mathbf{\tilde{E}}}, \mathbf{D}) \ge \opt(\mathbf{D})-3\epsilon$.

\end{proof}

	\section{Matroid Constraint: Lower Bound}
\label{app:matroid_lb}

In this section always assume $n\ge 2k$, and all bidders' valuation are bounded in $[0,1]$. We will show that the sample complexity lower bound of $k$-unit demand auction coincides with its upper bound, which is $\tilde{\Omega}(nk\epsilon^{-2})$. We first define a family of distribution $\mathcal{H}$. Let 
\[
\mathcal{H} = \big\{ \mathbf{D} : D_1 = D_2 = \cdots D_k= D^b \text{, and } D_i = D^h \text{ or } D^\ell \text{ for all } k < i \le n \}
~.
\]
where $D^b$, $D^{\ell}$ and $D^h$ satisfy
    $$
        D_b \text{ is a point distribution at } 1/2
    $$
        \begin{align*}
        f_{D^\ell}(v) & = 
        \begin{cases}
        1-\frac{k}{2n}& v= v_3 \defeq 1/2 + \frac{k}{8n} \\
        \frac{k-\epsilon}{4n}& v= v_2 \defeq \frac{3}{4} \\
        \frac{k+\epsilon}{4n}& v= v_1 \defeq 1
        \end{cases} \\
        f_{D^h}(v) & = 
        \begin{cases}
        1-\frac{k}{2n}& v= v_3 = 1/2 + \frac{k}{8n} \\
        \frac{k+\epsilon}{4n}& v= v_2 = \frac{3}{4}  \\
        \frac{k-\epsilon}{4n}& v= v_1 = 1.
        \end{cases} \\
        \end{align*}
We can verify that 
\begin{equation}\label{eqn:k-unit-lb-virtual-value-gap}
    \phi^\ell(v_2) + \epsilon/2k \le 1/2 \le \phi^h(v_2) - \epsilon/2k
\end{equation}
and 
\begin{equation}\label{eqn:k-unit-lb-derivate-bound}
    \sqrt{2} \ge \frac{dD^\ell}{dD^h}(v_2) \ge \frac{1}{\sqrt{2}}.
\end{equation}
Define 
$$\mathcal{V}_{i}=\bigg\{ \mathbf{b} = (b_1, b_2, \dots, b_n), b_1=\cdots=b_k=\frac{1}{2},  b_{i} = v_2, \text{ and }|\{j|b_j\ge v_2\}|\le k \bigg\} ~.$$
\begin{theorem}
	\label{thm:matroid_hardness}
	If an algorithm A takes $m$ samples from an arbitrary product value distribution $\mathbf{D} \in \mathcal{H}'$ and returns, with probability at least $0.99$, a mechanism whose expected revenue is at least:
	\[
	\opt(\mathbf{D}) - O(\epsilon) 
	~.
	\]
	Then, the number of samples $m$ is at least:
	\[
	\Omega(nk\epsilon^{-2})
	~.
	\]
\end{theorem}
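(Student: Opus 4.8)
The plan is to run the meta hard-instance analysis behind \Cref{lem:meta_hardness}, adapted to the rank-$k$ matroid, on the family $\mathcal{H}$ built from $D^b,D^h,D^\ell$; the three places that need genuinely new arguments are the definition of the critical event (here $\mathcal{V}_i$), the accounting of the revenue loss caused by a wrong allocation on $\mathcal{V}_i$, and the indistinguishability step. First I would record the KL estimate. Applying \Cref{lem:dptrick} with the partition $\Omega_1=\{v_3\}$ (on which $D^h$ and $D^\ell$ have identical density, so $\epsilon_1=0$) and $\Omega_2=\{v_1,v_2\}$ (on which the density ratio lies in $[\tfrac{k-\epsilon}{k+\epsilon},\tfrac{k+\epsilon}{k-\epsilon}]\subseteq[1-O(\epsilon/k),1+O(\epsilon/k)]$, with $D^h(\Omega_2)=D^\ell(\Omega_2)=\tfrac{k}{2n}$) gives $\skl(D^h,D^\ell)=O\!\big(\tfrac{\epsilon^2}{nk}\big)$, hence $\skl(D^h,D^\ell)^{-1}=\Omega(nk\epsilon^{-2})$. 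Thus it suffices to prove the analogue of \Cref{lem:meta_hardness}: with $p=\Theta(k/n)=\Pr[v=v_2]$ and $\Delta=\Theta(\epsilon/k)$ the virtual-value gap in \eqref{eqn:k-unit-lb-virtual-value-gap}, one has $np\Delta=\Theta(\epsilon)$, so an algorithm that achieves $\opt(\mathbf{D})-O(np\Delta)$ on every $\mathbf{D}\in\mathcal{H}$ with probability $0.99$ must use $m=\Omega(\skl(D^h,D^\ell)^{-1})$ samples.

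Next, fix $i\in\{k+1,\dots,n\}$ and a choice of $\mathbf{D}_{-i}$ consistent with $\mathcal{H}$, and set $\mathbf{D}^1=(\mathbf{D}_{-i},D^h)$, $\mathbf{D}^2=(\mathbf{D}_{-i},D^\ell)$. The key structural claim is that for every $\mathbf{v}\in\mathcal{V}_i$ the Myerson-optimal mechanism allocates to bidder $i$ under $\mathbf{D}^1$ but not under $\mathbf{D}^2$, and moreover any mechanism whose allocation on $\mathbf{v}$ disagrees with the optimal one about bidder $i$ loses at least $\Delta$ in (virtual) revenue at $\mathbf{v}$. Here I would invoke the fact, analogous to \Cref{lem:meta_hardness_virtual_value_maximizer}, that the optimal mechanism picks a maximum–(ironed-)virtual-weight independent set: the $k$ bidders with distribution $D^b$ have virtual value exactly $v_0=\tfrac12$, lexicographic tie-breaking favors them, and by \eqref{eqn:k-unit-lb-virtual-value-gap} bidder $i$'s ironed virtual value at $v_2$ exceeds $\tfrac12$ by $\Omega(\epsilon/k)$ under $D^h$ and falls below $\tfrac12$ by $\Omega(\epsilon/k)$ under $D^\ell$, while every bidder with value $<v_2$ has virtual value at most $\tfrac12$; the constraint $|\{j:v_j\ge v_2\}|\le k$ in the definition of $\mathcal{V}_i$ ensures at most $k$ bidders have virtual value exceeding $\tfrac12$, so under $\mathbf{D}^1$ they are all selected (including $i$), and under $\mathbf{D}^2$ the $k$ slots are filled by bidders of virtual value $\ge\tfrac12$, excluding $i$. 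The $\Delta$ lower bound on the per-instance loss is where the matroid exchange property enters: by \Cref{lem:matroid-change-value} (built on \Cref{lem:maxindeset-perfect-matching}), perturbing bidder $i$'s weight moves the optimal basis only at $i$, and the element it is exchanged with is one of the $D^b$ bidders, whose virtual value differs from $i$'s by at least $\Delta$; this virtual-welfare gap lower-bounds the revenue loss by Myerson's characterization.

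With this in hand the rest mirrors \Cref{sec:meta_hardness_analysis} almost verbatim. Conditioned on $\mathbf{v}\in\mathcal{V}_i$, the law of $\mathbf{v}$ under $\mathbf{D}^1$ and $\mathbf{D}^2$ differs only through $\Pr[v_i=v_2]$, which by \eqref{eqn:k-unit-lb-derivate-bound} changes by at most a $\sqrt2$ factor, so defining $\mathcal{M}^1=\{M:\Pr_{\mathbf{v}\sim\mathbf{D}^1:\mathbf{v}\in\mathcal{V}_i}[M\text{ allocates to }i]\ge\tfrac23\}$ and $\mathcal{M}^2$ its complement yields the analogues of \Cref{lem:meta_hardness_similar_decisions}–\Cref{cor:meta_hardness_similar_decisions}. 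Since $\skl(\mathbf{D}^1,\mathbf{D}^2)=\skl(D^h,D^\ell)$, if the algorithm $A$ uses $m<c\,\skl(D^h,D^\ell)^{-1}$ samples then \Cref{lem:kl_divergence} gives, as in \Cref{lem:meta_hardness_mistakes}, that $\Pr[A(\mathbf{D}^j)\in\mathcal{M}^{3-j}]>\tfrac13$ for $j=1$ or $2$; combined with the per-instance loss bound this is the analogue of \Cref{lem:meta_hardness_revenue_loss_probability}. The counting arguments of \Cref{lem:meta_hardness_counting} and \Cref{lem:meta_hardness_bad_instance_count_const_prob}, run over the $n-k=\Theta(n)$ variable bidders (using $n\ge 2k$), then produce a distribution $\mathbf{D}\in\mathcal{H}$ on which, with constant probability, the returned mechanism loses $\Omega(\Delta)$ conditioned on $\mathcal{V}_i$ for $\Omega(n)$ bidders $i$; since the $\mathcal{V}_i$ are disjoint and $\Pr_{\mathbf{v}\sim\mathbf{D}}[\mathbf{v}\in\mathcal{V}_i]=\Pr[v_i=v_2]\cdot\Pr[\#\{j\ne i:v_j\ge v_2\}\le k-1]=\Theta(k/n)$ (the count is Binomial with mean $\approx k/2$, so the second factor is $\Omega(1)$), summing over these bidders gives an additive revenue loss $\Omega(n)\cdot\Omega(\Delta)\cdot\Theta(k/n)=\Omega(\Delta k)=\Omega(\epsilon)$, contradicting the assumed guarantee once the constant in the $O(\epsilon)$ is small enough; hence $m=\Omega(nk\epsilon^{-2})$.

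The one genuinely new difficulty, and the step I expect to be the main obstacle, is the per-instance loss bound of the second paragraph: in the single-item proof ``the optimal winner differs'' immediately costs $\Delta$, but under a matroid constraint one must show (i) toggling a single bidder's virtual value moves the optimal basis only at that bidder, and (ii) the element it is swapped with has virtual value within $\tfrac12\pm\Delta$ — in particular one must check that for $\mathbf{v}\in\mathcal{V}_i$ the swap partner is a $D^b$ bidder rather than a value-$v_1$ bidder, and that the gap $\Delta$ in \eqref{eqn:k-unit-lb-virtual-value-gap} survives the ironing of $D^\ell$ (whose revenue curve need not be concave at $v_2$). Both points reduce to the matroid exchange property (\Cref{lem:maxindeset-perfect-matching}) together with a careful reading of the virtual-value table of $D^b,D^h,D^\ell$; everything else is a routine transcription of \Cref{sec:meta_hardness_analysis}.
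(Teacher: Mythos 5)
Your proposal follows the same structure as the paper's proof: the KL bound via \Cref{lem:dptrick}, the pair $\mathbf{D}^1,\mathbf{D}^2$, the analogues of \Cref{lem:meta_hardness_similar_decisions}--\Cref{lem:meta_hardness_mistakes}, the counting lemmas, and $\Pr[\mathbf{v}\in\mathcal{V}_i]=\Theta(k/n)$ all match, and the final accounting $\Omega(n)\cdot\Omega(\epsilon/k)\cdot\Theta(k/n)=\Omega(\epsilon)$ is exactly the paper's.

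The one place you diverge is the per-instance loss bound, which you treat as the hard part and attack via the general matroid exchange machinery (\Cref{lem:maxindeset-perfect-matching}, \Cref{lem:matroid-change-value}). The paper's lower bound is written only for the $k$-uniform matroid ($k$ identical items), so it bypasses all of that: the optimal allocation is just ``top $k$ by (ironed) virtual value with lexicographic tie-breaking,'' and since $\mathbf{v}\in\mathcal{V}_i$ guarantees at most $k$ bidders have value $\ge v_2$ while the first $k$ bidders are $D^b$ at virtual value exactly $\tfrac12$, the swap partner for bidder $i$ is automatically a $D^b$ bidder, no exchange lemma needed. Your second worry --- whether the $\Delta=\epsilon/(2k)$ gap from \eqref{eqn:k-unit-lb-virtual-value-gap} survives ironing of $D^\ell$ --- is a good one to raise, and the paper handles it by brute force in \Cref{lem:matroid_hardness_virtual_value_maximizer}: it tabulates the ironed virtual values and shows $\bar{\phi}^\ell(v_2)=\tfrac{4n-k-2\epsilon}{2(4n-k-\epsilon)}<\tfrac12$, so a $D^\ell$ bidder at $v_2$ is never selected and the ``$k$ highest virtual values coincide with $k$ highest ironed virtual values.'' Note, though, that the ironed gap $\tfrac12-\bar{\phi}^\ell(v_2)=\Theta(\epsilon/n)$ is much smaller than $\Delta=\Theta(\epsilon/k)$; the paper's loss $L_i$ is deliberately written in terms of the \emph{non-ironed} $\phi^\ell(v_2)$, which is justified by \Cref{lem:matroid_hardness_virtual_value_maximizer} showing the non-ironed accounting in \eqref{eqn:matroid-virtual-value-loss} is valid for all $\mathbf{D}\in\mathcal{H}$. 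So your route via exchange lemmas is sound but heavier than necessary, and the ironing subtlety you flagged is real and is closed in the paper by a direct computation rather than a structural argument.
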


Fix any $i$, and any $\mathbf{D}_{-i}= \times_{j \neq i} D_j$ such that $D_1=\cdots = D_k=D^b$ and $D_j \in \{D^h,D^{\ell}\}$ for all $j \ne 1, i$.
Let $\mathbf{D}^1 = (\mathbf{D}_{-i}, D_i = D^h) \in \mathcal{H}$ and $\mathbf{D}^2 = (\mathbf{D}_{-i}, D_i = D^\ell) \in \mathcal{H}$ be a pair of distributions that differ only in the $i$-th coordinate. 
Then, we have:
\[
\skl \big( \mathbf{D}^1, \mathbf{D}^2 \big) = \skl(D^h, D^\ell) = O(\frac{\epsilon^2}{nk})
~.
\]
The second inequality follows from Lemma~\ref{lem:dptrick} by choosing $\Omega_1=\{v_3\}$, $\epsilon_1=0$ and $\Omega_2=\{v_1,v_2\}$, $\epsilon_2=\frac{\epsilon}{2k}$.
Then, since algorithm $A$ takes $m < c \cdot \skl(D^h, D^\ell)^{-1} = c \cdot \skl(\mathbf{D}^1, \mathbf{D}^2)^{-1}$ samples for some sufficiently small constant $c$, by \Cref{lem:kl_divergence}, it cannot distinguish whether the underlying distribution is $\mathbf{D}^1$ or $\mathbf{D}^2$ correctly, and as a result will choose a mechanism from essentially the same distribution in both cases.

\begin{lemma}
	\label{lem:matroid_hardness_similar_decisions}
	For any mechanism $M$, the probability that $M$ picks bidder $i$ ($i\in I$) as the winner, conditioned on the value vector $\mathbf{v}$ is in $\mathcal{V}_{I,J}$, differs by at most a factor of $2$ whether $\mathbf{v}$ is drawn from $\mathbf{D}^1$ or $\mathbf{D}^2$.
\end{lemma}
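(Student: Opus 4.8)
The plan is to adapt, essentially verbatim, the change-of-measure argument from the proof of \Cref{lem:meta_hardness_similar_decisions} in the single-item setting. The two ingredients that make it work here are already in place: $\mathbf{D}^1$ and $\mathbf{D}^2$ agree on every coordinate except the $i$-th, so the Radon--Nikodym derivative $\frac{d\mathbf{D}^1}{d\mathbf{D}^2}(\mathbf{v})$ equals $\frac{dD^h}{dD^\ell}(v_i)$; and every value vector in the conditioning set $\mathcal{V}_{I,J}$ (equivalently $\mathcal{V}_i$) pins the $i$-th coordinate to the single critical value $v_2$, on which $\frac{dD^h}{dD^\ell}$ is controlled by \eqref{eqn:k-unit-lb-derivate-bound}. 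So all the ``variable density over an interval'' complications of the single-item proof collapse to a constant factor here.

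First I would write, for $t\in\{1,2\}$, the unnormalized quantity
\[
\int_{\mathbf{v}\in\mathcal{V}_{I,J}} \mathbf{1}\big(\text{$M$ picks $i$ as the winner}\big)\, d\mathbf{D}^t
~,
\]
and rewrite the $t=1$ integral against $\mathbf{D}^2$ using $\frac{d\mathbf{D}^1}{d\mathbf{D}^2}(\mathbf{v})=\frac{dD^h}{dD^\ell}(v_i)$; since $v_i=v_2$ throughout $\mathcal{V}_{I,J}$, this derivative is the constant $r\defeq\frac{dD^h}{dD^\ell}(v_2)$ on the region of integration, so the $\mathbf{D}^1$-integral is exactly $r$ times the $\mathbf{D}^2$-integral. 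Applying the same identity with the indicator $\mathbf{1}(\text{$M$ picks $i$})$ deleted gives $\Pr_{\mathbf{v}\sim\mathbf{D}^1}[\mathbf{v}\in\mathcal{V}_{I,J}]=r\cdot\Pr_{\mathbf{v}\sim\mathbf{D}^2}[\mathbf{v}\in\mathcal{V}_{I,J}]$. Dividing the first display by the second, the factor $r$ cancels, so the two conditional probabilities are in fact equal, and in particular within a factor of $2$. (Should one prefer to keep the argument robust to replacing the single value $v_2$ by a short interval, one instead bounds $r$ above and below by $\sqrt2$ and $1/\sqrt2$ via \eqref{eqn:k-unit-lb-derivate-bound} in the numerator and denominator separately, yielding the stated factor of $2=\sqrt2\cdot\sqrt2$; I would mention this but the point-mass construction makes it unnecessary.)

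The one step that needs genuine care — and the only real obstacle — is justifying that, once $v_i=v_2$ is fixed, membership of $\mathbf{v}$ in $\mathcal{V}_{I,J}$ is a function of $\mathbf{v}_{-i}$ alone, whose joint law is identical under $\mathbf{D}^1$ and $\mathbf{D}^2$; this is precisely what keeps the Radon--Nikodym factor constant across the integration region so that it cancels cleanly. For this I would unpack the definition of $\mathcal{V}_{I,J}$: the point-mass bidders $1,\dots,k$ bid $\tfrac12$ automatically, bidder $i$ bids $v_2$ by the conditioned event, and the remaining constraint is that at most $k$ bidders bid at least $v_2$; since $\tfrac12<v_3<v_2$, that count ranges only over bidders outside the point-mass block, so given $b_i=v_2$ it depends only on $\mathbf{v}_{-i}$. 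With that observation recorded, the chain of equalities above goes through, and the factor-$2$ (indeed, equality) conclusion follows.
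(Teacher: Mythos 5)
Your proposal is correct and follows essentially the same route as the paper, whose proof is just "same as Lemma~\ref{lem:meta_hardness_similar_decisions} using Equation~\eqref{eqn:k-unit-lb-derivate-bound}", i.e., the change-of-measure argument with the Radon--Nikodym derivative $\tfrac{dD^h}{dD^\ell}(v_i)$ controlled on the critical value. Your added observation that $\mathcal{V}_i$ pins $v_i=v_2$ exactly, so the derivative is constant on the integration region and the conditional probabilities are in fact equal, is a valid (mild) sharpening of the stated factor-$2$ bound.
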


\begin{proof}
Same as Lemma~\ref{lem:meta_hardness_similar_decisions} using Equation \ref{eqn:k-unit-lb-derivate-bound}.
\end{proof}

Define
\begin{align*}
\mathcal{M}^1 & = \left\{ M : \Pr_{\mathbf{v} \sim \mathbf{D}^1 : \mathbf{v} \in \mathcal{V}_{i}} \big[ \text{$M$ picks $i$ as the winner} \big] \ge \frac{2}{3} \right\} 
~, \\
\mathcal{M}^2 & = \left\{ M : \Pr_{\mathbf{v} \sim \mathbf{D}^1 : \mathbf{v} \in \mathcal{V}_{i}} \big[ \text{$M$ picks $i$ as the winner} \big] < \frac{2}{3} \right\}
~.
\end{align*}

\begin{corollary}
	\label{cor:k-unit-hardness-similar-decisions}
	For any $M \in \mathcal{M}^1$, we have that:
	\[
	\Pr_{\mathbf{v} \sim \mathbf{D}^2 : \mathbf{v} \in \mathcal{V}_i} \big[ \text{\rm $M$ picks $i$ as the winner} \big] \ge \frac{1}{3}
	~.
	\]
\end{corollary}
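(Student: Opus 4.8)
The plan is to derive Corollary~\ref{cor:k-unit-hardness-similar-decisions} directly from Lemma~\ref{lem:matroid_hardness_similar_decisions}, exactly mirroring how Corollary~\ref{cor:meta_hardness_similar_decisions} was obtained from Lemma~\ref{lem:meta_hardness_similar_decisions} in the single-item case. First I would unpack the definition of the class $\mathcal{M}^1$: by construction, any mechanism $M \in \mathcal{M}^1$ satisfies
\[
\Pr_{\mathbf{v} \sim \mathbf{D}^1 : \mathbf{v} \in \mathcal{V}_i} \big[ \text{$M$ picks $i$ as the winner} \big] \ge \frac{2}{3}
~.
\]

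Next I would invoke Lemma~\ref{lem:matroid_hardness_similar_decisions}, which states that for \emph{any} mechanism, the conditional probability of selecting bidder $i$ as the winner, conditioned on the value vector lying in $\mathcal{V}_i$, changes by at most a multiplicative factor of $2$ when the underlying distribution is swapped between $\mathbf{D}^1$ and $\mathbf{D}^2$ (these two distributions differ only in coordinate $i$, and on $\mathcal{V}_i$ the bound $\tfrac{dD^\ell}{dD^h}(v_2) \in [\tfrac{1}{\sqrt 2}, \sqrt 2]$ from Eqn.~\eqref{eqn:k-unit-lb-derivate-bound} controls the Radon--Nikodym derivative). Applying the lower-bound side of this two-sided comparison to a mechanism $M \in \mathcal{M}^1$ gives
\[
\Pr_{\mathbf{v} \sim \mathbf{D}^2 : \mathbf{v} \in \mathcal{V}_i} \big[ \text{$M$ picks $i$ as the winner} \big]
\ \ge\ \frac{1}{2} \cdot \Pr_{\mathbf{v} \sim \mathbf{D}^1 : \mathbf{v} \in \mathcal{V}_i} \big[ \text{$M$ picks $i$ as the winner} \big]
\ \ge\ \frac{1}{2} \cdot \frac{2}{3} \ =\ \frac{1}{3}
~,
\]
which is precisely the asserted inequality.

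Since Lemma~\ref{lem:matroid_hardness_similar_decisions} is already in hand, there is essentially no obstacle: the corollary is a one-line consequence. The only point that warrants a moment's care is the \emph{direction} of the factor-of-$2$ comparison — we must use the side $p^{(\mathbf{D}^2)} \ge p^{(\mathbf{D}^1)}/2$ rather than $p^{(\mathbf{D}^2)} \le 2\, p^{(\mathbf{D}^1)}$ — and the bookkeeping that the conditioning event $\mathcal{V}_i$ is the same set under both distributions, so that the comparison is genuinely between two conditional probabilities with a common conditioning set (this is what Lemma~\ref{lem:matroid_hardness_similar_decisions} already encapsulates, via the matching argument used in its proof). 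I would then record this chain of two inequalities as the full proof.
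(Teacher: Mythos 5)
Your proof is correct and matches the paper's approach exactly: the corollary is an immediate consequence of the factor-of-$2$ comparison in Lemma~\ref{lem:matroid_hardness_similar_decisions} applied to the defining threshold $\frac{2}{3}$ of $\mathcal{M}^1$. The paper itself states it without proof precisely because the derivation is this one-liner, and you correctly flag the only subtlety, namely using the lower-bound direction of the two-sided comparison.
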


\begin{lemma}
	\label{lem:k-unit-hardness-mistakes}
	For either $j = 1$ or $j = 2$ (or both), we have:
	\[
	\Pr \big[ A(\mathbf{D}^j) \in \mathcal{M}^{3-j} \big] > \frac{1}{3}
	~.
	\]
\end{lemma}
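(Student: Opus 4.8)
The plan is to reuse, essentially verbatim, the argument behind Lemma~\ref{lem:meta_hardness_mistakes}, now that $\mathcal{M}^1$ and $\mathcal{M}^2$ have been redefined through the set $\mathcal{V}_i$ appropriate to the matroid setting. First I would fix the pair $\mathbf{D}^1 = (\mathbf{D}_{-i}, D^h)$ and $\mathbf{D}^2 = (\mathbf{D}_{-i}, D^\ell)$, recalling that they differ only in the $i$-th coordinate, so that $\skl(\mathbf{D}^1, \mathbf{D}^2) = \skl(D^h, D^\ell) = O(\epsilon^2/(nk))$ by Lemma~\ref{lem:dptrick} (as already observed in the text, with $\Omega_1 = \{v_3\}$, $\epsilon_1 = 0$ and $\Omega_2 = \{v_1, v_2\}$, $\epsilon_2 = \epsilon/(2k)$, since $\Pr_{\mathbf{D}^j}[\Omega_2] = O(k/n)$). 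Under the standing hypothesis for contradiction that $A$ uses $m < c \cdot \skl(D^h, D^\ell)^{-1}$ samples for a sufficiently small constant $c$, Lemma~\ref{lem:kl_divergence} says that no algorithm can correctly distinguish $\mathbf{D}^1$ from $\mathbf{D}^2$ using $m$ samples.

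Next I would exhibit such a would-be classifier built on top of $A$: given an unknown input distribution $\mathbf{D} \in \{\mathbf{D}^1, \mathbf{D}^2\}$, draw $m$ samples from it, run $A$ to obtain the mechanism $A(\mathbf{D})$, and output $\mathbf{D}^1$ if $A(\mathbf{D}) \in \mathcal{M}^1$ and $\mathbf{D}^2$ otherwise. If this classifier were correct with probability at least $\tfrac{2}{3}$ on both distributions --- that is, if both $\Pr[A(\mathbf{D}^1) \in \mathcal{M}^1] \ge \tfrac{2}{3}$ and $\Pr[A(\mathbf{D}^2) \in \mathcal{M}^2] \ge \tfrac{2}{3}$ held --- it would contradict the conclusion of the previous paragraph. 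Hence at least one of these two inequalities fails. Since $\mathcal{M}^1$ and $\mathcal{M}^2$ partition the space of mechanisms (immediate from their definition, which splits on the strict threshold $\tfrac{2}{3}$), the event $\{A(\mathbf{D}^1) \notin \mathcal{M}^1\}$ equals $\{A(\mathbf{D}^1) \in \mathcal{M}^2\}$ and similarly for $\mathbf{D}^2$; so the failure of one inequality yields $\Pr[A(\mathbf{D}^j) \in \mathcal{M}^{3-j}] > \tfrac{1}{3}$ for the corresponding $j \in \{1, 2\}$, which is precisely the claimed statement.

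I do not expect any real obstacle in this lemma itself: the argument is a direct translation of the single-item case, and the only points to verify are the partition property of $\mathcal{M}^1, \mathcal{M}^2$ and the consistency of the KL bound $O(\epsilon^2/(nk))$ with the target sample-complexity lower bound $m = \tilde\Omega(nk\epsilon^{-2})$, both of which hold by construction. The genuinely new content of the matroid lower bound lives downstream rather than here: converting a ``picks a wrong winner while $\mathbf{v} \in \mathcal{V}_i$'' event into an $\Omega(\epsilon/k)$ revenue loss, and then summing a constant fraction of the $n-k$ such per-bidder losses, which requires the matroid exchange property (Lemma~\ref{lem:matroid-change-value}) and the counting scheme from Section~\ref{sec:meta_hardness_analysis}.
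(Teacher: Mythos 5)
Your proof is correct and is exactly the argument the paper has in mind: the paper's proof is literally the single line ``Same as Lemma~\ref{lem:meta_hardness_mistakes},'' and you have faithfully unrolled that reference --- the classifier built from $A$, the contradiction via Lemma~\ref{lem:kl_divergence} and the KL bound $\skl(D^h, D^\ell) = O(\epsilon^2/(nk))$, and the partition property of $\mathcal{M}^1,\mathcal{M}^2$ to convert failure into membership in the wrong class. Nothing further is needed.
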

\begin{proof}
    Same as Lemma \ref{lem:meta_hardness_mistakes}.
\end{proof}

\begin{lemma}
	\label{lem:matroid_hardness_virtual_value_maximizer}
	For any value distribution $\mathbf{D} \in \mathcal{H}$, the optimal mechanism w.r.t.\ $\mathbf{D}$ always chooses the bidder with $k$ highest virtual values as winners. 
\end{lemma}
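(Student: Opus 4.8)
The plan is to mimic the argument for the single-item meta hard instance (\Cref{lem:meta_hardness_virtual_value_maximizer}), adapted to the $k$-uniform matroid. Recall that Myerson's optimal auction for a matroid constraint picks the independent set of bidders maximizing the sum of non-negative \emph{ironed} virtual values; for the $k$-uniform matroid this is simply the (at most) $k$ bidders with the largest non-negative ironed virtual values. So it suffices to show that for every distribution $\mathbf{D}\in\mathcal{H}$, the set of the $k$ largest ironed virtual values coincides (as a multiset of winners, up to the fixed tie-breaking rule) with the set of the $k$ largest unironed virtual values, and that these are all non-negative.

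First I would dispose of non-negativity: bidders $1,\dots,k$ each have value a point mass at $\frac12$, hence virtual value exactly $\frac12>0$, so there are always at least $k$ bidders with strictly positive virtual value, and the optimal set never needs a bidder with negative virtual value. Next I would check that ironing does not change the outcome. The distributions $D^b$, $D^h$, $D^\ell$ all have finite support contained in $\{v_3,v_2,v_1\}$ (with $v_3<v_2<v_1$), so I would compute the revenue curves $R_{D^h}(q)$ and $R_{D^\ell}(q)$ in the quantile space directly from the given probability masses, identify which quantile intervals (if any) must be ironed, and verify that on each such interval the ironed (averaged) virtual value still cannot exceed the relevant threshold. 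Concretely: the virtual value of any bidder with distribution $D^h$ or $D^\ell$ is a step function taking value $v_1$ at $v_1$, and at the lower values it is bounded by $v_2$'s virtual value, which by \Cref{eqn:k-unit-lb-virtual-value-gap} is at most $\frac12+\frac{\epsilon}{2k}$ for $D^h$ and $\frac12-\frac{\epsilon}{2k}$ for $D^\ell$; since ironing only averages adjacent virtual values over an interval, any ironed value in such an interval stays $\le v_1$, and the ironed value at the top atom $v_1$ stays $v_1$ (an atom that is a support upper bound is never inside an ironing interval, cf. condition \ref{property:v1_point_mass}). Therefore the bidder with the highest (ironed or unironed) virtual value, when some bidder realizes $v_1$, is that bidder with value $v_1$; otherwise all virtual values sit at the same discrete levels whether ironed or not. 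A short case analysis — how many bidders realize $v_1$, how many realize $v_2$ — then shows the top-$k$ set is the same under ironed and unironed virtual values.

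The main obstacle I expect is not conceptual but bookkeeping: with $k$ identical default bidders all at virtual value $\frac12$ and possibly several $D^h$/$D^\ell$ bidders realizing $v_2$ with virtual values straddling $\frac12$ by only $\pm\frac{\epsilon}{2k}$, I must be careful that ties and near-ties are handled consistently with the stated lexicographic tie-breaking, and that "the $k$ bidders with highest virtual value" is well-defined as a set. The cleanest way to handle this is to observe that the only place ironing could conceivably matter is an interval of $D^\ell$ (resp.\ $D^h$) spanning the atoms $v_3,v_2$ or $v_2,v_1$; I would check from the explicit masses $1-\frac{k}{2n}$, $\frac{k\pm\epsilon}{4n}$, $\frac{k\mp\epsilon}{4n}$ that the revenue curve is already concave on $[0,\,q(v_2^-)]$ for the relevant branch (so no ironing there) and that any ironing near the top only lowers virtual values below $v_1$ — never creating a new value above $v_1$ — so the identity of the top-$k$ set is unaffected. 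Once that is verified the lemma follows exactly as in the single-item case.
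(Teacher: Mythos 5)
Your overall plan matches the paper's: show non-negativity trivially via the $k$ default bidders at virtual value $\tfrac12$, then compute the revenue curves of $D^h$ and $D^\ell$ in quantile space and check that ironing does not change the identity of the top-$k$ set. That is exactly what the paper's proof does. However, two of your intermediate predictions about $D^\ell$ are wrong, and they are precisely the subtleties the calculation must catch.

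You write that ``at the lower values [the virtual value] is bounded by $v_2$'s virtual value.'' For $D^\ell$ this is false: by construction $\phi^\ell(v_3)=\tfrac12$ (this is what makes $D^b$ at $v_0=\tfrac12$ the default tie), while $\phi^\ell(v_2)=\frac{k-2\epsilon}{2(k-\epsilon)}<\tfrac12$, so the virtual value \emph{decreases} from $v_3$ to $v_2$. That non-monotonicity is exactly why $D^\ell$ is irregular and needs ironing. Your second claim, that the revenue curve of $D^\ell$ ``is already concave on $[0,q(v_2^-)]$ (so no ironing there),'' is a non sequitur: the concavity violation of $D^\ell$'s revenue curve is at the interface between the $v_2$-segment (slope $\frac{k-2\epsilon}{2(k-\epsilon)}$) and the $v_3$-segment (slope $\tfrac12$), so the ironing interval is $[\,q^{D^\ell}(v_2),1\,]$, which overlaps $[0,q(v_2^-)]$ and does alter $v_2$'s virtual value. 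In fact ironing \emph{raises} $\bar\phi^\ell(v_2)$ (and lowers $\bar\phi^\ell(v_3)$), contradicting ``ironing \ldots only lowers virtual values.'' What makes the lemma true — and what the paper computes explicitly — is that the flattened ironed value $\bar\phi^\ell(v_2)=\bar\phi^\ell(v_3)=\frac{4n-k-2\epsilon}{2(4n-k-\epsilon)}$ is still strictly below $\tfrac12$, so a $D^\ell$ bidder with value $v_2$ or $v_3$ is never in the top $k$ under either the ironed or the unironed ranking (the $k$ lexicographically-preferred $D^b$ bidders at exactly $\tfrac12$ always beat them); meanwhile $D^h$ is regular (its revenue-curve slopes $1,\ \frac{k+2\epsilon}{2(k+\epsilon)},\ \tfrac12$ are decreasing), so ironing is vacuous there. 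Your verification needs to make those two facts explicit rather than relying on the incorrect local-concavity shortcut for $D^\ell$.
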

\begin{proof}
    The virtual value and ironed virtual value of $D^b$, $D^h$ and $D^{\ell}$ are as follows:
    \begin{equation*}
        \bar{\phi}^b(v) = \phi^b(v) =  \frac{1}{2}
    \end{equation*}
    \begin{align*}
        \bar{\phi}^h(v) = \phi^h(v) = 
        \begin{cases}
            \frac{1}{2}  & v_3=1/2 + \frac{k}{8n}\\
            \frac{k+2\epsilon}{2(k + \epsilon)} & v_2 = \frac{3}{4}\\
            1 & v_1 = 1
        \end{cases}
    \end{align*}
    \begin{align*}
        \phi^{\ell}(v) = 
        \begin{cases}
            \frac{1}{2}  & v_3=1/2 + \frac{k}{8n}\\
            \frac{k-2\epsilon}{2(k - \epsilon)} & v_2 = \frac{3}{4}\\
            1 & v_1 = 1
        \end{cases}
    \end{align*}
    
    \begin{align*}
        \bar{\phi}^{\ell}(v) = 
        \begin{cases}
            \frac{4n-k-2\epsilon}{2(4n-k-\epsilon)}  & v_3=1/2 + \frac{k}{8n}\\
            \frac{4n-k-2\epsilon}{2(4n-k-\epsilon)} & v_2 = \frac{3}{4}\\
            1 & v_1 = 1
        \end{cases}
    \end{align*}
Since $ \frac{4n-k-2\epsilon}{2(4n-k-\epsilon)}<1/2$ and the first $k$ bidders all have a constant ironed virtual value of $1/2$, a bidder with $v\le v_2$ would never be chosen if his distribution is $D^{\ell}$. So the k highest virtual values always coincide with $k$ highest ironed virtual values.
\end{proof}

In the following discussion, Let $I_{A(\mathbf{D}^j)}$ denote the set of winners chosen by $A(\mathbf{D}^j)$. 

The revenue of optimal mechanism is
$$
\int_{\mathbf{v} }  \max_{|J|=k} \sum_{j \in J} \phi_j(v_j)  d \mathbf{D}
~,
$$
Now we only consider the case when $\mathbf{v}\in \cup_i \mathcal{V}_i$. According to the definition of $\mathcal{V}_i$, the number of bidders whose value are over $v_2$ is at most $k$. Therefore all bidders with value $v_1$, all bidders with value $v_2$ and having distribution $D^h$, and some bidders among $b_1$ to $b_k$ will be selected as winners in the optimal mechanism. Therefore when the valuation vector is in $\mathcal{V}_i$, selecting bidder $i$ with distribution $D^\ell$ would cause a loss of at least $1/2-\phi_\ell(v_2)\ge \frac{\epsilon}{2k}$, and not selecting bidder $i$ with distribution $D^h$ would cause a revenue loss of at least $\phi_l(v_2)-1/2\ge \frac{\epsilon}{2k}$. We define this loss to be $L_i(A(\mathbf{D}^j),\mathbf{v})$.
\begin{equation*}
    L_i(A(\mathbf{D}^j),\mathbf{v})=
    \begin{cases}
    \frac{\epsilon}{2k}\cdot \mathbb{I}(i\notin I_{A(\mathbf{D}^j)}(\mathbf{v})) & \text{if $j=1$ and $ \mathbf{v}\in \mathcal{V}_i$}\\
    \frac{\epsilon}{2k}\cdot \mathbb{I}(i\in I_{A(\mathbf{D}^j)}(\mathbf{v}))) &\text{if $j=2$ and $ \mathbf{v}\in \mathcal{V}_i$}\\
    0 & \text{otherwise}\\
\end{cases}
\end{equation*}

Then if we only consider the revenue loss caused by misclassifying bidder $i$'s type on valuation profile $\cup_i\mathcal{V}_i$, we can lower bound the revenue gap between $A(\mathbf{D}^j)$ and optimal mechanism:
\begin{equation}\label{eqn:matroid-virtual-value-loss}
    \begin{split}
     &\int_{\mathbf{v} \in \cup_i \mathcal{V}_i} \left( \max_{|J|=k} \sum_{j \in J} \phi_j(v_j) - \sum_{j \in I_{A(\mathbf{D}^j)}(\mathbf{v})} \phi_j(v_j) \right) d \mathbf{D}\\
    \ge& \int_{\mathbf{v} \in \cup_i \mathcal{V}_i} \sum_{i>k} L_i(A(\mathbf{D}^j),\mathbf{v}) d \mathbf{D}\\
    =& \sum_{i>k} \int_{\mathbf{v} \in \mathcal{V}_i} L_i(A(\mathbf{D}^j),\mathbf{v}) d \mathbf{D}
    \end{split}
\end{equation}
We will consider this quantity in the following discussion:
$$
\int_{\mathbf{v} \in \mathcal{V}_i} L_i(A(\mathbf{D}^j),\mathbf{v}) d \mathbf{D}
$$

\begin{lemma}
	For either $j = 1$ or $j = 2$ (or both), we have:
	\[
	\Pr_{A(\mathbf{D}^j)} \left[ \E_{\mathbf{v} \sim \mathbf{D}^j : \mathbf{v} \in \mathcal{V}_i} \left[ L_i(A(\mathbf{D}^j),\mathbf{v})  \right] \ge \frac{\epsilon}{6k} \right] \ge \frac{1}{3}
	~.
	\]
\end{lemma}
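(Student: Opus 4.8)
The plan is to mirror, for the matroid setting, the argument already carried out in the proof of \Cref{lem:meta_hardness_revenue_loss_probability} for the single-item case. The statement to be proved asserts that for at least one of $j=1,2$, with probability at least $\frac13$ over the returned mechanism $A(\mathbf{D}^j)$, the conditional expected loss $\E_{\mathbf{v}\sim\mathbf{D}^j:\mathbf{v}\in\mathcal{V}_i}[L_i(A(\mathbf{D}^j),\mathbf{v})]$ is at least $\frac{\epsilon}{6k}$. First I would invoke \Cref{lem:k-unit-hardness-mistakes}: for either $j=1$ or $j=2$ (or both) we have $\Pr[A(\mathbf{D}^j)\in\mathcal{M}^{3-j}]>\frac13$. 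Fix such a $j$; it then suffices to show the deterministic implication $A(\mathbf{D}^j)\in\mathcal{M}^{3-j}\Rightarrow \E_{\mathbf{v}\sim\mathbf{D}^j:\mathbf{v}\in\mathcal{V}_i}[L_i(A(\mathbf{D}^j),\mathbf{v})]\ge\frac{\epsilon}{6k}$, after which the lemma follows immediately by combining the probability bound with the implication.

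For the implication I would split into the two cases exactly as before. If $j=1$, then $A(\mathbf{D}^1)\in\mathcal{M}^2$ means $\Pr_{\mathbf{v}\sim\mathbf{D}^1:\mathbf{v}\in\mathcal{V}_i}[\text{$M$ picks $i$}]<\frac23$, so with conditional probability more than $\frac13$ the mechanism fails to pick bidder $i$ while $\mathbf{v}\in\mathcal{V}_i$; since $D_i=D^h$ in $\mathbf{D}^1$, on each such event $L_i(A(\mathbf{D}^1),\mathbf{v})=\frac{\epsilon}{2k}$ by definition, using $\phi^h(v_2)-\tfrac12\ge\frac{\epsilon}{2k}$ from \eqref{eqn:k-unit-lb-virtual-value-gap} together with \Cref{lem:matroid_hardness_virtual_value_maximizer} (the optimal mechanism picks the $k$ highest virtual values, and on $\mathcal{V}_i$ exactly the bidders whose value is at least $v_2$ together with enough of bidders $1,\dots,k$ are the optimal winners, so not selecting a $D^h$-bidder at $v_2$ costs at least $\phi^h(v_2)-\tfrac12$). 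Hence $\E_{\mathbf{v}\sim\mathbf{D}^1:\mathbf{v}\in\mathcal{V}_i}[L_i]\ge\frac13\cdot\frac{\epsilon}{2k}=\frac{\epsilon}{6k}$. If $j=2$, then $A(\mathbf{D}^2)\in\mathcal{M}^1$, and by \Cref{cor:k-unit-hardness-similar-decisions} we get $\Pr_{\mathbf{v}\sim\mathbf{D}^2:\mathbf{v}\in\mathcal{V}_i}[\text{$M$ picks $i$}]\ge\frac13$; since $D_i=D^\ell$ in $\mathbf{D}^2$, picking bidder $i$ on $\mathcal{V}_i$ costs $L_i=\frac{\epsilon}{2k}$ by $\tfrac12-\phi^\ell(v_2)\ge\frac{\epsilon}{2k}$, again from \eqref{eqn:k-unit-lb-virtual-value-gap}, so $\E_{\mathbf{v}\sim\mathbf{D}^2:\mathbf{v}\in\mathcal{V}_i}[L_i]\ge\frac13\cdot\frac{\epsilon}{2k}=\frac{\epsilon}{6k}$. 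In both cases the implication holds, completing the proof.

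The one point that needs slightly more care than in the single-item case — and which I expect to be the main obstacle — is justifying that, conditioned on $\mathbf{v}\in\mathcal{V}_i$, the optimal allocation genuinely assigns the item to bidder $i$ whenever $D_i=D^h$ and does not assign it whenever $D_i=D^\ell$, so that the per-bidder loss $L_i$ is correctly lower bounded by the virtual-value gap. This requires that in the matroid we can always augment the chosen independent set by one of the $k$ point-mass bidders $1,\dots,k$ (each with constant virtual value $\tfrac12$): since $n\ge 2k$ and $\mathcal{V}_i$ restricts to at most $k$ bidders with value $\ge v_2$, the $k$ point-mass bidders plus bidder $i$ together with the $D^h$-bidders at $v_2$ form at most $k$ relevant candidates, and because $\bar{\phi}^\ell(v_2)<\tfrac12$ (computed in \Cref{lem:matroid_hardness_virtual_value_maximizer}), the optimal mechanism never prefers a $D^\ell$-bidder at $v_2$ over a point-mass bidder; hence swapping bidder $i$ in or out changes the optimal revenue by exactly $|\phi_i(v_2)-\tfrac12|$ via the matroid exchange property (\Cref{lem:matroid-change-value}). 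Once this structural fact is in place the rest is a verbatim repetition of the single-item argument with $\Delta$ replaced by $\frac{\epsilon}{2k}$ and the constant $\frac13$ in the threshold replaced by $\frac16$ to absorb that factor.
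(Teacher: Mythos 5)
Your proof is correct and follows essentially the same route as the paper: invoke \Cref{lem:k-unit-hardness-mistakes}, split into $j=1$ and $j=2$, use the definition of $\mathcal{M}^2$ (resp.\ \Cref{cor:k-unit-hardness-similar-decisions}) to get a $\frac13$ conditional probability of a misallocation, and multiply by the $\frac{\epsilon}{2k}$ per-event value of $L_i$. One small note: the extra structural justification you flag in your last paragraph (that the optimal allocation picks bidder $i$ iff $D_i=D^h$ on $\mathcal{V}_i$, via \Cref{lem:matroid_hardness_virtual_value_maximizer} and matroid exchange) is not actually needed for this lemma, which only concerns $\E[L_i]$ with $L_i$ as \emph{defined} in the text; that structural fact belongs to the surrounding discussion (\Cref{eqn:matroid-virtual-value-loss}) that shows $L_i$ really lower-bounds the revenue loss, and is used later in the proof of \Cref{thm:matroid_hardness} rather than here.
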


\begin{proof}
    According to Lemma \ref{lem:k-unit-hardness-mistakes}, we have either 
    \begin{equation*}
        \Pr \big[ A(\mathbf{D}^j) \in \mathcal{M}^{3-j} \big] > \frac{1}{3}
    \end{equation*}
    for $j=1,2$.We will consider the two cases separately,
    
    \textbf{Case 1: }$j=1$. From definition of $\mathcal{M}^{2}$, we have
    \begin{equation*}
        \Pr_{\mathbf{v} \sim \mathbf{D}^1 : \mathbf{v} \in \mathcal{V}_i} \big[ \text{$M$ picks $i$ as the winnner} \big] < \frac{2}{3} ~.
    \end{equation*}
    From Equation \ref{eqn:k-unit-lb-virtual-value-gap}, we have
    \begin{equation*}
         \E_{\mathbf{v} \sim \mathbf{D}^1 : \mathbf{v} \in \mathcal{V}_i} \left[ L_i(A(\mathbf{D}^1),\mathbf{v})  \right] \ge \frac{1}{3}\cdot \frac{\epsilon}{2k} = \frac{\epsilon}{6k} ~.
    \end{equation*}
    
    \textbf{Case 2: }$j=2$. From $A(\mathbf{D^2})\in M^1$ and Corollary \ref{cor:k-unit-hardness-similar-decisions}, we know that
    \begin{equation*}
        \Pr_{\mathbf{v} \sim \mathbf{D}^2 : \mathbf{v} \in \mathcal{V}_i} \big[ \text{\rm $M$ picks $i$ as the winner} \big] \ge \frac{1}{3}
    \end{equation*}
    Therefore,
    \begin{equation*}
         \E_{\mathbf{v} \sim \mathbf{D}^2 : \mathbf{v} \in \mathcal{V}_i} \left[ L_i(A(\mathbf{D}^2),\mathbf{v})  \right] \ge \frac{1}{3}\cdot \frac{\epsilon}{2k} = \frac{\epsilon}{6k} ~.
    \end{equation*}
    Thus the lemma holds.
\end{proof}

Let $\mathcal{B}_{\mathbf{D}} $ be
$$
\mathcal{B}_{\mathbf{D}} 
= 
\left\{ 
i : \Pr_{A(\mathbf{D})} \left[ \E_{\mathbf{v} \sim \mathbf{D} : \mathbf{v} \in \mathcal{V}_i} \left[ L_i(A(\mathbf{D}^j),\mathbf{v})  \right] \ge \frac{\epsilon}{6k} \right] \ge \frac{1}{3}
\right\}
,
$$
then Lemma~\ref{lem:meta_hardness_counting}, Corollary~\ref{cor:meta_hardness_bad_instance} and Lemma~\ref{lem:meta_hardness_bad_instance_count_const_prob} still holds.

\begin{lemma}\label{lem:matroid_probmass_vi}
    For any $\mathbf{D}\in \mathcal{H}$ and $i>k$,
    $$\Pr_{\mathbf{v}\sim \mathbf{D}}[\mathbf{v}\in \mathcal{V}_i]=\Theta(k/n).$$
\end{lemma}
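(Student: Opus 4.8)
The plan is to use independence across bidders together with the arithmetic fact that $D^h$ and $D^\ell$, although they split the mass $\tfrac{k}{2n}$ differently between $v_2=\tfrac34$ and $v_1=1$, both satisfy $\Pr[v\ge v_2]=\tfrac{k}{2n}$ exactly, and both place probability $\tfrac{k\pm\epsilon}{4n}=\Theta(\tfrac{k}{n})$ on the single value $v_2$. Since a profile $\mathbf{v}\sim\mathbf{D}$ has independent coordinates, I would factor the event $\{\mathbf{v}\in\mathcal{V}_i\}$ (for $i>k$) into three independent sub-events and estimate each.

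First, bidders $1,\dots,k$ are point masses at $\tfrac12$, so $v_1=\dots=v_k=\tfrac12$ holds with probability $1$; moreover $v_3=\tfrac12+\tfrac{k}{8n}<\tfrac34=v_2$ since $n\ge 2k$, so neither these bidders nor any bidder realizing $v_3$ ever has value at least $v_2$, hence bidders in $[k]$ contribute $0$ to the cardinality term $|\{j:v_j\ge v_2\}|$ in the definition of $\mathcal{V}_i$. Second, $\Pr[v_i=v_2]=\tfrac{k+\epsilon}{4n}$ if $D_i=D^h$ and $\tfrac{k-\epsilon}{4n}$ if $D_i=D^\ell$; restricting to $\epsilon\le k/2$ (which holds in the regime of interest), this lies in $[\tfrac{k}{8n},\tfrac{3k}{8n}]=\Theta(\tfrac{k}{n})$. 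Third, conditioned on the previous two events the constraint $|\{j:v_j\ge v_2\}|\le k$ is equivalent to $|\{j: j\notin[k],\,j\ne i,\,v_j\ge v_2\}|\le k-1$, because bidder $i$ already contributes $1$ and bidders in $[k]$ contribute $0$. The bidders $j\notin[k]\cup\{i\}$ are independent of bidder $i$, and each has $\Pr[v_j\ge v_2]=\tfrac{k}{2n}$ regardless of whether $D_j=D^h$ or $D^\ell$, so this count has mean at most $(n-k-1)\cdot\tfrac{k}{2n}<\tfrac{k}{2}$; by Markov's inequality $\Pr[\text{count}\ge k]\le\tfrac12$, so the constraint holds with probability in $[\tfrac12,1]=\Theta(1)$.

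Multiplying the three independent factors yields $\Pr_{\mathbf{v}\sim\mathbf{D}}[\mathbf{v}\in\mathcal{V}_i]=1\cdot\Theta(\tfrac{k}{n})\cdot\Theta(1)=\Theta(\tfrac{k}{n})$, which is the claim. There is no substantive obstacle here; the only points needing care are (i) verifying $v_3<v_2$ so that the $k$ point-mass bidders really are irrelevant to the cardinality count — this is precisely where the standing assumption $n\ge 2k$ is used — and (ii) the Markov estimate in the third step, where it matters that $D^h$ and $D^\ell$ induce the \emph{same} tail probability $\tfrac{k}{2n}$ at $v_2$, so that the bound (and hence the hidden constants in $\Theta(\cdot)$) is uniform over all $\mathbf{D}\in\mathcal{H}$.
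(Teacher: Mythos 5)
Your proof is correct and follows essentially the same decomposition as the paper: factor the event into ``bidder $i$ realizes $v_2$'' (probability $\tfrac{k\pm\epsilon}{4n}=\Theta(k/n)$) times ``at most $k-1$ of the remaining non-point-mass bidders exceed $v_2$'' (probability $\Theta(1)$), using independence and the fact that $\Pr[v\ge v_2]=\tfrac{k}{2n}$ for both $D^h$ and $D^\ell$. The only difference is that you bound the second factor via Markov's inequality (giving failure probability $\le \tfrac12$), whereas the paper invokes Bernstein to get $o(1)$ failure probability; both suffice for $\Theta(k/n)$, and your Markov argument is the more elementary of the two.
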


\begin{proof}
    $\mathbf{v}\in \mathcal{V}_i$ is equivalent to say that $b_i=v_2$ and there are no more than $k-1$ other bidders has value over $v_2$. The two events are independent and probability of the first part is at between  $\frac{k-\epsilon}{4n}$ and $\frac{k+\epsilon}{4n}$. 
    
    Let $Z_j$ be the indicator of whether $b_j\ge v_2$, since the expected number of bidder that has value over $v_2$ is $(n-k)\frac{k}{2n}\le \frac{k}{2}$, we can bound the probability of second part by Bernstein's inequality:
    $$
        \Pr\left[\sum_{j\not=i,j>k}Z_i\ge k-1\right]\le e^{-\frac{(k/2-1)^2/2}{(n-k-1)(k/n)+(k/2-1)/3}}=o(1)
        ~.
    $$
    Therefore the total probability is between
    $\frac{k-\epsilon}{4n}\cdot (1-o(1))$ and $\frac{k+\epsilon}{4n}$.
\end{proof}

Now we can prove Theorem~\ref{thm:matroid_hardness}:
\begin{align*}
& \opt(\mathbf{D}) -  \rev(A(\mathbf{D}), \mathbf{D})  \\[1.5ex]
& \qquad =  \int_{\mathbf{v}} \left( \max_{|J|=k} \sum_{j \in J} \phi_j(v_j) - \sum_{j \in I_{A(\mathbf{D}^j)}(\mathbf{v})} \phi_j(v_j) \right) d \mathbf{D} && \\
& \qquad \ge \sum_{i \in \mathcal{B}_{\mathbf{D},A(\mathbf{D})}} \int_{\mathbf{v} \in \mathcal{V}_i} L_i(A(\mathbf{D}^j),\mathbf{v}) d \mathbf{D}  &&(\text{Equation~\ref{eqn:matroid-virtual-value-loss}}) \\
& \qquad = \sum_{i \in \mathcal{B}_{\mathbf{D},A(\mathbf{D})}} \E_{\mathbf{v} \sim \mathbf{D} : \mathbf{v} \in \mathcal{V}_i} \left[ L_i(A(\mathbf{D}^j),\mathbf{v})  \right] \cdot \Pr_{\mathbf{v} \sim \mathbf{D}} \big[ \mathbf{v} \in \mathcal{V}_i \big] \\
& \qquad \ge \sum_{i \in \mathcal{B}_{\mathbf{D},A(\mathbf{D})}} \frac{\epsilon}{3k} \cdot \Pr_{\mathbf{v} \sim \mathbf{D}} \big[ \mathbf{v} \in \mathcal{V}_i \big] && \text{(definition of $\mathcal{B}_{\mathbf{D},A(\mathbf{D})}$)} \\
& \qquad = \Theta(\epsilon) &&\text{(Lemma~\ref{lem:meta_hardness_bad_instance_count_const_prob} and Lemma~\ref{lem:matroid_probmass_vi})}
~.
\end{align*}

\end{document}